\documentclass[letterpaper,11pt]{article}
\usepackage{fullpage}
\usepackage{amsmath,amsthm,amssymb,amsfonts,mathtools}
\usepackage[dvipdfmx]{graphicx}

\usepackage{color,url,xspace,ascmac}
\usepackage[normalem]{ulem}
\usepackage{comment,tikz}
\usepackage{thmtools,thm-restate}
\usepackage{enumitem}

\usepackage{float}
\usepackage{bbm}
\usepackage{relsize}

\usetikzlibrary{positioning}
\usetikzlibrary{calc}

\tikzset{every picture/.style={font issue=\footnotesize},font issue/.style={execute at begin picture={#1\selectfont}}}

\usepackage[unicode,colorlinks=true,citecolor=blue,linkcolor=blue,pdfusetitle]{hyperref}

\usepackage[style=alphabetic,natbib=true,natbib=true,maxnames=99,maxalphanames=99,isbn=false,url=false,backref=true,backend=biber,giveninits=true]{biblatex}

\AtEveryBibitem{%
\ifentrytype{article}{%
  \clearlist{publisher}
  \clearfield{editor}
}{}
\ifentrytype{book}{
}{}
\ifentrytype{inproceedings}{%
  \clearfield{volume}
  \clearfield{number}
  \clearfield{editor}
  \clearlist{publisher}
  \clearlist{location}
}{}
  \clearfield{day}
  \clearfield{month}
  \clearfield{isbn}
  \clearfield{url}
}

\usepackage{doi}
\addbibresource{ref.bib}

\usepackage[capitalise]{cleveref}
\newtheorem{theorem}{Theorem}[section]
\newtheorem{lemma}[theorem]{Lemma}
\newtheorem{corollary}[theorem]{Corollary}
\newtheorem{proposition}[theorem]{Proposition}

\newtheorem{remark}{Remark}
\newtheorem{definition}{Definition}[section]

\def\defeq{\mathrel{\mathop:}=}

\DeclarePairedDelimiter{\abs}{\lvert}{\rvert} 
\DeclarePairedDelimiter{\norm}{\lVert}{\rVert} 
\DeclarePairedDelimiter{\rbra}{\lparen}{\rparen} 
\DeclarePairedDelimiter{\cbra}{\lbrace}{\rbrace} 
\DeclarePairedDelimiter{\sbra}{\lbrack}{\rbrack} 

\newcommand{\indicator}{\mathbf{1}}


\DeclareMathOperator*{\E}{\mathbb{E}}
\DeclareMathOperator*{\Var}{\mathbf{Var}}
\newcommand{\condition}{\;\middle\vert\;}

\newcommand{\e}{\mathrm{e}}
\newcommand{\Nat}{\mathbb{N}}
\newcommand{\Real}{\mathbb{R}}
\newcommand{\Int}{\mathbb{Z}}

\DeclareMathOperator*{\argmax}{argmax}

\newcommand{\ThreeMajority}{\textsc{3-Majority}\xspace}
\newcommand{\Voter}{\textsc{Voter}\xspace}
\newcommand{\TwoChoices}{\textsc{2-Choices}\xspace}
\newcommand{\Majority}{\textsc{-Majority}\xspace}
\newcommand{\alphamax}[1]{\|\alpha_{#1}\|_{\infty}}
\newcommand{\tauweak}{\tau^{\mathrm{weak}}}

\newcommand{\taucons}{\tau_{\mathrm{cons}}}
\newcommand{\istar}[1]{I_{#1}}
\newcommand{\ThreeMajorityProcess}{\sigma}
\newcommand{\alphanorm}{\gamma}

\newcommand{\alpharatio}{R}

\crefname{figure}{Figure}{Figures}
\crefname{remark}{Remark}{Remarks}
\crefname{equation}{}{}

\renewcommand{\c}{\mathbf{c}}
\newcommand{\ct}{\widetilde{\mathbf{c}}}
\newcommand{\tc}{\widetilde{c}}
\newcommand{\Pull}{\mathsf{Voter}}
\newcommand{\ThreeMaj}{\mathsf{3Maj}}


\def\red{\color{red} \bf}


\makeatother

\title{Asynchronous 3-Majority Dynamics with Many Opinions}
\author{
Colin Cooper\\
\small{King’s College London}\\
\small{\texttt{\href{mailto:colin.cooper@kcl.ac.uk}{colin.cooper@kcl.ac.uk}}}
\and
Frederik Mallmann-Trenn\\
\small{King’s College London}\\
\small{\texttt{\href{mailto:frederik.mallmann-trenn@kcl.ac.uk}{frederik.mallmann-trenn@kcl.ac.uk}}}
\and
Tomasz Radzik\\
\small{King’s College London}\\
\small{\texttt{\href{tomasz.radzik@kcl.ac.uk}{tomasz.radzik@kcl.ac.uk}}}
\and
Nobutaka Shimizu\\
\small{Institute of Science Tokyo}\\
\small{\texttt{\href{shimizu.n.ah@m.titech.ac.jp}{shimizu.n.ah@m.titech.ac.jp}}}
\and
Takeharu Shiraga\\
\small{Chuo University}\\
\small{\texttt{\href{shiraga.076@g.chuo-u.ac.jp}{shiraga.076@g.chuo-u.ac.jp}}}
}

\date{}
\begin{document}
\maketitle
\begin{abstract}
  We consider 3-Majority, a probabilistic consensus dynamics on a complete graph with $n$ vertices, each vertex starting with one of $k$ initial opinions.
	At each discrete time step, a vertex $u$ is chosen uniformly at random.
	The selected vertex $u$ chooses three neighbors $v_1,v_2,v_3$ uniformly at random with replacement
  and takes the majority opinion held by the three, where ties are broken in favor of the opinion of $v_3$.
  The main quantity of interest is the consensus time, the number of steps required for all vertices to hold the same opinion.
  This asynchronous version turns out to be considerably harder to analyze than the synchronous version and so far results have only been obtained for $k=2$. Even in the synchronous version the results for large $k$ are far from  tight.
  In this paper we prove that  the consensus time is $\tilde{\Theta}( \min(nk,n^{1.5}) )$ for all $k$. 
  These are the first bounds for all $k$ that are tight up to a polylogarithmic factor. 
\end{abstract}

\tableofcontents

\section{Introduction}
\emph{Consensus dynamics} is a time-evolving  distributed process on a network consisting of \emph{anonymous} nodes (i.e., they do not possess their own ID) which hold their own value called \emph{opinion}.
    
    The nodes update their own opinions according to a common protocol and the aim of the protocol is to reach  \emph{consensus}, a configuration where all vertices hold the same opinion.
    We usually consider protocols specified by a simple algorithm that consists of elementary local calculations (cf.\ light-weight algorithms \cite{HP01}).
    This algorithmic perspective describes real-world dynamics as the outcome of the interaction of simple local algorithms and appears in a broad range of areas including interacting particle systems \cite{Lig85}, distributed computing \cite{HP01}, chemical reaction networks \cite{CRN_SODA14,Condon2020}, biological systems \cite{biology}, to name a few.
    See \cite{consensus_dynamics_SIGACT} for an excellent survey of this general topic.

    To state it more formally, consider a graph $G=(V,E)$, where each vertex $v\in V$ initially holds a value $\sigma_0(v) \in \Sigma$.
    This vector $\sigma_0 = (\sigma_0(v))_{v \in V}$ is called a \emph{configuration}.
    During the dynamics, a subset of vertices is chosen to be activated, and the activated vertices update their opinions according to a common update rule.
    This paper focuses on discrete-time dynamics.
    A consensus dynamics is \emph{synchronous} (or parallel) if all vertices become active at every discrete time step and \emph{asynchronous}\footnote{ The usage of the term “asynchronous" here is standard in the literature of consensus dynamics (e.g., \cite{consensus_dynamics_SIGACT}) but different from some literature of distributed computing, where the word refers to unbounded/unpredictable delays.} (or sequential) if exactly one uniformly random vertex is chosen to be activated.
    This yields a sequence of configurations $(\sigma_t)_{t\ge 0}$, and we are interested in the convergence time of $\sigma_t$.

     \Voter is one of the simplest protocols studied in distributed computing \cite{HP01} and physics \cite{Lig85}.
    In this protocol, an activated vertex selects a uniformly random neighbor $v$ and updates its opinion to match the opinion of $v$.
    The \Voter exhibits \emph{linearity} in the sense that the local update algorithm can be written as a (random) linear function of the current configuration vector.
    That is, if $\sigma_t \in \Sigma^V$ is the current configuration, then the next configuration $\sigma_{t+1}\in\Sigma^V$ can be written as $\sigma_{t+1} = P_t \sigma_t$, where $P_t$ is a random matrix chosen from an appropriate distribution \cite{linear_voting}.
    Linear dynamics contains several important dynamics, including the push voting model \cite{DW83}, Moran process \cite{Moran_process14}, and averaging dynamics \cite{randomized_gossip_algorithms_Boyd}.
    This class of dynamics is widely recognized as a tractable class, as we can invoke martingale theory \cite{linear_voting}.
    In particular, \Voter exhibits an elegant duality relationship with coalescing random walks, which enables us to bound the convergence time (e.g., \cite{CEO+13}).

    While linear dynamics has been extensively investigated, a more recent line of research addresses dynamics based on nonlinear operations.
    Such nonlinear dynamics has a wide range of applications in the consensus problem \cite{Becchetti2015}, community detection \cite{CNS19,Shimizu_Shiraga_SBM}, chemical reaction network \cite{undecided_chemical}, among others.
    One prominent example is the \ThreeMajority dynamics~\cite{simple_dynamics,BCNPT17,ignore_or_comply,nearly_tight_analysis,hierarchy_Berenbrink}. 
    In this dynamics, an activated vertex $u$ chooses three neighbors $v_1,v_2,v_3$ uniformly at random (with replacement).
    If $v_1$ and $v_2$ have the same opinion,
    then $u$ replaces its opinion with the opinion of $v_1$ and $v_2$.
    Otherwise, $u$ updates its opinion to the opinion held by $v_3$.%
    \footnote{Observe that this is equivalent to $u$ taking on the majority opinion 
    held by the three vertices $v_1$, $v_2$ and $v_3$, or, if they all have different opinions, the opinion of $v_i$ for random $ i\sim\{1,2,3\} $.}
    Such majority-like update rules appear naturally  as  models of opinion forming.
    For example, suppose that you visit an unfamiliar city for a conference and plan to have lunch in a restaurant.
    To decide which restaurant to go to, you can ask some of your friends living in the city and then take the majority among the recommendations, where ties are broken at random.

\subsection{Main Result} \label{sec:main result}
    We study the convergence time of the asynchronous version of  \ThreeMajority on an $n$-vertex complete graph (with self-loops).
    See \cref{def:3Majority} for the formal definition.
    The quantity of interest is the \emph{consensus time}, which is the number of time steps required to reach consensus.

    In the case of binary opinions ($k=2$), \citet{hierarchy_Berenbrink} proved that the consensus time is $O(n\log n)$ with high probability.%
    \footnote{Throughout this paper, ‘‘with high probability'' means with probability $1 - O(n^{-\beta})$, for 
    some constant $\beta>0$.}
    Although the case of many opinions ($k>2$) is much less understood in the asynchronous model,
    \ThreeMajority dynamics in the synchronous model have been well studied.
    \citet{simple_dynamics} proved that the dynamics reaches consensus within $O\left( k\log n \right)$ rounds with high probability if there is initially a sufficiently large gap in the number of supporters between the most and second most popular opinions.
    They also proved that the dynamics requires $\Omega(k\log n)$ rounds to reach consensus with high probability for some initial configuration if $k\le (n/\log n)^{1/4}$.
    The condition of a large gap in the initial configuration was removed by \citet{stabilizing_consensus}, where they proved that the consensus time is $\tilde O\left(k^3 \right)$ with high probability for any initial opinion configuration if $k\le n^{1/3-\epsilon}$ and $\epsilon>0$ is an arbitrary constant.\footnote{$\tilde O\left( \cdot\right)$ hides a $(\log n)^{O(1)}$-factor.}
    They asked for  an improved bound on the consensus time, which was subsequently given by
    \citet{nearly_tight_analysis}.
    They proved that the consensus time is $O\left( k\log n\right)$ with high probability if $k = O(n^{1/3}/\sqrt{\log n})$.
    The question is: Does this \emph{linear-in-$k$} dependence \cite{consensus_dynamics_SIGACT} hold for all $k$?

    Interestingly, \citet{ignore_or_comply} proved the following remarkable result: Starting from any initial configuration and any $k$ (say, $k=n$), 
        the number of remaining opinions decreases to at most $\kappa$ after $O\left( n\log n/\kappa\right)$ rounds with high probability.
Combining this result  with the bound from \cite{nearly_tight_analysis} for $\kappa=O(n^{1/3}/\sqrt{\log n})$,
        we have that the consensus time of synchronous \ThreeMajority dynamics is at most $\tilde O\left( n^{2/3}\right)$ with high probability.
        
    In summary, the linear-in-$k$ dependence of the consensus time 
    holds for $k\ll n^{1/3}$ \cite{nearly_tight_analysis} but does not hold for $k\gg n^{2/3}$ \cite{ignore_or_comply}.
    This raises the following natural question:
        \emph{What is the range of $k$ in which the linear-in-$k$ dependence holds?}
    This paper determines the range of such $k$ in the \emph{asynchronous} setting.
    More generally, we obtain the first consensus time bound that is tight for \emph{all} $2\le k \le n$ up to $(\log n)^{O(1)}$ factors.
    %
%

    \begin{theorem}[Main Theorem] \label{thm:all k}
        For any $k\ge 2$, the asynchronous \ThreeMajority dynamics with $k$ opinions on an $n$-vertex complete graph reaches consensus within $\tilde{O}\rbra*{\min\rbra*{kn, n^{1.5} }}$ steps with high probability.
        Moreover, there exists an initial configuration on which the dynamics requires $\tilde{\Omega}\rbra*{\min\rbra*{kn, n^{1.5}}}$ steps to reach consensus with high probability.
    \end{theorem}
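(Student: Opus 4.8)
The upper and lower bounds are proved separately, and in both of them the $\min$ reflects a dichotomy at $k\asymp\sqrt n$. The workhorse is the quadratic potential $\Phi(\sigma)=\sum_i c_i(\sigma)^2$, where $c_i(\sigma)$ is the number of vertices holding opinion $i$; note that $\Phi=n^2$ exactly at consensus, $\Phi=n^2/\kappa$ when $\kappa$ opinions are perfectly balanced, and in general $\Phi\ge n^2/\kappa$ if $\kappa$ opinions survive. A short computation gives the transition law: if the current counts are $(c_i)$ with $p_i=c_i/n$ and $\|p\|_2^2=\sum_i p_i^2$, then the activated vertex adopts opinion $j$ with probability $p_j(1+p_j-\|p\|_2^2)$, independently of its current opinion. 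From this, writing $\Delta X:=X(\sigma_{t+1})-X(\sigma_t)$, one gets $\E[\Delta c_j\mid\sigma_t]=p_j(p_j-\|p\|_2^2)$ --- so every opinion above the ``$\ell_2$ threshold'' $\|p\|_2^2$ drifts up --- and, after summing, $\E[\Delta\Phi\mid\sigma_t]=2+\tfrac2{n^3}\sum_{i<j}c_ic_j(c_i-c_j)^2$. Thus $\Phi$ is a submartingale with bounded increments $|\Delta\Phi|\le 2\max_ic_i+2\le 2\sqrt\Phi+2$, and (with the extremal case being one large opinion plus a balanced remainder) one verifies a lower bound on the spread term of the form $\sum_{i<j}c_ic_j(c_i-c_j)^2\gtrsim n^2(\Phi-n^2/\kappa)_+/\kappa$ in terms of the current number $\kappa$ of opinions, hence $\E[\Delta\Phi\mid\sigma_t]\gtrsim 1+(\Phi-n^2/\kappa)_+/(n\kappa)$.

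\textbf{Upper bound.} I split into three phases. (1) \emph{Opinion reduction.} The number of surviving opinions is non‑increasing, and small opinions die fast: if $\kappa$ roughly balanced opinions remain, each count $c_i\approx n/\kappa$ has $\E[(\Delta c_i)^2\mid\sigma_t]\asymp 1/\kappa$ and essentially non‑positive drift, so by a first‑passage estimate a constant fraction of them hit $0$ within $\tilde O(n^2/\kappa)$ steps; iterating the halving $\kappa\mapsto\kappa/2$ (a geometric sum dominated by its last term) reduces the number of opinions to $\tilde O(\sqrt n)$ within $\tilde O(n^{1.5})$ steps --- the asynchronous counterpart of~\citet{ignore_or_comply}. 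This phase is vacuous when $k\le\sqrt n$. (2) \emph{Amplification.} Once $\kappa:=\min(k,\tilde O(\sqrt n))$ opinions remain, the drift bound $\E[\Delta\Phi\mid\sigma_t]\gtrsim 1+(\Phi-n^2/\kappa)_+/(n\kappa)$ shows $\Phi$ first climbs additively from $n^2/\kappa$ past $n^2/\kappa+\Theta(n\kappa)$ in $\tilde O(n\kappa)$ steps, then grows multiplicatively, roughly doubling every $O(n\kappa)$ steps, reaching $\Omega(n^2)$ after $O(\log n)$ further doublings; a Freedman‑type inequality (using $|\Delta\Phi|\le 2\sqrt\Phi+2$) upgrades the drift to a whp statement. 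This costs $\tilde O(n\kappa)=\tilde O(\min(nk,n^{1.5}))$ steps. (3) \emph{Endgame.} When $\Phi\ge\Omega(n^2)$ the amplification has singled out a plurality leader with $\Omega(n)$ supporters, and writing $m=n-\max_ic_i$ one checks $\E[\Delta m\mid\sigma_t]\lesssim -m/n$, so $m$ contracts geometrically and consensus follows within $\tilde O(n)$ steps. Summing the three phases gives $\tilde O(\min(nk,n^{1.5}))$ whp.

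\textbf{Lower bound.} Start from the balanced configuration with $\kappa:=\min\bigl(k,\lfloor c\sqrt{n/\log n}\rfloor\bigr)$ opinions, each held by $n/\kappa$ vertices; it suffices to show that $M_t:=\max_ic_i(\sigma_t)$ does not even reach $2n/\kappa$ within $\tilde\Omega(n\kappa)=\tilde\Omega(\min(nk,n^{1.5}))$ steps whp. Here $|\Delta M_t|\le 1$; $\E[(\Delta M_t)^2\mid\sigma_t]\lesssim M_t/n$; and whenever there is a unique maximiser, $\E[\Delta M_t\mid\sigma_t]=p_{\max}(p_{\max}-\|p\|_2^2)\le M_t^2(n-M_t)/n^3\le M_t^2/n^2$ (when several opinions tie at the maximum $M$ can only increase, by $+1$ with probability $\le 2\kappa M/n$, so the initial $O(\log n)$ ``symmetry‑breaking'' steps are harmless since $\kappa\ll n/\log n$). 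Stopping at $\tau=\min\{t:M_t\ge 2n/\kappa\}$, the stopped process has drift $O(1/\kappa^2)$ and second moment $O(1/\kappa)$ per step, so $e^{\lambda M_{t\wedge\tau}-\mu t}$ is a supermartingale for $\lambda=\Theta(\kappa\log n/n)$ and $\mu=\Theta(\lambda/\kappa^2+\lambda^2/\kappa)$; Doob's maximal inequality gives $\Pr[\tau\le c'n\kappa]\le e^{\lambda n/\kappa}\cdot e^{-(2\lambda n/\kappa-\mu c'n\kappa)}=e^{-\lambda n/\kappa+\mu c'n\kappa}\le n^{-\beta}$ for a suitable small constant $c'$ and any fixed $\beta$, using $\kappa\le c\sqrt{n/\log n}$. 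Since consensus forces $M_t=n$, the consensus time is $\tilde\Omega(n\kappa)$ whp.

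\textbf{Main obstacles.} The hardest ingredients are the asynchronous opinion‑reduction lemma of Phase~1 --- controlling the joint first‑passage behaviour of many small, weakly coupled opinion counts, the analogue of the synchronous analysis of~\citet{ignore_or_comply} --- and the symmetry‑breaking at the start of Phase~2: near a balanced configuration the first‑order drift of $\Phi$ toward consensus is only $\Theta(1)$, so one must show that the spread $\sum_{i<j}c_ic_j(c_i-c_j)^2$ builds up on the right timescale and then argue, with Freedman‑type concentration, that the ensuing multiplicative growth is not reversed by fluctuations. On the lower‑bound side the delicate point is tuning $\lambda,\mu$ so that the exponential supermartingale yields a \emph{high‑probability} (not merely constant‑probability) statement all the way up to $\kappa=\tilde\Theta(\sqrt n)$, which is exactly why the construction uses $\kappa=\Theta(\sqrt{n/\log n})$ rather than $\sqrt n$ itself (the lost polylog factor is absorbed by $\tilde\Omega$).
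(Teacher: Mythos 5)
Your Phase~2 rests on an inequality that is false, and it fails exactly at the configurations that are the real bottleneck. Take three surviving opinions with counts roughly $(n/2,\,n/2,\,1)$, so the current number of opinions is $\kappa=3$ and $\Phi\approx n^2/2\gg n^2/\kappa$. Your claimed bound would give $\E[\Delta\Phi\mid\sigma_t]\gtrsim 1+(\Phi-n^2/\kappa)_+/(n\kappa)=\Omega(n)$, but the exact drift is $2n(1-\tfrac1n)\bigl(\|p\|_3^3-\|p\|_2^4\bigr)+2(1-\|p\|_2^2)=O(1)$ here (the spread term is $\approx n^3/4$, not the claimed $\gtrsim n^4/18$), because the two leaders are equal and contribute nothing to $\sum_{i<j}c_ic_j(c_i-c_j)^2$. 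So ``$\Phi$ grows multiplicatively once it exceeds $n^2/\kappa+\Theta(n\kappa)$'' is simply not true: $\Phi$ stalls with $O(1)$ drift whenever several leading opinions are (near-)tied, which is precisely the symmetry-breaking situation one must get through. The paper does not argue through a single scalar potential for this reason; it tracks the pairwise biases $\delta_t(i,j)=|\alpha_t(i)-\alpha_t(j)|$, proves a variance-driven additive build-up of $\delta_t$ via the optional stopping theorem and the drift lemma of Doerr et al., amplifies it multiplicatively with Freedman's inequality (\cref{lem:pair of non-weak opinions vanishes}), and combines this with a ``weak cannot become strong'' gambler's-ruin argument (\cref{lem:weak cannot be strong}) to get a unique strong opinion. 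Your closing remark acknowledges that the spread must ``build up on the right timescale,'' but the drift inequality you propose to drive that is wrong, so the amplification phase as written does not go through.

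Phase~1 is also a genuine gap rather than a sketch: the statement that, from $\kappa$ roughly balanced opinions, ``a constant fraction of them hit $0$ within $\tilde O(n^2/\kappa)$ steps by a first-passage estimate'' is essentially \cref{thm:many opinions} itself, which is the part of the paper that resolved an open question. Two concrete problems with the heuristic: (i) the counts do not all have ``essentially non-positive drift'' — every opinion above the $\ell_2$ threshold $n\|p\|_2^2$ has strictly positive drift, and near balance about half of them fluctuate above it; (ii) a (near-)driftless count started at $n/\kappa$ dies within $O(n^2/\kappa)$ steps only with constant probability, and upgrading ``constant probability per opinion'' to ``a constant fraction dies, with high probability'' requires controlling $\kappa$ strongly dependent counts whose drift signs change as opinions die — this is exactly where the synchronous argument of Ghaffari--Lengler uses synchronicity and where the paper instead constructs the explicit majorization-preserving coupling of \ThreeMajority with \Voter (\cref{lem:coupling}) and then imports the coalescing-random-walk bound for \Voter. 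As stated, your Phase~1 asserts the conclusion rather than proving it. By contrast, your lower bound is essentially sound and close in spirit to the paper's, which applies a multiplicative-drift/Freedman estimate to each $\alpha_t(i)$ separately and union bounds — a cleaner route that sidesteps your tie-breaking caveat for the running maximum; the endgame contraction of $n-\max_i c_i$ is also fine.
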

    More precisely, the hidden $(\log n)^{O(1)}$ factors in \cref{thm:all k} are as follows: The upper bound is $O\rbra*{ \min\cbra*{ kn(\log n)^2, n^{1.5}(\log n)^{1.5} }}$ and the lower bound is $\Omega\rbra*{ \min\rbra*{ kn, \frac{n^{1.5}}{\sqrt{\log n}}} }$.

    \Cref{thm:all k} means that the linear-in-$k$ dependence holds for $2\le k \ll \sqrt{n}$ and not for $k \gg \sqrt{n}$.
    The main ingredient of the proof of \cref{thm:all k} is to bound the consensus time for $k \ll n^{1/2}$.
    This extends the previous range of $k\ll n^{1/3}$ in the synchronous setting \cite{nearly_tight_analysis}.
    
    We also obtain a lower bound on the consensus time for the asynchronous \TwoChoices dynamics, which is a fundamental nonlinear consensus dynamics.
    In the \TwoChoices dynamics, 
the activated vertex $u$ randomly picks  two vertices $v_1$ and $v_2$ (with replacement) and updates its opinion to match that of $v_1$ if both $v_1$ and $v_2$ have the same opinion (otherwise, the opinion of $u$ does not change).
    Specifically, we show that  asynchronous \TwoChoices requires $\Omega(kn)$ steps to reach consensus with high probability provided that $k \ll n/\log n$ (see \cref{lem:consensus time lower bound_Bo2} for the detail).
    This establishes a polynomial gap between  \ThreeMajority and \TwoChoices 
    in the asynchronous setting:
    The asynchronous \ThreeMajority reaches consensus within $\tilde{O}(n^{1.5})$ steps, whereas \TwoChoices requires $\tilde{\Omega}(n^2)$ steps for $k\approx n/\log n$.
    Such gap was previously known in the synchronous setting \cite{ignore_or_comply,nearly_tight_analysis}.

    This work gives an improved analysis of \ThreeMajority dynamics in the asynchronous update setting.
    We leave open whether the analogous consensus time bound holds for the synchronous \ThreeMajority dynamics.
    A general reduction between synchronous and asynchronous models is not known.
    Indeed some dynamics are known to exhibit a huge synchronous-asynchronous gap \cite{Becchetti_minority}.
    However, we believe that there is no significant gap between asynchronous and synchronous \ThreeMajority dynamics and our proof has the potential to be adapted to prove optimal consensus time bound for the synchronous \ThreeMajority dynamics.

    The overview of the proof of \cref{thm:all k} and our technical contributions are given in \cref{sec:proof overview}.

\subsection{Related Work}
The convergence time in consensus dynamics, 
where each vertex repeatedly updates its opinion based on the opinions of others with the goal of reaching consensus, 
has been studied for various opinion updating rules, including \ThreeMajority. 
This research topic spans multiple contexts, including distributed computing, physics, social science, and biology. 
See, for example, \cite{consensus_dynamics_SIGACT} for an overview of recent studies in this field.

\Voter (also known as the pull voting) is one of the simplest and most well-known consensus dynamics~\cite{Pel02,voter_dynamic_graph,linear_voting,NIY99}.
It has been shown that, for any $k$, the expected consensus time is $O(n)$ for synchronous \Voter~\cite{voter_dynamic_graph} and $O(n^2)$ for the asynchronous \Voter \cite{linear_voting}.

\TwoChoices (also known as two-sample voting or best-of-two)
has also been extensively studied in the synchronous setting \cite{Doerr11,nearly_tight_analysis,ignore_or_comply,twochoice_ICALP14,twochoice_expander_DISC15,twochoice_expander_DISC17,quasi-majority,EFKMT17}. 
\citet{nearly_tight_analysis} showed that the consensus time is $O(k \log n)$ with high probability if $k=O(\sqrt{n/\log n})$. 
On the other hand, \citet{ignore_or_comply} showed that the consensus time is $\Omega(n/\log n)$ with high probability if each vertex initially holds a different opinion ($k=n$). 

%
General classes of opinion updating rules have been studied especially for the binary opinions case ($k=2$)~\cite{linear_voting,consensus_ER_Schoenebeck18,quasi-majority,hierarchy_Berenbrink}. 
\citet{hierarchy_Berenbrink} studied $j$\Majority dynamics with binary opinions, 
where each vertex $v$ randomly picks $j$ vertices $u_1,\dots,u_j$ (with replacement) and $v$ updates its opinion to match the majority opinion among $u_1,\dots,u_j$ (with ties broken randomly). 
They showed that the consensus time of  $j$\Majority stochastically dominates the consensus time of $(j+1)$\Majority for any $j\geq 2$.
\citet{consensus_ER_Schoenebeck18} studied asynchronous “majority-like” voting, including $(2j+1)$\Majority for $j\geq 1$.
They showed that the expected consensus time is $O(n \log n)$ for a dense Erd\H{o}s-R{\'e}nyi random graph.

For the case of synchronous updates, several studies focus on scenarios where an adversary capable of changing the opinions of vertices exists~\cite{stabilizing_consensus,simple_dynamics,nearly_tight_analysis,ignore_or_comply}. 
For example, \citet{nearly_tight_analysis} showed that, even with an adversary that can change the opinions of at most $O(\sqrt{n}/k^{1.5})$ vertices, 
the convergence time remains $O(k \log n)$ for both \ThreeMajority with $k=O(n^{1/3}/\sqrt{\log n})$ and \TwoChoices with $k=O(\sqrt{n/\log n})$.

Undecided State Dynamics is a well-known model within the field of consensus dynamics, which uses a special state called ``Undecided''. 
In this model, each ``Undecided'' vertex updates its state to match that of an interacted vertex, and each ``not-Undecided'' vertex becomes ``Undecided'' upon interacting with a vertex of a differing state.
The consensus time for Undecided State Dynamics has been well studied for $k=2$~\cite{Angluin2007,undecided_MFCS} and, 
more recently, for the general case of $k>2$ \cite{Becchetti2015,fast_convergence_undecided}
in both synchronous and asynchronous model.

It is worth mentioning that the asynchronous model 
is essentially equivalent to the continuous-time model
where each vertex has an independent Poisson clock with rate $1$.
For details see~\cite{randomized_gossip_algorithms_Boyd}.
\section{Proof Overview and Technical Contribution} \label{sec:proof overview}
In this section, we present the proof overview of \cref{thm:all k} and our key ingredients.
\subsection{Key Results}
The proof of \cref{thm:all k} consists of two parts.
Firstly, we show that, starting from any initial configuration, the number of remaining opinions in the asynchronous \ThreeMajority quickly decreases.
\begin{theorem} \label{thm:many opinions}
    Consider the asynchronous \ThreeMajority starting from any initial configuration.
    There exists a universal constant $C>0$ such that,
    for any $1\le \kappa \le n$, the number of remaining opinions becomes at most $\kappa$ within $Cn^2\log n/\kappa$ steps with high probability.
\end{theorem}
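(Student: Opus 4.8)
The plan is to track how quickly the number of remaining opinions drops by identifying, at any time, the "small" opinions—those supported by relatively few vertices—and showing that each step has a constant probability of permanently killing off a vertex from one of these small opinions, or more precisely, of reducing the total support of the bottom opinions. The cleanest way to get the $n^2\log n/\kappa$ bound is via a potential/martingale argument on the number of vertices holding one of the "many" small opinions, combined with the observation that when $m$ opinions remain, at least $m - \kappa/2$ of them together hold at most $\kappa n / (2(m-\kappa))$... actually, more robustly: when more than $\kappa$ opinions remain, some opinion has at most $n/\kappa$ supporters (pigeonhole), and an activated vertex of that opinion switches away with probability $\ge 1 - O(1/\kappa^2) \cdot(\text{stuff})$, i.e. roughly constant probability it picks three vertices not of its own color.

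First I would **set up the right potential function.** Let $A_t$ be the set of vertices whose opinion, at time $t$, is one of the $\lfloor \kappa/2 \rfloor$-or-more least popular opinions—or, to avoid re-indexing issues, track $\Phi_t = \sum_i \min(c_i(t), n/\kappa)$ where $c_i(t)$ is the support of opinion $i$; this counts, with a cap, the "mass in small opinions." If more than $\kappa$ opinions survive, then $\Phi_t \ge (\text{number of surviving opinions beyond the top } \kappa) \cdot 1 \ge 1$, and in fact one can arrange $\Phi_t \gtrsim$ something useful. The key dynamical fact: when vertex $u$ with a small opinion $i$ (support $c_i \le n/\kappa$) is activated, the probability that all three of $v_1, v_2, v_3$ lie outside opinion $i$ is $(1 - c_i/n)^3 \ge (1 - 1/\kappa)^3 \ge 1/2$, and conditioned on that, $u$ adopts some other opinion, so $c_i$ drops by $1$. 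Since $u$ is uniform, the probability of activating a small-opinion vertex is $|A_t|/n$, so in expectation $\Phi$ decreases by $\Omega(|A_t|/n)$ per step—giving a drift that, once $|A_t| = \Omega(\kappa)$ (which holds while $> 2\kappa$ opinions survive, as each contributes $\ge 1$), is $\Omega(\kappa/n)$, hence $O(\Phi_0 \cdot n/\kappa) = O(n^2/\kappa)$ expected steps, and the $\log n$ comes from boosting to high probability.

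**The main obstacle** is that $\Phi$ is not monotone: a small opinion can temporarily \emph{gain} supporters when some other vertex copies it, and more delicately, an opinion that was "large" can shrink and start feeding mass into the small-opinion pool. So the naive supermartingale claim can fail locally. I would handle this by either (a) using a more careful potential that is genuinely a supermartingale—e.g. arguing that the \emph{total} number of (opinion, vertex) "defections away from the current plurality structure" is bounded, or (b) a union-bound over epochs: partition time into $O(\log \kappa)$ phases where in phase $j$ the target is to go from $\le 2^{j+1}\kappa$ to $\le 2^j \kappa$ surviving opinions, each taking $O(n^2 \log n / (2^j\kappa))$ steps by the drift argument applied with the coarser threshold, and the geometric sum telescopes to $O(n^2\log n/\kappa)$. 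Approach (b) is more robust because in each phase we only need the weaker statement that \emph{some} opinion among the bottom $2^j\kappa$ has small support and dies, which follows from pigeonhole plus the constant-probability switch; the non-monotonicity is absorbed because an opinion, once it drops to $0$ supporters, is dead forever—so "number of surviving opinions" \emph{is} monotone, and we just need to lower-bound the rate at which it decreases whenever it exceeds the current phase target. I would likely also invoke a concentration bound (Azuma/Freedman for the number of "killing" steps, or a straightforward coupling with a sum of independent geometric random variables) to turn the expected-time bound into a with-high-probability bound, which is where the extra $\log n$ factor enters.

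Finally, I would **check the edge behavior at small $\kappa$**: when $\kappa$ is a constant the bound $n^2\log n/\kappa = \Theta(n^2\log n)$ should be (and is) consistent with the trivial $O(n^2)$-ish bounds one expects for getting down to $O(1)$ opinions, and when $\kappa = n$ the statement is vacuous; the interesting regime $1 \ll \kappa \ll n$ is exactly where the pigeonhole "$\exists$ opinion of support $\le n/\kappa$" has real teeth, and that is what the whole argument rests on.
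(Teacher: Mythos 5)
There is a genuine gap, and it sits at the heart of your argument. Your central dynamical claim is that the process has a systematic \emph{drift} killing small opinions: that $\Phi_t=\sum_i\min(c_i,n/\kappa)$ decreases in expectation by $\Omega(|A_t|/n)$ per step, because an activated vertex of a small opinion switches away with constant probability. This is false. When such a vertex switches, it adopts some other opinion $j$; if $j$ is also below the cap, $\Phi$ is unchanged, so the reshuffling of mass among small opinions contributes nothing. In the worst case for this theorem --- the balanced configuration with $m\gg\kappa$ opinions of size $n/m$ each --- every opinion is ``small'', $\Phi\equiv n$, and your potential never moves. More fundamentally, the one-step drift of $c_i$ in asynchronous 3-Majority is $\alpha_i(\alpha_i-\norm{\alpha}^2)$ (\cref{lem:basic inequalities}\ref{item:expectation of alpha}), which vanishes in the balanced case: extinction of opinions is \emph{fluctuation-driven}, not drift-driven, so ``constant probability of switching away'' buys you nothing, since it is exactly balanced by other vertices switching in. Your fallback phase argument (b) inherits the same problem: the assertion that within a phase ``some opinion among the bottom has small support and dies, which follows from pigeonhole plus the constant-probability switch'' does not follow. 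To get from $2m$ to $m$ opinions in $O(n^2\log n/m)$ steps you need $\Theta(m)$ extinctions of critical (driftless) random walks of height $\approx n/m$ to occur essentially \emph{in parallel}; a single such extinction already requires a hitting-time/quadratic-variation argument (each walk moves only at rate $\approx 2/m$ per step and is a fair gamble, so with constant probability it doubles rather than dies), and the parallelism requires a global argument controlling how the surviving opinions' supports interact. None of this is supplied, and it is precisely the hard part.

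The paper takes a different route that sidesteps this entirely: it constructs an explicit step-by-step coupling under which asynchronous 3-Majority majorizes the asynchronous \Voter model (\cref{lem:coupling}, via \cref{lem:pull and 3-majority}, proved through a deletion/addition decomposition and a block-structure coupling), so the number of surviving opinions in 3-Majority is dominated by that in \Voter. For \Voter, the duality with coalescing random walks gives the clean parallel-extinction rate: with $i$ clusters remaining, a coalescence occurs with probability $i(i-1)/n^2$ per step, so the expected time to reach $\kappa$ opinions is $\sum_{i>\kappa} n^2/(i(i-1))\le n^2/\kappa$, and repetition plus Markov yields the high-probability bound with the $\log n$ factor. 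If you want a direct argument on 3-Majority you would need a substitute for this duality (or at least optional-stopping arguments quantifying fluctuation-driven extinctions happening concurrently); as written, your proposal is missing that key idea and its main quantitative claim is incorrect.
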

In the synchronous setting, results analogous to \cref{thm:many opinions} were obtained in \cite[Section 3.2]{ignore_or_comply}.
However, their proof crucially relies on the synchronicity.
Indeed, extending their approach to the asynchronous setting is “the most interesting open question" of \cite{hierarchy_Berenbrink} and we resolve it for \ThreeMajority.

Secondly, we show that, the linear-in-$k$ dependence of the consensus time holds for $k\ll \sqrt{n}$.

\begin{theorem} \label{thm:small k}
    Consider the asynchronous 3-Majority dynamics with $k \le c\sqrt{n/\log n}$ opinions on an $n$-vertex complete graph, where $c>0$ is a sufficiently small constant. 
    Then, for any initial configuration, the consensus time is $O(kn\log n)$ with high probability.
    Moreover, if the dynamics starts from the balanced initial configuration (that is, the configuration in which all opinions are supported by the same number of vertices), then the consensus time is $\Omega(kn)$ with high probability.
\end{theorem}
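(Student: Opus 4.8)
Write $c_i(t)$ for the number of vertices holding opinion $i$ at time $t$, put $\alpha_i(t)=c_i(t)/n$, and let $\gamma_t=\sum_i\alpha_i(t)^2=\norm{\alpha_t}_2^2\in[1/k,1]$ be the collision probability; consensus is exactly the event $\gamma_t=1$. Conditioned on $\sigma_t$, the activated vertex ends up with opinion $i$ with probability $q_i=\alpha_i^2+(1-\gamma_t)\alpha_i$, independently of which vertex is activated, hence
\[
\E\sbra*{\Delta c_i\mid\sigma_t}=\alpha_i\rbra*{\alpha_i-\gamma_t},\qquad
\E\sbra*{\Delta\rbra{n^2\gamma_t}\mid\sigma_t}=2n\rbra*{\textstyle\sum_i\alpha_i^3-\gamma_t^2}+O(1),
\]
where $\sum_i\alpha_i^3-\gamma_t^2=\tfrac12\sum_{i,j}\alpha_i\alpha_j(\alpha_i-\alpha_j)^2\ge0$. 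Two features drive everything: (i) the \emph{unique} largest opinion always has strictly positive drift, since $\gamma_t\le\norm{\alpha_t}_\infty$ with equality only at consensus; and (ii) near the balanced configuration $\alpha\equiv1/k$, writing $c_i=n/k+x_i$ gives $\E[\Delta x_i\mid\sigma_t]=\tfrac{x_i}{kn}(1+o(1))$ while $\norm{\alpha_t-\tfrac1k\mathbf{1}}$ is small --- an instability of multiplicative rate $\Theta(1/(kn))$, each count having per-step variance $\Theta(1/k)$.

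\emph{Upper bound.} I split the trajectory into an endgame and a symmetry-breaking phase. For the endgame I would show: once $\norm{\alpha_t}_\infty\ge 2/3$, consensus is reached within $O(n\log n)$ further steps w.h.p.\ and $\norm{\alpha}_\infty$ never drops below $1/2$ meanwhile. Tracking the gap $g_t=n-\max_ic_i(t)$, feature (i) gives $\E[g_{t+1}-g_t\mid\sigma_t]\le-\Omega(g_t/n)$ while $g_t\le n/3$, the increments are bounded by $1$, and the claim follows from a standard multiplicative-drift / Freedman estimate (as in the $k=2$ analysis of \citet{hierarchy_Berenbrink}); the leader stays unique --- so $\Delta g=-\Delta c_{\mathrm{leader}}$ --- because its count only grows in expectation. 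The crux is the symmetry-breaking phase: reaching $\norm{\alpha}_\infty\ge2/3$ within $O(kn(\log n)^2)$ steps from any configuration. The mechanism is feature (ii): the fluctuations of the counts, of scale $\Theta(\sqrt n)$ after one relaxation time $\Theta(kn)$, are amplified at rate $\Theta(1/(kn))$, so $\Theta(\log n)$ relaxation times --- i.e.\ $\Theta(kn\log n)$ steps --- suffice for some count to reach scale $\Theta(n)$. To realise this I would track $\sum_ic_i^2=n^2\gamma_t$ jointly with $\norm{\alpha_t}_\infty$ and argue: \emph{(a)} while $\norm{\alpha_t}_\infty\le2/3$, $\sum_ic_i^2$ drifts up, and --- away from the ``locally balanced'' configurations, those balanced over their own support, where the first-order term $\sum_i\alpha_i^3-\gamma_t^2$ vanishes and only the $O(1)$ diffusive term acts --- it grows multiplicatively at rate $\Theta(1/(kn))$, reaching $\Theta(n^2)$ within $O(kn\log n)$ steps and forcing $\norm{\alpha_t}_\infty=\Omega(1)$, hence (again by the multiplicative growth) $\ge2/3$; \emph{(b)} $\norm{\alpha_t}_\infty$ does not accidentally reach $2/3$ too early, since over $O(kn\polylog n)$ steps each count wanders by only $O(\sqrt n\polylog n)=o(n)$, an estimate equivalent to $k\sqrt{\log n}\ll\sqrt n$ --- this is where the hypothesis $k\le c\sqrt{n/\log n}$ enters; \emph{(c)} the estimates of (a)--(b) hold on one length-$O(kn\log n)$ window with only constant probability, so I would amplify to high probability by restarting the analysis $O(\log n)$ times from whatever configuration has been reached (using that the bound in (a) holds from \emph{every} configuration with $\norm{\alpha}_\infty\le2/3$), contributing the second $\log n$ factor. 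This is the asynchronous counterpart of the synchronous argument of \citet{nearly_tight_analysis}, pushed from $k\ll n^{1/3}$ to $k\ll n^{1/2}$.

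\emph{Lower bound.} From the balanced configuration I would show $\norm{\alpha_t}_\infty<1$ --- hence no consensus --- for all $t\le c_0kn$, with $c_0$ a small constant. Let $L$ be a large constant and $\tau=\min\{t:\gamma_t\ge2/k\text{ or }\norm{\alpha_t}_\infty\ge L/k\}$. On $\{t<\tau\}$ one has $\E[\Delta(n^2\gamma_t)\mid\sigma_t]\le2n\norm{\alpha_t}_\infty\gamma_t+O(1)\le 4Ln/k^2+O(1)$, so accumulating over $c_0kn$ steps --- with a Freedman bound on the fluctuations, whose increments are $O(n\norm{\alpha_t}_\infty)=O(Ln/k)$ there --- keeps $\gamma_t<2/k$ provided $c_0\le1/(8L)$; and on $\{t<\tau\}$ each count satisfies $\abs{\E[\Delta c_i\mid\sigma_t]}<L^2/k^2$ and $\E[(\Delta c_i)^2\mid\sigma_t]=O(L/k)$, so Freedman's inequality and a union bound over $i\in[k]$ and $t$ give, w.h.p., $c_i(t)<n/k+c_0nL^2/k+O\rbra*{\sqrt{c_0nL\log n}}<Ln/k$ for all $t\le c_0kn$, the last step using $k\le c\sqrt{n/\log n}$ (for $c$ small relative to $L$). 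Hence $\tau>c_0kn$ w.h.p., and on that event consensus has not occurred.

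\emph{Main obstacle.} The hard part is step (a) together with the boosting in (c): making ``the noisy instability escapes in $O(kn\log n)$ steps'' rigorous and uniform over starting configurations. One must lower-bound the drift of $\sum_ic_i^2$ by its multiplicative part except on the locally balanced set, show the trajectory does not linger near that set, and control the \emph{correlated} fluctuations of all $k$ counts through the amplification --- complicated by $\sum_ic_i^2$ having jumps of size up to $\Theta(n)$ when a large opinion gains or loses a vertex, so the concentration argument must be carried out scale by scale, splitting the phase into $O(\log n)$ sub-phases indexed by the current order of magnitude of the largest count.
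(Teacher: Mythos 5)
Your endgame phase and your lower bound are essentially sound and close in spirit to what the paper does: the endgame matches \cref{lem:AfterOneStrong} (multiplicative growth of $\norm{\alpha_t}_\infty$ plus a supermartingale on $1-\norm{\alpha_t}_\infty$), and your lower bound is a repackaging of \cref{lem:consensus time lower bound}, which shows via the multiplicative-drift Freedman lemma (\cref{lem:multipricative_drift_Freedman}) that no opinion doubles its support within $\Theta(kn)$ steps. The symmetry-breaking phase, however, is where you diverge from the paper, and it is exactly there that your plan has a genuine gap. You propose to track the single global quantity $n^2\gamma_t=\sum_i c_i(t)^2$, whose drift is proportional to $\norm{\alpha_t}_3^3-\gamma_t^2=\tfrac12\sum_{i,j}\alpha_i\alpha_j(\alpha_i-\alpha_j)^2$. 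This vanishes on every ``locally balanced'' configuration, and such configurations are not an exotic corner case: the balanced start itself, and states like two opinions each holding $n/2$ vertices (where $\gamma_t=1/2<2/3$), lie exactly on this set, so the multiplicative mechanism you rely on gives nothing there and the escape must come from fluctuations. You name this as the main obstacle but supply no argument for it, and it is precisely the hard part of the theorem; everything else in your step (a) is conditional on it. Note also that your step (b) is not needed for an upper bound (reaching $\norm{\alpha}_\infty\ge 2/3$ early only helps), and your boosting in (c) by restarting $O(\log n)$ windows of length $O(kn\log n)$ would only yield $O(kn(\log n)^2)$, a log factor short of the claimed $O(kn\log n)$.

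For comparison, the paper avoids a global potential altogether and works pairwise: for each pair of non-weak opinions it tracks the bias $\delta_t(i,j)$, whose \emph{variance} per step is $\Omega\rbra*{\frac{1}{n^2k}}$ even at a perfect tie (\cref{item:variance of delta} of \cref{lem:basic inequalities}); an optional-stopping argument (\cref{thm:OST_mar}) plus the Doerr-type drift lemma push $\delta_t$ up to $\Theta(\sqrt{\log n/n})$ within $O(kn\log n)$ steps with probability $1-O(n^{-10})$ (\cref{lem:deltaUPorWeak additive}), after which Freedman-based multiplicative amplification (\cref{lem:deltaUPorWeak}) and the gambler's-ruin lemma ``weak cannot become strong'' (\cref{lem:weak cannot be strong}) give, after a union bound over the $\binom{k}{2}$ pairs, a unique strong opinion in $O(kn\log n)$ steps with high probability in a single pass (\cref{lem:unique strong opinion lemma}). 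If you want to salvage your $\gamma_t$-based route, you would have to reproduce this variance-driven escape uniformly over the locally balanced set (including shrinking supports) and handle the $\Theta(n)$-sized jumps of $\sum_i c_i^2$ scale by scale, which is essentially re-deriving the pairwise argument in a harder coordinate system; as written, the proposal does not prove the upper bound.
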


The upper bound of \cref{thm:small k} addresses a wider range of $k$ compared to \cite{nearly_tight_analysis}, who proved that the analogous upper bound in the synchronous setting holds for $k\ll n^{1/3}$.
Indeed, the common argument in previous works for the synchronous model essentially requires $k\ll n^{1/3}$ (see \cref{sec:small opinion case} for the detail).
We overcome the “$n^{1/3}$-barrier" by invoking sharp martingale concentration inequalities.
The lower bound of \cref{thm:small k} improves the analysis of \cite{simple_dynamics} who proved the analogous lower bound in the synchronous setting for $k\le (n/\log n)^{1/4}$.

The proof of \cref{thm:all k} is straightforward from \cref{thm:many opinions,thm:small k}.
\begin{proof}[Proof of \cref{thm:all k} using \cref{thm:many opinions,thm:small k}.]
    We first prove the upper bound.
    Let $k$ be the number of opinions in the initial configuration and
    $\kappa = c\sqrt{n/\log n}$ (for a sufficiently small constant $c>0$).
    If $k \le \kappa$, then from \cref{thm:small k}, the consensus time is $O(k n \log n)$.
    If $k > \kappa$, from \cref{thm:many opinions}, the number of remaining opinions becomes at most $\kappa$ within $O((n\log n)^{1.5})$ steps.
    Then from \cref{thm:small k}, the dynamics reaches consensus within additional $O(\kappa n \log n) = O(n^{1.5}\sqrt{\log n})$ steps with high probability.
    Therefore, for any $k\ge 2$, the consensus time is $O\rbra*{\min\rbra*{ kn(\log n)^2, (n\log n)^{1.5} }}$ with high probability.

    We prove the lower bound.
    If $k \le \kappa= c\sqrt{n/\log n}$ and the dynamics starts with the balanced configuration, then the consensus time is $\Omega(kn)$ from \cref{thm:small k}.
    Even if $k > \kappa$, for the balanced initial configuration with $\kappa$ opinions (i.e., $\kappa$ opinions supported by the same number of $n/\kappa$ vertices and the remaining opinions not supported by any vertex),
    the consensus time is $\Omega(\kappa n) = \Omega(n^{1.5}/\sqrt{\log n})$ with high probability from \cref{thm:small k}.
    This proves the claim.
\end{proof}

The rest of this paper is devoted to proving \cref{thm:many opinions,thm:small k}.

\subsection{Many Opinion Case} \label{sec:many opinion case}
To prove \cref{thm:many opinions}, we borrow the framework of comparing \Voter and \ThreeMajority from \cite{ignore_or_comply}.
Specifically, we couple \Voter and \ThreeMajority so that the number of remaining opinions (i.e., an opinion that is supported by at least one vertex) in \ThreeMajority is no more than that of \Voter.
\begin{lemma} \label{lem:coupling}
    Consider the asynchronous \Voter and \ThreeMajority dynamics starting from the same initial configuration.
    There exists an explicit coupling of these dynamics such that, the number of remaining opinions in the \ThreeMajority dynamics is no more than that of \Voter.
\end{lemma}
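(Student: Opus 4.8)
The plan is to construct the coupling step by step: run the asynchronous \Voter and \ThreeMajority dynamics with the \emph{same} sequence of activated vertices, and at each step couple the neighbour samples drawn by the two processes. Write $\sigma_t$ for the \ThreeMajority configuration and $\xi_t$ for the \Voter configuration (so $\sigma_0 = \xi_0$), and let $\mathrm{Op}(\cdot)$ denote the set of opinions present in a configuration. I would maintain, by induction on $t$, a structural relation between $\sigma_t$ and $\xi_t$ that implies $\lvert\mathrm{Op}(\sigma_t)\rvert \le \lvert\mathrm{Op}(\xi_t)\rvert$; since this bound is exactly what \cref{lem:coupling} asserts, the lemma follows.

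The tempting invariant — that the partition of $V$ induced by $\sigma_t$ is a coarsening of the one induced by $\xi_t$ (equivalently, $\sigma_t$ is a deterministic function of $\xi_t$) — is too rigid to preserve. In one asynchronous step the activated vertex $u$ adopts, under \Voter, the opinion of a single uniform neighbour, whereas under \ThreeMajority it adopts the majority opinion of three i.i.d.\ neighbours; conditioned on $\sigma_t = \xi_t$, the law of this majority is strictly more concentrated on large classes than the law of one uniform sample, so no coupling of the samples can force $u$'s new \ThreeMajority opinion to lie in the $\sigma_t$-class hosting $u$'s new \Voter opinion. The coarsening relation must therefore break whenever \ThreeMajority ``over-consolidates''.

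Instead I would keep a weaker, re-routable invariant: an injection $\psi_t \colon \mathrm{Op}(\sigma_t) \hookrightarrow \mathrm{Op}(\xi_t)$ with $\xi_t^{-1}(\psi_t(d)) \subseteq \sigma_t^{-1}(d)$ for every $d \in \mathrm{Op}(\sigma_t)$ — that is, each \ThreeMajority class $d$ contains an entire \Voter class $\psi_t(d)$ (possibly alongside vertices of other \Voter classes), with distinct \ThreeMajority classes hosting distinct \Voter classes. This still forces $\lvert\mathrm{Op}(\sigma_t)\rvert \le \lvert\mathrm{Op}(\xi_t)\rvert$, but it pins down neither partition and so has slack to absorb the over-consolidation. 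The steps: (i) at $t=0$ take $\psi_0 = \mathrm{id}$; (ii) fix the sample coupling — for instance by identifying the \Voter sample with one of the three \ThreeMajority samples — and, at each step, case-split on whether the activated $u$ was the unique holder of its opinion in each process and on which opinions die; (iii) in each case exhibit $\psi_{t+1}$, re-routing $\psi_t$ whenever the \Voter class that $u$ joins has swallowed $u$ out of its host \ThreeMajority class, or a \ThreeMajority class has lost its hosted \Voter class, and check the containment property is restored.

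The main obstacle is step (iii): guaranteeing that a valid re-routing always exists. This is where the choice of sample coupling — and possibly a strengthening of the invariant, e.g.\ also tracking, for each \ThreeMajority class, which \Voter classes it meets — must be designed hand in hand with the case analysis; it is the asynchronous counterpart of the comparison argument of \cite{ignore_or_comply}. One feature that helps here is that in the asynchronous model a single vertex changes per step, so all the bookkeeping needed to update $\psi_t$ is local to $u$ and its sampled neighbours — which is also why the synchronous argument, updating all vertices at once, does not transfer verbatim.
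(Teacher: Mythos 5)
Your proposal is a plan rather than a proof: the step you yourself flag as ``the main obstacle'' --- exhibiting a sample coupling under which the injection $\psi_t$ can always be re-routed --- is precisely where all the difficulty of \cref{lem:coupling} lives, and nothing in the proposal shows it can be overcome. The invariant you chose is strictly stronger than the conclusion and is fragile: if the activated vertex $u$ is the unique holder of the \Voter class $\psi_t(d)$ hosted inside a \ThreeMajority class $d$, and in the coupled step $u$ leaves that class under \Voter while keeping opinion $d$ under \ThreeMajority, then $d$ may no longer contain any \Voter class, so no valid $\psi_{t+1}$ need exist even though the opinion-count inequality still holds; the induction then collapses with nothing to fall back on. Avoiding this would require the sample coupling to force \ThreeMajority to lose an opinion whenever \Voter does (or to have lost one strictly earlier), and the concrete coupling you suggest --- identifying the \Voter sample with one of the three \ThreeMajority samples --- gives no such guarantee once the two configurations have diverged, because from then on the two processes draw their samples from different opinion distributions and a realization-wise identification of samples carries no comparative meaning.

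The paper takes a different, and in this respect decisive, route: since the graph is complete, vertex identities are discarded and only the sorted count vectors are tracked, with majorization $\succeq$ as the invariant (which also implies the opinion-count comparison). The one-step statement (\cref{lem:pull and 3-majority}) is that $\Pull(\ct)\lest\ThreeMaj(\c)$ whenever $\c\succeq\ct$, proved by chaining $\ThreeMaj(\c)\gest\Pull(\c)$ (same configuration; the gaining opinion is coupled through the cumulative sums of $\c$ and $f(\c)$, using $f(\c)\succeq\c$ from \cref{lem:3maj f}) with $\Pull(\c)\gest\Pull(\ct)$ (\cref{lem:pull vs pull}), the latter proved by decomposing a \Voter step into a deletion and an addition and coupling inside the block structure (\cref{lem:del preserve,lem:add preserve}). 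The ability to compare a single \Voter step launched from two \emph{different} majorization-ordered configurations is exactly the mechanism your same-vertex, same-activation scheme is missing; without an analogue of it, or a proof that your re-routing always exists, the proposal does not establish the lemma.
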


An analogous result of \cref{lem:coupling} in the synchronous setting was shown by \cite{ignore_or_comply}.
However, their proof crucially relies on the synchronicity of the update schedule.
In the asynchronous setting, \cref{lem:coupling} was shown by \cite{hierarchy_Berenbrink} for the special case of $k=2$.
Indeed, they left the case of general $k\ge 3$ as an open question and \cref{lem:coupling} resolves it.
Moreover, a remarkable point is that the proof of \cref{lem:coupling} is \emph{constructive} in the sense that the coupling is explicit.
Actually, the coupling of both \cite{ignore_or_comply,hierarchy_Berenbrink} rely on a tool from Majorization theory (Strassen's theorem) and are nonconstructive.
Unfortunately, our analysis essentially relies on the asynchronicity.
We leave the constructive proof of \cref{lem:coupling} in the synchronous setting as an open question.

We then prove an analogous result of \cref{thm:many opinions} for the asynchronous \Voter.
This can be done by exploiting the well-known duality between \Voter and coalescing random walks \cite[Section 14.3]{AF02}.

\paragraph*{Explicit Coupling of \texorpdfstring{\Voter}{Voter} and \texorpdfstring{\ThreeMajority}{3-Majority} (\cref{sec:many opinions}).}
The proof strategy for \cref{lem:coupling} is to construct a coupling that preserves the \emph{majorization order}, following the previous approach of \cite{ignore_or_comply} for the synchronous model.
Let $[k]=\{1,\dots,k\}$ denote the set of possible opinions.
For each opinion $i \in [k]$, let $c_i$ be the number of vertices that hold opinion $i$ in the current configuration.
Since the underlying graph is a complete graph,
    we treat $\c=(c_1,\dots,c_k)$ as a configuration;
    thus, \ThreeMajority (and \Voter) is a sequence of random configurations $(\c_t)_{t\ge 0}$.
We always rearrange the elements of a configuration $\c=(c_1,\dots,c_k)$ in descending order, i.e., $c_1 \ge \dots \ge c_k$.
For two configurations $\c,\ct$, we say $\c$ \emph{majorizes} $\ct$, denoted by $\c \succeq \ct$, if for all $i\in[k]$, $\sum_{j=1}^i c_j \ge \sum_{j=1}^i \tc_j$.
For example, $(4,\,3,\,2,\,1) \succeq (3,\,3,\,2,\,2)$, $(5,\,5,\,0,\,0) \succeq (4,\,4,\,1,\,1)$ but $(4,\,3,\,2,\,1) \not \succeq (5,\,3,\,2,\,0)$.
Note that, if $\c \succeq \ct$ then the number of remaining opinions in $\c$ is no more than that of $\ct$.
We construct a coupling of “one-step simulations" of $\Voter$ and $\ThreeMajority$ that preserves the majorization relation.

To state it more formally, let $\Pull(\c)$ ($\ThreeMaj(\c)$) be the random configuration after the one-step update of \Voter (resp.\ \ThreeMajority) on configuration $\c$ (here we also keep the descending order by rearranging).
We show that, for any two configurations $\c,\ct$ such that $\c \succeq \ct$, there exists a coupling of $\ThreeMaj(\c)$ and $\Pull(\ct)$ such that $\ThreeMaj(\c) \succeq \Pull(\ct)$ holds.
From this claim, we can immediately prove \cref{lem:coupling} by repeatedly simulating the coupling.
To prove the claim,
we construct (i) a coupling of $\Pull(\c)$ and $\ThreeMaj(\c)$ for any $\c$ so that $\ThreeMaj(\c) \succeq \Pull(\c)$,
and (ii) a coupling of $\Pull(\c)$ and $\Pull(\ct)$ for any $\c \succeq \ct$ so that $\Pull(\c) \succeq \Pull(\ct)$.
The claim immediately follows from these two couplings as $\ThreeMaj(\c) \succeq \Pull(\c) \succeq \Pull(\ct)$ for any $\c$.

The most technical part is the coupling of (ii).
The key idea is to decompose the process of generating $\Pull(\c)$ into \emph{deletion process} and \emph{addition process}.
Specifically, in $\Pull(\c)$,
a uniformly random vertex $u$ is deleted, and then a new vertex, which shares the same opinion as another uniformly random vertex $v$, is added.
We show that these deletion and addition processes preserve the majorization relation.
To this end, we decompose the configuration vector $\c=(c_1,\dots,c_k)$ and $\ct=(\tc_1,\dots,\tc_k)$ into smaller components called the \emph{block structure}
and then construct a coupling for each block.
Since both deletion and addition processes change exactly one vertex, it suffices to apply the coupling for one of the blocks.
See \cref{sec:many opinions} for details.

\subsection{Small Opinion Case} \label{sec:small opinion case}
In the proof of \cref{thm:small k},
we consider two phases.

\begin{itemize}
\item \emph{Phase I: gap emergence.} We show that, starting from any initial configuration with $k \ll \sqrt{n/\log n}$ opinions, a significant gap between the most and second most popular opinions emerges within $O(kn\log n)$ steps (\cref{lem:unique strong opinion lemma}).
\item \emph{Phase II: gap amplification.} Once a significant gap arises between the most and second most popular opinions, we show that the dynamics reaches consensus within $O(kn\log n)$ steps (\cref{lem:AfterOneStrong}).
\end{itemize}

The analysis of the gap-emerging phase is the most technical part of this paper
and, in fact, the most challenging part of previous work \cite{stabilizing_consensus,nearly_tight_analysis}.
In what follows, we explain why previous techniques for synchronous setting require the condition $k \ll n^{1/3}$ and how to overcome this barrier.

\paragraph*{The \texorpdfstring{$n^{1/3}$}{n^{1/3}}-barrier in the synchronous model.}
    Before describing the ingredients of our analysis,
        we explain why previous works require the condition that $k\ll n^{1/3}$.
    Consider the synchronous \ThreeMajority dynamics. 
    Let $A_t(i)$ denote the number of vertices that hold opinion $i$ at the beginning of the $t$-th round, and let $\alpha_t(i)=\frac{A_t(i)}{n}$.
    We treat $\alpha_t=( \alpha_t(i) )_{i\in[k]}$ as a $k$-dimensional vector in $[0,1]^k$.
    For any $t\in\Nat$, it is not hard to see that 
    \begin{align}
    \E_{t-1}[\alpha_t(i)]=\alpha_{t-1}(i) \left( 1 + \alpha_{t-1}(i) - \norm{\alpha_{t-1}}^2\right) \label{eq:alpha expectation synchronous}
    \end{align}
    where
    $\E_{t-1}[\cdot]$ denotes the expectation conditioned on the opinion configuration at round $t-1$
    and
    $\norm{\cdot}$ denotes the $\ell^2$-norm  (cf.\ \cite[Lemma 2.1]{simple_dynamics}).

    Suppose we have a configuration which is still fairly balanced but
    the differences in the support of various opinions start emerging.    
    More precisely, we start with a configuration
      so that $\alpha_0(j)=\Theta(1/k)$ for all $j\in[k]$
      and some opinion $i\in[k]$ satisfies $\alpha_0(i)=\norm{\alpha_0}^2 + \Theta(1/k)$.
    Fix such an opinion $i$.
    Then, \cref{eq:alpha expectation synchronous}, we have
        $\E_0[\alpha_1(i)]=\alpha_0(i)\left( 1+\Theta \left( 1/k \right)\right) = \alpha_0(i) + \Theta \left( 1/k^2 \right)$.
    In other words, $\alpha_1(i)$ has an additive drift of $\Theta \left( 1/k^2 \right)$ in expectation.
    In each round, the $n$ vertices simultaneously update their opinions using their coin flips.
    Therefore, the random variable $\alpha_1(i)$ can be written as the sum of $n$ independent random variables.
    By the central limit theorem, we have $\alpha_1(i)\approx \E[\alpha_1(i)] \pm O\left( \sqrt{\frac{\alpha_0(i)}{n}} \right) \approx \alpha_0(i) + \Theta \left( \frac{1}{k^2} \right) \pm O \left( \sqrt{\frac{1}{kn}} \right)$.
    To mitigate the influence of the variance term in the one-step transition due to the central limit theorem, it is necessary to satisfy $\frac{1}{k^2} \gg \sqrt{\frac{1}{kn}}$, or equivalently, $k\ll n^{1/3}$.

    Tracking one-step change of $\alpha_t(i)$ is essential in the proof of previous works \cite{stabilizing_consensus,simple_dynamics,nearly_tight_analysis}.
    In \cite{nearly_tight_analysis}, the authors considered three classes of opinions:
        An opinion is strong at round $t$ if $\alpha_t(i)\ge \frac{ \left\| \alpha_t \right\|_\infty }{ 5 }$,
        weak if $\alpha_t(i) < \frac{ \left\| \alpha_t \right\|_\infty }{ 5 }$,
        and super-weak if $\alpha_t(i) \le \frac{1}{10k}$.
    Here, $\norm{\cdot}_\infty$ denotes the $\ell^{\infty}$-norm.
    Their proof crucially relies on the
    “weak remain weak" property: Once an opinion becomes weak (super-weak), then the opinion remains weak (resp.\ super-weak) with high probability in the next round.
    Since this property refers to the concentration of the one-step change of $\alpha_t(i)$, for $k \gg n^{1/3}$, their analysis cannot move beyond “fairly balanced" configurations.

\subsubsection{Our Tool: Drift Analysis via the Freedman Inequality} \label{sec:out tool}

\paragraph*{Drift Analysis.}
Unlike synchronous models, we cannot ensure the one-step concentration of $\alpha_t(i)$ as in the asynchronous model since only one vertex updates its opinion.
However, we can prove a multi-step concentration of $\alpha_t(i)$ by exploiting martingale concentration.
Such an argument is known as \emph{drift analysis} and is a key component in previous works on consensus dynamics in the asynchronous or population protocol model of relevant dynamics \cite{hierarchy_Berenbrink,fast_convergence_undecided}.

A prototype argument of drift analysis \cite{drift_analysis_concentration_Kotzing} goes as follows:
Consider a sequence of nonnegative random variables $(X_t)_{t\ge 0}$ that satisfy $\E_{t-1}[X_t]\ge X_{t-1}+\varepsilon$ (i.e., $X_t$ has an additive drift).
Then $Y_t:=X_t-\varepsilon t$ forms a submartingale and we can apply the Azuma--Hoeffding inequality to ensure the linear growth of $Y_t$.
Similarly, if $(X'_t)_{t\ge 0}$ satisfies $\E_{t-1}[X'_t]\ge (1+\varepsilon')X'_{t-1}$ (i.e., $X'_t$ has a multiplicative drift), then $Y'_t:=\frac{X'_t}{(1+\varepsilon')^t}$ is a submartingale and we can repeat the same argument to ensure the exponential growth of $Y'_t$.

\paragraph*{The naive approach does not work.}
If we try to apply this drift argument to asynchronous \ThreeMajority, we  obtain the following multi-step concentration of $\alpha_t(i)$:
If $\alpha_{t-1}$ is an almost balanced configuration (i.e., $\alpha_{t-1}(i)=\Theta(1/k)$ for all $i\in[k]$) and if $\alpha_{t-1}(i) = \norm{\alpha_{t-1}}^2 + \Theta(1/k)$,
the normalized population $\alpha_t(i)$ has an additive drift as
\begin{align}
\E_{t-1}[\alpha_t(i)]&=\alpha_{t-1}(i) \left( 1+\frac{\alpha_{t-1}(i) - \norm{\alpha_{t-1}}^2}{n} \right) \label{eq:additive drift alpha} \\
&= \alpha_{t-1}(i)+\Omega \left( \frac{1}{nk^2} \right). \nonumber
\end{align}
Thus, the sequence $\left( \alpha_t(i) - \frac{ct}{nk^2} \right)_{t\ge 0}$ is a submartingale for some small constant $c>0$ unless the configuration is almost balanced.
In the asynchronous model, in one step at most one vertex updates its opinion.
Therefore, 
$\abs{\alpha_t - \alpha_{t-1}}\le \frac{ 1 }{ n }$.
By the Azuma--Hoeffding inequality for submartingales, we obtain a lower tail that
\begin{align}\label{AH-first}
    \Pr \left[ \alpha_t(i) - \frac{ct}{nk^2} \le \alpha_0(i) - \lambda \right] \le \exp \left( - \frac{n^2\lambda^2}{2t} \right).
\end{align}
For example, suppose we are trying to show that $\alpha_t(i)\ge \alpha_0(i)$ with high probability for $t\approx kn$.
If we set $t=kn$ and $\lambda=\frac{ct}{nk^2}=\frac{c}{k}$, the upper tail is $\exp \left( -\Omega \left( \frac{n}{k^3} \right) \right)$; which again requires $k\ll n^{1/3}$.

\paragraph*{Our solution: The Freedman inequality.}
    The argument in the previous paragraph does not work for our setting of $k\approx n^{1/2}$ since the tail bound from the Azuma--Hoeffding inequality is not sharp enough.
    By exploiting the variance of the martingale difference, we can obtain a sharper bound.
    Such a bound, namely, the Freedman inequality \cite{Fre75}, can be seen as a Bernstein type bound for martingales.
    Roughly speaking, it shows that for any martingale $(X_t)_{t\ge 0}$ such that $\abs{X_t-X_{t-1}}$ is small enough and $\E_{t-1}[(X_t - X_{t-1})^2]\le S$, then we have $X_t = X_0 \pm O(\sqrt{S T \log n})$ for all $t=0,\dots,T$ with high probability.
        Recall that the Azuma--Hoeffding inequality yields that, if $\left| X_t - X_{t-1} \right| \le D$,
        then $X_t=X_0\pm O \left( \sqrt{D^2T\log n} \right)$ with high probability.
    Thus, we can obtain a sharper tail bound than the Azuma-Hoeffding inequality from the Freedman inequality if $S\ll D^2$.
    For $X_t=\alpha_t(i)$, we have $S=\E_{t-1} \left[ \left( \alpha_t(i) - \alpha_{t-1}(i) \right)^2 \right] \le n^{-2}\cdot \Pr_{t-1}[\alpha_t(i)\neq \alpha_{t-1}(i)] = O \left( \frac{1}{n^2k} \right)$ if $\alpha_{t-1}(i)=O(1/k)$.
    This improves the tail bound~\eqref{AH-first} in the previous subsection by a factor of $k$ in the exponent.

\subsubsection{Overview of Gap Emerging Phase} \label{sec:overview gap emerging}
\paragraph*{Weak and strong opinions.}
    We consider two classes of opinions:
    An opinion $i\in[k]$ is \emph{strong} at step $t$ if
        $\alpha_t(i)\geq \frac{7}{8}\norm{\alpha_t}^2$,
        and is \emph{weak} at step $t$ if
        $\alpha_t(i) \leq \frac{3}{4}\norm{\alpha_t}^2$.
    Note that there always exists a strong opinion since $\max_{i\in[k]}\cbra*{\alpha_t(i)} \geq \norm{\alpha_t}^2$.
    We also note that being non-weak just means $\alpha_t(i)\ge \frac{3}{4}\norm{\alpha_t}^2$ (not necessarily being strong).
    
    Suppose that the initial configuration is balanced (i.e., $\alpha_0(1)=\dots=\alpha_0(k)= 1/k = \norm{\alpha_0}^2$).
    Then, all opinions are initially strong.
    Our key technical lemma is the \emph{unique strong opinion lemma},
    which claims that after $O(kn\log n)$ steps,
        there remains only one strong opinion with high probability.
    To state it more formally, let $\istar{t}\in[k]$ be the opinion such that $\alpha_t(\istar{t})=\max_{i\in[k]}\cbra*{\alpha_t(i)}$.
    Then, the unique strong opinion lemma claims that $\max_{i\neq\istar{T}}\cbra*{\alpha_T(i)} \leq \frac{7}{8}\norm{\alpha_T}^2$ with high probability for some $T=O(kn\log n)$.
    
    The proof of the unique strong opinion lemma consists of two parts:
    \begin{enumerate}
        \item First, we prove that once an opinion $i\in[k]$ becomes weak, then $i$ is very unlikely to become strong in the future (weak cannot become strong). \label{item:first part}
        \item Second, we prove that, for {\em any two non-weak distinct opinions} $i,j\in[k]$, either $i$ or $j$ becomes weak within $O(kn\log n)$ steps with probability $1-O(n^{-2})$. \label{item:second part}
    \end{enumerate}
    By the union bound over $i,j\in[k]$, we have that for any pair of strong opinions, either $i$ or $j$ becomes weak within $O(kn\log n)$ steps with probability $1-O(k^2n^{-2})=1-O(n^{-1})$.
    This proves the claim above.

    In what follows, we present our approach to the proof of \cref{item:first part,item:second part}.
    \paragraph*{Proof outline of \cref{item:first part} (\cref{sec:weak cannot be strong}).}
        This statement can be seen as a relaxation of the “weak remains weak" property of \cite{nearly_tight_analysis}.
        The crucial difference is that we allow a gap between the population of weak and strong opinions.
        We show that if $\alpha_0(i)\leq\frac{3}{4}\norm{\alpha_0}^2$, then $\alpha_t(i)<\frac{7}{8}\norm{\alpha_t}^2$ for all $t=1,2,\dots,n^{10}$
        with high probability.
        The proof relies on the martingale-based argument for the Gambler's ruin problem (see, e.g., \cite[Theorem 4.8.9]{Dur19}).
        To state it more formally, fix a weak opinion~$i$
            and let $\alpharatio_t = \frac{\alpha_t(i)}{\norm{\alpha_t}^2}$.
        Note that $R_0\le \frac{3}{4}$ since the opinion $i$ is weak.
        By calculation, we can show that $\E_{t-1}[\alpharatio_t]\leq \alpharatio_{t-1} - \Omega\rbra*{\frac{1}{kn}}$.
        Let $\tau=\inf\cbra*{t\geq 0\colon \alpharatio_t\not \in \left[ \frac{5}{8},\frac{7}{8} \right]}$.
        Then, $(\alpharatio_t)_{0 \le t < \tau}$ can be seen as an asymmetric random walk over the interval $\left[\frac{5}{8}, \frac{7}{8}  \right] $ starting from $\alpharatio_0\leq \frac{6}{8}$.
        We can show that, for some $\phi=\omega(\log n)$, the sequence $(\exp(\phi \alpharatio_{t\land \tau}))_{t\geq 0}$ is a supermartingale; where $a\land b=\min\{a,b\}$.
        By the optimal stopping theorem (\cref{thm:gambler}), we have $\E[\exp(\phi \alpharatio_\tau)]\leq \exp(\phi \alpharatio_0)$.
        This gives $\Pr\sbra*{\alpharatio_\tau\geq \frac{7}{8}}=n^{-\omega(1)}$.

    \paragraph*{Proof outline of \cref{item:second part} (\cref{sec:pair of non-weak opinions vanishes}).}
    We shall look at the bias of two opinions.
    Fix two distinct non-weak opinions $i,j\in[k]$ at step $0$ and let $\delta_t=\abs{\alpha_t(i) - \alpha_t(j)}$
    and
    $\tauweak$ be the first time that either $i$ or $j$ becomes weak, that is,
    \[
        \tauweak = \inf\cbra*{t\geq 0\colon \min\cbra*{\alpha_t(i), \alpha_t(j)} \leq \frac{3}{4}\norm{\alpha_t}^2}.
    \]
    By calculation, if $t\leq \tauweak$, from \cref{eq:additive drift alpha}, we have
    \begin{align*}
        \E_{t-1}[\delta_t] \geq \abs*{\E_{t-1}[\alpha_t(i) - \alpha_t(j)]} = \delta_{t-1}\rbra*{1+\frac{\alpha_{t-1}(i) + \alpha_{t-1}(j) - \norm{\alpha_{t-1}}^2}{n}} \ge \rbra*{1+\frac{1}{2kn}}\delta_{t-1}.
    \end{align*}
    In other words, the bias $\delta_t$ is multiplied by at least $\rbra*{1+\frac{1}{2kn}}$ in expectation for any $t\leq \tauweak$.
    Using the Freedman inequality with bounding the second moment of the one-step difference $\E_{t-1}[(\delta_t-\delta_{t-1})^2]$,
        we can show that either $\tauweak\leq t$ or otherwise $\delta_t\approx \rbra*{1+\frac{1}{kn}}^t \delta_0$ with probability $1-\exp(-\Omega(n\delta_0^2))$.
    If $\delta_0\ge \sqrt{\log n/n}$, then this tail bound is strong enough.

    To ensure that $\delta_T\geq \sqrt{\log n/n}$ for some $T=O(kn \log n)$, 
    we exploit an additive drift of $\delta_t$.
    By calculation, we can see that
    \[
        \Var_{t-1}\sbra*{\delta_t} \geq \frac{\alpha_{t-1}(i) + \alpha_{t-1}(j) - 5\delta_{t-1}^2}{n^2}.
    \]
    If $\delta_t$ is small enough and $t\leq\tauweak$, then $\delta_t^2$ admits an additive drift of $\Omega\rbra*{\frac{1}{n^2k}}$.
    Then, from the optimal stopping theorem, we can show that $\delta_{t+nk} \ge \delta_t + \Omega\rbra*{\frac{1}{\sqrt{n}}}$ or otherwise $\tauweak\le t+nk$ with constant probability.
    Using drift analysis result of \citet*{Doerr11}, we show that $\delta_t\approx \sqrt{\log n/n}$ with high probability for some $t=O(nk \log n)$.

\subsection{Organization}
In \cref{sec:preliminaries}, we define notations and our model.
In \cref{sec:unique strong opinion lemma}, we analyze the gap emergence phase by proving the unique strong opinion lemma.
In \cref{sec:from unique strong opinion to consensus}, we consider the gap amplification phase, that is, the dynamics starting with a configuration having a sufficient initial gap.
In \cref{sec:lower bound on consensus time}, we prove lower bounds on the consensus time.
Combining results from \cref{sec:unique strong opinion lemma,sec:from unique strong opinion to consensus,sec:lower bound on consensus time} , we prove \cref{thm:small k} in \cref{sec:lower bound 3-majority}.
In \cref{sec:many opinions}, we consider the dynamics starting with arbitrary number of opinions.
The proof of \cref{thm:many opinions} will be given in \cref{sec:remaining opinions in Voter}.
We present the outline of the proof of \cref{thm:all k} in \cref{fig:diagram}.

\begin{figure}[htbp] 
    \begin{center}
        \begin{tikzpicture}[
            every node/.style={shape=rectangle,draw,minimum width=2.5cm, minimum height=0.7cm,align=left},
            arrow/.style={-latex}
          ]
          
          \node (thm1) at (0,0) {\cref{thm:all k}};
          \node (tmp1) [right=of thm1,shape=coordinate] {};
          \node (thm2-1) [right=of tmp1] {\cref{thm:many opinions} \\ (shown in \cref{sec:remaining opinions in Voter})};
          \node (thm2-2) [below=of thm2-1] {\cref{thm:small k} \\ (shown in \cref{sec:lower bound 3-majority})};
          \node (tmp2) [below=of thm2-1,left=of thm2-2,shape=coordinate] {};
          \node (tmp3) [right=of thm2-2,shape=coordinate]{};
          \node (lemma5-1a) [right=of tmp3] {\cref{lem:AfterOneStrong} (gap amplification)};
          \node (lemma5-1b) [below=of lemma5-1a] {\cref{lem:unique strong opinion lemma} (gap emergence)};
          \node (tmp4) [below=of lemma5-1a,left=of lemma5-1b,shape=coordinate] {};
          \node (lemma2-3) [right=of thm2-1] {\cref{lem:coupling} \\ (coupling \Voter and \ThreeMajority)};
          
          \draw [->] (tmp1) -- (thm1);
          \draw [-] (thm2-1) -- (tmp1);
          \draw [-] (thm2-2) -- (tmp2);
          \draw [-] (tmp2) -- (tmp1);
          \draw [-] (lemma5-1a) -- (tmp3);
          \draw [->] (tmp3) -- (thm2-2);
          \draw [-] (lemma5-1b) -- (tmp4) -- (tmp3);
          \draw [->] (lemma2-3) -- (thm2-1);
        \end{tikzpicture}
    \caption{An overview of the proof of \cref{thm:all k}. \label{fig:diagram}}
    \end{center}
\end{figure}
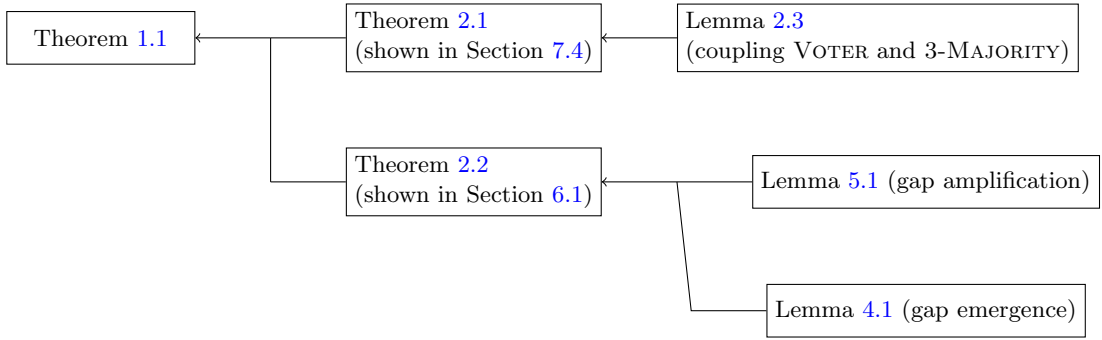

\section{Preliminaries} \label{sec:preliminaries}
Let $\Nat_0=\{0\}\cup\Nat$ denote the set of non-negative integers.
For $n\in\Nat$, let $[n]\defeq\{1,\dots,n\}$.
For a sequence of random variables $(X_t)_{t\in\Nat_0}$,
we write $\E_t[\cdot]$ to denote the conditional expectation $\E\sbra*{\cdot \condition \mathcal{F}_t}$, where $(\mathcal{F}_t)_{t\in\Nat_0}$ is the natural filtration associated with $(X_j)_{0\leq j\leq t}$.
When it is clear from the context, we will omit the term 
“with probability $1$".
For convenience, we sometimes use the notation $a\land b=\min\{a,b\}$.
For a vector $x\in\Real^V$ and $p \in \Real_{\geq 0}\cup\{\infty\}$, let $\norm{x}_p=\rbra*{\sum_{v\in V}x_v^p}^{1/p}$ denote the $\ell^p$-norm of $x$,
where we define$\norm{x}_\infty\defeq \max_{v\in V}\{x_v\}$.
In particular, we use the shorthand notation $\norm{x}=\norm{x}_2$ for the $\ell^2$-norm.
We use the standard asymptotic notation, $O(\cdot), o(\cdot), \Omega(\cdot)$, etc.

    We present the formal definition of the asynchronous \ThreeMajority as follows.
    \begin{definition} \label{def:3Majority}
    Let $n,k\in\Nat$ be parameters such that $1\le k \le n$ and $V=[n]$.
    The \emph{asynchronous 3-Majority process over $k$ opinions on $V$}
    is the discrete-time Markov chain $(\ThreeMajorityProcess_t)_{t\in\Nat_0}$ over the state space $[k]^{V}$, where $\ThreeMajorityProcess_t$ (for $t\geq 1$) is
    obtained from $\ThreeMajorityProcess_{t-1}$ by the following procedure:
    \begin{enumerate}
        \item Select uniformly random $v,w_1,w_2,w_3\in V$, independently and with replacement.
        \item Define $\ThreeMajorityProcess_t\in[k]^V$ by
        \[
            \ThreeMajorityProcess_t(u) = \begin{cases}
                \ThreeMajorityProcess_{t-1}(u) & \text{if $u\neq v$},\\
                \ThreeMajorityProcess_{t-1}(w_1) & \text{if $u=v$ and $\ThreeMajorityProcess_{t-1}(w_1)=\ThreeMajorityProcess_{t-1}(w_2)$}, \\
                \ThreeMajorityProcess_{t-1}(w_3) & \text{otherwise}.
            \end{cases}
        \]
    \end{enumerate}
    For simplicity, we shortly call such process $(\ThreeMajorityProcess_t)_{t\in\Nat_0}$ \emph{asynchronous 3-Majority process}.
    The \emph{consensus time} $\tau_{\mathrm{cons}}$ is the stopping time defined by
    \[
        \taucons = \inf\cbra*{t\geq 0 \colon \text{for some $i\in[k]$ and all $v\in V$, $\ThreeMajorityProcess_t(v)=i$}}.
    \]
    The \emph{fractional population} is the sequence of random vectors $(\alpha_t)_{t\in\Nat_0}$ where each $\alpha_t\in[0,1]^k$ is defined by
    \[
        \alpha_t(i) = \frac{1}{|V|}\abs*{\{v\in V\colon \ThreeMajorityProcess_t(v)=i\}}.
    \]
    For $t\geq 0$ and $i,j\in[k]$,
    the \emph{bias} $\delta_t(i,j)$ is defined as $\delta_t(i,j) \defeq \abs{\alpha_t(i) - \alpha_t(j)}$.
    Let $\alphanorm_t=\norm{\alpha_t}^2$.
    \end{definition}
    
    An element of $V$ is called a \emph{vertex}
    and an element of $[k]$ is called an \emph{opinion}.
    We always use $n$ and $k$ to denote the number of vertices and opinions, respectively.
    We consider the asymptotic behavior of $\tau_{\mathrm{cons}}$ for $n\to\infty$.
    Thus, $k=k(n)$ is usually a function on $n$.
    
    In the following, we will need some technical bounds about the expectation and variance.
    These are collected here for convenience.
    \begin{lemma} \label{lem:basic inequalities}
         Consider the asynchronous \ThreeMajority process over $k$ opinions on $V=[n]$.
         Fix any time step $t\geq 1$ and two distinct opinions $i,j\in[k]$.
        Then, we have the following.
        \begin{enumerate}
            \item \label{item:expectation of alpha}
                $\E_{t-1}[\alpha_t(i)]=\alpha_{t-1}(i)\rbra*{1+\frac{\alpha_{t-1}(i)-\norm{\alpha_{t-1}}^2}{n}}$.
            \item \label{item:difference of alpha}
                $\abs{\alpha_t(i)-\alpha_{t-1}(i)}\leq \frac{1}{n}$.
            \item \label{item:square difference of alpha}
                $\E_{t-1}[(\alpha_t(i)-\alpha_{t-1}(i))^2]\leq \frac{3\alpha_{t-1}(i)}{n^2}$.
            \item \label{item:expectation of delta}
                $\E_{t-1}[\delta_t(i,j)]\geq \delta_{t-1}(i,j)\rbra*{1+\frac{\alpha_{t-1}(i)+\alpha_{t-1}(j)-\alphanorm_{t-1}}{n}}$.
            \item \label{item:difference of delta}
                $|\delta_t(i,j)-\delta_{t-1}(i,j)|\leq \frac{2}{n}$.
            \item \label{item:square difference of delta}
                $\E_{t-1}\sbra*{\rbra*{\delta_t(i,j)-\delta_{t-1}(i,j)}^2}\leq \frac{12(\alpha_{t-1}(i)+\alpha_{t-1}(j))}{n^2}$.
            \item \label{item:variance of delta}
                $\Var_{t-1}\sbra*{ \delta_t(i,j) } \geq \frac{\alpha_{t-1}(i)+\alpha_{t-1}(j) - 5\delta_{t-1}(i,j)^2}{n^2}$.
            \item \label{item:expectation of 2norm}
                $\E_{t-1}\sbra*{\alphanorm_{t}}
                =\alphanorm_{t-1}+\frac{2}{n}\rbra*{1-\frac{1}{n}}\rbra*{\norm{\alpha_{t-1}}_3^3-\alphanorm_{t-1}^2}+\frac{2}{n^2}\rbra*{1-\alphanorm_{t-1}}
                \geq \alphanorm_{t-1}$.
            \item \label{item:square difference of 2norm}
                $\E_{t-1}\sbra*{\rbra*{\alphanorm_{t}-\alphanorm_{t-1}}^2}
                \leq \frac{24\norm{\alpha_{t-1}}_3^3}{n^2}$; thus, $\Var_{t-1}\sbra*{\alphanorm_{t}} \leq \frac{24\alphanorm_{t-1}^{1.5}}{n^2}$.
            \item \label{item:difference of 2norm}
                $\abs*{\alphanorm_{t}-\alphanorm_{t-1}}
                \leq \frac{6\max_{j\in[k]}\alpha_{t-1}(j)}{n}\leq \frac{6\sqrt{\alphanorm_{t-1}}}{n}$.
        \end{enumerate}
    \end{lemma}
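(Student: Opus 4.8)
The plan is to read off all ten estimates from a single description of one update step. Fix $t$, condition on $\mathcal{F}_{t-1}$, abbreviate $\alpha=\alpha_{t-1}$ and $\gamma=\gamma_{t-1}=\norm{\alpha}^2$, and let $v,w_1,w_2,w_3$ be the selected vertices. Write $V_i=\indicator[\ThreeMajorityProcess_{t-1}(v)=i]$ and $N_i=\indicator[\text{$v$ adopts opinion $i$}]$. Three elementary facts do all the work. (a) $n(\alpha_t(i)-\alpha(i))=N_i-V_i\in\{-1,0,1\}$, which is already \cref{item:difference of alpha}. (b) The opinion adopted by $v$ is independent of $\ThreeMajorityProcess_{t-1}(v)$, and $\Pr[N_i=1]=p_i:=\alpha(i)^2+(1-\gamma)\alpha(i)$: the term $\alpha(i)^2$ is the event $\ThreeMajorityProcess_{t-1}(w_1)=\ThreeMajorityProcess_{t-1}(w_2)=i$, and $(1-\gamma)\alpha(i)$ is the event $\ThreeMajorityProcess_{t-1}(w_1)\neq\ThreeMajorityProcess_{t-1}(w_2)$ together with $\ThreeMajorityProcess_{t-1}(w_3)=i$; in particular $p_i\leq 2\alpha(i)$ because $\alpha(i)+1-\gamma\leq 2$. (c) $\sum_iN_i=\sum_iV_i=1$, hence $\sum_i(N_i-V_i)^2=2-2\indicator[\text{$v$ keeps its opinion}]$ and $\sum_i\alpha(i)(N_i-V_i)=\alpha_{\mathrm{new}}-\alpha_{\mathrm{old}}$, where $\alpha_{\mathrm{old}},\alpha_{\mathrm{new}}\in[0,\norm{\alpha}_\infty]$ are the $\alpha$-coordinates of the opinion $v$ held and the opinion $v$ adopts.

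Given (a)--(c), most items are direct calculations. \Cref{item:expectation of alpha} follows from $p_i-\alpha(i)=\alpha(i)(\alpha(i)-\gamma)$, and \cref{item:square difference of alpha} from $(N_i-V_i)^2\leq N_i+V_i$ and $p_i+\alpha(i)\leq 3\alpha(i)$. For the bias $\delta_t(i,j)=\abs{\alpha_t(i)-\alpha_t(j)}$ I would use only the reverse triangle inequality: $\abs{\delta_t(i,j)-\delta_{t-1}(i,j)}\leq\tfrac1n\abs{N_i-V_i-N_j+V_j}\leq\tfrac2n$ gives \cref{item:difference of delta}; squaring, $(x-y)^2\leq4x^2+4y^2$, and \cref{item:square difference of alpha} give \cref{item:square difference of delta}; and Jensen for $\abs{\cdot}$ together with $\E_{t-1}[\alpha_t(i)-\alpha_t(j)]=(\alpha(i)-\alpha(j))\bigl(1+\tfrac{\alpha(i)+\alpha(j)-\gamma}{n}\bigr)$ (whose bracket is $\geq1-\gamma/n\geq0$) gives \cref{item:expectation of delta}. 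For $\gamma_t=\sum_i\alpha_t(i)^2$, substituting (c) yields $\gamma_t-\gamma_{t-1}=\tfrac2n(\alpha_{\mathrm{new}}-\alpha_{\mathrm{old}})+\tfrac2{n^2}\indicator[v\text{ changes opinion}]$; the exact formula in \cref{item:expectation of 2norm} then comes from $\E_{t-1}[\alpha_{\mathrm{new}}]=\E_{t-1}[\indicator[v\text{ keeps}]]=\sum_ip_i\alpha(i)=\norm{\alpha}_3^3+(1-\gamma)\gamma$ and $\E_{t-1}[\alpha_{\mathrm{old}}]=\gamma$, and $\E_{t-1}[\gamma_t]\geq\gamma$ reduces to $\norm{\alpha}_3^3\geq\gamma^2$ (Cauchy--Schwarz, $\sum_i\alpha(i)=1$) and $\norm{\alpha}_3^3\leq\norm{\alpha}_\infty\gamma\leq\gamma$ (the latter gives $1-\gamma+\gamma^2-\norm{\alpha}_3^3\geq(1-\gamma)^2$). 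Finally \cref{item:square difference of 2norm,item:difference of 2norm} follow from $\abs{\alpha_{\mathrm{new}}-\alpha_{\mathrm{old}}}\leq\norm{\alpha}_\infty\leq\sqrt\gamma$, from $\norm{\alpha}_4^4\leq\norm{\alpha}_3^3$, and from $\norm{\alpha}_3^3\geq k^{-2}\geq n^{-2}$, which lets the $O(n^{-3})$ and $O(n^{-4})$ remainders be absorbed into the leading $O(n^{-2})$ term (with $24$ a convenient constant); the variance bound uses $\Var_{t-1}[\gamma_t]\leq\E_{t-1}[(\gamma_t-\gamma_{t-1})^2]$.

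The one item needing real care is \cref{item:variance of delta}, a \emph{lower} bound on a conditional variance, for which I would split on $\delta_{t-1}(i,j)$. If $\delta_{t-1}(i,j)\geq2/n$, then since one step moves $\alpha_t(i)-\alpha_t(j)$ by at most $2/n$ its sign cannot flip, so $\delta_t(i,j)$ is an affine function of $\alpha_t(i)-\alpha_t(j)$ and $\Var_{t-1}[\delta_t(i,j)]=\Var_{t-1}[\alpha_t(i)-\alpha_t(j)]=\tfrac1{n^2}\Var_{t-1}[N_i-N_j-V_i+V_j]$. Computing $\E_{t-1}[(N_i-N_j-V_i+V_j)^2]=p_i+p_j+\alpha(i)+\alpha(j)-2(p_i-p_j)(\alpha(i)-\alpha(j))$, subtracting the square of the mean, and using $p_i-p_j=(\alpha(i)-\alpha(j))(\alpha(i)+\alpha(j)+1-\gamma)$, the claim reduces to the elementary inequality $5-2(\alpha(i)+\alpha(j)+1-\gamma)-(\alpha(i)+\alpha(j)-\gamma)^2\geq0$, which holds because $\alpha(i)+\alpha(j)\in[0,1]$ and $\gamma\in[0,1]$ force $\alpha(i)+\alpha(j)-\gamma\in[-1,1]$ and $\alpha(i)+\alpha(j)+1-\gamma\leq2$. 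The two residual cases $\delta_{t-1}(i,j)\in\{0,1/n\}$ I would dispose of by a direct one-step computation of $\E_{t-1}[\delta_t(i,j)^2]$ and $\E_{t-1}[\delta_t(i,j)]$, using $\alpha(i)+\alpha(j)\leq1$. I expect this case split to be the only delicate point — lower-bounding a variance forces an exact calculation rather than an estimate, and the absolute value in $\delta_t$ is what makes the small-$\delta_{t-1}$ regime awkward — while every other item is a one-line expansion of the one-step transition.
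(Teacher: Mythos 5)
Your treatment of items (1)--(6) and (8)--(10) is correct and is essentially the paper's own proof: the same one-step expansion via the activated vertex's old and adopted opinions, with only cosmetic differences (e.g.\ you get \cref{item:square difference of delta} from squared one-step differences rather than from $\Pr_{t-1}[\delta_t\neq\delta_{t-1}]\le 3(\alpha(i)+\alpha(j))$, and your $\alpha_{\mathrm{new}},\alpha_{\mathrm{old}}$ bookkeeping recovers the exact identity in \cref{item:expectation of 2norm} that the paper obtains by summing over ordered pairs of opinions). Your main case of \cref{item:variance of delta} ($\delta_{t-1}\ge 2/n$) is also correct and coincides with the paper's calculation: the displayed identity in the paper is precisely the variance of the signed difference $n(\alpha_t(i)-\alpha_t(j))$, and your reduction to $2(\alpha(i)+\alpha(j)+1-\gamma)+(\alpha(i)+\alpha(j)-\gamma)^2\le 5$ is the same ``$4+1=5$'' bound.

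The gap is the deferred residual case of \cref{item:variance of delta}. The step ``dispose of $\delta_{t-1}\in\{0,1/n\}$ by a direct one-step computation'' cannot be carried out, because the inequality as stated is false there for some configurations; no computation will close it. Concretely, take $k=2$, $n$ odd, with counts $\frac{n+1}{2}$ and $\frac{n-1}{2}$, so $\delta_{t-1}=\frac1n$, $\gamma_{t-1}=\frac12+\frac1{2n^2}$, and both adoption probabilities are $\approx\frac12$. Then $n\delta_t\in\{1,3\}$, and it equals $3$ exactly when the activated vertex holds $j$ and adopts $i$, an event of probability $\approx\frac14$; hence $\Var_{t-1}[n\delta_t]\approx 4\cdot\frac14\cdot\frac34=\frac34$, whereas the claimed lower bound is $\alpha_{t-1}(i)+\alpha_{t-1}(j)-5\delta_{t-1}^2=1-\frac5{n^2}$. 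A failure also occurs with $\delta_{t-1}=0$: with $\alpha_{t-1}(i)=\alpha_{t-1}(j)=0.21$ and the remaining mass split among many negligible opinions, one gets $g(i)=g(j)\approx 0.2356$ and $\Var_{t-1}[n\delta_t]=2g+2\alpha-(2g+2\alpha-4g\alpha)^2\approx 0.41<0.42$. The absolute value really does cost a constant factor at the lattice scale, which is exactly the regime you isolated. What is true, and what the later proofs actually use (e.g.\ $\Var_{t-1}[\delta_t]\ge\frac1{n^2k}$ in the proof of \cref{lem:deltaUPorWeak additive}), is the same bound with a smaller universal constant in front of $\alpha_{t-1}(i)+\alpha_{t-1}(j)$ in the small-$\delta_{t-1}$ cases; this can be salvaged, for instance, by the law of total variance conditioning on the adopted opinion. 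So you should either prove that weaker form in your Case 2 or restrict/reweaken the statement there; as written, your promised computation would fail. For fairness: the paper's own one-line proof has the same blind spot (it computes the variance of the signed difference and calls it $\Var_{t-1}[n\delta_t]$), so your case split is the honest version of the paper's argument --- but it does expose that the residual case needs a genuinely different (and weaker) estimate, not just a routine verification.
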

    \Cref{item:expectation of alpha} is shown by \cite[Lemma 2.1]{simple_dynamics} in the synchronous model; we present the proof here since
    \cref{item:expectation of alpha} is particularly important.
    The other claims (\cref{item:difference of alpha,item:square difference of alpha,item:expectation of delta,item:difference of delta,item:square difference of delta,item:variance of delta,item:expectation of 2norm,item:square difference of 2norm,item:difference of 2norm}) follow from a straightforward calculation, which can be found in \cref{sec:proof of basic facts}.

    \begin{proof}[Proof of \cref{item:expectation of alpha}.]
        Fix the $(t-1)$-th configuration and
        let $g(i) \defeq \alpha_{t-1}(i)\rbra*{ 1+\alpha_{t-1}(i) - \norm{\alpha_{t-1}}^2 }$.
        Fix a vertex $v$.
        Conditioned on $v$ is chosen to update at the $(t-1)$-th step, $v$ holds opinion $i$ at the $t$-th step with probability
            \begin{align*}
                \underbrace{\alpha_{t-1}(i)^2}_{\text{$i$ is chosen twice}}+\underbrace{(1-\norm{\alpha_{t-1}}^2)\alpha_{t-1}(i)}_{\text{no opinion is chosen twice and the third vertex holds $i$}} = g(i).
            \end{align*}
        Since $v$ is activated with probability $1/n$,
        we have $\E_{t-1}[\alpha_t(i)]=\alpha_{t-1}(i)+\frac{(1-\alpha_{t-1}(i))g(i)}{n}-\frac{\alpha_{t-1}(i)(1-g(i))}{n}=\alpha_{t-1}(i)\rbra*{1+\frac{\alpha_{t-1}(i)-\norm{\alpha_{t-1}}^2}{n}}$.
    \end{proof}

    For simplicity, the following \cref{sec:unique strong opinion lemma,sec:from unique strong opinion to consensus,sec:lower bound on consensus time} 
    assume $k = o(\sqrt{n/\log n})$ but our proof works for $k \le c\sqrt{n / \log n}$ for a sufficiently small constant $c>0$ without any modification.
\section{Unique Strong Opinion Lemma} \label{sec:unique 
strong opinion lemma}
Throughout this section, consider the asynchronous \ThreeMajority over $k = o(\sqrt{n/\log n})$ opinions on $V=[n]$.
Let $(\alpha_t)_{t\in\Nat_0}$ be the fractional population.
We say that an opinion $i\in[k]$ is \emph{weak at step $t$}
    if $\alpha_t(i)\leq \frac{3}{4}\alphanorm_t$
and is \emph{strong at step $t$}
    if $\alpha_t(i)\geq \frac{7}{8}\alphanorm_t$.
Note that there can be an opinion that is neither weak nor strong.
Note also that the most popular opinion is always strong since $\alphamax{t}\geq \alphanorm_t$.
It can be the case that there are two or more strong opinions.
In this section, we prove that, after $O(nk\log n)$ steps, there remains only one strong opinion.
\begin{lemma}[Unique Strong Opinion Lemma] \label{lem:unique strong opinion lemma}
    For $t\in\Nat_0$, let $\istar{t}\in[k]$ be an opinion that attains the maximum population, i.e., $\istar{t} \in \argmax_i\{\alpha_t(i)\}$, where ties are broken to the opinion with the smallest index.
    Then, for some
    $T=O(nk\log n)$,
    we have $\Pr\sbra*{\max_{i\neq \istar{T}}\alpha_T(i)\leq \frac{7}{8}\alphanorm_t} \geq 1-O(n^{-8})$.
\end{lemma}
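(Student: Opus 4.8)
The plan is to follow the two-part strategy announced in the proof overview: (i) show that a weak opinion essentially never becomes strong again, and (ii) show that any two non-weak opinions cannot coexist for long — within $O(nk\log n)$ steps one of them drops to weak. Granting both, a union bound over the at most $\binom{k}{2} = O(k^2) = o(n/\log n)$ pairs of opinions (using $k = o(\sqrt{n/\log n})$) gives that with probability $1 - O(k^2 n^{-c})$ every pair of strong opinions has seen one of its members go weak and stay weak, leaving a unique strong opinion. Since $k^2 n^{-c}$ can be made $O(n^{-8})$ by taking the per-pair failure probability $O(n^{-10})$, this yields the claimed bound. I would state this reduction first, then prove the two ingredients as separate lemmas.

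For ingredient (i), fix a weak opinion $i$ with $\alpha_0(i) \le \frac34 \gamma_0$ and track the ratio $R_t = \alpha_t(i)/\gamma_t$. Using \cref{item:expectation of alpha} and \cref{item:expectation of 2norm} of \cref{lem:basic inequalities}, I expect to show $\E_{t-1}[R_t] \le R_{t-1} - \Omega(1/(kn))$ whenever $R_{t-1} \in [\frac58, \frac78]$ (the drift comes from the denominator $\gamma_t$ being a submartingale while the numerator's multiplicative factor $1 + (\alpha_{t-1}(i) - \gamma_{t-1})/n$ is at most $1$ for a weak opinion). With $|R_t - R_{t-1}| = O(1/n)$, one builds a supermartingale $\exp(\phi R_{t \wedge \tau})$ for $\tau = \inf\{t : R_t \notin [\frac58, \frac78]\}$ and a suitable $\phi = \omega(\log n)$ with $\phi = o(n)$ (so that $e^{\phi/n} \approx 1 + \phi/n$ and the drift dominates the quadratic error term). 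Optional stopping then gives $\Pr[R_\tau \ge \frac78] \le e^{\phi R_0}/e^{\phi \cdot 7/8} \le e^{-\phi/8} = n^{-\omega(1)}$, and a union bound over the (polynomially bounded) time horizon $t \le n^{10}$ keeps $R_t < \frac78$ throughout. I would be careful that the second-moment / exponential-moment computation genuinely handles the fact that $\gamma_t$ can move; bounding $\E_{t-1}[(\gamma_t - \gamma_{t-1})^2]$ via \cref{item:square difference of 2norm} should suffice.

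For ingredient (ii), fix two non-weak opinions $i, j$, set $\delta_t = |\alpha_t(i) - \alpha_t(j)|$, and let $\tauweak$ be the first time one of them becomes weak. On $t \le \tauweak$, \cref{item:expectation of delta} gives $\E_{t-1}[\delta_t] \ge (1 + \frac{1}{2kn})\delta_{t-1}$ (since $\alpha_{t-1}(i) + \alpha_{t-1}(j) - \gamma_{t-1} \ge \frac34\gamma_{t-1} + \frac34\gamma_{t-1} - \gamma_{t-1} = \frac12\gamma_{t-1} \ge \frac{1}{2k}$). The argument then has two stages. First, a variance-driven stage: by \cref{item:variance of delta}, while $\delta_{t-1}$ is small (say $\le \sqrt{\log n / n}$) and $t \le \tauweak$, $\delta_t^2$ has an additive drift of $\Omega(1/(n^2 k))$; running this for a window of length $\Theta(nk)$ and applying optional stopping to an appropriate submartingale built from $\delta_t^2 - ct/(n^2 k)$ shows $\delta_{t + nk}^2 \ge \delta_t^2 + \Omega(1/n)$ (equivalently $\delta$ gains $\Omega(1/\sqrt n)$) or else $\tauweak$ occurs, each with constant probability. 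Iterating $O(\log n)$ such windows with a drift-analysis result (the paper cites \citet{Doerr11}) boosts $\delta$ to $\Theta(\sqrt{\log n / n})$ within $O(nk\log n)$ steps w.h.p. Second, a multiplicative-amplification stage: once $\delta_0 \ge \sqrt{\log n/n}$, the Freedman inequality applied to the submartingale $\delta_t / (1 + \frac{1}{2kn})^t$ — controlling $\E_{t-1}[(\delta_t - \delta_{t-1})^2] = O((\alpha_{t-1}(i)+\alpha_{t-1}(j))/n^2) = O(1/(n^2 k))$ via \cref{item:square difference of delta} — shows that over $T = O(nk\log n)$ steps either $\tauweak \le T$ or $\delta_T \ge (1 + \Omega(1/(kn)))^T \delta_0 = \Omega(\mathrm{poly}\log n \cdot \sqrt{\log n/n})$, which is impossible since $\delta_T \le 1$; hence $\tauweak \le T$ with probability $1 - \exp(-\Omega(n \delta_0^2)) = 1 - \exp(-\Omega(\log n)) = 1 - O(n^{-10})$, after choosing constants.

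The main obstacle I anticipate is ingredient (ii), specifically getting the variance-growth stage to deliver a clean $1 - O(n^{-10})$ bound rather than just a constant-probability push: the additive drift of $\delta_t^2$ is tiny ($\Theta(1/(n^2 k))$) relative to its fluctuations, so one cannot directly concentrate $\delta_t$ upward; one is forced into an amplification-of-success-probability scheme across $\Theta(\log n)$ independent-ish windows, and the bookkeeping must simultaneously respect the stopping time $\tauweak$, the upper barrier where $\delta$ gets large enough to switch to the multiplicative regime, and the possibility that $\gamma_t$ itself drifts (changing the "non-weak" thresholds). Making the window argument rigorous — in particular, arguing that conditioned on not yet having succeeded and not yet having hit $\tauweak$, the next window again succeeds with constant probability — is the delicate step, and it is exactly where the Freedman inequality plus the drift lemma of \citet{Doerr11} must be combined carefully. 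The weak-cannot-become-strong part (i), by contrast, is a fairly standard exponential-supermartingale Gambler's-ruin computation and I expect it to go through routinely once the parameter $\phi$ is tuned.
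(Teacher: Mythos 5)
Your overall architecture coincides with the paper's: part (i) (weak cannot become strong) via the ratio $R_t=\alpha_t(i)/\gamma_t$, its negative additive drift, an exponential supermartingale with optional stopping, and a union bound over an $n^{10}$ horizon, exactly as in \cref{lem:weak cannot be strong general constant,lem:ai_L2norm,thm:gambler}; part (ii) via a variance-driven additive stage plus the drift-amplification scheme of \citet{Doerr11}, followed by Freedman-based multiplicative amplification, as in \cref{lem:deltaUPorWeak,lem:deltaUPorWeak additive}; and finally a union bound over the $O(k^2)$ pairs with per-pair failure $O(n^{-10})$. Part (i) as you sketch it goes through essentially verbatim.

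There is, however, a genuine gap in your multiplicative-amplification stage. You claim $\E_{t-1}[(\delta_t-\delta_{t-1})^2]=O\bigl((\alpha_{t-1}(i)+\alpha_{t-1}(j))/n^2\bigr)=O(1/(n^2k))$, but non-weakness only gives the \emph{lower} bound $\alpha_{t-1}(i)+\alpha_{t-1}(j)\ge\tfrac{3}{2}\gamma_{t-1}\ge\tfrac{3}{2k}$; the sum may be of constant order (two opinions each holding nearly half the vertices are non-weak for every $k$). With your fixed normalization $(1+\tfrac{1}{2kn})^t$, i.e.\ windows of length $\Theta(kn)$ per constant growth factor, the resulting exponent is of order $n\delta_0^2/\bigl(k(\alpha(i)+\alpha(j))\bigr)$, which is $o(1)$ when $\alpha(i)+\alpha(j)=\Theta(1)$ even for $\delta_0\approx\sqrt{\log n/n}$, so the claimed $1-\exp(-\Omega(n\delta_0^2))$ bound does not follow. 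The paper repairs exactly this point: the window length is adaptive, $T=n/(8(\alpha_0(i)+\alpha_0(j)))\le kn/12$, the drift rate is $1+\tfrac{\alpha_0(i)+\alpha_0(j)}{6n}$, and auxiliary stopping times $\tau_i^{\uparrow},\tau_i^{\downarrow}$ (\cref{lem:alphaUD}) keep $\alpha_t(i),\alpha_t(j)$ within a factor $2$ of their window-initial values so that the deterministic parameters required by \cref{lem:multipricative_drift_Freedman} are valid throughout the window; this is what produces the size-independent failure probability $e^{-C\delta_0^2 n}$ in \cref{lem:deltaUPorWeak}, which the interval/level bookkeeping in \cref{lem:deltaUPorWeak additive} then consumes. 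Your sketch omits both the adaptive window length and the control of $\alpha_t(i),\alpha_t(j)$ within a window, so the delicate step you yourself flag is not merely delicate but, as written, incorrect in the regime of large non-weak opinions. (A minor further slip: over $T=\Theta(nk\log n)$ steps the factor $(1+\Omega(1/(kn)))^T$ is $n^{\Theta(1)}$, not polylogarithmic; you need it to push $\delta$ above $1$, or, as the paper does, to run $\log_{1.01}n$ constant-factor windows.)
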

If $\alpha_t(i)=\alpha_j(t)=\norm{\alpha_t}_\infty$ for some $i\neq j$, then $\max_{i\neq \istar{t}}\alpha_t(i) = \norm{\alpha_t}_\infty$.
Therefore, \cref{lem:unique strong opinion lemma} means that a huge gap emerges between the most and the second most popular opinions at the $T$-th step.

The proof of \cref{lem:unique strong opinion lemma} consists of two parts.
First, we prove that, once an opinion becomes weak,
    then the opinion is very unlikely to become strong in the future.
The proof can be found in \cref{sec:weak cannot be strong}.
\begin{lemma}[Weak cannot become strong] \label{lem:weak cannot be strong}
For any weak opinion $i\in[k]$ at step $0$,
we have $\Pr\sbra*{\forall 0\leq t\leq n^{10}, \alpha_t(i) \le \frac{3}{4}\gamma_t}\geq 1-n^{-\omega(1)}$.
\end{lemma}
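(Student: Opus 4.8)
The plan is to control the ratio $R_t := \alpha_t(i)/\gamma_t$, the quantity that crosses $\tfrac34$ exactly when $i$ stops being weak, and to show via a Gambler's-ruin--style exponential supermartingale that it never climbs back to $\tfrac34$ within the first $n^{10}$ steps. Since $i$ is weak at step $0$ we have $R_0\le\tfrac34$. The proof has three parts: (i) a one-step drift estimate, valid whenever $R_{t-1}\in[\tfrac58,\tfrac34]$, of the form $\E_{t-1}[R_t]\le R_{t-1}-\Omega(1/(kn))$, together with $\abs{R_t-R_{t-1}}=O(k/n)$ and $\E_{t-1}[(R_t-R_{t-1})^2]=O(k/n^2)$; (ii) an exponential-supermartingale bound showing that on any excursion of $R_t$ inside $[\tfrac58,\tfrac34]$, it returns to $\tfrac58$ before reaching $\tfrac34$ except with probability $n^{-\omega(1)}$; (iii) a union bound over the at most $n^{10}$ such excursions.

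For part (i), Taylor expansion gives $R_t-R_{t-1}=\gamma_{t-1}^{-1}\bigl((\alpha_t(i)-\alpha_{t-1}(i))-R_{t-1}(\gamma_t-\gamma_{t-1})\bigr)+(\text{second order})$. Plugging in \cref{item:expectation of alpha} and \cref{item:expectation of 2norm}, the first-order part of $\E_{t-1}[R_t]-R_{t-1}$ equals $\tfrac{\alpha_{t-1}(i)}{n}\bigl(R_{t-1}+1-\tfrac{2\norm{\alpha_{t-1}}_3^3}{\gamma_{t-1}^2}\bigr)$; since $\norm{\alpha}_3^3\ge\gamma^2$ always (Cauchy--Schwarz), this is $\le\tfrac{\alpha_{t-1}(i)}{n}(R_{t-1}-1)\le-\tfrac{\alpha_{t-1}(i)}{4n}<0$ for $R_{t-1}\le\tfrac34$, and for $R_{t-1}\ge\tfrac58$ one has $\alpha_{t-1}(i)\ge\tfrac58\gamma_{t-1}\ge\tfrac{5}{8k}$, giving drift $-\Omega(1/(kn))$. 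The second-order corrections, which after division by $\gamma_{t-1}^2\ge k^{-2}$ are $O(k/n^2)$, are controlled using \cref{item:square difference of alpha,item:square difference of 2norm}; likewise $\abs{R_t-R_{t-1}}\le\tfrac{2\abs{\alpha_t(i)-\alpha_{t-1}(i)}}{\gamma_{t-1}}+\tfrac{2\alpha_{t-1}(i)\abs{\gamma_t-\gamma_{t-1}}}{\gamma_{t-1}^2}=O(k/n)$ and $\E_{t-1}[(R_t-R_{t-1})^2]=O(k/n^2)$ follow from \cref{item:difference of alpha,item:difference of 2norm,item:square difference of alpha,item:square difference of 2norm} together with $\gamma\ge1/k$. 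Since $k=o(\sqrt{n/\log n})$, the $O(k/n^2)$ corrections are dominated by the $\Omega(1/(kn))$ drift.

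For part (ii), let $\tau=\inf\cbra*{t\ge0\colon R_t\notin[\tfrac58,\tfrac34]}$. Take $\phi=\varepsilon n/k^2$ for a small constant $\varepsilon>0$; using $e^x\le1+x+\tfrac{x^2}{2}e^{b}$ for $\abs{x}\le b=o(1)$, we get on $\{t\le\tau\}$ that $\E_{t-1}[e^{\phi(R_t-R_{t-1})}]\le1+\phi\,\E_{t-1}[R_t-R_{t-1}]+O(\phi^2 k/n^2)\le1$ (here $e^{\phi b}=O(1)$), so $\bigl(\exp(\phi R_{t\wedge\tau})\bigr)_{t\ge0}$ is a bounded supermartingale; the drift also gives $\tau<\infty$ almost surely. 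By the optional stopping theorem (\cref{thm:gambler}), $\E[\exp(\phi R_\tau)]\le\exp(\phi R_0)$, hence for an excursion started from $R_0\le\tfrac58$ we obtain $\Pr[R_\tau\ge\tfrac34]\le\exp\bigl(\phi(\tfrac58-\tfrac34)\bigr)=\exp(-\Omega(\phi))=n^{-\omega(1)}$, because $\phi=\omega(\log n)$ when $k=o(\sqrt{n/\log n})$.

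For part (iii), since increments are $O(k/n)=o(1)$, $R_t$ cannot pass from $\le\tfrac58$ to $\ge\tfrac34$ without a maximal run of steps spent inside $[\tfrac58,\tfrac34]$; there are at most $n^{10}$ such runs among the first $n^{10}$ steps, each entered from below and thus starting at a value $\le\tfrac58+o(1)$, so applying part (ii) and the strong Markov property to each and taking a union bound gives $\Pr[\exists\,t\le n^{10}\colon R_t>\tfrac34]\le n^{10}\cdot n^{-\omega(1)}=n^{-\omega(1)}$, which is exactly $\alpha_t(i)\le\tfrac34\gamma_t$ for all $0\le t\le n^{10}$ (the initial run from $R_0\le\tfrac34$ being bounded by the same supermartingale estimate). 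The main obstacle is part (i): obtaining a clean negative drift for the ratio of the two correlated, simultaneously evolving quantities $\alpha_t(i)$ and $\gamma_t=\norm{\alpha_t}^2$, and verifying that the second-order Taylor corrections --- of size $\Theta(k/n^2)$ after the division by $\gamma^2\gtrsim k^{-2}$ --- lie genuinely below the $\Theta(1/(kn))$ drift. This is precisely where the hypothesis $k=o(\sqrt{n/\log n})$ is needed, and it is why the sharper second-moment bounds of \cref{lem:basic inequalities} are required rather than just the crude per-step increment bound.
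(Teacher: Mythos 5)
Your overall machinery is the same as the paper's: control the ratio $R_t=\alpha_t(i)/\gamma_t$, establish a negative drift of order $1/(kn)$ with per-step increment $O(k/n)$ and conditional second moment $O(k/n^2)$ (this is exactly \cref{lem:ai_L2norm}, and your first-order Taylor computation with $\norm{\alpha}_3^3\ge\gamma^2$ is a clean equivalent of the paper's route through the ratio-approximation lemma), then run an exponential supermartingale / Gambler's-ruin bound with $\phi=\Theta(n/k^2)=\omega(\log n)$ (\cref{thm:gambler}, as in \cref{lem:drop before rise}) and union-bound over excursions, which is how \cref{lem:weak cannot be strong general constant} is assembled.

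The genuine gap is in your choice of barriers. You take the printed threshold literally and try to show $R_t\le\frac34$ for all $t\le n^{10}$, using excursions inside $[\frac58,\frac34]$ that are ``entered from below $\frac58$''. But the hypothesis only gives $R_0\le\frac34$, so the \emph{initial} run may start at, or within $o(1)$ of, the upper barrier; there the Gambler's-ruin bound $\Pr[R_\tau\ge U]\le\bigl(\e^{\phi R_0}-\e^{\phi(L-D)}\bigr)/\bigl(\e^{\phi U}-\e^{\phi(L-D)}\bigr)$ is vacuous, and your parenthetical ``the initial run from $R_0\le\frac34$ being bounded by the same supermartingale estimate'' is exactly where the argument breaks. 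Indeed the literal claim is false at the boundary: if $R_0=\frac34$, a single upward opinion change (which, conditioned on the first relevant activation, happens with constant probability) pushes $R_t$ above $\frac34$, so the event cannot hold with probability $1-n^{-\omega(1)}$. The paper's statement is really the ``weak cannot become strong'' assertion used later: starting from $R_0\le L=\frac34$, the ratio stays below $U=\frac78$ for $n^{10}$ steps (\cref{lem:weak cannot be strong general constant} with $L=\frac34$, $U=\frac78$); the constant gap $U-L$ is essential, because any crossing of $U$ must first pass through a step with $R_s\le\frac{L+U}{2}$, which restores a constant buffer before every dangerous excursion. If you replace your target $\frac34$ by $\frac78$ (and your re-entry level $\frac58$ by, say, $\frac{L+U}{2}=\frac{13}{16}$), your drift, supermartingale, and union-bound steps go through and recover the paper's proof.
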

Second, we prove that, for any two distinct opinions $i,j\in[k]$,
    either $i$ or $j$ becomes weak within $\tilde{O}(nk)$ steps with high probability.
The proof can be found in \cref{sec:pair of non-weak opinions vanishes}.
\begin{lemma}[Pair of Non-Weak Opinions Vanishes] \label{lem:pair of non-weak opinions vanishes}
    There exists a universal constant $C>0$ that satisfies the following:
    For any distinct non-weak opinions $i,j\in[k]$ at step $0$ (i.e., $\alpha_0(i),\alpha_0(j)> \frac{3}{4}\gamma_0$), with probability $1-O(n^{-9})$,
    either $i$ or $j$ becomes weak within $Cnk\log n$ steps.
\end{lemma}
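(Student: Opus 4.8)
The plan is to follow the two-stage argument sketched in \cref{sec:overview gap emerging}: first establish a "slow" additive-drift phase that grows the bias $\delta_t(i,j)$ up to a safe threshold $\sqrt{\log n / n}$, and then a "fast" multiplicative-drift phase that amplifies the bias until one of the two opinions drops below $\tfrac34\alphanorm_t$. Throughout, we work up to the stopping time $\tauweak = \inf\{t \ge 0 : \min\{\alpha_t(i), \alpha_t(j)\} \le \tfrac34 \alphanorm_t\}$, so on the event $\{t < \tauweak\}$ both opinions are non-weak and \cref{lem:basic inequalities}(\ref{item:expectation of delta}) gives $\E_{t-1}[\delta_t] \ge (1 + \tfrac{1}{2kn})\delta_{t-1}$, since $\alpha_{t-1}(i) + \alpha_{t-1}(j) - \alphanorm_{t-1} \ge \tfrac34\alphanorm_{t-1} + \tfrac34\alphanorm_{t-1} - \alphanorm_{t-1} = \tfrac12\alphanorm_{t-1} \ge \tfrac12 \cdot \tfrac1k$.

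\emph{Phase 1 (additive drift to $\sqrt{\log n/n}$).} Here I would use \cref{lem:basic inequalities}(\ref{item:variance of delta}): when $\delta_{t-1}$ is below, say, $\tfrac{1}{10}\sqrt{1/(nk)}$, we have $\Var_{t-1}[\delta_t] \ge \tfrac{1}{2n^2}(\alpha_{t-1}(i)+\alpha_{t-1}(j)) \ge \tfrac{1}{4n^2 k}$, so $\E_{t-1}[\delta_t^2] = \E_{t-1}[\delta_t]^2 + \Var_{t-1}[\delta_t] \ge \delta_{t-1}^2 + \Omega(1/(n^2 k))$; that is, $\delta_t^2$ has an additive drift of order $1/(n^2 k)$ as long as $t < \tauweak$ and $\delta_{t-1}$ stays small. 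Feeding this into the additive-drift / optimal-stopping argument (the \cite{Doerr11} drift result cited in the overview, together with \cref{lem:basic inequalities}(\ref{item:difference of delta}) for the step-size bound $|\delta_t^2 - \delta_{t-1}^2| \le O(\delta_{t-1}/n + 1/n^2)$), I would show that within $O(nk\log n)$ steps either $\tauweak$ has occurred, or $\delta_t$ has reached $\sqrt{\log n / n}$, with probability $1 - O(n^{-9})$. The mechanics are: partition time into $O(\log n)$ blocks of length $\Theta(nk)$; in each block the additive drift of $\delta^2$ pushes it up by $\Omega(1/n)$ in expectation, so a constant fraction of blocks succeed in roughly doubling $\delta^2$ (a Chernoff bound over blocks handles the geometrically-growing target), and $O(\log n)$ successful doublings from the initial $\delta_0 \ge 0$... — one subtlety is that $\delta_0$ could be $0$, so the very first increment must be seeded by the pure-variance term, which is exactly what \cref{lem:basic inequalities}(\ref{item:variance of delta}) provides.

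\emph{Phase 2 (multiplicative amplification via Freedman).} Once $\delta_{t_0} \ge \sqrt{\log n/n}$ at some $t_0 = O(nk\log n)$, set $Z_t = \delta_{t}/(1 + \tfrac{1}{2kn})^{t-t_0}$ for $t \ge t_0$; by \cref{lem:basic inequalities}(\ref{item:expectation of delta}) this is a submartingale up to $\tauweak$. To show $Z_t$ does not collapse, I would apply the Freedman inequality to the associated martingale, using \cref{lem:basic inequalities}(\ref{item:difference of delta}) for the uniform step bound $2/n$ and \cref{lem:basic inequalities}(\ref{item:square difference of delta}) for the predictable quadratic variation: summed over $T' = \Theta(nk\log n)$ steps (during which the multiplicative factor stays within a constant), the total variance is $O(nk\log n) \cdot O(k/(n^2 k)) \cdot (\text{const}) = O((\log n)/n \cdot k)$; wait — more carefully, $\sum \E_{t-1}[(\delta_t-\delta_{t-1})^2] \le T' \cdot \tfrac{12(\alpha(i)+\alpha(j))}{n^2}$, and with $\alpha(i)+\alpha(j) = O(1/k)$ when both are $O(\alphanorm) = O(1/k)$ this is $O(nk\log n / (n^2 k)) = O(\log n/n)$. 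So $\delta_t = \delta_{t_0}(1 \pm o(1))(1+\tfrac{1}{2kn})^{t-t_0} \pm O(\sqrt{(\log n/n)\log n})$; since $\delta_{t_0} \ge \sqrt{\log n/n} \gg \sqrt{(\log n)^2/n}$... this needs $\delta_{t_0}$ a bit larger, so I would instead run Phase 1 to threshold $\sqrt{(\log n)^2/n}$ or absorb the logs by choosing the block count generously — the point is that the deviation term is dominated by $\delta_{t_0}$, so $\delta_t \ge \tfrac12 \delta_{t_0}(1+\tfrac{1}{2kn})^{t-t_0}$ throughout $[t_0, t_0 + T']$ w.h.p. Hence $\delta_t$ reaches a constant (say $\tfrac14$) within an additional $O(kn\log n)$ steps, which is impossible while both opinions remain non-weak (their populations are bounded by $\norm{\alpha_t}_\infty \le \sqrt{\alphanorm_t}$, but once $\delta_t$ is a constant the larger opinion is $\Omega(1)$ so $\alphanorm_t = \Omega(1)$, forcing the smaller one below $\tfrac34\alphanorm_t$); therefore $\tauweak \le t_0 + O(kn\log n) = O(kn\log n)$.

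\emph{Main obstacle.} The delicate point is Phase 1: controlling the additive drift of $\delta_t^2$ requires simultaneously (a) keeping $\delta_{t-1}$ small enough that the variance term dominates the drift of $\E[\delta_t]^2$ over the squared-bias, (b) handling the case $\delta_0 = 0$ where there is no multiplicative signal at all, and (c) ensuring the stopping time $\tauweak$ doesn't interfere — but note that if $\tauweak$ occurs early we are already done, so we only need the drift bound to hold \emph{on} $\{t < \tauweak\}$, which is precisely where \cref{lem:basic inequalities}(\ref{item:expectation of delta},\ref{item:variance of delta}) apply. Packaging the $O(\log n)$-block doubling argument with the correct failure probabilities (to beat the eventual union bound over $O(k^2)$ pairs, we need $1 - O(n^{-9})$ here) is the part that demands the most care, and is where the Freedman inequality rather than Azuma is essential, exactly as the overview explains.
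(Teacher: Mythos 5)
Your overall architecture matches the paper's: seed the bias by the additive (variance-driven) drift of $\delta_t^2$ via optional stopping, then amplify multiplicatively using the Freedman-based drift lemma, with the \citet{Doerr11}-style bookkeeping to reach $\sqrt{\log n/n}$ within $O(nk\log n)$ steps. But there is a genuine gap in your Phase 2 (and it infects the block-success estimates in Phase 1 as well): your variance accounting assumes $\alpha_t(i)+\alpha_t(j)=O(1/k)$ throughout the window, which is not implied by non-weakness. Being non-weak only gives the lower bound $\alpha_t(i)>\tfrac34\alphanorm_t$; since $\alphanorm_t$ can be far above $1/k$, one can have $\alpha_t(i)+\alpha_t(j)$ of constant order while $k$ is large. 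Because you normalize by the worst-case growth rate $1+\tfrac{1}{2kn}$ but the one-step variance scales with the true $\alpha_{t-1}(i)+\alpha_{t-1}(j)$, the discounted quadratic variation can be larger than $\delta_{t_0}^2$ by a factor up to $k$, and the Freedman exponent degrades from $\Omega(\delta_{t_0}^2 n)$ to roughly $\Omega(\delta_{t_0}^2 n/(k(\alpha(i)+\alpha(j))n/\alphanorm n))$ --- in the worst case $\Omega(\log n/k)$, which is nowhere near the $n^{-10}$-type failure probability you need to survive the union bound. The paper resolves exactly this by self-normalizing: it runs each amplification window for only $T=\frac{n}{8(\alpha_0(i)+\alpha_0(j))}$ steps and introduces the auxiliary stopping times $\tau_i^{\uparrow},\tau_i^{\downarrow},\tau_j^{\uparrow},\tau_j^{\downarrow}$, showing via \cref{lem:alphaUD} (an application of \cref{lem:multipricative_drift_Freedman}) that $\alpha(i),\alpha(j)$ stay within a factor $2$ of their initial values during the window except with probability $\exp(-\Omega(\alpha_0(i)^2 n))=n^{-\omega(1)}$; then both the drift rate and the variance are proportional to $\alpha_0(i)+\alpha_0(j)$, and the failure probability of one window is $\exp(-C\delta_0^2 n)$ uniformly, after which $\log_{1.01} n$ windows are chained (\cref{lem:deltaUPorWeak} and the proof of \cref{lem:pair of non-weak opinions vanishes}). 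Some analogue of this size-freezing step is indispensable and is absent from your proposal.

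Two smaller issues. First, in Phase 1 your "constant fraction of blocks succeed, Chernoff over blocks" mechanism does not work as stated, because a failed block can undo previous progress (the bias can fall back toward $0$); what is needed is precisely the consecutive-good-interval structure $W(s)$ with level-dependent failure probabilities $\exp(-(1.01)^w)$ fed into \cref{lem:doerr11_lemma} --- you cite the right tool but the Chernoff framing would not survive scrutiny, and each interval's success probability again relies on the windowed, size-frozen multiplicative lemma above. Second, your termination argument is incorrect: $\delta_t$ reaching a constant such as $\tfrac14$ does not force one opinion below $\tfrac34\alphanorm_t$ (e.g.\ $\alpha(i)=0.5$, $\alpha(j)=0.25$ with the rest spread thinly has $\tfrac34\alphanorm_t<0.25$). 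The clean way out, as in the paper, is to iterate the multiplicative growth $\log_{1.01} n$ times and conclude that otherwise $\delta_t$ would exceed $1$, a contradiction.
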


\Cref{lem:unique strong opinion lemma} immediately follows from the two results above.
\begin{proof}[Proof of \cref{lem:unique strong opinion lemma} assuming \cref{lem:weak cannot be strong,lem:pair of non-weak opinions vanishes}.]
    From \cref{lem:pair of non-weak opinions vanishes}, the following event occurs with probability $1-O(n^{-9})$:
    For any pair of non-weak opinions, at least one of them becomes weak within $Cnk\log n$ steps, where $C>0$ is some universal constant.
    Moreover, with probability $1-n^{-\omega(1)}$, any weak opinion cannot become strong during $n^{10}$ steps after being weak.
    By the union bound, with probability $1-O(k^2n^{-9})=1-O(n^{-8})$,
    the following event occurs:
        For any pair of distinct non-weak opinions $i,j\in[k]$, either $i$ or $j$ is weak at step $Cnk\log n$.
    In other words,
        there exists only one strong opinion at step $Cnk\log n$.
    This completes the proof.
\end{proof}

\subsection{Weak cannot become strong} \label{sec:weak cannot be strong}
This subsection is devoted to proving \cref{lem:weak cannot be strong}, which asserts that once an opinion becomes weak, then the opinion is very unlikely to become strong in the future.
Indeed, we prove the following general result, which contains \cref{lem:weak cannot be strong} as a special case.
Throughout this subsubsection, let $\alpharatio_t = \frac{\alpha_t(i)}{\norm{\alpha_t}^2}=\frac{\alpha_t(i)}{\alphanorm_t}$.
\begin{lemma} \label{lem:weak cannot be strong general constant}
    Let $k=o(\sqrt{n/\log n})$ and $0<L<U<1$ be any constants.
    Fix an opinion $i\in[k]$ and let $\alpharatio_t=\frac{\alpha_t(i)}{\alphanorm_t}$.
    If $\alpharatio_0\leq L$, 
    then we have $\Pr\sbra*{\forall 0\leq t\leq n^{10},\alpharatio_t\leq U}\geq 1-n^{-\omega(1)}$.
\end{lemma}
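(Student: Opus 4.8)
The plan is to set up the bounded random variable $R_t = \alpha_t(i)/\gamma_t$ as an (approximate) supermartingale on a window away from the boundary, and then apply an exponential-transform argument of Gambler's-ruin type. First I would pin down the one-step drift: using \cref{item:expectation of alpha} and \cref{item:expectation of 2norm} of \cref{lem:basic inequalities}, together with the second-order terms controlled by \cref{item:square difference of alpha,item:square difference of 2norm} and $k=o(\sqrt{n/\log n})$, I expect to show that whenever $R_{t-1}\le U'$ for some constant $U<U'<1$, we have $\E_{t-1}[R_t] \le R_{t-1} - c/(kn)$ for a constant $c>0$; indeed the multiplicative drift on $\alpha_t(i)$ is $1 + (\alpha_{t-1}(i)-\gamma_{t-1})/n$, the drift on $\gamma_t$ is nonnegative (\cref{item:expectation of 2norm}), and when $R_{t-1}$ is bounded below $1$ the quantity $\gamma_{t-1}-\alpha_{t-1}(i)$ is $\Omega(\gamma_{t-1}) = \Omega(1/k)$, giving the claimed $\Omega(1/(kn))$ negative drift after dividing through — the $k=o(\sqrt{n/\log n})$ hypothesis is exactly what makes the error terms from the quotient expansion negligible against this drift.

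Next I would fix a constant $L < M < U$ (say $M = (L+U)/2$) and define the stopping time $\tau = \inf\{t \ge 0 : R_t \notin [L', U']\}$ for suitable constants $L' < L$ and $U < U' < 1$; the point of the slack on both sides is that $|R_t - R_{t-1}| = O(1/n)$ by \cref{item:difference of alpha,item:difference of 2norm}, so starting from $R_0 \le L$ the walk cannot jump past $U'$ or below $L'$ in a single step, and the drift bound above holds at every step $t < \tau$. Then for a parameter $\phi$ to be chosen I would verify that $Z_t := \exp(\phi R_{t\wedge\tau})$ is a supermartingale: writing $R_t - R_{t-1} = \Delta_t$ with $|\Delta_t| = O(1/n)$, one has $\E_{t-1}[e^{\phi\Delta_t}] \le 1 + \phi\E_{t-1}[\Delta_t] + O(\phi^2/n^2)$ provided $\phi = o(n)$, and since $\E_{t-1}[\Delta_t] \le -c/(kn)$ on $\{t \le \tau\}$, it suffices to take $\phi$ large enough that $\phi \cdot c/(kn) \gg \phi^2/n^2$, i.e. $\phi \ll n/k$; choosing $\phi = \omega(\log n)$ with $\phi \ll n/k$ (possible since $k = o(\sqrt{n/\log n}) = o(n/\log n)$) makes $\E_{t-1}[Z_t \mid Z_{t-1}] \le Z_{t-1}$. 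By the optional stopping theorem (the process is bounded since $R_{t\wedge\tau} \in [L', U']$), $\E[\exp(\phi R_\tau)] \le \exp(\phi R_0) \le \exp(\phi L)$, and since on $\{R_\tau \ge U\}$ we have $\exp(\phi R_\tau) \ge \exp(\phi U)$, Markov's inequality gives $\Pr[R_\tau \ge U] \le \exp(-\phi(U-L)) = n^{-\omega(1)}$. Here I am using that $\tau$ is a.s.\ finite, which follows from the negative drift (the walk is pushed toward $L'$) — alternatively one truncates at $t = n^{10}$ and absorbs a negligible term.

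Finally I would convert the statement about $\tau$ into the statement about the whole horizon $[0, n^{10}]$: on the event $R_\tau < U$ (complement of the above), the walk exits the interval $[L', U']$ only through the lower boundary $L'$; but $L' < L \le U$, so once $R_t$ drops below $L'$ it has only strengthened the "weak" status, and one can simply restart the argument from the new (still $\le L$, in fact $\le L'$) value. More cleanly, I would take a union bound over the at most $n^{10}$ potential restart times — or better, observe that the supermartingale argument already controls $\max_{t \le \tau} R_t$, and then iterate: define $\tau^{(1)} = \tau$, and on $\{R_{\tau^{(1)}} \le L'\}$ restart to get $\tau^{(2)}$, etc.; each phase fails to keep $R_t \le U$ with probability $n^{-\omega(1)}$, and there are at most $n^{10}$ phases, so the union bound still yields $n^{-\omega(1)}$. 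Substituting $L = 3/4$, $U = 7/8$ (with, e.g., $L' = 5/8$, $U' = 15/16$) recovers \cref{lem:weak cannot be strong}. The main obstacle I anticipate is the bookkeeping in the drift computation for $R_t = \alpha_t(i)/\gamma_t$: $\gamma_t$ is itself moving (upward on average), and expanding the quotient $\E_{t-1}[\alpha_t(i)/\gamma_t]$ requires controlling the correlation between the numerator and denominator increments and the second-order Taylor terms, and showing all of these are dominated by the $\Omega(1/(kn))$ signal precisely in the regime $k = o(\sqrt{n/\log n})$ — this is where the $\sqrt{n}$-type threshold enters, and getting the constants to line up so that the supermartingale property holds with a usable $\phi$ is the delicate part.
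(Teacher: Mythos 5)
Your overall route is the paper's own: establish a negative drift of order $1/(kn)$ for $\alpharatio_t=\alpha_t(i)/\alphanorm_t$ on a window bounded away from the endpoints, turn $\exp(\phi \alpharatio_{t\wedge\tau})$ into a supermartingale with $\phi=\omega(\log n)$ and apply optional stopping (this is exactly \cref{lem:ai_L2norm}, \cref{thm:gambler} and \cref{lem:drop before rise}), then restart and union bound over the $n^{10}$ horizon as in the paper's proof of \cref{lem:weak cannot be strong general constant}. But there is a concrete quantitative gap in your supermartingale step: the claim $\abs{\alpharatio_t-\alpharatio_{t-1}}=O(1/n)$ is false. The numerator moves by at most $1/n$, but it is divided by $\alphanorm_t\ge 1/k$, so the one-step change of the ratio is of order $k/n$ (the paper's bound in \cref{lab:ail2norm3} of \cref{lem:ai_L2norm} is $14k/n$). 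With the corrected jump size $D=\Theta(k/n)$, your Hoeffding-style estimate $\E_{t-1}[e^{\phi\Delta_t}]\le 1+\phi\E_{t-1}[\Delta_t]+O(\phi^2D^2)$ forces $\phi\cdot\Theta\rbra*{\tfrac{1}{kn}}\gg\phi^2\tfrac{k^2}{n^2}$, i.e.\ $\phi\ll n/k^3$, which is $\omega(\log n)$ only for $k=o((n/\log n)^{1/3})$ --- you run straight back into the $n^{1/3}$ barrier this lemma is meant to break. The fact that your accounting would even tolerate $k=o(n/\log n)$ is a symptom of the same mis-estimate.

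The missing ingredient is the Bernstein/Freedman-type control emphasized in \cref{sec:out tool}: bound the mgf using the conditional second moment rather than the squared worst-case jump. One has $\E_{t-1}\sbra*{(\alpharatio_t-\alpharatio_{t-1})^2}\le \frac{(24+o(1))k}{n^2}$ (\cref{lab:ail2norm2} of \cref{lem:ai_L2norm}), a factor $k$ smaller than $D^2$, essentially because $\Var_{t-1}[\alpha_t(i)]\le 3\alpha_{t-1}(i)/n^2$ with $\alpha_{t-1}(i)=O(\alphanorm_{t-1})$. Feeding $\theta=\Theta(1/(kn))$, $S=\Theta(k/n^2)$, $D=\Theta(k/n)$ into \cref{thm:gambler} gives $\phi=\frac{6\theta}{3S+2D\theta}=\Theta(n/k^2)$, and $n/k^2=\omega(\log n)$ precisely when $k=o(\sqrt{n/\log n})$; moreover $\phi D=O(1/k)=o(1)$, so the exponential expansion is legitimate. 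Two smaller points: the drift estimate for the quotient needs both a lower and an upper bound on $\alpharatio_{t-1}$ (the paper conditions on $L\le \alpharatio_{t-1}\le U$ in \cref{lem:ai_L2norm}; your stopping interval $[L',U']$ supplies this, so state it), and also note that \cref{lem:ai_L2norm} itself only needs $k=o(\sqrt n)$ --- the $\sqrt{n/\log n}$ threshold enters in the choice of $\phi$, not in the drift bookkeeping as you anticipated. Your restart-and-union-bound over the $n^{10}$ horizon matches the paper and is fine.
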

In what follows, we prove \cref{lem:weak cannot be strong general constant},
    which immediately implies \cref{lem:weak cannot be strong} by setting
        $L=\frac{3}{4}$ and $U=\frac{7}{8}$.
We first observe that $\alpharatio_t$ has an additive drift towards negative direction conditioned on $L\leq \alpharatio_{t-1} \leq U$ and the second moment of one-step difference is always small.
\begin{lemma}
\label{lem:ai_L2norm}
Let $k=o(\sqrt{n})$ and $0<L<U<1$ be any constants.
Fix an opinion $i\in[k]$.
For $t\in\Nat_0$, let $\alpharatio_t=\frac{\alpha_t(i)}{\alphanorm_t}$ 
and let $\mathcal{E}_t$ be the event that $L \leq \alpharatio_t \leq U$.
Then, we have the following:
    \begin{enumerate}[label=(\roman*)]
        \item \label{lab:ail2norm1}
            $\E_{t-1}\sbra*{\alpharatio_t \condition \mathcal{E}_{t-1}}\leq \alpharatio_{t-1}-\frac{(1-U)L-o(1)}{kn}$.
        \item \label{lab:ail2norm2}
            $\E_{t-1}\sbra*{\rbra*{\alpharatio_t - \alpharatio_{t-1}}^2 \condition \mathcal{E}_{t-1}}\leq \frac{(24+o(1))k}{n^2}$.
        \item \label{lab:ail2norm3}
            $\abs*{\alpharatio_t - \alpharatio_{t-1}}\leq \frac{14k}{n}$.
    \end{enumerate}
\end{lemma}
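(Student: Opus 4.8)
The plan is to reduce all three parts to the one-step moment estimates already collected in \cref{lem:basic inequalities}. Write $\Delta a \defeq \alpha_t(i)-\alpha_{t-1}(i)$ and $\Delta\gamma \defeq \gamma_t-\gamma_{t-1}$ for the one-step increments. Clearing denominators in $\frac{\alpha_t(i)}{\gamma_t}-\frac{\alpha_{t-1}(i)}{\gamma_{t-1}}$ and substituting $\alpha_{t-1}(i)=R_{t-1}\gamma_{t-1}$ gives the key identity
\[
  R_t-R_{t-1}=\frac{\Delta a-R_{t-1}\Delta\gamma}{\gamma_t}=\frac{\Delta a-R_{t-1}\Delta\gamma}{\gamma_{t-1}}\cdot\frac{1}{1+x},\qquad x\defeq\frac{\Delta\gamma}{\gamma_{t-1}}.
\]
Throughout I will use the elementary facts $\gamma_{t-1}\geq 1/k$ (Cauchy--Schwarz on $\sum_j\alpha_{t-1}(j)=1$), $\gamma_{t-1}\leq 1$, $\alpha_{t-1}(i)\leq\sqrt{\gamma_{t-1}}$ and $\norm{\alpha_{t-1}}_3^3\leq\gamma_{t-1}^{3/2}$, together with \cref{lem:basic inequalities}: $\abs{\Delta a}\leq 1/n$ (\cref{item:difference of alpha}), $\E_{t-1}[\Delta a^2]\leq 3\alpha_{t-1}(i)/n^2$ (\cref{item:square difference of alpha}), $\abs{\Delta\gamma}\leq 6\sqrt{\gamma_{t-1}}/n$ (\cref{item:difference of 2norm}) and $\E_{t-1}[\Delta\gamma^2]\leq 24\norm{\alpha_{t-1}}_3^3/n^2$ (\cref{item:square difference of 2norm}). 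In particular $\abs{x}\leq 6/(n\sqrt{\gamma_{t-1}})\leq 6\sqrt{k}/n=o(1)$, so $\frac{1}{1+x}=1-x+\frac{x^2}{1+x}$ with $\frac{1}{1+x}\leq 2$ and $\frac{x^2}{1+x}=O(k/n^2)$ for $n$ large.

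Parts (iii) and (ii) are then routine. For (iii), rewrite $R_t-R_{t-1}=\frac{\gamma_{t-1}\Delta a-\alpha_{t-1}(i)\Delta\gamma}{\gamma_t\gamma_{t-1}}$: the numerator has absolute value at most $\frac{\gamma_{t-1}}{n}+\sqrt{\gamma_{t-1}}\cdot\frac{6\sqrt{\gamma_{t-1}}}{n}=\frac{7\gamma_{t-1}}{n}$ and the denominator is at least $\frac{\gamma_{t-1}}{k}$, so $\abs{R_t-R_{t-1}}\leq\frac{7k}{n}\leq\frac{14k}{n}$ unconditionally. For (ii), on $\mathcal{E}_{t-1}$ we have $R_{t-1}\leq U<1$, hence $\alpha_{t-1}(i)\leq\gamma_{t-1}$; squaring the identity, using $\frac{1}{(1+x)^2}\leq 1+o(1)$, expanding $(\Delta a-R_{t-1}\Delta\gamma)^2$, and bounding $\E_{t-1}[\Delta a^2]$, $\E_{t-1}[\Delta\gamma^2]$ and $\abs{\E_{t-1}[\Delta a\Delta\gamma]}$ (the cross term by Cauchy--Schwarz or, better, directly, using that $\Delta a\neq 0$ forces opinion $i$ to be the one adopted or abandoned, an event of probability $O(\alpha_{t-1}(i))$), then dividing by $\gamma_{t-1}^2$ and using $\gamma_{t-1}\geq 1/k$, one obtains a bound whose dominant term is $3\alpha_{t-1}(i)/(n^2\gamma_{t-1}^2)\leq 3/(n^2\gamma_{t-1})\leq 3k/n^2$ and whose remaining terms are $O(\sqrt{k}/n^2)$; in total at most $(24+o(1))k/n^2$.

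The crux is (i). Using $\frac{1}{1+x}=1-x+\frac{x^2}{1+x}$ in the identity, decompose $R_t-R_{t-1}=T_1-T_2+T_3$ with
\[
  T_1=\frac{\Delta a-R_{t-1}\Delta\gamma}{\gamma_{t-1}},\qquad T_2=\frac{(\Delta a-R_{t-1}\Delta\gamma)\Delta\gamma}{\gamma_{t-1}^2},\qquad T_3=\frac{\Delta a-R_{t-1}\Delta\gamma}{\gamma_{t-1}}\cdot\frac{x^2}{1+x}.
\]
For the main term, substituting \cref{item:expectation of alpha} and \cref{item:expectation of 2norm} of \cref{lem:basic inequalities} into $\E_{t-1}[T_1]=\gamma_{t-1}^{-1}\bigl(\E_{t-1}[\Delta a]-R_{t-1}\E_{t-1}[\Delta\gamma]\bigr)$ yields
\[
  \E_{t-1}[T_1]=\frac{R_{t-1}}{n}\left[\bigl(\alpha_{t-1}(i)-\gamma_{t-1}\bigr)-\frac{2\bigl(1-\tfrac{1}{n}\bigr)\bigl(\norm{\alpha_{t-1}}_3^3-\gamma_{t-1}^2\bigr)}{\gamma_{t-1}}-\frac{2(1-\gamma_{t-1})}{n\gamma_{t-1}}\right].
\]
Since $\norm{\alpha_{t-1}}_3^3\geq\gamma_{t-1}^2$ (Cauchy--Schwarz, which is precisely why $\E_{t-1}[\gamma_t]\geq\gamma_{t-1}$) and $\gamma_{t-1}\leq 1$, the last two bracketed terms are nonpositive, so on $\mathcal{E}_{t-1}$ the bracket is at most $\alpha_{t-1}(i)-\gamma_{t-1}\leq(U-1)\gamma_{t-1}$; since $U-1<0$ and $\alpha_{t-1}(i)=R_{t-1}\gamma_{t-1}\geq L\gamma_{t-1}\geq L/k$, this gives $\E_{t-1}[T_1]\leq\frac{(U-1)\alpha_{t-1}(i)}{n}\leq-\frac{(1-U)L}{kn}$.

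It remains to show $\abs{\E_{t-1}[T_2]}$ and $\E_{t-1}[\abs{T_3}]$ are $o(1/(kn))$, and it is here that the assumption $k=o(\sqrt{n})$ (rather than the $k=o(n^{1/3})$ needed by the naive Azuma bound) is used. Since $\E_{t-1}[T_2]=\gamma_{t-1}^{-2}\bigl(\E_{t-1}[\Delta a\Delta\gamma]-R_{t-1}\E_{t-1}[\Delta\gamma^2]\bigr)$, the bounds above give $\abs{\E_{t-1}[\Delta a\Delta\gamma]}\leq\sqrt{\E_{t-1}[\Delta a^2]\,\E_{t-1}[\Delta\gamma^2]}\leq\sqrt{72}\,\gamma_{t-1}^{5/4}/n^2$ (using $\alpha_{t-1}(i)\leq\gamma_{t-1}$) and $R_{t-1}\E_{t-1}[\Delta\gamma^2]\leq 24\gamma_{t-1}^{3/2}/n^2$, so $\abs{\E_{t-1}[T_2]}\leq\frac{\sqrt{72}}{n^2\gamma_{t-1}^{3/4}}+\frac{24}{n^2\sqrt{\gamma_{t-1}}}\leq\frac{\sqrt{72}\,k^{3/4}+24\sqrt{k}}{n^2}=O(k^{3/4}/n^2)$, which is $o(1/(kn))$ since $k^{7/4}=o(n)$. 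For $T_3$, on $\mathcal{E}_{t-1}$ we have $\abs{\Delta a-R_{t-1}\Delta\gamma}\leq 7/n$, $\frac{x^2}{1+x}\leq 72k/n^2$ and $\gamma_{t-1}^{-1}\leq k$, so $\abs{T_3}\leq 504k^2/n^3=o(1/(kn))$ since $k^3=o(n^2)$. Adding up, $\E_{t-1}[R_t\mid\mathcal{E}_{t-1}]-R_{t-1}=\E_{t-1}[T_1]-\E_{t-1}[T_2]+\E_{t-1}[T_3]\leq-\frac{(1-U)L}{kn}+o(1/(kn))$, which is (i). The main obstacle is exactly this last estimate: the improvement from $k\ll n^{1/3}$ to $k=o(\sqrt{n})$ hinges on the observation that $\E_{t-1}[\Delta\gamma^2]$ and $\E_{t-1}[\Delta a^2]$ are controlled by $\norm{\alpha_{t-1}}_3^3\leq\gamma_{t-1}^{3/2}$ and $\alpha_{t-1}(i)\leq\gamma_{t-1}$ respectively, so that the quadratic error terms $T_2$ and $T_3$ carry extra factors of $\sqrt{\gamma_{t-1}}\geq k^{-1/2}$ that push them below the $\Theta(1/(kn))$ drift of $T_1$.
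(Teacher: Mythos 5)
Your proof is correct, and it reaches all three bounds by a genuinely different route than the paper. The paper first proves two general-purpose perturbation lemmas bounding $\bigl|\E[X/Y]-\E[X]/\E[Y]\bigr|$ and $\E\bigl[(X/Y-\E[X]/\E[Y])^2\bigr]$ for a random variable $Y$ bounded below (\cref{lem:Taylor,lem:First_dif}), applies them with $X=\alpha_t(i)$, $Y=\gamma_t\ge 1/k$, and then computes the ratio of expectations via \cref{item:expectation of alpha,item:expectation of 2norm} of \cref{lem:basic inequalities}. You instead expand the increment pathwise, writing $R_t-R_{t-1}=\frac{\Delta a-R_{t-1}\Delta\gamma}{\gamma_{t-1}}\cdot\frac{1}{1+x}$ with $x=\Delta\gamma/\gamma_{t-1}$ and splitting $\frac{1}{1+x}=1-x+\frac{x^2}{1+x}$, so that the drift sits in the linear term $T_1$ and the corrections $T_2,T_3$ are killed by the same moment estimates (\cref{item:square difference of alpha,item:square difference of 2norm,item:difference of 2norm}) together with $\alpha_{t-1}(i)\le\gamma_{t-1}$, $\norm{\alpha_{t-1}}_3^3\le\gamma_{t-1}^{3/2}$ and $\gamma_{t-1}\ge 1/k$ — exactly the quantitative inputs that also drive the paper's error terms, and the correct identification of why $k=o(\sqrt{n})$ suffices. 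Your version is more self-contained (no auxiliary ratio lemmas, and it sidesteps the paper's slightly delicate use of \cref{lem:First_dif}, which formally compares against $\E[X]/\E[Y]$ rather than the previous ratio), and it even yields slightly sharper constants ($7k/n$ in (iii), $(3+o(1))k/n^2$ in (ii)); the paper's version buys reusable lemmas at the cost of a bit more machinery. Two cosmetic points: in (ii) the Cauchy--Schwarz bound on the cross term gives $O(k^{3/4}/n^2)$ rather than $O(\sqrt{k}/n^2)$ (your direct bound via $\Pr[\Delta a\ne 0]=O(\alpha_{t-1}(i))$ does give $O(\sqrt{k}/n^2)$), and in (i) your bounds on $T_2,T_3$ should be stated as holding on the event $\mathcal{E}_{t-1}$ (you use $R_{t-1}\le U<1$ there); neither affects the argument, since all error terms remain $o(1/(kn))$ and $o(k/n^2)$ respectively.
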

\cref{lem:ai_L2norm} holds under the weaker condition $k=o(\sqrt{n})$ than the condition $k=o(\sqrt{n\log n})$ of \cref{lem:unique strong opinion lemma}.
The proof of \cref{lem:ai_L2norm} relies on the approximation of $\E\sbra*{\frac{X}{Y}}$ in terms of $\E[X]$ and $\E[Y]$.
The proof can be found in \cref{sec:proof of basic facts}.

Next, we show that \cref{lem:ai_L2norm} 
implies that $(\alpharatio_t)$ is likely to decrease to $\alpharatio_t\leq \frac{L+U}{2}$ before 
growing to $\alpharatio_t\geq U$
if $L\leq \alpharatio_0\leq \frac{L+U}{2}$.
Here, the condition $k=o(\sqrt{n\log n})$ (or indeed $k\le 0.01\sqrt{n\log n})$) is crucial.
\begin{lemma} \label{lem:drop before rise}
Suppose $k=o(\sqrt{n/\log n})$ and let $0<L<U<1$ be constants.
Fix an opinion $i\in[k]$ and let $\alpharatio_t=\frac{\alpha_t(i)}{\alphanorm_t}$ for $t\geq 0$.
Suppose that $\alpharatio_0\leq\frac{L+U}{2}$
and let
    $\tau = \inf\cbra{t\geq 0\colon \alpharatio_t\leq L\text{ or }\alpharatio_t\geq U}$ be the stopping time.
    Then, $\Pr\sbra*{\alpharatio_\tau\leq L}\geq 1-n^{-\omega(1)}$.
\end{lemma}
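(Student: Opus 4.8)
The plan is to reduce \cref{lem:drop before rise} to an application of the optimal stopping theorem for the supermartingale $M_t = \exp(\phi\,\alpharatio_{t\land\tau})$ for a carefully chosen parameter $\phi>0$. Concretely, I would first use \cref{lem:ai_L2norm} to show that, on the event $\mathcal E_{t-1}$ (i.e.\ $t-1<\tau$), the one-step conditional expectation of $M_t/M_{t-1}$ is at most $1$. Writing $D_t = \alpharatio_t - \alpharatio_{t-1}$, a second-order Taylor expansion of $e^{\phi D_t}$ together with the bound $|D_t|\le 14k/n$ from \cref{lem:ai_L2norm}\ref{lab:ail2norm3} gives
\begin{align*}
\E_{t-1}\sbra*{e^{\phi D_t}\condition\mathcal E_{t-1}} \le 1 + \phi\,\E_{t-1}[D_t\mid\mathcal E_{t-1}] + \phi^2 e^{14\phi k/n}\,\E_{t-1}[D_t^2\mid\mathcal E_{t-1}].
\end{align*}
Plugging in the drift bound $\E_{t-1}[D_t\mid\mathcal E_{t-1}]\le -\frac{(1-U)L-o(1)}{kn}$ and the second-moment bound $\E_{t-1}[D_t^2\mid\mathcal E_{t-1}]\le\frac{(24+o(1))k}{n^2}$, I want the negative linear term to dominate, i.e.
\begin{align*}
-\frac{\phi\,((1-U)L-o(1))}{kn} + \frac{(24+o(1))\phi^2 k}{n^2}e^{14\phi k/n} \le 0.
\end{align*}
This holds provided $\phi = \Theta\rbra*{\frac{n}{k^2}}$ is chosen small enough (say $\phi = \frac{(1-U)L}{100\,k^2/n}$ scaled appropriately), since then $\phi k/n = \Theta(1/k) = o(1)$ so $e^{14\phi k/n}=1+o(1)$, and the two terms are both of order $\phi/(kn)$ with the negative one larger by a constant factor. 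Crucially, $\phi = \Theta(n/k^2) = \omega(\log n)$ exactly because $k = o(\sqrt{n/\log n})$ — this is where that hypothesis is used.

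Given that $(M_{t\land\tau})_{t\ge 0}$ is a nonnegative supermartingale, the optional stopping theorem (the referenced \cref{thm:gambler}) yields $\E[\exp(\phi\,\alpharatio_\tau)]\le \exp(\phi\,\alpharatio_0)\le\exp\rbra*{\phi\cdot\frac{L+U}{2}}$, where I also need to argue $\tau<\infty$ almost surely (or apply the theorem to $M_{t\land\tau}$ and take $t\to\infty$ using bounded convergence, since $0\le M_{t\land\tau}\le e^{\phi U}$ and $\alpharatio_{t\land\tau}$ converges a.s.\ on $\{\tau=\infty\}$ as well; the cleanest route is Fatou/bounded convergence on $M_{t\land\tau}$). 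On the event $\cbra*{\alpharatio_\tau\ge U}$ we have $M_\tau \ge e^{\phi U}$, so Markov's inequality gives
\begin{align*}
\Pr\sbra*{\alpharatio_\tau\ge U} \le \frac{\E[\exp(\phi\,\alpharatio_\tau)]}{\exp(\phi U)} \le \exp\rbra*{-\phi\cdot\frac{U-L}{2}} = \exp\rbra*{-\Omega\rbra*{\frac{n}{k^2}}} = n^{-\omega(1)},
\end{align*}
again using $k = o(\sqrt{n/\log n})$ so that $n/k^2 = \omega(\log n)$. Since $\tau$ is defined as the first exit from $(L,U)$, the complementary event is $\cbra*{\alpharatio_\tau\le L}$, and this is exactly the claim.

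One subtlety I would be careful about: \cref{lem:ai_L2norm} conditions on $\mathcal E_{t-1} = \{L\le\alpharatio_{t-1}\le U\}$, which for $t-1<\tau$ is automatic, but I should check the boundary/degenerate cases — e.g.\ if $\alphanorm_t$ could vanish (it cannot before consensus, and by \cref{item:expectation of 2norm} it is nondecreasing in expectation and always at least $1/k$), and the fact that $\alpharatio_t$ is well-defined throughout. I would also note that the supermartingale property is only claimed for the stopped process $M_{t\land\tau}$, so no control on the dynamics outside $[L,U]$ is needed. The main obstacle is genuinely just the bookkeeping in the exponential-moment calculation — making sure the $o(1)$ error terms in the drift don't swamp the constant $(1-U)L$ and that the choice of $\phi$ simultaneously (a) keeps $e^{\phi\cdot 14k/n}$ bounded, (b) makes the quadratic term smaller than the linear term, and (c) is still $\omega(\log n)$ — all three of which are compatible precisely under $k=o(\sqrt{n/\log n})$. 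Everything else is a standard Gambler's-ruin-style argument.
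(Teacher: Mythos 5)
Your proposal is correct and follows essentially the same route as the paper: the paper invokes its \cref{thm:gambler} (Gambler's ruin under negative drift), whose proof is exactly your exponential supermartingale $\exp(\phi\,\alpharatio_{t\land\tau})$ with $\phi=\Theta(n/k^2)=\omega(\log n)$, fed by the drift, second-moment, and bounded-difference estimates of \cref{lem:ai_L2norm}, followed by optional stopping. You merely inline that argument (and finish with Markov's inequality rather than the two-sided ruin bound), so there is no substantive difference.
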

\begin{proof}
    From \cref{lem:ai_L2norm}, for sufficiently large $n$, we have
\begin{align*}
    &|\alpharatio_{t+1}-\alpharatio_t|\leq \frac{14k}{n},\\
    &\indicator_{\tau>t-1}\E_{t-1}\sbra*{\alpharatio_t-\alpharatio_{t-1} \condition \mathcal{E}_{t-1}} + \frac{(1-U)L}{2kn} \le 0, \\
    &\indicator_{\tau>t-1}\E_{t-1}[\rbra{\alpharatio_t-\alpharatio_{t-1}}^2] \le \frac{25k}{n^2}.
\end{align*}
    We apply \cref{thm:gambler} with $X_t = \alpharatio_t$ and
    \begin{align*}
        &X_0=\alpharatio_0\le \frac{L+U}{2}, \hspace{2em} D=\frac{14k}{n},\hspace{2em} \theta=\frac{(1-U)L}{2kn}, \hspace{2em} S=\frac{25k}{n^2}.
    \end{align*}
    Since $k=o(\sqrt{n/\log n})$, we have $\phi = \frac{6\theta}{3S+2D\theta} = \omega(\log n)$ and thus
    \begin{align*}
        \Pr\sbra{\alpharatio_\tau \geq U} \leq \frac{\e^{\phi X_0} - \e^{\phi(L-D)}}{\e^{\phi U} - \e^{\phi(L-D)}} = \e^{-\Omega(\phi)} = n^{-\omega(1)}.
    \end{align*}
\end{proof}

\begin{proof}[Proof of \cref{lem:weak cannot be strong general constant}.]
It suffices to show that $\alpharatio_t< U$ for all $t=0,\dots,n^{10}$ with probability $1-n^{-\omega(1)}$ conditioned on $\alpharatio_0\leq L$.

For each $s=0,\dots,n^{10}$, let
\begin{align*}
    \tau^{\uparrow}_s = \inf\cbra*{t>s\colon \alpharatio_t\geq U}, & &
    \tau^{\downarrow}_s = \inf\cbra*{t>s\colon \alpharatio_t\leq L}
\end{align*}
and $\mathcal{B}_s$ be the bad event that $\tau^{\uparrow}_s<\tau^{\downarrow}_s$.
Since we start with $\alpharatio_0\leq L$ and $\abs{\alpharatio_t-\alpharatio_{t-1}}\leq \frac{14k}{n}=o(1)$,
    there must exist some $s\in\{0,\dots,\tau^{\uparrow}_0\}$
        such that both $\mathcal{B}_s$ and $\alpharatio_s\leq \frac{L+U}{2}$ occurs.
From \cref{lem:drop before rise}, we have $\Pr\sbra*{\mathcal{B}_s\condition \alpharatio_s\leq L}\leq n^{-\omega(1)}$ for every $s$.
Therefore, we have
\begin{align*}
    \Pr\sbra*{\tau^{\uparrow}_0\leq n^{10}}\leq \Pr\sbra*{\bigcup_{s=0}^{n^{10}} \rbra*{\mathcal{B}_s\text{ and }\alpharatio_s\le \frac{L+U}{2}}} \leq \sum_{s=0}^{n^{10}} \Pr\sbra*{\mathcal{B}_s \condition \alpharatio_s\leq \frac{L+U}{2}} \leq n^{-\omega(1)}.
\end{align*}
\end{proof}

\subsection{Pair of non-weak opinions} \label{sec:pair of non-weak opinions vanishes}
This part is devoted to proving \cref{lem:pair of non-weak opinions vanishes}.
We have two key components in the proof of \cref{lem:pair of non-weak opinions vanishes}.

First, for any distinct non-weak opinions $i,j\in[k]$, we show that $\delta_t(i,j)=\abs{\alpha_t(i) - \alpha_t(j)}$ (\cref{def:3Majority}) grows by a multiplicative factor or otherwise one of $i,j$ becomes weak within $O(nk)$ steps with high probability provided that $\delta_0(i,j)$ is sufficiently large.
In this part, for simplicity, we use $\delta_t$ to denote $\delta_t(i,j)$.
\begin{lemma}\label{lem:deltaUPorWeak}
    Let $i,j$ be two distinct initially non-weak opinions (i.e., $\alpha_0(i),\alpha_0(j)> \frac{3}{4}\alphanorm_0$).
    Then, there exists a universal constant $C>0$ such that,
    for
    $T=\frac{n}{8(\alpha_0(i)+\alpha_0(j))} \leq \frac{kn}{12}$,
    with probability $1-\e^{-C\alpha_0(i)^2n}-\e^{-C\alpha_0(j)^2n} - \e^{-C\delta_0^2n}$, at least one of the following events occurs:
    \begin{itemize}
        \item For some $0\leq t\leq T$, we have $\min\{\alpha_t(i), \alpha_t(j)\}\leq \frac{3}{4}\alphanorm_t$ (i.e., either $i$ or $j$ becomes weak).
        \item For some $0\leq t\leq T$, we have $\delta_t \geq 1.01\delta_0$.
    \end{itemize}
\end{lemma}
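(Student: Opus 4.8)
The strategy is to track the bias $\delta_t$ as a stopped submartingale with a genuinely \emph{multiplicative} drift and apply the Freedman inequality, after first confining $\alpha_t(i)$ and $\alpha_t(j)$ to a constant-factor band. Write $a=\alpha_0(i)+\alpha_0(j)$; since $i,j$ are non-weak, $\alpha_0(i),\alpha_0(j)>\tfrac34\alphanorm_0\ge\tfrac{3}{4k}$, so $a>\tfrac{3}{2k}$ and $T=\tfrac{n}{8a}\le\tfrac{kn}{12}$ (and $a\le 1$), matching the statement. Introduce the stopping times $\tauweak=\inf\{t\colon\min\{\alpha_t(i),\alpha_t(j)\}\le\tfrac34\alphanorm_t\}$, $\tau^{\uparrow}=\inf\{t\colon\delta_t\ge 1.01\delta_0\}$, $\tau^{\alpha}=\inf\{t\colon\alpha_t(i)\notin[\tfrac12\alpha_0(i),2\alpha_0(i)]\text{ or }\alpha_t(j)\notin[\tfrac12\alpha_0(j),2\alpha_0(j)]\}$, and $\sigma=\min\{\tauweak,\tau^{\uparrow},\tau^{\alpha},T\}$. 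The lemma asserts that $\{\tauweak\le T\}\cup\{\tau^{\uparrow}\le T\}$ holds with the stated probability, i.e.\ the ``bad'' event is $\{\tauweak>T,\ \tau^{\uparrow}>T\}$.

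\textbf{Step 1: the populations stay in a band.} I would first show $\Pr[\tau^{\alpha}\le\min\{\tauweak,T\}]\le \e^{-\Omega(\alpha_0(i)^2 n)}+\e^{-\Omega(\alpha_0(j)^2 n)}$. On steps $t\le\tau^{\alpha}\wedge\tauweak$ opinion $i$ is non-weak, so $\alphanorm_{t-1}<\tfrac43\alpha_{t-1}(i)$ and $\alpha_{t-1}(i)\le 2\alpha_0(i)$; then \cref{item:expectation of alpha} of \cref{lem:basic inequalities} gives that the one-step drift of $\alpha_t(i)$ has absolute value $O(\alpha_0(i)^2/n)$, while \cref{item:square difference of alpha} gives one-step second moment $O(\alpha_0(i)/n^2)$. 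Summing over $T\le\tfrac{n}{8\alpha_0(i)}$ steps, the accumulated drift correction is $O(\alpha_0(i))$, a small fraction of $\alpha_0(i)$, and the accumulated conditional variance is $O(1/n)$; applying the Freedman inequality to the drift-corrected super- and sub-martingales built from $\alpha_{t\wedge\tau^{\alpha}\wedge\tauweak}(i)$ (a union bound over the four band-exit events) shows $\alpha_t(i)$ and $\alpha_t(j)$ cannot leave their bands within $T$ steps except with probability $\e^{-\Omega(\alpha_0(i)^2 n)}+\e^{-\Omega(\alpha_0(j)^2 n)}$.

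\textbf{Step 2: multiplicative growth of the bias, then casework.} Set $\beta=1+\tfrac{a}{6n}$ and $N_t=\delta_{t\wedge\sigma}\,\beta^{-(t\wedge\sigma)}$. For $t\le\sigma$ both opinions are non-weak at step $t-1$, hence $\alphanorm_{t-1}<\tfrac43\min\{\alpha_{t-1}(i),\alpha_{t-1}(j)\}\le\tfrac23(\alpha_{t-1}(i)+\alpha_{t-1}(j))$, so $\alpha_{t-1}(i)+\alpha_{t-1}(j)-\alphanorm_{t-1}\ge\tfrac13(\alpha_{t-1}(i)+\alpha_{t-1}(j))\ge\tfrac16 a$ by Step 1's lower band; \cref{item:expectation of delta} then gives $\E_{t-1}[\delta_t]\ge\beta\,\delta_{t-1}$, so $N_t$ is a submartingale. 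By \cref{item:difference of delta,item:square difference of delta} together with Step 1's upper band, $|N_t-N_{t-1}|=O(1/n)$ and $\sum_{t\le T}\E_{t-1}[(N_t-N_{t-1})^2]=O(aT/n^2)=O(1/n)$, so the Freedman inequality (lower tail for submartingales) gives $\Pr[N_T\le\delta_0-\lambda]\le\e^{-\Omega(\delta_0^2 n)}$ with $\lambda=\tfrac{1}{200}\delta_0$. Now $\sigma\le T$, hence $N_T=\delta_\sigma\beta^{-\sigma}$ and $\delta_\sigma=N_T\beta^{\sigma}\ge N_T$; since $\beta^{T}=(1+\tfrac{a}{6n})^{n/(8a)}\ge\e^{1/49}>1.02$ for large $n$, on the event $\{N_T>\delta_0-\lambda\}$ we examine which of $\tauweak,\tau^{\uparrow},\tau^{\alpha},T$ realizes $\sigma$: $\sigma=\tauweak\le T$ gives the first bullet; $\sigma=\tau^{\uparrow}\le T$ gives $\delta_\sigma\ge1.01\delta_0$; $\sigma=T$ gives $\delta_T>(1-\tfrac1{200})\delta_0\beta^{T}>1.01\delta_0$, again the second bullet; and $\sigma=\tau^{\alpha}\le T$ is exactly the Step 1 event. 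Hence $\{\tauweak>T,\tau^{\uparrow}>T\}\subseteq\{N_T\le\delta_0-\lambda\}\cup\{\tau^{\alpha}\le\min\{\tauweak,T\}\}$, and Steps 1 and 2 finish the proof with $C$ a small absolute constant.

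\textbf{Main obstacle.} The crux is that for $k\approx\sqrt n$ the additive drift argument (Azuma--Hoeffding on $\delta_t-\tfrac{ct}{nk^2}$, as in \cref{AH-first}) is too weak. Freedman wins precisely because $\delta_t$ moves only $O(1/n)$ per step but with conditional second moment only $O((\alpha_{t-1}(i)+\alpha_{t-1}(j))/n^2)$, so the Bernstein-type term $\sum\E_{t-1}[(\delta_t-\delta_{t-1})^2]=O(a\cdot T/n^2)=O(1/n)$ is small enough to yield the sharp $\e^{-\Omega(\delta_0^2 n)}$ failure probability over a window of $T\le kn/12$ steps. A second subtlety is the constant: using the crude drift factor $1+\tfrac{1}{2kn}$ would drive $\beta^{T}$ down to $\e^{\Theta(1/k)}$, below $1.01$, which is why the magnitude-aware factor $1+\tfrac{a}{6n}$ and the population band of Step 1 (to keep $\alpha_{t-1}(i)+\alpha_{t-1}(j)=\Theta(a)$) are both needed. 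The remaining delicacy is purely organizational: stopping simultaneously at $\tauweak$, at the band exit, and at $T$, so that every estimate from \cref{lem:basic inequalities} is invoked only where it is valid.
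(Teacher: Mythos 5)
Your proposal is correct and follows essentially the same route as the paper: confine $\alpha_t(i),\alpha_t(j)$ to constant-factor bands via Freedman-type concentration (the paper's \cref{lem:alphaUD}, proved through \cref{lem:multipricative_drift_Freedman}), then exploit the multiplicative drift $1+\Theta\rbra*{\frac{\alpha_0(i)+\alpha_0(j)}{n}}$ of $\delta_t$ inside the band and before $\tauweak$ to force $\delta_T>1.01\delta_0$ with failure probability $\e^{-\Omega(\delta_0^2 n)}$ (the paper's \cref{lem:deltaUP}), finishing by the same casework over which stopping time fires. Your folding of the four band-exit times into a single $\tau^{\alpha}$ and your hand-built submartingale $N_t=\delta_{t\wedge\sigma}\beta^{-(t\wedge\sigma)}$ are only cosmetic repackagings of the paper's decomposition and its multiplicative drift lemma.
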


If $\delta_0\geq \sqrt{\log n/n}$, then \cref{lem:deltaUPorWeak} implies that $\delta_t$ grows by a multiplicative factor or otherwise, either $i$ or $j$ becomes weak within $O(kn)$ steps with high probability.

Second, we prove that  $\delta_t\geq\sqrt{\log n/n}$ with high probability for some $t=O(kn\log n)$ even if we start with $\delta_0=0$\footnote{Note that \cref{lem:deltaUPorWeak additive} does not directly imply the gap between the most and the second most popular opinions (we need \cref{lem:weak cannot be strong} to ensure it: See the proof of \cref{lem:unique strong opinion lemma}).}.
\begin{lemma}\label{lem:deltaUPorWeak additive}
    Let $i,j\in[k]$ be two distinct initially non-weak opinions
    and let $C_0>0$ be an arbitrary constant.
    Then, there exists a constant $C>0$ that, for $T=Cnk\log n$, with probability $1-O(n^{-10})$, at least one of the following events occurs:
    \begin{itemize}
        \item For some $0\leq t\leq T$, we have $\min\cbra*{\alpha_t(i),\alpha_t(j)}\leq\frac{3}{4}\alphanorm_t$.
        \item For some $0\leq t\leq T$, we have $\delta_t \geq C_0\sqrt{\frac{\log n}{n}}$ for some $0\leq t\leq T$.
    \end{itemize}
\end{lemma}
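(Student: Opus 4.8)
The plan is to track the bias $\delta_t=\delta_t(i,j)$ together with $Y_t:=\delta_t^2$ and to show that $\delta_t$ reaches the level $C_0\sqrt{\log n/n}$ --- unless one of $i,j$ becomes weak first --- within $T=Cnk\log n$ steps. Let $\tau$ be the first time that $\min\{\alpha_t(i),\alpha_t(j)\}\le\frac34\alphanorm_t$ or $\delta_t\ge C_0\sqrt{\log n/n}$; the goal is then $\Pr[\tau>T]=O(n^{-10})$. For $t\le\tau$ both $i,j$ are non-weak, so, writing $s_{t-1}:=\alpha_{t-1}(i)+\alpha_{t-1}(j)$, we have $s_{t-1}\ge\frac32\alphanorm_{t-1}\ge\frac3{2k}$ and $s_{t-1}-\alphanorm_{t-1}\ge\frac13 s_{t-1}$, while $\delta_{t-1}^2\le C_0^2\log n/n=o(1/k)$. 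Combining \cref{item:expectation of delta,item:square difference of delta,item:variance of delta} of \cref{lem:basic inequalities} then yields a \emph{variable} (position-dependent) drift for $Y_t$: the variance contributes an additive term $\E_{t-1}[Y_t-Y_{t-1}]\ge\Var_{t-1}[\delta_t]=\Omega(s_{t-1}/n^2)$, and \cref{item:expectation of delta} contributes a multiplicative term $\E_{t-1}[Y_t]\ge(\E_{t-1}[\delta_t])^2\ge(1+\Omega(s_{t-1}/n))^2 Y_{t-1}$, so altogether $\E_{t-1}[Y_t-Y_{t-1}]=\Omega\bigl(\tfrac{s_{t-1}}{n}(\tfrac1n+Y_{t-1})\bigr)$.

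A plain additive-drift bound on $Y_t$ is too weak: on the time scale needed to reach $C_0^2\log n/n$ the drift of $Y_t$ is dominated by its own fluctuation, which yields only a constant success probability. The remedy is to pass to a potential with a \emph{uniform} positive drift and a much \emph{shorter} traversal distance. I would take $\Psi_t:=\ln(Y_t+\beta/n)$ for a suitably large constant $\beta$; a second-order expansion, with the It\^o/Jensen correction controlled via $|\delta_t-\delta_{t-1}|\le\frac2n$ (\cref{item:difference of delta}) and $\E_{t-1}[(\delta_t-\delta_{t-1})^2]\le\frac{12 s_{t-1}}{n^2}$ (\cref{item:square difference of delta}), shows that for $t\le\tau$ the potential $\Psi_t$ has drift $\E_{t-1}[\Psi_t-\Psi_{t-1}]\ge\tfrac{c\,s_{t-1}}{n}$ for a constant $c>0$, per-step increment $O(1/\sqrt n)$, and per-step conditional variance $O(s_{t-1}/n)$, while the total range to be traversed, from $\Psi_0\ge\ln(\beta/n)$ to the target value $\ln(C_0^2\log n/n+\beta/n)$, is only $\ln(1+C_0^2\log n/\beta)=\Theta(\log\log n)$. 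It is convenient to re-index time by the \emph{active} steps, those at which $\delta_t$ actually changes (which occur at rate $\Theta(s_t)\ge\Theta(1/k)$), since per active step $\Psi$ then has drift $\Omega(1/n)$ and conditional variance $O(1/n)$ \emph{uniformly}, with no residual dependence on $s_t$.

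Finally I would run for $T=Cnk\log n$ steps, which is about $\lambda:=\Theta(\log n/\log\log n)$ times the natural cost of traversing $\Theta(\log\log n)$ units of $\Psi$ at drift $\Omega(1/n)$ per active step (with $O(k)$ actual steps between active steps). On $\{\tau>T\}$ the submartingale $\Psi_{t\wedge\tau}-\theta(t\wedge\tau)$, with $\theta$ the uniform drift bound, must have fallen by $\Omega(\lambda\log\log n)=\Omega(\log n)$ below its starting value; since the accumulated conditional variance over the $\Theta(n\log n)$ active steps is $O(\log n)$ and the increments are $O(1/\sqrt n)$, the Freedman inequality bounds this probability by $\exp(-\Omega(\log n))\le n^{-10}$, and a Chernoff bound handles the concentrated translation between active steps and actual steps (alternatively, one can invoke directly the variable-drift hitting-time analysis of \citet*{Doerr11}). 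The step I expect to be the main obstacle is exactly this concentration: because the drift of $\delta_t^2$ is of the same order as its fluctuation over the time needed to reach $C_0\sqrt{\log n/n}$, the target cannot be reached with probability $1-n^{-\Omega(1)}$ by a single Azuma estimate, and naive window-by-window arguments give only constant per-window success probability. The three ingredients that make the argument run at cost $O(nk\log n)$ are (i) the logarithmic-type potential, which collapses the distance to be traversed from $\Theta(\log n/n)$ in $\delta^2$ to $\Theta(\log\log n)$ in $\Psi$, so that an overshoot by the factor $\lambda=\Theta(\log n/\log\log n)$ still costs only $O(nk\log n)$ steps; (ii) checking that $\Psi$ has positive drift \emph{uniformly down to $\delta=0$}, where a single step can move $\delta^2$ by $\Theta(1/n^2)$ --- comparable to the drift --- so that the second-order correction must be bounded carefully and $\beta$ chosen accordingly; and (iii) tracking the $s_t=\alpha_t(i)+\alpha_t(j)$ dependence (via the active-step reparametrisation) so that the accumulated conditional variance stays $O(\log n)$ and the Freedman tail actually reaches $n^{-\Omega(1)}$.
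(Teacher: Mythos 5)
Your argument is correct in outline, but it takes a genuinely different route from the paper. The paper splits the statement into two phases: first it applies the additive drift of $\delta_t^2$ coming from $\Var_{t-1}[\delta_t]$ together with the optional stopping theorem (\cref{thm:OST_mar}) and Markov's inequality to show that within $O(nk)$ steps, with probability at least $1/2$, either one of $i,j$ becomes weak or $\delta_t\ge C_0/\sqrt{n}$; then it amplifies from $C_0/\sqrt n$ to $C_0\sqrt{\log n/n}$ by partitioning $[C_0/\sqrt n,\,C_0\sqrt{\log n/n}]$ into geometric levels $C_0(1.01)^w/\sqrt n$, using \cref{lem:deltaUPorWeak} (whose failure probability $\e^{-C\delta_0^2 n}$ improves level by level) to show each $O(nk)$-length window is ``good'' with probability $1-\exp(-(1.01)^w)$, and invoking the consecutive-good-phases drift lemma of \citet{Doerr11} (\cref{lem:doerr11_lemma}) to reach the top level within $O(\log n)$ windows with probability $1-O(n^{-10})$. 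You instead run a single potential-function argument with $\Psi_t=\ln(\delta_t^2+\beta/n)$: the variance term supplies the drift near $\delta=0$, the multiplicative term supplies it for larger $\delta$, and after re-indexing by active steps one application of Freedman over $\Theta(n\log n)$ active steps gives the $n^{-10}$ tail. This buys a unified treatment of the ``retry from $\delta\approx 0$'' issue (which the paper handles through the constant-probability Step 1 plus the $\Pr[W(s)=1\mid W(s-1)=0]\ge\tfrac12$ clause of \cref{lem:doerr11_lemma}) and avoids the level-peeling, at the cost of two delicate verifications that the paper's structure sidesteps: (a) the second-order expansion of the logarithm, where the cross term $\delta_{t-1}(\delta_t-\delta_{t-1})$ produces a negative correction of order $s_{t-1}/(\beta n)$ that must be beaten by the multiplicative gain, so $\beta$ has to be chosen large relative to the constant $12$ in \cref{item:square difference of delta} of \cref{lem:basic inequalities}; and (b) a rigorous time change, since in actual time both the drift and the conditional variance of $\Psi$ scale with $s_{t-1}=\alpha_{t-1}(i)+\alpha_{t-1}(j)$, and Freedman with a fixed variance budget only applies cleanly after passing to the active-step clock (plus a Chernoff bound guaranteeing $\Omega(n\log n)$ active steps among $Cnk\log n$ actual ones, using $s_{t-1}\ge\tfrac{3}{2k}$ before $\tauweak$). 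You flag both points yourself, and they do go through, so I see no gap — only that these two steps, not the Freedman application itself, are where the write-up would need the most care.
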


Below, we show how to prove \cref{lem:pair of non-weak opinions vanishes} assuming \cref{lem:deltaUPorWeak,lem:deltaUPorWeak additive}.
\begin{proof}[Proof of \cref{lem:pair of non-weak opinions vanishes} from \cref{lem:deltaUPorWeak,lem:deltaUPorWeak additive}.]
    Let $i,j\in[k]$ be two non-weak opinions.
    In particular, note that
$\alpha_0(i),\alpha_0(j)\geq \frac{3}{4}\alphanorm_0$.
    Let
        \begin{align}
            \tauweak=\inf\cbra*{t\geq 0:\alpha_t(i)\leq \frac{3}{4}\alphanorm_{t} \textrm{ or } \alpha_t(j)\leq \frac{3}{4}\alphanorm_{t} } \label{eq:tauweak}
        \end{align}
    be the first time that either $i$ or $j$ becomes weak.
    Let $C_0 = \sqrt{\frac{10}{C_{\hbox{\tiny{\ref{lem:deltaUPorWeak}}}}}}$ and $T=C_{\hbox{\tiny{\ref{lem:deltaUPorWeak additive}}}} nk\log n$,
    where
    $C_{\hbox{\tiny \ref{lem:deltaUPorWeak}}}, C_{\hbox{\tiny{\ref{lem:deltaUPorWeak additive}}}}>0$ are the universal constants denoted by $C$ in \cref{lem:deltaUPorWeak,lem:deltaUPorWeak additive}, respectively.
    Let $\tau_0,\tau_1,\dots$ be stopping times where $\tau_0=0$ and for $\ell\geq 1$,
    \[
    \tau_\ell = \inf\cbra*{t \geq \tau_{\ell-1}\colon t\geq\tauweak\text{ or }\delta_t\geq \max\cbra*{ C_0\sqrt{\frac{\log n}{n}}, 1.01\delta_{\tau_{\ell-1}} }}.
    \]
    Note that $\tau_\ell=\tau_{\ell-1}$ if $\tau_{\ell-1}\geq\tauweak$.
    From \cref{lem:deltaUPorWeak additive}, we have
    \[
    \Pr\sbra*{\tau_1 > T} = O(n^{-10})
    \]
    and from \cref{lem:deltaUPorWeak}, for $\ell\geq 2$, we have
    \begin{align*}
        \Pr\sbra*{\tau_\ell - \tau_{\ell-1} > \frac{kn}{12}} \leq O(n^{-10}).
    \end{align*}
    Here, $\delta_{\tau_1}\geq C_0\sqrt{\frac{\log n}{n}}$ and thus $\exp(-C_{\hbox{\tiny \ref{lem:deltaUPorWeak}}}\delta_{\tau_{\ell-1}}^2)=O(n^{-10})$ and $\e^{-C_{\hbox{\tiny \ref{lem:deltaUPorWeak}}}\alpha_{\tau_{\ell-1}}(i)^2n} = \exp(-\Omega(n/k^2))=n^{-\omega(1)}$ since $k=o(\sqrt{n/\log n})$.

    Note that, for $L\defeq \log_{1.01}n$, we have
    $\tauweak\leq\tau_L$ (otherwise, we would have $\delta_{\tau_L}>1$).
    Therefore, with probability $1-O(n^{-10}/\log n)$, we have $\tauweak\leq\tau_L\leq O(T+kn\log n)=O(nk\log n)$.
\end{proof}
To prove \cref{lem:deltaUPorWeak}, fix two initially non-weak opinions $i\neq j$
and
consider $\tauweak$ defined in \cref{eq:tauweak}.
Define the following stopping times:
\begin{align*}
    &\tau_{\delta}^\uparrow=\inf\cbra*{t\geq 0:\delta_t\geq 1.01\delta_0},  \\
    & \tau_i^\uparrow=\inf\cbra*{t\geq 0:\alpha_t(i)\geq 2\alpha_0(i)}, \\
    & \tau_i^\downarrow=\inf\cbra*{t\geq 0:\alpha_t(i)\leq \frac{1}{2}\alpha_0(i)}.
\end{align*}
We also define $\tau^{\uparrow}_j,\tau^{\downarrow}_j$ in the same way.
We prove two technical lemmas that follow from our multiplicative drift lemma (\cref{lem:multipricative_drift_Freedman}).
\begin{lemma}
\label{lem:alphaUD}
    Let $i\in[k]$ be any non-weak opinion.
    Then, we have the following:
    \begin{enumerate}[label=(\Roman*)]
        \item \label{item:alphaU} 
        There is a constant $C>0$ such that 
        $\Pr\sbra*{\tau_i^\uparrow\leq \frac{n}{8\alpha_0(i)}}\leq \exp\rbra*{-C\alpha_0(i)^2n}$.
        \item \label{item:alphaD} 
        There is a constant $C>0$ such that 
        $\Pr\sbra*{\tau_i^\downarrow\leq \min\cbra*{\frac{n}{8\alpha_0(i)},\tau_i^\uparrow,\tauweak}}\leq \exp\rbra*{-C\alpha_0(i)^2n}$.
    \end{enumerate}
\end{lemma}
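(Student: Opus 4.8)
The plan is to derive both parts from the multiplicative-drift concentration inequality \cref{lem:multipricative_drift_Freedman}, applied to $X_t=\alpha_t(i)$ and fed the one-step estimates of \cref{lem:basic inequalities}: the drift identity $\E_{t-1}[\alpha_t(i)]=\alpha_{t-1}(i)\bigl(1+\frac{\alpha_{t-1}(i)-\gamma_{t-1}}{n}\bigr)$ (\cref{item:expectation of alpha}), the increment bound $|\alpha_t(i)-\alpha_{t-1}(i)|\le\frac1n$ (\cref{item:difference of alpha}), and the conditional second moment $\E_{t-1}[(\alpha_t(i)-\alpha_{t-1}(i))^2]\le\frac{3\alpha_{t-1}(i)}{n^2}$ (\cref{item:square difference of alpha}). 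In both parts I would run over the horizon $T=\frac{n}{8\alpha_0(i)}$, chosen so that the accumulated deterministic drift is a fixed constant multiple of $\alpha_0(i)$ while the accumulated conditional variance is $\Theta(1/n)$; the leftover gap $\Omega(\alpha_0(i))$ to the threshold, measured against the $\Theta(n^{-1/2})$ fluctuation scale of the martingale part, is what the Bernstein-type tail of \cref{lem:multipricative_drift_Freedman} turns into $\exp(-\Omega(\alpha_0(i)^2n))$. The two parts differ only in the sign of the drift and in the stopping time at which the process is frozen so that the hypotheses hold at every step.

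For part~\ref{item:alphaU} I would apply \cref{lem:multipricative_drift_Freedman} to the stopped process $(\alpha_{t\wedge\tau_i^\uparrow}(i))_{t\ge0}$. For $t\le\tau_i^\uparrow$ we have $\alpha_{t-1}(i)<2\alpha_0(i)$, so using $\gamma_{t-1}\ge0$ the multiplicative factor is at most $1+\frac{\alpha_{t-1}(i)}{n}\le1+\beta$ with $\beta\defeq\frac{2\alpha_0(i)}{n}$, and the conditional second moment is at most $S\defeq\frac{6\alpha_0(i)}{n^2}$; after $\tau_i^\uparrow$ the process is constant, which still obeys $\E_{t-1}[\alpha_t(i)]\le(1+\beta)\alpha_{t-1}(i)$, so the hypotheses of \cref{lem:multipricative_drift_Freedman} hold throughout. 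Since $\beta T=\frac14$, the expected trajectory stays below $\e^{1/4}\alpha_0(i)<1.3\,\alpha_0(i)$, a fixed fraction below $2\alpha_0(i)$, while $ST=\frac{3}{4n}$ and $D=\frac1n$. Feeding $\beta,T,S,D$ together with slack $\lambda=\Theta(\alpha_0(i))$ into the upper-tail form of \cref{lem:multipricative_drift_Freedman} would give $\alpha_{T\wedge\tau_i^\uparrow}(i)<2\alpha_0(i)$ with probability $1-\exp\bigl(-\Omega(\lambda^2/(ST+D\lambda))\bigr)=1-\exp(-\Omega(\alpha_0(i)^2n))$, using $\alpha_0(i)\le1$. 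Since $\{\tau_i^\uparrow\le T\}=\{\alpha_{T\wedge\tau_i^\uparrow}(i)\ge2\alpha_0(i)\}$, this is exactly part~\ref{item:alphaU}.

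For part~\ref{item:alphaD} I would instead freeze the process at $\rho\defeq\min\{\tau_i^\uparrow,\tauweak,\tau_i^\downarrow\}$ and track its lower tail. The extra input is non-weakness: for $t\le\tauweak$ we have $\alpha_{t-1}(i)>\frac34\gamma_{t-1}$, hence $\alpha_{t-1}(i)-\gamma_{t-1}>-\frac13\alpha_{t-1}(i)$, so together with $\alpha_{t-1}(i)<2\alpha_0(i)$ (which holds for $t\le\tau_i^\uparrow$, hence up to $\rho$) the multiplicative factor is at least $1-\frac{\alpha_{t-1}(i)}{3n}\ge1-\beta'$ with $\beta'\defeq\frac{2\alpha_0(i)}{3n}$; after $\rho$ the frozen process still obeys this one-sided lower bound, and $D,S$ are unchanged. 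Since $(1-\beta')^T\ge\e^{-1/11}>0.9$ for large $n$, the expected trajectory stays above $0.9\,\alpha_0(i)$, a fixed fraction above $\frac12\alpha_0(i)$, so the lower-tail form of \cref{lem:multipricative_drift_Freedman} would give $\alpha_{T\wedge\rho}(i)>\frac12\alpha_0(i)$ with probability $1-\exp(-\Omega(\alpha_0(i)^2n))$. Since $\{\alpha_{T\wedge\rho}(i)\le\frac12\alpha_0(i)\}=\{\tau_i^\downarrow\le\min\{T,\tau_i^\uparrow,\tauweak\}\}$, this gives part~\ref{item:alphaD}.

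I expect the main obstacle to be the bookkeeping with the stopping times rather than any single estimate. One has to verify that on the window up to the freezing time the population stays in $[\frac12\alpha_0(i),2\alpha_0(i))$ — exactly what pins down both the drift factor and the second moment — and, for part~\ref{item:alphaD}, that opinion $i$ is still non-weak so that the multiplicative drift can be bounded from below, which is why $\tauweak$ has to enter the minimum; and that after freezing the (now constant) process still satisfies the relevant one-sided drift inequality, so that \cref{lem:multipricative_drift_Freedman} applies verbatim. Once this is set up, the constants ($\frac18$ in the horizon, and hence $\e^{1/4}$ and $\e^{-1/11}$) are calibrated so that the deterministic drift consumes only a fixed fraction of the gap to the threshold, leaving an $\Omega(\alpha_0(i))$ margin for the concentration term; everything else is routine computation.
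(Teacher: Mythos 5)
Your proposal is correct and follows essentially the same route as the paper: both parts are obtained by feeding \cref{item:expectation of alpha,item:difference of alpha,item:square difference of alpha} of \cref{lem:basic inequalities} into \cref{lem:multipricative_drift_Freedman} over the horizon $T=\frac{n}{8\alpha_0(i)}$, stopping at $\tau_i^\uparrow$ for the upper bound and additionally at $\tauweak$ (using non-weakness to get the factor $1-\frac{2\alpha_0(i)}{3n}$) for the lower bound, with slack $\Theta(\alpha_0(i))$ yielding $\exp(-\Omega(\alpha_0(i)^2 n))$. The only cosmetic differences are that the paper realizes your ``upper-tail form'' by applying the lemma to $X_t=-\alpha_t(i)$ and keeps the stopping-time indicators inside the lemma's hypotheses rather than freezing the process by hand.
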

\begin{proof}[Proof of \cref{item:alphaU} of \cref{lem:alphaUD}]
    If $\tau_i^\uparrow>t-1$, 
    \begin{align}
        \E_{t-1}\sbra*{\alpha_{t}(i)}
        &=\alpha_{t-1}(i)\rbra*{1+\frac{\alpha_{t-1}(i)-\alphanorm_{t-1}}{n}}
        \leq \alpha_{t-1}(i)\rbra*{1+\frac{2\alpha_{0}(i)}{n}} \label{eq:alphaUPc}
    \end{align}
    holds from \cref{item:expectation of alpha} of \cref{lem:basic inequalities}. 
    Let $X_t=-\alpha_{t}(i)$. 
    Write $\tau=\tau_i^\uparrow$ for convenience.
    Then, we have the following inequalities:
    \begin{itemize}
        \item \label{item:alphaU1} 
            $\indicator_{\tau>t-1}\E_{t-1}\sbra*{X_t-\rbra*{1+\frac{2\alpha_{0}(i)}{n}}X_{t-1}}\geq 0$ (From \cref{eq:alphaUPc}).
        \item \label{item:alphaU2} 
            $|X_t-X_{t-1}|\leq \frac{1}{n}$ (From \cref{item:difference of alpha} of \cref{lem:basic inequalities}).
        \item \label{item:alphaU3}
            $\indicator_{\tau>t-1}\E_{t-1}\sbra*{\rbra*{X_t-X_{t-1}}^2}
            \leq
            \indicator_{\tau>t-1}\cdot \frac{3\alpha_{t-1}(i)}{n^2}
            \le 
            \frac{6\alpha_{0}(i)}{n^2}$
            (From \cref{item:square difference of alpha} of \cref{lem:basic inequalities}).
        \item \label{item:alphaU4}
            $\indicator_{\tau>t-1}|X_{t-1}|\leq 2\alpha_0(i)$ (From the definition of $\tau_i^\uparrow$).
    \end{itemize}
    Let $T=\frac{n}{8\alpha_0(i)} \leq \frac{\log\rbra{2}}{4}\cdotp \frac{n}{\alpha_0(i)}$.
    Then, applying \cref{lem:multipricative_drift_Freedman} with 
    $a=1+\frac{2\alpha_{0}(i)}{n}$, 
    $A=\sum_{t=1}^T\rbra*{1+\frac{2\alpha_{0}(i)}{n}}^{-2t}\leq \frac{n}{2\alpha_{0}(i)}$, 
    $B=1$,
    $D=1/n$, 
    $S=\frac{6\alpha_{0}(i)}{n^2}$, 
    $U=2\alpha_0(i)$, and $\lambda=\rbra*{\sqrt{2}-1}\alpha_0(i)$, we obtain
    \begin{align*}
        \Pr\sbra*{\tau\leq T}
        &=\Pr\sbra*{\tau\leq \min\cbra{T,\tau}}
        =\Pr\sbra*{\bigvee_{t=0}^{T\wedge \tau}\cbra*{\alpha_t(i)\geq 2\alpha_0(i)}}
        =\Pr\sbra*{\bigvee_{t=0}^{T\wedge \tau}\cbra*{\alpha_{t\wedge \tau}(i)\geq 2\alpha_0(i)}}\\
        &\leq \Pr\sbra*{\bigvee_{t=0}^{T\wedge \tau}\cbra*{\alpha_{t\wedge \tau}(i)\geq \sqrt{2}\rbra*{1+\frac{2\alpha_{0}(i)}{n}}^{t\wedge \tau}\alpha_0(i)}}\\
        &\leq  \Pr\sbra*{\bigvee_{t=0}^{T}\cbra*{\alpha_{t\wedge \tau}(i)\geq \rbra*{1+\frac{2\alpha_{0}(i)}{n}}^{t\wedge \tau}\rbra*{\alpha_0(i)+\rbra*{\sqrt{2}-1}\alpha_0(i)}}}\\
        &= \Pr\sbra*{\bigvee_{t=0}^{T}\cbra*{X_{t\wedge \tau}\leq \rbra*{1+\frac{2\alpha_{0}(i)}{n}}^{t\wedge \tau}\rbra*{X_0-\rbra*{\sqrt{2}-1}\alpha_0(i)}}}\\
        &\leq \exp\rbra*{-C\alpha_0(i)^2n}
    \end{align*}
    for some constant $C>0$.
    Note that we use $2 \geq \sqrt{2}\rbra*{1+\frac{2\alpha_{0}(i)}{n}}^T$ in the first inequality.
\end{proof}
\begin{proof}[Proof of \cref{item:alphaD} of \cref{lem:alphaUD}]
    If $\min\cbra*{\tauweak,\tau_i^\uparrow}>t-1$, 
    \begin{align}
        \E_{t-1}\sbra*{\alpha_{t}(i)}
        =\alpha_{t-1}(i)\rbra*{1+\frac{\alpha_{t-1}(i)-\alphanorm_{t-1}}{n}}
        \geq \alpha_{t-1}(i)\rbra*{1-\frac{\alpha_{t-1}(i)}{3n}} 
        \geq \alpha_{t-1}(i)\rbra*{1-\frac{2\alpha_{0}(i)}{3n}} \label{eq:alphaDownc}
    \end{align}
    holds from \cref{item:expectation of alpha} of \cref{lem:basic inequalities}.
    Note that $\alphanorm_{t-1}\leq \frac{4}{3}\alpha_{t-1}(i)$ holds for $\tauweak>t-1$.
    Write $\tau=\min\cbra*{\tauweak,\tau_i^\uparrow}$ 
    for convenience.
    Then,  we have the following inequalities:
    \begin{itemize}
        \item \label{item:alphaD1} $\indicator_{\tau>t-1}\E_{t-1}\sbra*{\alpha_t(i)-\rbra*{1-\frac{2\alpha_{0}(i)}{3n}}\alpha_{t-1}(i)}\geq 0$ (From \cref{eq:alphaDownc}).
        \item \label{item:alphaD2} $|\alpha_t(i)-\alpha_{t-1}(i)|\leq \frac{1}{n}$ (From \cref{item:difference of alpha} of \cref{lem:basic inequalities}).
        \item \label{item:alphaD3} $\indicator_{\tau>t-1}\E_{t-1}\sbra*{\rbra*{\alpha_t(i)-\alpha_{t-1}(i)}^2}\leq \frac{6\alpha_{0}(i)}{n^2}$ (From \cref{item:square difference of alpha} of \cref{lem:basic inequalities}).
        \item \label{item:alphaD4} $\indicator_{\tau>t-1}|\alpha_{t-1}(i)|\leq 2\alpha_0(i)$ (From the definition of $\tau_i^\uparrow$).
    \end{itemize}
    Let $T= \frac{n}{8\alpha_0(i)}\leq \frac{\log(2)}{4(2/3)}\cdotp \frac{n}{\alpha_0(i)}$.
    From \cref{lem:multipricative_drift_Freedman} with
%
%
    $X_t=\alpha_t(i)$, 
    $A=\sum_{t=1}^T\rbra*{1-\frac{2\alpha_{0}(i)}{3n}}^{-2t}\leq \frac{3n}{4 \alpha_0(i)}$,
    $B=\rbra*{1-\frac{2\alpha_{0}(i)}{3n}}^{-2T}\leq 2$,
    $D=\frac{1}{n}$,
    $S=\frac{6\alpha_{0}(i)}{n^2}$, $U=2\alpha_0(i)$, and $\lambda=\rbra*{1-\frac{1}{\sqrt{2}}}\alpha_0(i)$,
    we obtain
    \begin{align*}
        \Pr\sbra*{\tau_i^\downarrow\leq \min\cbra*{T,\tau}}
        &=\Pr\sbra*{\bigvee_{t=0}^{T\wedge \tau}\cbra*{\alpha_t(i)\leq \frac{1}{2}\alpha_0(i)}}
        =\Pr\sbra*{\bigvee_{t=0}^{T\wedge \tau}\cbra*{\alpha_{t\wedge\tau}(i)\leq \frac{1}{2}\alpha_0(i)}}\\
        &\leq \Pr\sbra*{\bigvee_{t=0}^{T\wedge \tau}\cbra*{\alpha_{t\wedge\tau}(i)\leq \rbra*{1-\frac{2\alpha_0(i)}{3n}}^{t\wedge\tau}\frac{1}{\sqrt{2}}\alpha_0(i)}}\\
        &\leq \Pr\sbra*{\bigvee_{t=0}^{T}\cbra*{\alpha_{t\wedge\tau}(i)\leq \rbra*{1-\frac{2\alpha_0(i)}{3n}}^{t\wedge\tau}\rbra*{\alpha_0(i)-\rbra*{1-\frac{1}{\sqrt{2}}}\alpha_0(i)}}}\\
         &\leq \exp\rbra*{-C\alpha_0(i)^2n}
    \end{align*}
    for some positive constant $C$.
    Note that we use $\frac{1}{2}\leq \rbra*{1-\frac{2\alpha_0(i)}{3n}}^T\frac{1}{\sqrt{2}}$ in the first inequality.
\end{proof}
\begin{lemma}
\label{lem:deltaUP}
There is a constant $C>0$ such that 
\begin{align*}
    \Pr\sbra*{\min\cbra*{\tau_\delta^\uparrow,\tauweak,\tau_i^\uparrow,\tau_j^\uparrow,\tau_i^\downarrow,\tau_j^\downarrow}>\frac{n}{8\rbra*{\alpha_0(i)+\alpha_0(j)}}}\leq \exp\left(-C\delta_0^2n\right).
\end{align*}
\end{lemma}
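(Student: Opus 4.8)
The plan is to show that, on the event that none of the stopping times $\tauweak,\tau_i^\uparrow,\tau_j^\uparrow,\tau_i^\downarrow,\tau_j^\downarrow$ has fired, the bias $\delta_t$ enjoys a genuine multiplicative drift, so that after $T=\frac{n}{8(\alpha_0(i)+\alpha_0(j))}$ steps it must have grown past $1.01\delta_0$ with overwhelming probability --- which forces $\tau_\delta^\uparrow\le T$. Concretely, write $\tau=\min\{\tauweak,\tau_i^\uparrow,\tau_j^\uparrow,\tau_i^\downarrow,\tau_j^\downarrow\}$. For $t-1<\tau$ we have $\alpha_{t-1}(i),\alpha_{t-1}(j)\in[\tfrac12\alpha_0(\cdot),2\alpha_0(\cdot)]$ and $\alpha_{t-1}(i)+\alpha_{t-1}(j)>\frac{3}{4}\alphanorm_{t-1}\cdot$(something)$\ge \tfrac{1}{2}(\alpha_0(i)+\alpha_0(j))$ roughly; more to the point, \cref{item:expectation of delta} of \cref{lem:basic inequalities} gives
\[
\indicator_{\tau>t-1}\,\E_{t-1}[\delta_t]\ \ge\ \indicator_{\tau>t-1}\,\delta_{t-1}\left(1+\frac{\alpha_{t-1}(i)+\alpha_{t-1}(j)-\alphanorm_{t-1}}{n}\right)\ \ge\ \indicator_{\tau>t-1}\,\left(1+\frac{c(\alpha_0(i)+\alpha_0(j))}{n}\right)\delta_{t-1}
\]
for a suitable constant $c>0$ (using non-weakness to bound $\alphanorm_{t-1}\le\tfrac{4}{3}\min\{\alpha_{t-1}(i),\alpha_{t-1}(j)\}$, so that $\alpha_{t-1}(i)+\alpha_{t-1}(j)-\alphanorm_{t-1}\ge\tfrac12(\alpha_{t-1}(i)+\alpha_{t-1}(j))\ge\tfrac14(\alpha_0(i)+\alpha_0(j))$). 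This puts us in exactly the regime of the multiplicative drift lemma (\cref{lem:multipricative_drift_Freedman}).

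Next I would assemble the four ingredients that lemma wants, applied to $X_t=\delta_t$ with multiplicative factor $a=1+\frac{c(\alpha_0(i)+\alpha_0(j))}{n}$: first, the drift inequality above; second, $|\delta_t-\delta_{t-1}|\le\frac{2}{n}=:D$ from \cref{item:difference of delta}; third, $\indicator_{\tau>t-1}\E_{t-1}[(\delta_t-\delta_{t-1})^2]\le \frac{12(\alpha_{t-1}(i)+\alpha_{t-1}(j))}{n^2}\le\frac{24(\alpha_0(i)+\alpha_0(j))}{n^2}=:S$ from \cref{item:square difference of delta}; and fourth, a crude a priori bound $\indicator_{\tau>t-1}|\delta_{t-1}|\le 2(\alpha_0(i)+\alpha_0(j))=:U$ (which holds since $\delta_{t-1}\le\alpha_{t-1}(i)+\alpha_{t-1}(j)\le 2(\alpha_0(i)+\alpha_0(j))$ before $\tau$). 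With $T=\frac{n}{8(\alpha_0(i)+\alpha_0(j))}$ one computes $A=\sum_{t=1}^{T}a^{-2t}=O\!\left(\frac{n}{\alpha_0(i)+\alpha_0(j)}\right)$ and $B=a^{-2T}=\Theta(1)$ exactly as in the proofs of \cref{lem:alphaUD}. Choosing the deviation parameter $\lambda=\Theta(\delta_0)$ (small enough that $a^T(\delta_0-\lambda)\le\text{(something)}$ but the point is $1.01\delta_0\ge a^{T\wedge\tau}\cdot(\text{lower value})$ is \emph{not} what we want --- rather we want the \emph{lower} tail of $\delta$ to stay above $a^{t}(\delta_0-\lambda)$), the lemma yields that, with probability $1-\exp(-\Omega(\lambda^2 n /(\alpha_0(i)+\alpha_0(j))))$, we have $\delta_{t\wedge\tau}\ge a^{t\wedge\tau}(\delta_0-\lambda)$ for all $t\le T$. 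Taking $t=T$ and noting $a^{T}=\big(1+\tfrac{c(\alpha_0(i)+\alpha_0(j))}{n}\big)^{n/(8(\alpha_0(i)+\alpha_0(j)))}\ge e^{c/16}>1.01$ for an appropriate absolute $c$ (tuning $c$, or equivalently tuning the constant in $T$, to clear $1.01$), and $\lambda$ a small enough constant multiple of $\delta_0$ so that $a^T(\delta_0-\lambda)\ge 1.01\delta_0$, we conclude: on this high-probability event, either $\tau\le T$ (one of the other five stopping times fired) or $\delta_{T\wedge\tau}=\delta_T\ge 1.01\delta_0$, i.e.\ $\tau_\delta^\uparrow\le T$. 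In all cases $\min\{\tau_\delta^\uparrow,\tauweak,\tau_i^\uparrow,\tau_j^\uparrow,\tau_i^\downarrow,\tau_j^\downarrow\}\le T$, and the failure probability is $\exp(-C\delta_0^2 n)$ after absorbing $\alpha_0(i)+\alpha_0(j)\le 1$ into the exponent (using $\lambda=\Theta(\delta_0)$, the exponent is $\Omega(\delta_0^2 n/(\alpha_0(i)+\alpha_0(j)))\ge\Omega(\delta_0^2 n)$).

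The main obstacle I anticipate is not the martingale machinery --- that is a direct transcription of the pattern already used for \cref{lem:alphaUD} --- but rather the bookkeeping of constants so that the \emph{guaranteed multiplicative growth over $T$ steps strictly exceeds the target factor $1.01$} while the deviation term $\lambda$ is simultaneously small enough to be absorbed. One must check that the constant $c$ in the drift (which is pinned down by the non-weak threshold $\tfrac34$, via $\alpha_{t-1}(i)+\alpha_{t-1}(j)-\alphanorm_{t-1}\ge\tfrac14(\alpha_0(i)+\alpha_0(j))$) together with the choice $T=\frac{n}{8(\alpha_0(i)+\alpha_0(j))}$ indeed gives $a^T\ge 1.1$ say, leaving room for a constant-factor $\lambda$; if the naive constants are too tight one shrinks the coefficient $\tfrac18$ in $T$ (the statement only needs $T\le\frac{kn}{12}$, and $\frac{n}{8(\alpha_0(i)+\alpha_0(j))}\le\frac{kn}{12}$ holds because $\alpha_0(i)+\alpha_0(j)\ge\tfrac32\alphanorm_0\ge\tfrac{3}{2k}$). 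A secondary technical point is verifying that before time $\tau$ all four hypotheses of \cref{lem:multipricative_drift_Freedman} hold with the indicator $\indicator_{\tau>t-1}$ in the right places --- in particular that the a priori bound $U=2(\alpha_0(i)+\alpha_0(j))$ is valid, which follows since each of $\alpha_{t-1}(i),\alpha_{t-1}(j)$ is at most $2\alpha_0(\cdot)$ before $\tau_i^\uparrow,\tau_j^\uparrow$ fire.
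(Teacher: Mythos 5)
Your proposal is correct and follows essentially the same route as the paper's proof: a multiplicative drift for $\delta_t$ before the stopping times, plugged into \cref{lem:multipricative_drift_Freedman} with $D=\frac{2}{n}$, $S=\frac{24(\alpha_0(i)+\alpha_0(j))}{n^2}$, $\lambda=\Theta(\delta_0)$, so that by $T=\frac{n}{8(\alpha_0(i)+\alpha_0(j))}$ either another stopping time has fired or $\delta_T\geq 1.01\delta_0$, and the differences are only cosmetic (the paper includes $\tau_\delta^\uparrow$ in the stopped time and so may take $U=1.01\delta_0$, whereas your $U=2(\alpha_0(i)+\alpha_0(j))$ still leaves every denominator term $O(1/n)$ and the exponent $\Omega(\delta_0^2 n)$ since $\alpha_0(i)+\alpha_0(j)\leq 1$). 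One constant should be fixed: non-weakness only gives $\alphanorm_{t-1}\leq\frac{2}{3}\rbra*{\alpha_{t-1}(i)+\alpha_{t-1}(j)}$, hence a per-step factor $1+\frac{\alpha_{t-1}(i)+\alpha_{t-1}(j)-\alphanorm_{t-1}}{n}\geq 1+\frac{\alpha_{t-1}(i)+\alpha_{t-1}(j)}{3n}\geq 1+\frac{\alpha_0(i)+\alpha_0(j)}{6n}$ rather than your $\frac{1}{2}$/$\frac{1}{4}$, but with $c=\frac{1}{6}$ one still gets $a^T\geq \e^{1/64}$ and $a^T\rbra*{1-\lambda/\delta_0}\geq \e^{1/96}>1.01$ for $\lambda=\rbra*{1-\e^{-1/192}}\delta_0$, exactly as in the paper.
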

\begin{proof}
Let $\tau=\min\cbra*{\tau_\delta^\uparrow,\tauweak,\tau_i^\uparrow,\tau_j^\uparrow,\tau_i^\downarrow,\tau_j^\downarrow}$.
If $\tau>t-1$,
\begin{align}
    \E_{t-1}\sbra*{\delta_t}
    &\geq \delta_{t-1}\rbra*{1+\frac{\alpha_{t-1}(i)+\alpha_{t-1}(j)-\alphanorm_{t-1}}{n}}  \nonumber\\
    &\geq \delta_{t-1}\rbra*{1+\frac{\alpha_{t-1}(i)+\alpha_{t-1}(j)}{3n}}   \nonumber\\
    &\geq \delta_{t-1}\rbra*{1+\frac{\alpha_{0}(i)+\alpha_{0}(j)}{6n}}  \label{eq:deltaUPc}
\end{align}
holds from \cref{item:expectation of delta} of \cref{lem:basic inequalities}.
The second inequality follows from $\alphanorm_{t-1}=\frac{\alphanorm_{t-1}}{2}+\frac{\alphanorm_{t-1}}{2}\leq \frac{2}{3}\rbra*{\alpha_{t-1}(i)+\alpha_{t-1}(j)}$.
Note that both $\alphanorm_{t-1}\leq \frac{4}{3}\alpha_{t-1}(i)$ and $\alphanorm_{t-1}\leq \frac{4}{3}\alpha_{t-1}(j)$ hold for $\tauweak>t-1$.
Furthermore, we observe that the following:
\begin{itemize}
    \item \label{item:deltaUP1} 
    $\indicator_{\tau>t-1} \E_{t-1}\sbra*{\delta_t-  \rbra*{1+\frac{\alpha_{0}(i)+\alpha_{0}(j)}{6n}}\delta_{t-1}(i)}\geq 0$ (from \cref{eq:deltaUPc}).
    \item \label{item:deltaUP2} $|\delta_t-\delta_{t-1}|\leq \frac{2}{n}$ (from \cref{item:difference of delta}  of \cref{lem:basic inequalities}).
    \item \label{item:deltaUP3} 
    $\indicator_{\tau>t-1}\E_{t-1}\sbra*{\rbra*{\delta_t-\delta_{t-1}}^2}\leq \frac{24(\alpha_{0}(i)+\alpha_0(j))}{n^2}$ (from \cref{item:square difference of delta} of \cref{lem:basic inequalities}).
    \item \label{item:deltaUP4} 
    $\indicator_{\tau>t-1}|\delta_{t-1}|\leq 1.01\delta_0$ (from the definition of $\tau_\delta^\uparrow$).
\end{itemize}
Let $T=\frac{n}{8\rbra*{\alpha_0(i)+\alpha_0(j)}}$.
%
From
\cref{lem:multipricative_drift_Freedman} with
$X_t=\delta_t$,
$A=\sum_{t=1}^T\rbra*{1+\frac{\alpha_0(i)+\alpha_0(j)}{6n}}^{2t}\leq \frac{6n}{\alpha_0(i)+\alpha_0(j)}$,
$B=1$,
$D = \frac{2}{n}$,
$S=\frac{24(\alpha_{0}(i)+\alpha_0(j))}{n^2}$,
$U=1.01\delta_0$,
and
$\lambda=\rbra*{1-\mathrm{e}^{-\frac{1}{192}}}\delta_0$,
we have
\begin{align*}
    \Pr\sbra*{\tau>T}
    &=\Pr\sbra*{\tau>T,\delta_{T}\leq 1.01\delta_0}\\
    &\leq  \Pr\sbra*{\tau>T,\delta_{T}\leq \rbra*{1+\frac{\alpha_0(i)+\alpha_0(j)}{6n}}^T\mathrm{e}^{-\frac{1}{192}}\delta_0}\\
    &=\Pr\sbra*{\tau>T,\delta_{T\wedge\tau}\leq \rbra*{1+\frac{\alpha_0(i)+\alpha_0(j)}{6n}}^{T\wedge \tau}\rbra*{\delta_0-\rbra*{1-\mathrm{e}^{-\frac{1}{192}}}\delta_0}}\\
    &\leq \Pr\sbra*{\delta_{T\wedge\tau}\leq \rbra*{1+\frac{\alpha_0(i)+\alpha_0(j)}{6n}}^{T\wedge \tau}\rbra*{\delta_0-\rbra*{1-\mathrm{e}^{-\frac{1}{192}}}\delta_0}}\\
    &\leq \exp\rbra*{-C\delta_0^2n}
\end{align*}
for some constant $C>0$.
Note that we use $1.01<\mathrm{e}^{\frac{1}{96}}\leq \rbra*{1+\frac{\alpha_0(i)+\alpha_0(j)}{6n}}^T\mathrm{e}^{-\frac{1}{192}}$ in the first inequality.
\end{proof}
\begin{proof}[Proof of \cref{lem:deltaUPorWeak}]
Let $T=\frac{n}{8\rbra*{\alpha_0(i)+\alpha_0(j)}}$, $T_i=\frac{n}{8\alpha_0(i)}$, and $T_j=\frac{n}{8\alpha_0(j)}$.
It suffices to prove
    \begin{align*}
        &\Pr\sbra*{\min\cbra*{\tau_\delta^\uparrow,\tauweak}>T}
        \leq \e^{-C\alpha_0(i)^2n}+\e^{-C\alpha_0(j)^2n}+\e^{-C\delta_0^2n}.
    \end{align*}
Note that
    \begin{align}
        \Pr\sbra*{\tau_\delta^\uparrow>T,\tauweak>T}
        &=\Pr\sbra*{\tau_\delta^\uparrow>T,\tauweak>T,\cbra*{\tau_i^\uparrow\leq T \textrm{ or }\tau_j^\uparrow\leq T}} \label{eq:taudeltaweak1}\\
        &+\Pr\sbra*{\tau_\delta^\uparrow>T,\tauweak>T,\tau_i^\uparrow> T,\tau_j^\uparrow> T, \cbra*{\tau_i^\downarrow\leq T \textrm{ or }\tau_j^\downarrow\leq T}} \label{eq:taudeltaweak2}\\
        &+\Pr\sbra*{\tau_\delta^\uparrow>T,\tauweak>T,\tau_i^\uparrow> T,\tau_j^\uparrow> T, \tau_i^\downarrow> T,\tau_j^\downarrow>T} \label{eq:taudeltaweak3}
    \end{align}
    For the first term, from the union bound and \cref{item:alphaU} of \cref{lem:alphaUD}, 
    \begin{align*}
        \cref{eq:taudeltaweak1}
        &\leq \Pr\sbra*{\tau_i^\uparrow\leq T}+\Pr\sbra*{\tau_j^\uparrow\leq T}\\
        &\leq \Pr\sbra*{\tau_i^\uparrow\leq T_i}+\Pr\sbra*{\tau_j^\uparrow\leq T_j}\\
        &\leq \exp\rbra*{-C_1\alpha_0(i)^2n}+\exp\rbra*{-C_1\alpha_0(j)^2n}
    \end{align*}
    for some positive constant $C_1$.
    Note that $T\leq T_i$ and $T\leq T_j$ hold.
    For the second term, from the union bound and \cref{item:alphaD} of \cref{lem:alphaUD}, 
    \begin{align*}
        \cref{eq:taudeltaweak2}
        &\leq \Pr\sbra*{\tau_i^\downarrow\leq T,\tauweak>T,\tau_i^\uparrow> T}
        +\Pr\sbra*{\tau_j^\downarrow\leq T,\tauweak>T,\tau_j^\uparrow> T}\\
        &\leq \Pr\sbra*{\tau_i^\downarrow\leq \min\cbra*{T,\tauweak,\tau_i^\uparrow}}
        +\Pr\sbra*{\tau_j^\downarrow\leq \min\cbra*{T,\tauweak,\tau_j^\uparrow}}\\
        &\leq \Pr\sbra*{\tau_i^\downarrow\leq \min\cbra*{T_i,\tauweak,\tau_i^\uparrow}}
        +\Pr\sbra*{\tau_j^\downarrow\leq \min\cbra*{T_j,\tauweak,\tau_j^\uparrow}}\\
        &\leq \exp\rbra*{-C_2\alpha_0(i)^2n}+\exp\rbra*{-C_2\alpha_0(j)^2n}
    \end{align*}
    for some positive constant $C_2$.
    For the third term, from \cref{lem:deltaUP}, 
    \begin{align*}
        \cref{eq:taudeltaweak3}
        &=\Pr\sbra*{\min\cbra*{\tau_\delta^\uparrow,\tauweak,\tau_i^\uparrow,\tau_j^\uparrow,\tau_i^\downarrow,\tau_j^\downarrow}>T}
        \leq \exp\left(-C_3\delta_0^2n\right).
    \end{align*}
    for some positive constant $C_3$.
    Combining the above, we obtain the claim.
\end{proof}
\begin{proof}[Proof of \cref{lem:deltaUPorWeak additive}.]
The proof consists of two parts.
First, we invoke the additive drift of $\delta_t^2$ and apply \cref{thm:OST_mar} to show that $\delta_t$ changes by order $\Omega(1/\sqrt{n})$ within $O(kn)$ steps with constant probability.
Second, we use the result of drift analysis from \cite{Doerr11} to prove \cref{lem:deltaUPorWeak additive}.

\paragraph*{Step 1. Additive Drift of $\delta_t^2$.}
Let
$\tau^+_\delta = \inf\cbra*{t\geq 0\colon \delta_t\ge \frac{C_0}{ \sqrt{n} }}$.
We claim
\begin{align}
    \Pr\sbra*{\min\{\tauweak,\tau^+_\delta\}\geq 3C_0^2nk}\leq \frac{1}{3}-o(1)<\frac{1}{2}. \label{eq:constant change of delta}
\end{align}
In other words, for some $t=O(kn)$, either $\tauweak\le t$ or $\delta_t\ge \frac{C_0}{\sqrt{n}}$ occurs (for any initial condition) with probability $1/2$.

To prove \cref{eq:constant change of delta}, we apply \cref{thm:OST_mar} for $X_t=\delta_t$ and $\tau=\min\cbra*{\tauweak,\tau^+_\delta}$.
Note that $\tau<\infty$ almost surely.
In what follows, we check that the two conditions of \cref{thm:OST_mar} hold.
From \cref{item:variance of delta} of \cref{lem:basic inequalities}, we have 
\begin{align*}
    \indicator_{\tau > t-1}\cdot \rbra*{\Var_{t-1}[\delta_t] - \frac{1}{n^2 k}} &\geq
    \indicator_{\tau > t-1}\cdot \rbra*{\frac{\alpha_{t-1}(i) + \alpha_{t-1}(j) - 5\delta_{t-1}^2}{n^2} - \frac{1}{n^2 k}}\\
    &\geq \indicator_{\tau > t-1}\cdot \frac{1}{n^2}\rbra*{ \frac{3}{2}\alphanorm_t - O\rbra*{\frac{\log n}{n}} - \frac{1}{k} } & & (t-1 < \tau) \\
    &\geq 0.   & & (\alphanorm_t\geq 1/k)
\end{align*}
Also, since $\alpha_{t-1}(i)+\alpha_{t-1}(j)\geq \frac{3}{2}\alphanorm_{t-1}$ if $t-1<\tau$, we have
\begin{align*}
    \indicator_{\tau > t-1}\cdot\rbra*{\E_{t-1}\sbra*{\delta_t}^2 - \delta_{t-1}^2} 
    &\geq \delta_{t-1}^2\cdot \rbra*{\rbra*{ 1 + \frac{\alpha_{t-1}(i)+\alpha_{t-1}(j)-\alphanorm_{t-1}}{n}}^2 - 1} \\
    &\geq \delta_{t-1}^2\cdot \rbra*{\rbra*{ 1 + \frac{\alphanorm_{t-1}}{2n}}^2 - 1} \\
    &\geq 0.
\end{align*}
Therefore, from \cref{thm:OST_mar}, we obtain 
\begin{align*}
    \E[\tau] &\leq n^2k \E[\delta_{\tau}^2] 
    = n^2k\E[(\delta_{\tau} - \delta_{\tau-1} + \delta_{\tau-1})^2] \\
    &\leq n^2k\rbra*{ \frac{4}{n^2} + \frac{2}{n}\cdot \frac{C_0}{\sqrt{n}} + \frac{C_0}{n}  } & & \text{$\delta_{\tau-1}<\frac{C_0}{\sqrt{n}}$ and \cref{item:difference of delta} of \cref{lem:basic inequalities}}\\
    &\leq (C_0^2 + o(1))nk.
\end{align*}
The claim \cref{eq:constant change of delta} follows by the Markov inequality.

\paragraph*{Apply the Drift Analysis result of \cite{Doerr11}.}
We invoke the result of drift analysis credited to \citet{Doerr11}.
We use a slightly modified version from \cite[Lemma 21]{fast_convergence_undecided}.
\begin{lemma} \label{lem:doerr11_lemma}
Let $(W(s))_{s\in\Nat_0}$ be a sequence of random variables that takes value in $\Nat_0$ and satisfies $W(0)=0$, $W(s)\le W(s-1)$, and
\begin{align*}
    &\Pr \left[ W(s)=1 \condition W(s-1)=0 \right] \ge \frac{1}{2},\\
    &\Pr \left[ W(s)=w+1 \condition W(s-1)=w \right] \ge 1-\exp\rbra*{-(1.01)^{w}},\\
    &\Pr \left[ W(s)=0 \condition W(s-1)=w \right] \le \exp\rbra*{-(1.01)^{w}}.
\end{align*}
Fix any constant $C^*>0$ and
define a stopping time $\tau_W$ by $\tau_W=\inf \left\{ s\ge 0\colon W(s) \ge C^*\log\log n \right\}$, i.e., the first time that $W(s)$ reaches $\left\lceil C^*\log\log n \right\rceil$.
Then, there exists a constant $C'$ such that $\tau_W\le C'\log n$ with probability $1-O(n^{-10})$.
\end{lemma}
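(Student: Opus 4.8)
The process $(W(s))$ wants to increase by one at each step; the only real obstacle to a fast ascent to $m \defeq \lceil C^*\log\log n\rceil$ is a window of constantly many small levels, where the stall/reset probability $q_w\defeq\exp(-(1.01)^w)$ is non-negligible. Once $W$ gets past this window, $q_w$ is so tiny (it decays super-exponentially) that $W$ essentially marches straight up. Accordingly, since $\sum_{w\ge 0}q_w<\infty$, I would fix a constant $w_1$ large enough that $\sum_{w>w_1}q_w\le \tfrac14$ and (using the super-exponential decay) also large enough that $\sum_{w\ge w_1+t}q_w\le \e^{-t}$ for all $t\ge 1$. Call a step \emph{productive} if $W(s)=W(s-1)+1$; by hypothesis a step from level $w\ge 1$ is productive with probability $\ge 1-q_w\ge\tfrac12$, and a step from level $0$ is productive with probability $\ge\tfrac12$, so every step is productive with probability at least $\tfrac12$.

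I would then partition time into \emph{super-rounds}, each consisting of a launch phase followed by an ascent phase. \textbf{Launch:} starting from wherever $W$ currently is, wait for $w_1+1$ consecutive productive steps. Splitting the wait into disjoint windows of length $w_1+1$, each window is all-productive with probability at least $(1/2)^{w_1+1}$, a positive constant since $w_1=O(1)$, so the launch time is dominated by $(w_1+1)$ times a geometric variable, hence has a sub-exponential tail and constant expectation, uniformly over the starting level. After the launch $W\ge w_1+1$. \textbf{Ascent:} with probability at least $\prod_{w>w_1}(1-q_w)\ge 1-\sum_{w>w_1}q_w\ge\tfrac34$ every step after the launch is productive, so $W$ reaches $m$ within $m$ further steps and the super-round \emph{succeeds}; otherwise the first non-productive step after the launch occurs and the super-round \emph{fails} there, and the number $J$ of productive steps between the launch and this failure satisfies $\Pr[J\ge j \text{ and the super-round fails}]\le\sum_{w\ge w_1+1+j}q_w\le \e^{-j}$, so a failing super-round likewise has length dominated by a sub-exponential variable of constant expectation. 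The upshot: conditionally on the entire past, each super-round succeeds with probability at least $\tfrac34$.

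To conclude, let $R$ be the number of super-rounds; the first $R-1$ fail, so $\Pr[R\ge j]\le(1/4)^{j-1}$, hence $R\le C_1\log n$ with probability $1-O(n^{-10})$ for a suitable constant $C_1$. On that event, $\tau_W\le\sum_{r=1}^{R}(\text{launch time of round }r)+\sum_{r=1}^{R-1}(J_r+1)+m$. Each of the first two sums has at most $C_1\log n$ summands, each dominated (conditionally on the past) by a fixed sub-exponential distribution of constant mean, so by a Bernstein-type tail bound each sum is $O(\log n)$ with probability $1-O(n^{-10})$; and $m\le C^*\log\log n=O(\log n)$. A union bound over these few events gives $\tau_W\le C'\log n$ with probability $1-O(n^{-10})$, as claimed. (Alternatively one may invoke the drift analysis of \citet{Doerr11} as packaged in \cite[Lemma~21]{fast_convergence_undecided}; the sketch above just records the bookkeeping explicitly.)

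The step I expect to be the main obstacle is ensuring that a \emph{failing} super-round costs only $O(1)$ steps rather than $O(m)=O(\log\log n)$ steps: the relevant tail event allows $\Theta(\log n)$ failures, so if each could burn $\Theta(\log\log n)$ steps we would only obtain $O(\log n\log\log n)$. This $O(1)$ bound is exactly what the super-exponential decay of $q_w$ buys, through $\Pr[J\ge j \text{ and the super-round fails}]\le\sum_{w\ge w_1+1+j}q_w\le \e^{-j}$. A secondary point is arranging the super-round boundaries, via the launch event, so that the per-super-round success probability is $\ge\tfrac34$ conditionally on the whole history, regardless of where the previous super-round left $W$.
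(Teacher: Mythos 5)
Your proposal is correct, but it is worth noting that the paper does not actually prove \cref{lem:doerr11_lemma}: it imports it as a black box, citing the drift-analysis result of \citet{Doerr11} in the packaged form of \cite[Lemma 21]{fast_convergence_undecided}. What you supply is therefore a genuinely different (self-contained) route: a restart argument in which the only troublesome levels are a constant prefix $w\le w_1$ where $q_w=\exp(-(1.01)^w)$ is non-negligible, a ``launch'' consisting of $w_1+1$ consecutive productive steps lifts the process above that prefix in time dominated by a constant-mean geometric, the subsequent ascent succeeds with probability at least $3/4$, and the cost of a failed ascent is only $O(1)$ in a sub-exponential sense because $\Pr[J\ge j\text{ and fail}]\le\sum_{w\ge w_1+1+j}q_w\le \e^{-j}$ — precisely the point you flag, and the one that upgrades the naive $O(\log n\log\log n)$ bound to $O(\log n)$. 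Two bookkeeping remarks to make the sketch airtight: (i) the transition bounds must be read as holding conditionally on the full history (as they do in the paper's application, where $W(s)$ counts consecutive good intervals), since your launch/ascent/restart rounds are glued together by iterated conditioning; and (ii) in the Bernstein step one should sum the variables $J_r\cdot\indicator[\text{round $r$ fails}]$ (which is exactly what your ``and fails'' tail bound controls, uniformly in the past) rather than $J_r$ conditioned on failure, whose conditional tail could a priori be worse when the failure probability is small — your displayed inequality is already in the right form, so this is a phrasing issue, not a gap. With those readings, the argument gives $\tau_W\le C'\log n$ with probability $1-O(n^{-10})$ as claimed; its advantage over the paper's citation is self-containedness and transparency about where the constant $C'$ comes from, while the citation buys brevity and reuse of a known lemma.
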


To prove \cref{lem:deltaUPorWeak additive}, we need more notation.
Define disjoint intervals $E_0,E_1,\dots$ by
\begin{align*}
    E_s = \begin{cases}
        \left[ 0,\frac{C_0}{\sqrt{n}} \right) & \text{if $\ell=0$},\\
        \left[ \frac{C_0(1.01)^{\ell-1}}{\sqrt{n}}, \frac{C_0(1.01)^{\ell}}{\sqrt{n}} \right) & \text{if $\ell\ge 1$}.
    \end{cases}
\end{align*}
(Recall that $C_0$ is the constant in the statement of \cref{lem:deltaUPorWeak additive}.)
For time step $t\in\Nat_0$, let $\mathsf{label}_t\in\Nat_0$ be the index $\ell$ of the interval such that $\delta_t\in E_\ell$.
Fix a sufficiently large constant $C>0$ and
define stopping times $\tau(0),\tau(1),\dots$ by $\tau(0)=0$ and for $s\ge 1$,
\begin{align*}
    \tau(s) = \inf \left\{ t\ge \tau(s-1) \colon t\ge \tauweak \text{ or }\mathsf{label}_t\neq \mathsf{label}_{\tau(s-1)} \text{ or }t-\tau(s-1)>Ckn \right\}.
\end{align*}
We say that the time interval $[\tau(s-1),\tau(s))$ \emph{good} if either $\tau(s)\ge\tauweak$ or $\mathsf{label}_{\tau(s)} = \mathsf{label}_{\tau(s-1)}+1$ occurs.
In other words, the time interval $[\tau(s-1), \tau(s))$ is good if either $\tauweak$ comes or $\delta_t \ge (1.01)\delta_{\tau(s-1)}$.
Note that each time interval $[\tau(s-1),\tau(s))$ has length at most $Ckn$ because of the definition of $\tau(s)$.

Using \cref{lem:doerr11_lemma} (with $C^*=\frac{\log 2}{\log 1.01}$), we claim that, for $L=\left\lceil \log_{1.01}\log n \right\rceil$ and for some $s^*=O(\log n)$, the intervals $[\tau(0),\tau(1)),\dots,[\tau(s^*-1),\tau(s^*))$ contain $L$ consecutive good intervals with high probability.
If this occurs and intervals $[\tau(s_0),\tau(s_0+1)),\dots,[\tau(s_0+L-1),\tau(s_0+L))$ are good, then either $\tauweak$ or $\delta_{\tau(s_0+L)} \ge \frac{C_0(1.01)^L}{\sqrt{n}} \ge C_0\sqrt{\frac{\log n}{n}}$.
This proves \cref{lem:deltaUPorWeak additive} since $\tau(s_0+L)\le Cnk\cdot (s_0+L)=O(nk\log n)$.

To prove the claim above, define random variables $(W(s))_{s\in\Nat_0}$ by $W(0)=0$ and for $s\ge 1$,
\begin{align*}
    W(s) = \begin{cases}
        W(s-1) + 1 & \text{if the time interval $[\tau(s-1),\tau(s))$ is good},\\
        0 & \text{otherwise}.\\
    \end{cases}
\end{align*}
In other words, $W(s)$ records the number of consecutive good intervals before step $\tau(s)$.
To check the three conditions of \cref{lem:doerr11_lemma}, observe that
\begin{itemize}
    \item If $\mathsf{label}_t=0$ (i.e., $\delta_t < \frac{C_0}{\sqrt{n}}$), then from \cref{eq:constant change of delta}, we have $\mathsf{label}_{t+t'}=1$ with probability $1/2$ for some $t'=O(kn)$. In other words, we have $\Pr[W(s)=1|W(s-1)=0]\ge \frac{1}{2}$.
    \item If $W(s-1)=w$, then we have either $\tauweak\le \tau(s-1)$ or $\tau_{\tau(s-1)} \ge \frac{C_0(1.01)^w}{\sqrt{n}}$. By \cref{lem:deltaUPorWeak}, the time interval $[\tau(s-1),\tau(s))$ is good with probability $1-n^{-\omega(1)}-\exp \left( - C_{\hbox{\tiny{\ref{lem:deltaUPorWeak}}}}\cdot C_0^2 \cdot (1.01)^w \right) \ge \max \left\{ \frac{1}{2}, 1-\exp \left( -(1.01)^w \right) \right\}$ (the constant $C_0$ is sufficiently large).
    Note that if this occurs, then $W(s)=W(s-1)+1$.
    \item If $W(s-1)=w$, then we have
        \begin{align*}
            \Pr[W(s)=0 | W(s-1)=w] \le \Pr[\text{the interval $[\tau(s-1),\tau(s))$ is not good}] \le \exp(-(1.01)^w).
        \end{align*}
\end{itemize}
These three observations imply that our $(W(s))_{s\in\Nat_0}$ satisfies the three conditions of \cref{lem:doerr11_lemma}.
Therefore, from \cref{lem:doerr11_lemma}, with probability $1-O(n^{-10})$, we have $W(s)$ reaches the target value $L= \left\lceil \log_{1.01}\log n \right\rceil$ for some $s = O(\log n)$.
This completes the proof of \cref{lem:deltaUPorWeak additive}.
\end{proof}
\section{From Unique Strong Opinion to Consensus} \label{sec:from unique strong opinion to consensus}
Let $I_t\in [k]$ be the opinion such that $\alpha_{t}(I_t)=\norm{\alpha_t}_\infty$,
where ties are broken down to the opinion with the smallest index.
Note that $\max_{i\neq I_0}\alpha_0(i)$ is the second largest population.
This section is devoted to proving that
once the second most popular opinion becomes weak,
    then the process reaches consensus
    within $O(kn\log n)$ steps with high probability.
\begin{lemma}\label{lem:AfterOneStrong}
    Suppose $\max_{i\neq \istar{0}}\alpha_0(i)\leq \frac{7}{8}|\alpha_0\|_\infty$ and $k=o(\sqrt{n/\log n})$. 
    Then, for some $T=O(nk\log n)$, we have $\Pr\sbra*{\taucons\leq T}\geq 1-O(n^{-1})$.
\end{lemma}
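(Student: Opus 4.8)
The plan is to track the population $m_t\defeq\alpha_t(I_0)$ of the fixed initial leader $I_0\in\argmax_i\alpha_0(i)$ (note $m_0=\norm{\alpha_0}_\infty\geq\norm{\alpha_0}^2=\gamma_0\geq 1/k$) and to show that the $\tfrac87$-multiplicative gap between $I_0$ and every other opinion present by hypothesis is preserved and then amplified until $m_t=1$. The engine is a ``logistic'' drift: if $\alpha_t(i)\leq\frac{15}{16}m_t$ for all $i\neq I_0$, then $\gamma_t=m_t^2+\sum_{i\neq I_0}\alpha_t(i)^2\leq m_t^2+\frac{15}{16}m_t(1-m_t)$, so $m_t-\gamma_t\geq\frac{1}{16}m_t(1-m_t)$, and \cref{item:expectation of alpha} of \cref{lem:basic inequalities} gives $\E_{t-1}[m_t]\geq m_{t-1}\rbra*{1+\frac{m_{t-1}(1-m_{t-1})}{16n}}$.

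\emph{Step 1 (maintaining the gap).} For fixed $i\neq I_0$ put $\rho_t(i)=\alpha_t(i)/m_t$, well defined and $<1$ while $I_0$ leads, with $\rho_0(i)\leq\frac78$. As in \cref{lem:drop before rise} and the proof of \cref{lem:weak cannot be strong}, I would check that on the band $\rho_{t-1}(i)\in[\frac78,\frac{15}{16}]$ one has a downward drift $\E_{t-1}[\rho_t(i)]\leq\rho_{t-1}(i)-\Omega\rbra*{\frac{m_{t-1}}{n}}$ (the gap $m_{t-1}-\alpha_{t-1}(i)$ is $\geq m_{t-1}/16$ there, and $m_{t-1}\geq 1/k$ while $I_0$ leads), with $\abs{\rho_t(i)-\rho_{t-1}(i)}=O(k/n)$ and $\E_{t-1}[(\rho_t(i)-\rho_{t-1}(i))^2]=O\rbra*{\frac{m_{t-1}k^2}{n^2}}$, since only an event of probability $O(m_{t-1})$ moves the ratio. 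Because the drift scales like $m_{t-1}/n$ while the second moment scales like $m_{t-1}k^2/n^2$, the parameter $\phi$ of \cref{thm:gambler} equals $\Theta(n/k^2)=\omega(\log n)$ for $k=o(\sqrt{n/\log n})$, so every excursion into the band exits at the bottom with probability $1-n^{-\omega(1)}$. A union bound over the $\leq n^{O(1)}$ entry times and over $i\in[k]$ (empty opinions never gain mass) shows $\rho_t(i)<\frac{15}{16}$ for all $i\neq I_0$ and all $t$ up to the target horizon $T$, with probability $1-n^{-\omega(1)}$; on this event $I_0$ stays the leader and the displayed drift is valid at every step.

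\emph{Step 2 (amplification, two phases).} Conditioning on Step~1: \emph{(I)} for $m\leq\frac12$ the drift factor is $\geq 1+\frac{m_{t-1}}{32n}\geq 1+\frac{1}{32kn}$; partitioning time into super-rounds of length $Cnk$ and calling one \emph{good} if $m$ grows by a factor $1.01$ (or reaches $\frac12$), \cref{lem:multipricative_drift_Freedman} — used exactly as in \cref{lem:alphaUD} — makes each super-round good, with $m$ not dropping a scale, with probability $1-n^{-\omega(1)}$, and feeding this into \cref{lem:doerr11_lemma} in the style of the proof of \cref{lem:deltaUPorWeak additive} gives $m_t\geq\frac12$ within $O(\log k)$ super-rounds, i.e.\ within $O(nk\log n)$ steps, with probability $1-O(n^{-10})$. \emph{(II)} Once $m\geq\frac12$, a ``no-big-drop'' application of \cref{lem:multipricative_drift_Freedman} (with $D=\frac1n$ and $S=O(\frac1{n^2})$ from \cref{item:square difference of alpha} of \cref{lem:basic inequalities}) keeps $m_t\geq\frac13$ over the next $O(n\log n)$ steps with probability $1-n^{-\omega(1)}$; then $P_t\defeq 1-m_t$ — which is $\tfrac1n$ times the number of vertices not holding $I_0$ and is absorbed at $0$ — satisfies $\E_{t-1}[P_t]\leq P_{t-1}\rbra*{1-\frac{m_{t-1}^2}{16n}}\leq P_{t-1}\rbra*{1-\frac{1}{144n}}$ while the gap and $m\geq\frac13$ hold, so, stopping at the first failure of these (call it $\sigma$), the stopped process $nP_{t\wedge\sigma}\rbra*{1-\frac1{144n}}^{-(t\wedge\sigma)}$ is a non-negative supermartingale, and a Markov bound (using $nP_t\in\Nat_0$) yields consensus within $O(n\log n)$ further steps with probability $1-O(n^{-10})$. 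Summing the failure probabilities over Steps~1--2 bounds $\Pr[\taucons>T]$ by $O(n^{-1})$ for some $T=O(nk\log n)$.

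The main obstacle is Step~1: when $m_t$ sits at its floor $\Theta(1/k)$ the leader's drift is only $1+\Theta(1/(nk))$ per step, so Azuma--Hoeffding on the gap ratio (step size $O(k/n)$) would once more force $k\ll n^{1/3}$; the argument survives up to $k=o(\sqrt{n/\log n})$ only because the one-step \emph{variance} of $\rho_t(i)$ beats its squared step size by a factor $\approx k$, which is exactly what lets the Bernstein-type inequality \cref{thm:gambler} certify the gap. The rest — carrying the stopping time ``first violation of the gap'' through every drift estimate as an indicator, as in \cref{sec:unique strong opinion lemma} — is routine but must be threaded through both phases.
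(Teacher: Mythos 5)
Your proposal is correct and follows essentially the same route as the paper's proof: keep every other opinion below a $\frac{15}{16}$-fraction of the leader by a negative-drift/Gambler's-ruin argument on a ratio (the paper controls $\alpha_t(i)/\norm{\alpha_t}^2$ via \cref{lem:weak cannot be strong general constant}, you control $\alpha_t(i)/\alpha_t(I_0)$), amplify $\norm{\alpha_t}_\infty$ through $O(\log n)$ multiplicative super-rounds using \cref{lem:multipricative_drift_Freedman} together with a no-drop companion bound (the paper's \cref{lem:LinfUD} and \cref{item:final_phase:1} of \cref{lem:PhaseAnalysis}), and finish with a supermartingale contraction of $1-\norm{\alpha_t}_\infty$ once the leader is a strict majority (the paper's \cref{item:final_phase:2} of \cref{lem:PhaseAnalysis}). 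The one misstep is the appeal to \cref{lem:doerr11_lemma} in your Phase~(I): as stated it only certifies $O(\log\log n)$ consecutive good super-rounds, whereas growing the leader from $\Theta(1/k)$ to $\frac{1}{2}$ by factors of $1.01$ needs $\Theta(\log k)$ of them; since you already make each super-round succeed with probability $1-n^{-\omega(1)}$, a plain union bound over the $O(\log n)$ super-rounds (exactly as in \cref{item:final_phase:1} of \cref{lem:PhaseAnalysis}) is what you should use, so this is a cosmetic fix rather than a genuine gap.
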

We use the following notation in this section:
\begin{align*}
    &\tau_{\ell_\infty}^\uparrow=\inf\cbra*{t\geq 0:\norm{\alpha_t}_\infty\geq 2\norm{\alpha_0}_\infty}, 
    &&\tau_{\ell_\infty}^\downarrow=\inf\cbra*{t\geq 0:\norm{\alpha_t}_\infty\leq \frac{9}{10}\norm{\alpha_0}_\infty}, \\
    &\tau_{\ell_\infty}^+=\inf\cbra*{t\geq 0:\norm{\alpha_t}_\infty\geq \frac{2}{3}}, 
    &&\tau_{\ell_\infty}^-=\inf\cbra*{t\geq 0:\norm{\alpha_t}_\infty\leq \frac{3}{5}}, \\
    &\tau_{\mathrm{bad}}=\inf\cbra*{t\geq 0:\max_{i\neq I_t}\alpha_t(i)\geq \frac{15}{16}\norm{\alpha_t}^2}.
\end{align*}
The key tool is the following lemma obtained from the multiplicative drift lemma (\cref{lem:multipricative_drift_Freedman}).
\begin{lemma}
    \label{lem:LinfUD}
    We have the following:
    \begin{enumerate}[label=(\Roman*)]
    \item \label{item:LinfUP} 
    $\Pr\sbra*{\min\cbra*{\tau_{\ell_\infty}^\uparrow,\tau_{\mathrm{bad}},\tau_{\ell_\infty}^\downarrow,\tau_{\ell_\infty}^+}>\frac{160n}{\norm{\alpha_0}_\infty}}\leq \exp\rbra*{-C\norm{\alpha_{0}}_\infty^2n}
    $ holds for some  constant $C>0$.
    \item \label{item:LinfDown} 
    For any $T>0$, 
    $\Pr\sbra*{\tau_{\ell_\infty}^\downarrow\leq \min\cbra*{T,\tau_{\mathrm{bad}},\tau_{\ell_\infty}^\uparrow}}
     \leq \exp\rbra*{-C\frac{\norm{\alpha_{0}}_\infty n^2}{T+n}}$
    holds for some constant $C>0$.
    \end{enumerate} 
\end{lemma}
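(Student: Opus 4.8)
The plan is to track the population of the initially most popular opinion, $i^\star\defeq\istar{0}$, through $X_t\defeq\alpha_t(i^\star)$ (so that $X_0=\norm{\alpha_0}_\infty$), and to feed it into the multiplicative drift lemma~(\cref{lem:multipricative_drift_Freedman}) together with an \emph{argmax-stability} observation. Recall $\gamma_t=\norm{\alpha_t}^2$, and note $\gamma_t\le\norm{\alpha_t}_\infty$ always and $\norm{\alpha_0}_\infty\ge 1/k$. \textbf{Argmax-stability claim:} for every $t$ with $t-1<\min\cbra*{\tau_{\mathrm{bad}},\tau_{\ell_\infty}^\downarrow}$ one has $I_t=i^\star$, hence $X_t=\norm{\alpha_t}_\infty$. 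Indeed, for such $t-1$, using $\alpha_{t-1}(i^\star)=\norm{\alpha_{t-1}}_\infty\ge\gamma_{t-1}$ and $\max_{i\ne i^\star}\alpha_{t-1}(i)<\frac{15}{16}\gamma_{t-1}$ (from $\tau_{\mathrm{bad}}>t-1$), the gap between the top and second populations is at least $\frac{1}{16}\gamma_{t-1}$, and $\gamma_{t-1}\ge\norm{\alpha_{t-1}}_\infty^2\ge\rbra*{\frac{9}{10}\norm{\alpha_0}_\infty}^2\ge\frac{81}{100k^2}$ before $\tau_{\ell_\infty}^\downarrow$; since $k=o(\sqrt{n/\log n})$ this exceeds $\frac{2}{n}$ for large $n$, so the index-tie-broken argmax cannot move in one step (where each $\alpha_t(i)$ changes by at most $\frac1n$), and induction gives the claim.

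For part~\ref{item:LinfUP}, set $\tau=\min\cbra*{\tau_{\ell_\infty}^\uparrow,\tau_{\mathrm{bad}},\tau_{\ell_\infty}^\downarrow,\tau_{\ell_\infty}^+}$. For $t-1<\tau$ the argmax-stability claim gives $X_{t-1}=\norm{\alpha_{t-1}}_\infty$, and I will establish $X_{t-1}-\gamma_{t-1}\ge\frac{1}{48}X_{t-1}\ge\frac{3}{160}\norm{\alpha_0}_\infty$: since $i^\star=I_{t-1}$ we have $\gamma_{t-1}=X_{t-1}^2+\sum_{i\ne i^\star}\alpha_{t-1}(i)^2$ and $\sum_{i\ne i^\star}\alpha_{t-1}(i)^2\le\rbra*{\max_{i\ne i^\star}\alpha_{t-1}(i)}(1-X_{t-1})<\frac{15}{16}X_{t-1}(1-X_{t-1})$ (from $\tau_{\mathrm{bad}}>t-1$ and $\gamma_{t-1}\le X_{t-1}$), whence $X_{t-1}-\gamma_{t-1}>\frac{1}{16}X_{t-1}(1-X_{t-1})$; then $X_{t-1}<\frac23$ (from $\tau_{\ell_\infty}^+>t-1$) and $X_{t-1}\ge\frac{9}{10}\norm{\alpha_0}_\infty$ (from $\tau_{\ell_\infty}^\downarrow>t-1$) finish it. By \cref{item:expectation of alpha} of \cref{lem:basic inequalities}, $\E_{t-1}\sbra*{X_t}=X_{t-1}\rbra*{1+\frac{X_{t-1}-\gamma_{t-1}}{n}}\ge aX_{t-1}$ with $a=1+\frac{3\norm{\alpha_0}_\infty}{160n}$, so $\rbra*{X_t/a^t}$ is a stopped submartingale. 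Applying \cref{lem:multipricative_drift_Freedman} as in the proof of \cref{lem:deltaUP}, with $D=\frac1n$ (\cref{item:difference of alpha}), $S=\frac{6\norm{\alpha_0}_\infty}{n^2}$ (\cref{item:square difference of alpha} and $X_{t-1}<2\norm{\alpha_0}_\infty$ before $\tau_{\ell_\infty}^\uparrow$), $U=2\norm{\alpha_0}_\infty$, $T=\frac{160n}{\norm{\alpha_0}_\infty}$ and $\lambda=\Theta(\norm{\alpha_0}_\infty)$ gives: with probability $1-\exp\rbra*{-C\norm{\alpha_0}_\infty^2 n}$, $X_{t\wedge\tau}\ge a^{t\wedge\tau}(X_0-\lambda)$ for all $t\le T$. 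On $\cbra*{\tau>T}$ this yields $X_T\ge a^T(X_0-\lambda)>2\norm{\alpha_0}_\infty$ (since $a^T\ge\e^{3-o(1)}>4$ and $X_0-\lambda=\Theta(\norm{\alpha_0}_\infty)$), contradicting $X_T\le\norm{\alpha_T}_\infty<2\norm{\alpha_0}_\infty$ forced by $\tau_{\ell_\infty}^\uparrow>T$; hence $\Pr\sbra*{\tau>T}\le\exp\rbra*{-C\norm{\alpha_0}_\infty^2 n}$.

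For part~\ref{item:LinfDown}, set $\sigma=\min\cbra*{\tau_{\mathrm{bad}},\tau_{\ell_\infty}^\uparrow,\tau_{\ell_\infty}^\downarrow}$. For $t-1<\sigma$ the argmax-stability claim gives $X_{t-1}=\norm{\alpha_{t-1}}_\infty\ge\gamma_{t-1}$, so $\E_{t-1}\sbra*{X_t}\ge X_{t-1}$ by \cref{item:expectation of alpha} of \cref{lem:basic inequalities}; thus $\rbra*{X_{t\wedge\sigma}}$ is a submartingale with increments bounded by $\frac1n$ and conditional second moment at most $\frac{6\norm{\alpha_0}_\infty}{n^2}$ before $\tau_{\ell_\infty}^\uparrow$. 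On the event $\cbra*{\tau_{\ell_\infty}^\downarrow\le\min\cbra*{T,\tau_{\mathrm{bad}},\tau_{\ell_\infty}^\uparrow}}$ we have $\sigma=\tau_{\ell_\infty}^\downarrow\le T$ and $X_\sigma\le\norm{\alpha_\sigma}_\infty\le\frac{9}{10}\norm{\alpha_0}_\infty=X_0-\frac{1}{10}\norm{\alpha_0}_\infty$, so this event is contained in $\cbra*{\min_{0\le t\le T}X_{t\wedge\sigma}\le X_0-\frac{1}{10}\norm{\alpha_0}_\infty}$. Applying the Freedman inequality \cite{Fre75} for submartingales (equivalently, \cref{lem:multipricative_drift_Freedman} with trivial drift $a=1$) with $\lambda=\frac{1}{10}\norm{\alpha_0}_\infty$, $D=\frac1n$, $S=\frac{6\norm{\alpha_0}_\infty}{n^2}$ and horizon $T$ bounds this probability by $\exp\rbra*{-\Omega\rbra*{\frac{\lambda^2}{TS+D\lambda}}}=\exp\rbra*{-\Omega\rbra*{\frac{\norm{\alpha_0}_\infty n^2}{T+n}}}$, which is the claim.

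The main obstacle is not a single deep step but the bookkeeping: making the argmax-stability claim airtight — this is where $\norm{\alpha_0}_\infty\ge 1/k$ together with $k=o(\sqrt{n/\log n})$ is essential — and matching the drift, increment, second-moment and boundedness hypotheses of \cref{lem:multipricative_drift_Freedman} to the correct stopping time so that all four hold up to it and the exponents come out as stated. The one genuinely substantive estimate is $\norm{\alpha_t}_\infty-\gamma_t=\Omega(\norm{\alpha_t}_\infty)$ valid before $\tau_{\mathrm{bad}}$ and $\tau_{\ell_\infty}^+$, which is precisely what makes $X_t$ grow geometrically at rate $\Theta(\norm{\alpha_0}_\infty/n)$ and thus forces one of the four stopping times within $O(n/\norm{\alpha_0}_\infty)$ steps.
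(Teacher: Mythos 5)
Your proposal is correct and follows essentially the same route as the paper: the same key estimate $\norm{\alpha_{t-1}}_\infty-\gamma_{t-1}\ge\frac{1}{16}\norm{\alpha_{t-1}}_\infty(1-\norm{\alpha_{t-1}}_\infty)\ge\frac{3}{160}\norm{\alpha_0}_\infty$ before $\tau_{\mathrm{bad}},\tau_{\ell_\infty}^+,\tau_{\ell_\infty}^\downarrow$, fed into \cref{lem:multipricative_drift_Freedman} with the same parameters ($T=\frac{160n}{\norm{\alpha_0}_\infty}$, $D=\frac1n$, $S=\frac{6\norm{\alpha_0}_\infty}{n^2}$, $U=2\norm{\alpha_0}_\infty$), and the trivial-drift ($a=1$) version for part (II). The only cosmetic difference is that you track the fixed opinion $i^\star=I_0$ and prove the argmax-stability claim by induction, which is exactly the observation the paper uses implicitly via the identity $\norm{\alpha_t}_\infty-\norm{\alpha_{t-1}}_\infty=\alpha_t(I_{t-1})-\alpha_{t-1}(I_{t-1})$ and the note that $I_t=I_{t-1}$ while $\tau_{\mathrm{bad}}>t-1$.
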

\begin{proof}[Proof of \cref{item:LinfUP} of \cref{lem:LinfUD}]
    Let $\tau=\min\cbra*{\tau_{\ell_\infty}^\uparrow,\tau_{\mathrm{bad}}, \tau_{\ell_\infty}^\downarrow,\tau_{\ell_\infty}^+}$.
    From \cref{item:expectation of alpha} of \cref{lem:basic inequalities},
    \begin{align}
    \indicator_{\tau>t-1}\E_{t-1}\sbra*{\norm{\alpha_{t}}_\infty}
    &=\indicator_{\tau>t-1}\E_{t-1}\sbra*{\alpha_{t}(I_t)} \nonumber\\
    &=\indicator_{\tau>t-1}\E_{t-1}\sbra*{\alpha_{t}(I_{t-1})} \nonumber\\
    &=\indicator_{\tau>t-1}\alpha_{t-1}(I_{t-1})\rbra*{1+\frac{\alpha_{t-1}(I_{t-1})-\alphanorm_{t-1}}{n}} \label{eq:Linfdownc} \\
    &\geq \indicator_{\tau>t-1}\norm{\alpha_{t-1}}_\infty\rbra*{1+\frac{\norm{\alpha_{t-1}}_\infty\rbra*{1-\norm{\alpha_{t-1}}_\infty}}{16n}} \label{eq:Linfsub} \\
    &\geq \indicator_{\tau>t-1}\norm{\alpha_{t-1}}_\infty\rbra*{1+\frac{3\norm{\alpha_{0}}_\infty}{160n}}. \label{eq:Linfupc}
    \end{align}
    Note that $\norm{\alpha_{t-1}}^2=\norm{\alpha_{t-1}}^2_\infty+\sum_{i\neq I_{t-1}}\alpha_{t-1}(i)^2\leq \norm{\alpha_{t-1}}^2_\infty+\frac{15}{16}\norm{\alpha_{t-1}}_\infty\rbra*{1-\norm{\alpha_{t-1}}_\infty}$
    and $I_t=I_{t-1}$ for $\tau_{\mathrm{bad}}>t-1$.
    Then, we have the following:
    \begin{itemize}
        \item \label{item:LinfU1} $\indicator_{\tau>t-1}\E_{t-1}\sbra*{\norm{\alpha_{t}}_\infty-\rbra*{1+\frac{3\norm{\alpha_{0}}_\infty}{160n}}\norm{\alpha_{t-1}}_\infty}\geq 0$ (From \cref{eq:Linfupc}).
        \item \label{item:LinfU2} $\indicator_{\tau>t-1}|\norm{\alpha_{t}}_\infty-\norm{\alpha_{t-1}}_\infty|\leq \frac{1}{n}$ (From \cref{item:difference of alpha} of \cref{lem:basic inequalities}).
        \item \label{item:LinfU3} $\indicator_{\tau>t-1}\E_{t-1}\sbra*{\rbra*{\norm{\alpha_{t}}_\infty-\norm{\alpha_{t-1}}_\infty}^2}\leq \frac{6\norm{\alpha_{0}}_\infty}{n^2}$ (From \cref{item:square difference of alpha} of \cref{lem:basic inequalities}).
        \item \label{item:LinfU4} $\indicator_{\tau>t-1}|\norm{\alpha_{t-1}}_\infty|\leq 2\norm{\alpha_{0}}_\infty$ (From the definition of $\tau_{\ell_\infty}^\uparrow$).
    \end{itemize}
    Note that $\norm{\alpha_{t}}_\infty-\norm{\alpha_{t-1}}_\infty=\alpha_{t}(I_{t-1})-\alpha_{t-1}(I_{t-1})$ for $\tau_{\mathrm{bad}}>t-1$.
    
    Let $T= \frac{160n}{\norm{\alpha_{0}}_\infty} \geq \frac{4\log(2)}{3/160}\frac{n}{\norm{\alpha_{0}}_\infty}$.
    Then, we have $2\leq \frac{1}{2}\rbra*{1+\frac{3\norm{\alpha_{0}}_\infty}{160n}}^T$.
    Thus, applying \cref{lem:multipricative_drift_Freedman} with $X_t=\norm{\alpha_{t-1}}_\infty$, 
    $A=\sum_{t=1}^T\rbra*{1+\frac{3\norm{\alpha_{0}}_\infty}{160n}}^{2t}\leq \frac{n}{c\norm{\alpha_{0}}_\infty}$, 
    $B=1$,
    $D=1/n$, 
    $S=\frac{6\norm{\alpha_{0}}_\infty}{n^2}$, $U=2\norm{\alpha_{0}}_\infty$, and $h=\norm{\alpha_{0}}_\infty/2$, we obtain
\begin{align*}
    \Pr\sbra*{\tau>T}
    &=\Pr\sbra*{\tau>T,\norm{\alpha_{T}}_\infty\leq 2\norm{\alpha_{0}}_\infty}\\
    &\leq \Pr\sbra*{\tau>T,\norm{\alpha_{T}}_\infty\leq \frac{1}{2}\rbra*{1+\frac{3\norm{\alpha_{0}}_\infty}{160n}}^T\norm{\alpha_{0}}_\infty}\\
    &=\Pr\sbra*{\tau>T,\norm{\alpha_{T\wedge \tau}}_\infty\leq \rbra*{1+\frac{3\norm{\alpha_{0}}_\infty}{160n}}^{T\wedge \tau}\rbra*{\norm{\alpha_{0}}_\infty-\frac{\norm{\alpha_{0}}_\infty}{2}}}\\
    &\leq \Pr\sbra*{\norm{\alpha_{T\wedge \tau}}_\infty\leq \rbra*{1+\frac{3\norm{\alpha_{0}}_\infty}{160n}}^{T\wedge \tau}\rbra*{\norm{\alpha_{0}}_\infty-\frac{\norm{\alpha_{0}}_\infty}{2}}}\\
    &\leq \exp\rbra*{-C\norm{\alpha_{0}}_\infty^2n}
\end{align*}
for a positive constant $C$.
\end{proof}
\begin{proof}[Proof of \cref{item:LinfDown} of \cref{lem:LinfUD}]
    Let $\tau=\min\cbra*{\tau_{\mathrm{bad}},\tau_{\ell_\infty}^\uparrow}$.
    We have the following:
    \begin{itemize}
        \item \label{item:LinfD1} $\indicator_{\tau>t-1}\E_{t-1}\sbra*{\norm{\alpha_{t}}_\infty-\norm{\alpha_{t-1}}_\infty}\geq 0$ (From \cref{eq:Linfdownc}).
        \item \label{item:LinfD2} $\indicator_{\tau>t-1}|\norm{\alpha_{t}}_\infty-\norm{\alpha_{t-1}}_\infty|\leq \frac{1}{n}$ (From \cref{item:difference of alpha} of \cref{lem:basic inequalities}).
        \item \label{item:LinfD3} $\indicator_{\tau>t-1}\E_{t-1}\sbra*{\rbra*{\norm{\alpha_{t}}_\infty-\norm{\alpha_{t-1}}_\infty}^2}\leq \frac{6\norm{\alpha_{0}}_\infty}{n^2}$ (From \cref{item:square difference of alpha} of \cref{lem:basic inequalities}).
        \item \label{item:LinfD4} $\indicator_{\tau>t-1}|\norm{\alpha_{t-1}}_\infty|\leq 2\norm{\alpha_{0}}_\infty$ (From the definition of $\tau_{\ell_\infty}^\uparrow$).
    \end{itemize}
    Note that $\norm{\alpha_{t}}_\infty-\norm{\alpha_{t-1}}_\infty=\alpha_{t}(I_{t-1})-\alpha_{t-1}(I_{t-1})$ for $\tau_{\mathrm{bad}}>t-1$.
    Then, applying \cref{lem:multipricative_drift_Freedman} for $X_t=\norm{\alpha_{t}}_\infty$, 
    $A=\sum_{t=1}^T1^{-2t}=T$, 
    $B=1$,
    $D=1/n$, 
    $S=\frac{6\alpha_{0}(i)}{n^2}$, 
    $U=c_\alpha^\uparrow\alpha_0(i)$, and $h=\norm{\alpha_{0}}_\infty/10$, we have
    \begin{align*}
        \Pr\sbra*{\tau_{\ell_\infty}^\downarrow\leq \min\cbra*{T,\tau}}
        &=\Pr\sbra*{\bigvee_{t=0}^{T\wedge \tau}\cbra*{\norm{\alpha_{t}}_\infty\leq \frac{9}{10}\norm{\alpha_{0}}_\infty}}\\
        &=\Pr\sbra*{\bigvee_{t=0}^{T\wedge \tau}\cbra*{\norm{\alpha_{t\wedge \tau}}_\infty\leq \frac{9}{10}\norm{\alpha_{0}}_\infty}}\\
        &\leq \Pr\sbra*{\bigvee_{t=0}^{T}\cbra*{\norm{\alpha_{t\wedge \tau}}_\infty\leq \rbra*{\norm{\alpha_{0}}_\infty-\frac{\norm{\alpha_{0}}_\infty}{10}}}}\\
         &\leq \exp\rbra*{-C\frac{\norm{\alpha_{0}}_\infty n^2}{T+n}}
    \end{align*}
    for some positive constant $C$. 
\end{proof}
\Cref{lem:LinfUD} assures the growth of $\norm{\alpha_t}_\infty$ in the following way. 
\begin{lemma}
\label{lem:PhaseAnalysis}
For $k=o(\sqrt{n/\log n})$, we have the following:
\begin{enumerate}[label=(\Roman*)]
\item \label{item:final_phase:1} $\Pr\sbra*{\min\cbra*{\tau_{\mathrm{bad}}, \tau_{\ell_\infty}^+}>320nk\log n}=n^{-\omega(1)}$.
\item \label{item:final_phase:2} Suppose $\|\alpha_0\|_\infty\geq 2/3$. Then, $\Pr\left[\min\{\tau_{\mathrm{bad}},\taucons\}>100n\log n\right]=O(n^{-1})$.
\end{enumerate}
\end{lemma}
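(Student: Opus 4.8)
The plan is to derive both parts from the one-step multiplicative-drift estimates of \cref{lem:LinfUD} by a restart (strong Markov) argument, plus a short drift computation for part~\cref{item:final_phase:2}.

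For part~\cref{item:final_phase:1} I would split time into epochs. Put $s_0=0$; given $s_{j-1}$, stop if $\tau_{\mathrm{bad}}\le s_{j-1}$ or $\norm{\alpha_{s_{j-1}}}_\infty\ge 2/3$, and otherwise restart the analysis at time $s_{j-1}$ and let $s_j$ be the first subsequent time at which (relative to $s_{j-1}$) one of $\tau_{\ell_\infty}^\uparrow,\tau_{\mathrm{bad}},\tau_{\ell_\infty}^\downarrow,\tau_{\ell_\infty}^+$ occurs. By \cref{item:LinfUP} of \cref{lem:LinfUD} such a time exists within $160n/\norm{\alpha_{s_{j-1}}}_\infty$ steps except with probability $\exp(-\Omega(\norm{\alpha_{s_{j-1}}}_\infty^2 n))$, and by \cref{item:LinfDown} of \cref{lem:LinfUD} applied with the same horizon, the event at $s_j$ is the ``bad drop'' $\tau_{\ell_\infty}^\downarrow$ with probability at most $\exp(-\Omega(\norm{\alpha_{s_{j-1}}}_\infty^2 n))$; since $\norm{\alpha_t}_\infty\ge\norm{\alpha_t}^2\ge 1/k$ always and $k=o(\sqrt{n/\log n})$, both bounds are $n^{-\omega(1)}$. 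Hence, on an event of probability $1-n^{-\omega(1)}$ (union bound over the at most $J:=\ceil*{\log_2 k}+1\le\log n$ epochs), every epoch either reaches $\tau_{\mathrm{bad}}$ or $\tau_{\ell_\infty}^+$ (and we are done) or doubles $\norm{\alpha_\cdot}_\infty$; as $\norm{\alpha_\cdot}_\infty\le 1$ can double at most $\log_2 k$ times, $\min\{\tau_{\mathrm{bad}},\tau_{\ell_\infty}^+\}$ is reached within $J$ epochs, and since the $j$-th epoch lasts at most $160n/\norm{\alpha_{s_{j-1}}}_\infty\le 160nk/2^{j-1}$ the total is $\le\sum_{j\ge 1}160nk/2^{j-1}=320nk\le 320nk\log n$.

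For part~\cref{item:final_phase:2}, write $i^\star:=\istar{0}$, so $\alpha_0(i^\star)=\norm{\alpha_0}_\infty\ge 2/3$, and fix $T=100n\log n$. Since $2\norm{\alpha_0}_\infty>1$ forces $\tau_{\ell_\infty}^\uparrow=\infty$, \cref{item:LinfDown} of \cref{lem:LinfUD} gives $\Pr[\tau_{\ell_\infty}^\downarrow\le\min\{T,\tau_{\mathrm{bad}}\}]=n^{-\omega(1)}$; because $\tfrac{9}{10}\norm{\alpha_0}_\infty\ge\tfrac35$, the event $\{\norm{\alpha_t}_\infty\le\tfrac35\}$ is contained in $\{\norm{\alpha_t}_\infty\le\tfrac{9}{10}\norm{\alpha_0}_\infty\}$, so $\tau_{\ell_\infty}^-\ge\tau_{\ell_\infty}^\downarrow$ and therefore $\Pr[\tau_{\ell_\infty}^-\le\min\{T,\tau_{\mathrm{bad}}\}]=n^{-\omega(1)}$. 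Separately, a deterministic induction using \cref{item:difference of alpha} of \cref{lem:basic inequalities} shows that whenever $\norm{\alpha_s}_\infty>\tfrac35$ for all $s\le t$ (in particular for $t<\tau_{\ell_\infty}^-$ and $n$ large), the unique most-popular opinion stays equal to $i^\star$, so there $\alpha_t(i^\star)=\norm{\alpha_t}_\infty>\tfrac35$. Let $Z_t=n(1-\alpha_t(i^\star))\in\Nat_0$, so $Z_0\le n/3$, and set $\sigma=\min\{\tau_{\mathrm{bad}},\tau_{\ell_\infty}^-,\inf\{s:Z_s=0\}\}$. For $t-1<\sigma$, \cref{item:expectation of alpha} of \cref{lem:basic inequalities} together with $\norm{\alpha_{t-1}}^2\le\alpha_{t-1}(i^\star)^2+(1-\alpha_{t-1}(i^\star))^2$ gives $\alpha_{t-1}(i^\star)-\norm{\alpha_{t-1}}^2\ge(2\alpha_{t-1}(i^\star)-1)(1-\alpha_{t-1}(i^\star))\ge\tfrac15(1-\alpha_{t-1}(i^\star))$, hence $\E_{t-1}[Z_{t-1}-Z_t]=\alpha_{t-1}(i^\star)\bigl(\alpha_{t-1}(i^\star)-\norm{\alpha_{t-1}}^2\bigr)\ge\tfrac{3}{25n}Z_{t-1}$. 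Consequently $\bigl(Z_{t\land\sigma}(1-\tfrac{3}{25n})^{-(t\land\sigma)}\bigr)_{t\ge 0}$ is a supermartingale, and Markov's inequality (using $Z\ge 1$ before hitting $0$) yields $\Pr[\sigma>T]\le n\bigl(1-\tfrac{3}{25n}\bigr)^{T}\le n^{-1}$ for $T=100n\log n$. On $\{\sigma\le T\}$ either $Z_\sigma=0$ (consensus on $i^\star$, so $\taucons\le T$), or $\sigma=\tau_{\mathrm{bad}}\le T$, or $\sigma=\tau_{\ell_\infty}^-\le T$; the last, together with $\tau_{\mathrm{bad}}>T$, lies in the $n^{-\omega(1)}$ event above. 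Since moreover $\taucons>T$ forces $\inf\{s:Z_s=0\}>T$, we conclude $\Pr[\min\{\tau_{\mathrm{bad}},\taucons\}>T]\le\Pr[\sigma>T]+n^{-\omega(1)}=O(n^{-1})$.

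The only genuinely delicate point — the expected main obstacle — is controlling downward excursions of $\norm{\alpha_t}_\infty$: in part~\cref{item:final_phase:1} this is what guarantees the epochs make monotone progress, so that boundedly many doublings suffice; and in part~\cref{item:final_phase:2} it is what keeps the leading opinion $i^\star$ from being overtaken before consensus, which in turn is precisely what makes the stopped minority process $(Z_t)$ satisfy the multiplicative-drift hypothesis. Everything else is routine bookkeeping on top of \cref{lem:LinfUD} and \cref{lem:basic inequalities}.
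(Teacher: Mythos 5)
Your proof is correct and follows essentially the same route as the paper: part (I) is the paper's doubling-epoch argument built on both items of \cref{lem:LinfUD} (your geometric summation of the epoch lengths merely sharpens the $320nk\log n$ bound to $320nk$), and part (II) is the paper's multiplicative-contraction supermartingale for the non-leading mass combined with \cref{item:LinfDown} of \cref{lem:LinfUD} to exclude an $\ell_\infty$-drop before $\min\{T,\tau_{\mathrm{bad}}\}$, using $\tau_{\ell_\infty}^\uparrow=\infty$. The only cosmetic difference is that in (II) you track the fixed initial leader $i^\star$ via a deterministic persistence argument (one-step changes of $1/n$ above the $3/5$ threshold) together with the bound $\norm{\alpha}^2\le \alpha(i^\star)^2+(1-\alpha(i^\star))^2$, whereas the paper works with $1-\norm{\alpha_t}_\infty$ and \cref{eq:Linfsub} under $t<\tau_{\mathrm{bad}}$; both give the same conclusion with slightly different constants.
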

\begin{proof}[Proof of \cref{item:final_phase:1} of \cref{lem:PhaseAnalysis}]
Let $T=\frac{160n}{\norm{\alpha_0}_\infty}$.
From \cref{item:LinfUP,item:LinfDown} of \cref{lem:LinfUD}, we have
    \begin{align}
        &\Pr\left[\tau_{\mathrm{bad}}>T, \tau_{\ell_\infty}^+>T, \tau_{\ell_\infty}^\uparrow>T\right] \nonumber\\
        &=\Pr\left[\tau_{\mathrm{bad}}>T, \tau_{\ell_\infty}^+>T, \tau_{\ell_\infty}^\uparrow>T,\tau_{\ell_\infty}^\downarrow>T\right]
        +\Pr\left[\tau_{\mathrm{bad}}>T, \tau_{\ell_\infty}^+>T, \tau_{\ell_\infty}^\uparrow>T, \tau_{\ell_\infty}^\downarrow\leq T \right] \nonumber\\
        &\leq  \Pr\sbra*{\min\cbra*{\tau_{\ell_\infty}^\uparrow,\tau_{\mathrm{bad}},\tau_{\ell_\infty}^\downarrow,\tau_{\ell_\infty}^+}>T}+\Pr\left[\tau_{\ell_\infty}^\downarrow\leq \min\{T,\tau_{\mathrm{bad}},\tau_{\ell_\infty}^\uparrow\} \right]\nonumber\\
        &\leq \exp\left(- C_1n \norm{\alpha_0}_\infty^2 \right)+\exp\left(- C_2n \norm{\alpha_0}_\infty^2 \right)\nonumber \\
        &=n^{-\omega(1)}.
        \label{eq:PhaseI_1}
    \end{align}
    Here, we use the assumption of $k=o(\sqrt{n/\log n})$.
    Now, consider a sequence of stopping times $\tau_0$, $\tau_1$, $\ldots$ defined as $\tau_0\defeq 0$ and 
    \begin{align*}
        \tau_i\defeq \inf\cbra*{t\geq \tau_{i-1}:\tau_{\mathrm{bad}}\leq t\textrm{ or }\tau_{\ell_{\infty}}^+\leq t\textrm{ or }\norm{\alpha_t}_\infty\geq 2\norm{\alpha_0}_\infty}
    \end{align*}
    for $i\geq 1$. Note that $\tau_i=\tau_{i-1}$ if $\tau_{\mathrm{bad}}\leq \tau_{i-1}$ or $\tau_{\ell_{\infty}}^+\leq \tau_{i-1}$.
    \Cref{eq:PhaseI_1} implies that
    \begin{align*}
        \Pr\sbra*{\tau_i-\tau_{i-1}>T}=n^{-\omega(1)}
    \end{align*}
    holds for all $i\geq 1$. 
    Let $L=\log_{2}(n)$.
    Then, $\min\cbra*{\tau_{\mathrm{bad}},\tau_{\ell_{\infty}}^+}\leq \tau_{L}$ holds since $\min\cbra*{\tau_{\mathrm{bad}},\tau_{\ell_{\infty}}^+}> \tau_{L}$ contradicts the assumption of $\norm{\alpha_t}_\infty\leq 1$.
   Hence, for $H=320nk\log n\geq TL$,
    \begin{align*}
        \Pr\cbra*{\min\cbra*{\tau_{\mathrm{bad}},\tau_{\ell_{\infty}}^+}>H}
        \leq \Pr\cbra*{\tau_{L}>TL}
        \leq \Pr\cbra*{\bigvee_{i=1}^L\cbra{\tau_i-\tau_{i-1}>T}}=n^{-\omega(1)}.
    \end{align*}
\end{proof}
\begin{proof}[Proof of \cref{item:final_phase:2} of \cref{lem:PhaseAnalysis}]
Let $\tau=\min\cbra*{\tau_{\mathrm{bad}},\tau_{\ell_\infty}^{-},\taucons}$,
    $r=1-\frac{9}{400n}$,
    $X_t=r^{-t}(1-\norm{\alpha_t}_\infty)$,
    and $Y_t=X_{t\wedge \tau}$.
    Then, from \cref{eq:Linfsub},
    \begin{align*}
        \E_{t-1}\sbra*{Y_t-Y_{t-1}}
        &=\indicator_{\tau>t-1}\E_{t-1}\sbra*{X_t-X_{t-1}}\\
        &=\indicator_{\tau>t-1}r^{-t}\rbra*{\E_{t-1}\sbra*{1-\norm{\alpha_t}_\infty}-r\rbra*{1-\norm{\alpha_{t-1}}_\infty}}\\
        &\leq \indicator_{\tau>t-1}r^{-t}\left(\left(1-\|\alpha_{t-1}\|_\infty\left(1+\frac{\|\alpha_{t-1}\|_\infty(1-\|\alpha_{t-1}\|_\infty)}{16n}\right)\right)-r(1-\|\alpha_{t-1}\|_\infty)\right)\\
        &=\indicator_{\tau>t-1}r^{-t}(1-\|\alpha_{t-1}\|_\infty)\left(1-\frac{\|\alpha_{t-1}\|_\infty^2}{16n}-r\right)\\
        &\leq 0,
    \end{align*}
    i.e., $(Y_t)_{t\in \mathbb{N}}$ is a supermartingale. Hence, we have
    \begin{align*}
        \E[X_{T}\mid \tau>T]\Pr[\tau>T]=\E[X_{T\wedge \tau}\mid \tau>T]\Pr[\tau>T]\leq \E[X_{T\wedge \tau}]=\E[Y_T]\leq Y_0=X_0.
    \end{align*}
    Furthermore, 
    \begin{align*}
        \E[X_{T}\mid \tau>T]
        &=r^{-T}\E[(1-\|\alpha_{T}\|_\infty)\mid \tau>T]
        \geq r^{-T}\frac{1}{n}
    \end{align*}
    holds. Note that $\taucons>T$ implies $1-\norm{\alpha_T}_\infty\geq 1/n$. 
    Thus, letting $T=100n\log n>\frac{800}{9}n\log n$, we obtain
    \begin{align}
        \Pr[\tau>T]\leq r^Tn(1-\|\alpha_0\|_\infty)\leq \exp\left(-\frac{9}{400n}T\right)n\leq n^{-1}. \label{eq:key_small_linf}
    \end{align}
Hence, from \cref{eq:key_small_linf} and \cref{item:LinfDown} of \cref{lem:LinfUD},
\begin{align*}
    &\Pr\left[\tau_{\mathrm{bad}}>T,\taucons>T\right]\\
    &=\Pr\left[\tau_{\mathrm{bad}}>T,\taucons>T,\tau_{\ell_\infty}^\uparrow>T\right]\\
    &\leq \Pr\left[\tau_{\mathrm{bad}}>T,\taucons>T,\tau_{\ell_\infty}^\uparrow>T,\tau^\downarrow_{\ell_\infty}>T\right]
    +\Pr\left[\tau_{\mathrm{bad}}>T,\tau_{\ell_\infty}^\uparrow>T,\tau^\downarrow_{\ell_\infty}\leq T\right]\\
    &\leq \Pr\left[\tau_{\mathrm{bad}}>T,\taucons>T,\tau^-_{\ell_\infty}>T\right]
    +\Pr\left[\tau^\downarrow_{\ell_\infty}\leq \min\{T,\tau_{\mathrm{bad}},\tau_{\ell_\infty}^\uparrow\}\right]\\
    &\leq n^{-1}+\exp\left(-C\frac{n\|\alpha_0\|_\infty}{\log n}\right)
\end{align*}
holds for some positive constant $C$.
Note that the assumption of $\norm{\alpha_0}_\infty\geq 2/3$ implies $\tau_{\ell_\infty}^\uparrow=\infty$ and $\tau_{\ell_\infty}^\downarrow>T \Rightarrow \tau_{\ell_\infty}^->T$.
\end{proof}

\begin{proof}[Proof of \cref{lem:AfterOneStrong}]
Let $H=320kn\log n$ and $M=100n\log n$.
Note that $\Pr\sbra*{\tau_{\mathrm{bad}}\leq H+M}=n^{-\omega(1)}$ holds from \cref{lem:weak cannot be strong general constant}.
Hence, from \cref{item:final_phase:1,item:final_phase:2} of \cref{lem:PhaseAnalysis},
\begin{align*}
    \Pr\cbra*{\taucons>H+M}
    &=\Pr\cbra*{\taucons>H+M,\tau_{\mathrm{bad}}\leq H+M}\\
    &+\Pr\cbra*{\taucons>H+M,\tau_{\mathrm{bad}}> H+M,\tau_{\ell_\infty}^+>H}\\
    &+\Pr\cbra*{\taucons>H+M,\tau_{\mathrm{bad}}> H+M,\tau_{\ell_\infty}^+\leq H}\\
    &\leq \Pr\cbra*{\tau_{\mathrm{bad}}\leq H+M}
    +\Pr\cbra*{\min\cbra*{\tau_{\mathrm{bad}},\tau_{\ell_\infty}^+}>H}\\
    &+\Pr\cbra*{\taucons>H+M,\tau_{\mathrm{bad}}> H+M,\tau_{\ell_\infty}^+\leq H}\\
    &=O(n^{-1}).
\end{align*}
\end{proof}

\section{Lower Bounds} \label{sec:lower bound on consensus time}
\subsection{3-Majority} \label{sec:lower bound 3-majority}
We prove that the consensus time is $\Omega(kn)$ with high probability if $k = o(\sqrt{n/\log n})$.
\begin{lemma} \label{lem:consensus time lower bound}
    Consider the asynchronous \ThreeMajority over $k = o(\sqrt{n/\log n})$ opinions on an $n$-vertex complete graph
    starting with the initial fractional population $\alpha_0\in[0,1]^{[k]}$ satisfies $\alpha_0(i)=\frac{1}{k}$ for all $i\in[k]$.
    Then
    we have 
    \[
    \Pr\sbra*{\taucons\geq \frac{nk}{4}}\geq 1-n^{-\omega(1)}.
    \]
\end{lemma}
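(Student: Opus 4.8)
The plan is to control each single opinion's population separately. From the balanced start each $\alpha_t(i)$ is, up to a negligible multiplicative drift, a supermartingale as long as it stays below a threshold of order $1/k$, and the multiplicative‑drift lemma (\cref{lem:multipricative_drift_Freedman}), applied verbatim as in the proof of \cref{item:alphaU} of \cref{lem:alphaUD}, shows it is $n^{-\omega(1)}$‑unlikely to ever exceed that threshold within $nk/4$ steps. A union bound over the $k$ opinions then finishes the argument.

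First I would reduce to one opinion. Since each coordinate changes by at most $1/n$ per step (\cref{item:difference of alpha} of \cref{lem:basic inequalities}), $\norm{\alpha_t}_\infty$ increases by at most $1/n$ per step, so it cannot reach the consensus value $1$ without first passing through any fixed level $\Theta\in(1/k,1)$. Hence, writing $\sigma_i=\inf\cbra*{t\ge 0\colon \alpha_t(i)\ge\Theta}$, we have $\taucons\ge\min_{i\in[k]}\sigma_i$, so $\Pr\sbra*{\taucons\le nk/4}\le\sum_{i\in[k]}\Pr\sbra*{\sigma_i\le nk/4}$; it suffices to bound each summand by $n^{-\omega(1)}$ (and recall $k\le n$, so the $k$ summands still sum to $n^{-\omega(1)}$).

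Now fix $i$ and take $\Theta=4/k$ when $k\ge 5$ and $\Theta=7/8$ when $2\le k\le 4$; in both cases $\alpha_0(i)=1/k<\Theta<1$. While $t-1<\sigma_i$ we have $\alpha_{t-1}(i)<\Theta$, so by \cref{item:expectation of alpha} of \cref{lem:basic inequalities} together with $\gamma_{t-1}\ge\alpha_{t-1}(i)^2$,
\[
\E_{t-1}\sbra*{\alpha_t(i)}=\alpha_{t-1}(i)\rbra*{1+\tfrac{\alpha_{t-1}(i)-\gamma_{t-1}}{n}}\le\alpha_{t-1}(i)\rbra*{1+\tfrac{\Theta'}{n}},
\]
where $\Theta'\defeq\min\cbra*{\Theta,\tfrac14}$ (using also $\alpha_{t-1}(i)-\gamma_{t-1}\le\alpha_{t-1}(i)(1-\alpha_{t-1}(i))\le\tfrac14$). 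Combined with $\abs*{\alpha_t(i)-\alpha_{t-1}(i)}\le 1/n$ and $\E_{t-1}\sbra*{(\alpha_t(i)-\alpha_{t-1}(i))^2}\le 3\alpha_{t-1}(i)/n^2\le 3\Theta/n^2$ on $\cbra*{t-1<\sigma_i}$ (\cref{item:difference of alpha,item:square difference of alpha} of \cref{lem:basic inequalities}), this is exactly the input of \cref{item:alphaU} of \cref{lem:alphaUD}: multiplicative drift at most $a=1+\Theta'/n$, increments at most $1/n$, one‑step second moments at most $3\Theta/n^2$, and $\abs*{\alpha_{t-1}(i)}<\Theta$ until $\sigma_i$. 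For $T=nk/4$ we have $a^T\le e^{\Theta' k/4}\le e^{\min\cbra{k/16,1}}\le e$, and one checks the required slack condition $\Theta\ge\sqrt2\,a^T\alpha_0(i)$ in all cases (for $k\ge 5$: $4/k\ge\sqrt2\,e/k\approx 3.84/k$; for $2\le k\le 4$ the worst case is $k=2$, where $\sqrt2\,a^T\alpha_0(i)\le\sqrt2\,e^{1/8}/2<0.81<7/8$). Applying \cref{lem:multipricative_drift_Freedman} with $\lambda=(\sqrt2-1)\alpha_0(i)$, exactly as in the proof of \cref{item:alphaU} of \cref{lem:alphaUD}, yields $\Pr\sbra*{\sigma_i\le nk/4}\le\exp\rbra*{-C\alpha_0(i)^2 n}=\exp\rbra*{-Cn/k^2}$ for a universal constant $C>0$. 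Since $k=o(\sqrt{n/\log n})$ we have $n/k^2=\omega(\log n)$, hence $\Pr\sbra*{\sigma_i\le nk/4}=n^{-\omega(1)}$, and the union bound over $i\in[k]$ gives $\Pr\sbra*{\taucons\le nk/4}=n^{-\omega(1)}$, which is the claim.

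The step I expect to be the main obstacle is deciding what to track. Directly following $\gamma_t=\norm{\alpha_t}^2$ or $\norm{\alpha_t}_\infty$ and merely keeping it below a constant does not work: the available one‑step drift bounds (e.g.\ $\norm{\alpha}_3^3-\gamma^2\le\norm{\alpha}_\infty^2\le\gamma$) accumulate a factor $e^{\Theta(\sqrt k)}$ or larger over $nk/4$ steps, which overwhelms the only $\mathrm{poly}(k)$ room between $1/k$ and $1$; moreover $\norm{\alpha_t}_\infty$ is awkward to analyze because from the balanced configuration the maximiser is not unique. Passing to individual populations both removes that non‑uniqueness and keeps the multiplicative drift down to $O(1/(nk))$ per step, so it accumulates to only a constant factor $e^{O(1)}$ over the whole window — just small enough that $\alpha_t(i)$ cannot reach a threshold of size $\Theta(1/k)$ — while the Freedman‑type tail in \cref{lem:multipricative_drift_Freedman} is superpolynomially small precisely because $n/k^2=\omega(\log n)$. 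The remaining care is in fixing the threshold constant (it must satisfy the slack condition $\Theta\ge\sqrt2\,a^T\alpha_0(i)$, which is tight‑ish) and in treating the few small values of $k$ with a near‑$1$ threshold as above.
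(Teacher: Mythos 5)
Your proposal is correct and follows essentially the same route as the paper: the paper also reduces consensus to the event that some single opinion's fraction grows from $1/k$ to a constant multiple of $1/k$, bounds this for each opinion via the Freedman-based multiplicative drift lemma (\cref{lem:multipricative_drift_Freedman}, packaged there as \cref{item:alphaU} of \cref{lem:alphaUD} with $\tau_i^\uparrow$ and window $nk/8$), and finishes with a union bound over the $k$ opinions. The only difference is cosmetic: you redo the drift computation inline with a slightly larger threshold ($4/k$, resp.\ $7/8$ for small $k$) so that the time window $nk/4$ in the statement is covered directly, whereas the paper cites the ready-made lemma with window $nk/8$.
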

\begin{proof}
    Suppose $\alpha_0(i)=\frac{1}{k}$ for all $i\in[k]$.
    From \cref{item:alphaU} of \cref{lem:alphaUD}
    and the union bound over $i\in[k]$, there exists a universal constant $C>0$ such that
    \begin{align*}
        \Pr\sbra*{\taucons\leq \frac{kn}{8}} 
        \leq \Pr\sbra*{\exists i\in[k],\tau^{\uparrow}_i \leq \frac{kn}{8}} \leq n\exp\rbra*{-\frac{Cn}{k^2}} = n^{-\omega(1)}.
    \end{align*}
    This proves the claim.
\end{proof}

\begin{proof}[Proof of \cref{thm:small k}.]
    We first prove the upper bound.
    From \cref{lem:unique strong opinion lemma}, with a probability $1-O(n^{-8})$,
        we have
        $\max_{i\neq \istar{T}}\alpha_T(i)\leq \frac{7}{8}\norm{\alpha_T}^2\leq \frac{7}{8}\norm{\alpha_T}_\infty$ within $T=O(nk\log n)$ steps.
    Starting with this configuration, from \cref{lem:AfterOneStrong}, the process reaches consensus with additional $O(nk\log n)$ steps with a probability at least $1-O(n^{-1})$.

    The lower bound directly follows from \cref{lem:consensus time lower bound} (our proof works for $k\le c\sqrt{n/\log n}$ for a sufficiently small constant $c>0$).
\end{proof}

\subsection{2-Choices}\label{sec:lower bo2}
Here we show the asynchronous \TwoChoices requires $\Omega(kn)$ steps to reach a consensus for some initial configuration if $k=o(n/\log n)$.
In an update of \TwoChoices rule,
(1) a vertex $v\in V$ randomly picks two vertices $u_1$ and $u_2$ with replacement, and
(2) if $u_1$ and $u_2$ have the same opinion, $v$ adopts it.
For the asynchronous \TwoChoices, we have
\begin{align}
    \E_{t-1}\sbra*{\alpha_{t}(i)-\alpha_{t-1}(i)}
    &=\frac{1}{n}(1-\alpha_{t-1}(i))\alpha_{t-1}(i)^2-\frac{1}{n}\alpha_{t-1}(i)\rbra*{\alphanorm_{t-1}-\alpha_{t-1}(i)^2} \nonumber \\
    &=\frac{\alpha_{t-1}(i)\rbra*{\alpha_{t-1}(i)-\alphanorm_{t-1}}}{n} \label{eq:expected value of alpha for 2Choices}
\end{align}
and
\begin{align}
    \E_{t-1}\sbra*{\rbra*{\alpha_{t}(i)-\alpha_{t-1}(i)}^2}
    &\leq \frac{1}{n^2}\Pr_{t-1}\sbra*{\alpha_{t}(i)\neq \alpha_{t-1}(i)}\nonumber \\
    &=\frac{\rbra*{1-\alpha_{t-1}(i)}\alpha_{t-1}(i)^2+\alpha_{t-1}(i)\rbra*{\alphanorm_{t-1}-\alpha_{t-1}(i)^2}}{n^2}\nonumber \\
    &\leq \frac{\alpha_{t-1}(i)\rbra*{\alpha_{t-1}(i)+\alphanorm_{t-1}}}{n^2}. \label{eq:square difference of alpha for 2Choices}
\end{align}
for any $i\in [k]$ and $t>0$.
In other words, \TwoChoices has the same expected value as \ThreeMajority ($\E_{t-1}\sbra*{\alpha_t(i)}
    = \alpha_{t-1}(i)\rbra{1+\frac{\alpha_{t-1}(i)-\alphanorm_{t-1}}{n}}$), while it has a smaller value of $\E_{t-1}\sbra{\rbra{\alpha_{t}(i)-\alpha_{t-1}(i)}^2}$.

\begin{theorem} \label{lem:consensus time lower bound_Bo2}
    Consider the asynchorous \TwoChoices over $k$ opinions on $V$ with $|V|=n$ and $k=o(n/\log n)$.
    If the initial fractional population $\alpha_0\in[0,1]^{[k]}$ satisfies $\alpha_0(i)=\frac{1}{k}$ for all $i\in[k]$, then
    we have 
    \[
    \Pr\sbra*{\taucons\geq \frac{nk}{8}}\geq 1-n^{-\omega(1)}.
    \]
\end{theorem}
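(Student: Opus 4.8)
The plan is to mirror the lower-bound argument for \ThreeMajority (\cref{lem:consensus time lower bound}, which invokes \cref{item:alphaU} of \cref{lem:alphaUD}), exploiting the fact that \TwoChoices has a strictly smaller second moment of the one-step increment. Writing $\alpha_0(j) = 1/k$ for every $j$, I would set $\tau^{\uparrow} = \inf\cbra*{t \ge 0 : \max_{j\in[k]}\alpha_t(j) \ge 2/k}$, the first time some opinion doubles its support. Since at consensus one opinion is held by all $n$ vertices and $1 \ge 2/k$ for $k \ge 2$, we have $\taucons \ge \tau^{\uparrow}$, so it suffices to prove $\Pr\sbra*{\tau^{\uparrow} \le nk/8} = n^{-\omega(1)}$.

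The crucial observation is that before $\tau^{\uparrow}$ the configuration remains essentially balanced in the $\ell^2$ sense: for $t < \tau^{\uparrow}$ we have $\alphanorm_t = \norm{\alpha_t}^2 = \sum_i \alpha_t(i)^2 \le \norm{\alpha_t}_\infty\sum_i\alpha_t(i) = \norm{\alpha_t}_\infty < 2/k$. Hence on $\cbra*{\tau^{\uparrow} > t-1}$, \cref{eq:expected value of alpha for 2Choices} gives the multiplicative drift $\E_{t-1}[\alpha_t(j)] = \alpha_{t-1}(j)\rbra*{1 + \frac{\alpha_{t-1}(j) - \alphanorm_{t-1}}{n}} \le \alpha_{t-1}(j)\rbra*{1 + \frac{2}{kn}}$, and \cref{eq:square difference of alpha for 2Choices} gives $\E_{t-1}\sbra*{(\alpha_t(j) - \alpha_{t-1}(j))^2} \le \frac{\alpha_{t-1}(j)(\alpha_{t-1}(j)+\alphanorm_{t-1})}{n^2} < \frac{8}{k^2n^2}$. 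It is exactly this $\Theta\rbra*{\frac{1}{k^2n^2}}$ bound — rather than the $\Theta\rbra*{\frac{1}{kn^2}}$ bound that is all one has for \ThreeMajority (\cref{item:square difference of alpha} of \cref{lem:basic inequalities}) — that lets the argument reach $k = o(n/\log n)$ instead of $k = o(\sqrt{n/\log n})$.

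With these two estimates in hand, the plan is: fix $j\in[k]$ and apply the multiplicative drift lemma (\cref{lem:multipricative_drift_Freedman}) to the stopped process $\rbra*{-\alpha_{t\wedge\tau^{\uparrow}}(j)}_{t\ge0}$ with $a = 1 + \frac{2}{kn}$, $D = \frac1n$, $S = \frac{8}{k^2n^2}$, $U = \frac2k$, $T = \frac{nk}{8}$ (so $A = \sum_{t=1}^{T}a^{-2t}\le\frac{nk}{4}$, $B = 1$) and $\lambda = \frac{\sqrt2-1}{k}$, exactly as in the proof of \cref{item:alphaU} of \cref{lem:alphaUD}; using that $2 \ge \sqrt2\,(1+\tfrac2{kn})^{T}$ this should give $\Pr\sbra*{\exists\,0\le t\le T\colon \alpha_{t\wedge\tau^{\uparrow}}(j)\ge\tfrac2k}\le\exp\rbra*{-cn/k}$ for a universal $c>0$ (here $SA = \Theta(\tfrac1{kn})$ and $\lambda^2 = \Theta(\tfrac1{k^2})$, so the exponent is $\Theta(n/k)$). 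Since $\tau^{\uparrow}\le T$ forces $\alpha_{\tau^{\uparrow}}(j)\ge2/k$ for at least one $j$, a union bound over $j\in[k]$ yields $\Pr\sbra*{\tau^{\uparrow}\le T}\le k\exp(-cn/k)$, which is $n^{-\omega(1)}$ whenever $k\le n$ and $k=o(n/\log n)$.

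I expect the only subtle point to be the bookkeeping around $\tau^{\uparrow}$: the improved second-moment bound is valid only while no opinion has broken out yet, so the martingale argument must be carried out with the single global stopping time $\tau^{\uparrow}$ rather than per-opinion doubling times, and one must check that $\cbra*{\exists t\le T\colon\alpha_{t\wedge\tau^{\uparrow}}(j)\ge2/k}$ is the same event as ``$j$ is among the first opinions to reach $2/k$, within $T$ steps'', so that these events over $j\in[k]$ cover $\cbra*{\tau^{\uparrow}\le T}$. The small-$k$ corner ($k=2$) is immediate — reaching consensus needs at least $n/2$ vertex flips, so $\taucons\ge n/2\ge nk/8$ — and the remaining calculations are a routine re-run of the \ThreeMajority computation with the constants adjusted.
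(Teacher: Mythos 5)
Your proposal is correct and follows essentially the same route as the paper: the single global stopping time at which some opinion reaches $2/k$, the observation that before this time $\alphanorm_t\le\norm{\alpha_t}_\infty<2/k$ gives drift factor $1+\frac{2}{kn}$ and second moment at most $\frac{8}{k^2n^2}$, an application of \cref{lem:multipricative_drift_Freedman} to $-\alpha_{t\wedge\tau}(j)$ with the same parameters (including $\lambda=(\sqrt2-1)\alpha_0(j)$ and the check $2\ge\sqrt2(1+\tfrac{2}{kn})^{T}$ for $T=nk/8$), and a union bound over the $k$ opinions yielding $k\e^{-cn/k}=n^{-\omega(1)}$. No substantive differences from the paper's argument.
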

\begin{proof}
Let $\tau=\inf\{t\geq 0:\|\alpha_t\|_\infty\geq 2/k\}$.
Obviously, $\taucons \geq \tau$ holds almost surely.
If $\tau>t-1$,
\begin{align}
    \E_{t-1}\sbra*{\alpha_t(i)}
    = \alpha_{t-1}(i)\rbra*{1+\frac{\alpha_{t-1}(i)-\alphanorm_{t-1}}{n}}
    \leq \rbra*{1+\frac{2}{kn}}\alpha_{t-1}(i) \label{eq:2choices lower 1}
\end{align}
holds from \cref{eq:expected value of alpha for 2Choices}.
%
Let $X_t=-\alpha_t(i)$.
Then, we have
\begin{itemize}
    \item $\indicator_{\tau>t-1}\E_{t-1}\sbra*{X_t-\rbra*{1+\frac{2}{kn}}X_{t-1}}\geq 0$ (from \cref{eq:2choices lower 1}),
    \item $\indicator_{\tau>t-1}\abs{X_t-X_{t-1}}\leq \frac{1}{n}$,
    \item $\indicator_{\tau>t-1}\E_{t-1}\sbra*{(X_t-X_{t-1})^2}\leq \indicator_{\tau>t-1}\frac{\alpha_{t-1}(i)\rbra*{\alpha_{t-1}(i)+\alphanorm_{t-1}}}{n^2}\leq \frac{8}{k^2n^2}$ (from \cref{eq:square difference of alpha for 2Choices}),
    \item $\indicator_{\tau>t-1}\abs{X_{t-1}}\leq 2/k$.
\end{itemize}
Apply \cref{lem:multipricative_drift_Freedman} with the parameters
$T=nk/8$, $a=1+\frac{2}{kn}$, $A=\sum_{t=1}^T(1+\frac{2}{kn})^{-2t}\leq kn/2$, $B=1$, $D=1/n$, $S=\frac{8}{k^2n^2}$, $U=2/k$, and $\lambda =(\sqrt{2}-1)\alpha_0(i)$, we have
\begin{align*}
    &\Pr\sbra*{\bigvee_{t=0}^{T}\cbra*{\alpha_{t\wedge \tau}(i)\geq \left(1+\frac{2}{kn}\right)^{t\wedge \tau}\rbra*{\alpha_0(i)+\rbra*{\sqrt{2}-1}\alpha_0(i)}}}\\
    &=\Pr\sbra*{\bigvee_{t=0}^{T}\cbra*{X_{t\wedge \tau}\leq \left(1+\frac{2}{kn}\right)^{t\wedge \tau}\rbra*{X_0+\rbra*{\sqrt{2}-1}\alpha_0(i)}}}\\
    &\leq \exp\left(-Cn/k\right) &&(\textrm{for some constant $C>0$})\\
    &\leq n^{-\omega(1)} &&(\textrm{From $k=o(n/\log n)$}).
\end{align*}
Hence, we obtain
\begin{align*}
    \Pr\sbra*{\tau\leq T}
    &=\Pr\sbra*{\tau\leq \min\{T,\tau\}}\\
    &=\Pr\sbra*{\bigvee_{t=0}^{T\wedge \tau}\bigvee_{i\in [k]}\cbra*{\alpha_t(i)\geq \frac{2}{k}}}\\
    &=\Pr\sbra*{\bigvee_{t=0}^{T\wedge \tau}\bigvee_{i\in [k]}\cbra*{\alpha_{t\wedge \tau}(i)\geq 2\alpha_0(i)}}\\
    &\leq \Pr\sbra*{\bigvee_{t=0}^{T\wedge \tau}\bigvee_{i\in [k]}\cbra*{\alpha_{t\wedge \tau}(i)\geq \sqrt{2}\left(1+\frac{2}{kn}\right)^{t\wedge \tau}\alpha_0(i)}}\\
    &\leq \sum_{i\in [k]}\Pr\sbra*{\bigvee_{t=0}^{T}\cbra*{\alpha_{t\wedge \tau}(i)\geq \left(1+\frac{2}{kn}\right)^{t\wedge \tau}\rbra*{\alpha_0(i)+\rbra*{\sqrt{2}-1}\alpha_0(i)}}}\\
    &\leq n^{-\omega(1)}.
\end{align*}
Note that $2\geq \sqrt{2}\left(1+\frac{2}{kn}\right)^T$ holds for $T=nk/8\leq \frac{\log (2)}{4} kn$.
\end{proof}

\newcommand{\taukv}{\tau^{\mathrm{voter}}_{\kappa}}
\newcommand{\taukmaj}{\tau^{\mathrm{3maj}}_{\kappa}}

\renewcommand{\d}{\mathbf{d}}
\newcommand{\dt}{\widetilde{\mathbf{d}}}
\renewcommand{\it}{\widetilde{i}}
\newcommand{\jt}{\widetilde{j}}
\newcommand{\rt}{\widetilde{r}}
\newcommand{\lt}{\widetilde{\ell}}

\newcommand{\bu}{\mathbf{u}}
\newcommand{\bv}{\mathbf{v}}

\newcommand{\C}{\mathcal{C}}
\newcommand{\p}{\mathbf{p}}
\newcommand{\lest}{\le_{\mathrm{st}}}
\newcommand{\gest}{\ge_{\mathrm{st}}}
\newcommand{\Add}{\mathsf{Add}}
\newcommand{\Del}{\mathsf{Del}}

\section{Starting with Many Opinions} \label{sec:many opinions}
This section is devoted to proving \cref{thm:many opinions}.
To this end, we borrow the theory of majorization \cite{MOA11}.
\paragraph*{Configuration and majorization.}
For $i\in[k]$, let $e_i$ be the unit vector.
That is,
\[
  (e_i)_j = \begin{cases}
    1 & \text{if }i=j,\\
    0 & \text{otherwise}.
  \end{cases}
\]
A \emph{configuration} is a vector $\c=(c_1,\dots,c_k) \in \Int^k$ such that $c_1 \ge \dots \ge c_k \ge 0$ and $\sum_{i\in[k]} c_i=n$.
Let $\C_{n,k}$ be the set of all possible configurations, i.e.,
\[
  \C_{n,k} = \cbra*{ \c = (c_1,\dots,c_k) \in \Int^k \colon \sum_{i\in[k]} c_i = n \text{ and }c_1 \ge \dots \ge c_k \ge 0}.
\]
Unlike the normalized population $\alpha_t \in [0,1]^k$,
  we always keep the non-increasing order in elements of $\c \in \C_{n,k}$.
If $n$ and $k$ are clear from the context, we omit the subscription and simply write $\C=\C_{n,k}$.

For $i\in \{0,1,\dots,k\}$, let $\c^{\le 0}=0$ and $\c^{\le i}\defeq c_1 + \dots + c_i$ be the partial sum.
Given a configuration $\c=(c_1,\dots,c_k)$, we think that vertices $1,\dots,c_1$ hold opinion $1$,
vertices $c_1+1,\dots,c_1+c_2$ hold opinion $2$, and so on.
Specifically, the opinion of a vertex $u$ is the index $i\in[k]$ such that $\c^{\le i-1} < u \le \c^{\le i}$.
\begin{definition}[Majorization] \label{def:majorization}
  For two configurations $\c,\ct \in \C$, we say $\c$ \emph{majorizes} $\ct$, denoted by $\c \succeq \ct$, if $\c^{\le i} \ge \ct^{\le i}$ for all $i=1,\dots,k$.
\end{definition}

We invoke the following useful characterization of majorization.
Recall that a matrix $P\in[0,1]^{n\times n}$ is \emph{doubly stochastic} if every row and column sum up to $1$.
\begin{lemma}[Theorem B.2 in Section 2 of \cite{MOA11}] \label{lem:doubly stochastic}
  For any two configurations $\c,\ct \in \C$, $\c \succeq \ct$ if and only if there exists a doubly stochastic matrix $P \in [0,1]^{n\times n}$ such that $\ct = \c P$.
\end{lemma}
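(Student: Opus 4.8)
The statement is the classical Hardy--Littlewood--Pólya theorem, phrased for configurations, so the plan is to recall its standard proof. First note that a configuration $\c=(c_1,\dots,c_k)\in\C_{n,k}$ may be identified with the zero-padded vector $(c_1,\dots,c_k,0,\dots,0)\in\Int^n$; this identification preserves non-negativity, the total sum $n$, and the relation $\succeq$ (which involves only the sorted partial sums, and the partial sums beyond index $k$ equal $n$ on both sides), so it suffices to prove the equivalence for $x=\c$ and $y=\ct$ viewed as vectors in $\Int^n$ with $\sum_i x_i=\sum_i y_i=n$.

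For the ``if'' direction, suppose $y=xP$ with $P$ doubly stochastic. Fix $m$ and let $S^\star$ be a set of $m$ coordinates with $\sum_{i=1}^m y_{[i]}=\sum_{i\in S^\star}y_i$. Then
\[
  \sum_{i=1}^m y_{[i]} \;=\; \sum_{i\in S^\star}\sum_{j} x_j P_{ji} \;=\; \sum_{j} x_j\,\theta_j,
  \qquad \theta_j \defeq \sum_{i\in S^\star}P_{ji}\in[0,1],
\]
and $\sum_j\theta_j=\sum_{i\in S^\star}\sum_j P_{ji}=m$ since $P$ is column-stochastic. The linear functional $\theta\mapsto\sum_j x_j\theta_j$ over $\{\theta\in[0,1]^n:\sum_j\theta_j=m\}$ is maximized at an indicator of the $m$ largest coordinates of $x$, so $\sum_{i=1}^m y_{[i]}\le\sum_{i=1}^m x_{[i]}$ for every $m$; since $P$ is doubly stochastic the total sums agree. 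Restricting to $m\le k$, this is precisely $\ct^{\le m}\le\c^{\le m}$, i.e.\ $\c\succeq\ct$.

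For the ``only if'' direction---the substantive one---I would realize $P$ as a finite product of \emph{T-transforms}, a T-transform being a doubly stochastic matrix equal to the identity except on a $2\times2$ principal block with entries $t,1-t,1-t,t$ in positions $(a,a),(a,b),(b,a),(b,b)$ for some $t\in[0,1]$; applied to a vector it replaces a pair of coordinates by two copies of a convex combination of them, moving mass from the larger toward the smaller. Assume $\c\succeq\ct$ and, after re-sorting, that both are non-increasing. If $\c\ne\ct$, let $a=\min\{i:c_i\ne\tc_i\}$ and $b=\max\{i:c_i\ne\tc_i\}$; agreement on the prefix $i<a$ forces $c_a>\tc_a$ via majorization, agreement on the suffix $i>b$ together with equal totals forces $c_b<\tc_b$, and hence $a<b$. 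Applying the T-transform that moves mass $\delta=\min\{c_a-\tc_a,\ \tc_b-c_b\}>0$ from coordinate $a$ to coordinate $b$ yields a vector $\c'$ with $\c\succeq\c'\succeq\ct$ (the first because a T-transform only averages two entries; the second by checking the partial sums $\c'^{\le i}$ for $a\le i<b$, using $\delta\le c_a-\tc_a$ and $\delta\le\tc_b-c_b$) that agrees with $\ct$ in strictly more coordinates. Iterating, with a re-sort (itself doubly stochastic) interleaved as needed, this terminates at $\ct$ after at most $n-1$ T-transforms, and the product of all the doubly stochastic matrices used is the desired $P$.

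The main obstacle is exactly this termination argument: one must check that each T-transform preserves the chain $\c\succeq\c'\succeq\ct$ while provably making progress toward $\ct$---either strictly increasing the number of matched coordinates, or strictly decreasing a potential such as $\sum_i(\c^{\le i}-\ct^{\le i})$---and one must handle the minor bookkeeping that a T-transform acting on non-adjacent coordinates can disturb the sorted order, which is resolved by permuting back. Everything else is routine calculation.
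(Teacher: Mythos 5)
The paper itself gives no proof of this lemma (it is quoted from \cite{MOA11}), and your plan is the standard Hardy--Littlewood--P\'olya route that the cited source follows: zero-padding to reconcile the $k$-versus-$n$ dimensions, the rearrangement argument for the ``if'' direction (which you carry out correctly), and a reduction of the ``only if'' direction to a finite product of T-transforms.

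There is, however, a genuine error in the ``only if'' direction: your choice of indices. You set $a=\min\{i\colon c_i\neq\tc_i\}$ and $b=\max\{i\colon c_i\neq\tc_i\}$ and claim that $\c'\defeq \c-\delta e_a+\delta e_b$ with $\delta=\min\{c_a-\tc_a,\ \tc_b-c_b\}$ satisfies $\c'\succeq\ct$ ``by checking the partial sums for $a\le i<b$, using $\delta\le c_a-\tc_a$ and $\delta\le\tc_b-c_b$.'' That check does not go through: for $a\le i<b$ one has $\c^{\le i}-\ct^{\le i}=\sum_{j=a}^{i}(c_j-\tc_j)$, and the terms with $j>a$ may be negative, so the margin can drop below $\delta$. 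Concretely, take $\c=(4,2,2,0)$ and $\ct=(3,3,1,1)$: then $\c\succeq\ct$, your indices are $a=1$, $b=4$, $\delta=1$, and the step produces $\c'=(3,2,2,1)$, whose partial sums $3,5,7,8$ fail against $3,6,7,8$, so $\c'\not\succeq\ct$. Worse, the procedure is then stuck: by your own ``if'' direction no doubly stochastic matrix maps $(3,2,2,1)$ to $(3,3,1,1)$, so no further T-transforms (or any doubly stochastic steps) can reach $\ct$, and neither of your proposed progress measures rescues this. The standard fix, and the one in \cite{MOA11}, is to choose $b$ as the \emph{smallest} index greater than $a$ with $c_b<\tc_b$ (equivalently, take $a$ to be the largest index with $c_a>\tc_a$ and $b$ the smallest index above it with $c_b<\tc_b$, which forces $c_j=\tc_j$ strictly between them); then $c_j\ge\tc_j$ on the strip, every partial sum on $[a,b)$ exceeds the corresponding one of $\ct$ by at least $\delta$, so $\c\succeq\c'\succeq\ct$ holds and the induction on the number of coordinates where the vectors agree terminates. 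In the example this gives $b=2$ and $\c'=(3,3,2,0)\succeq\ct$, after which one more transform reaches $\ct$. With that correction the remaining bookkeeping in your sketch (re-sorting via permutation matrices, at most $n-1$ transforms, products of doubly stochastic matrices being doubly stochastic) is sound.
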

Using this lemma, it is easy to see that concatenating preserves majorization ordering.
For two vectors $\bu=(u_1,\dots,u_m)$ and $\bv = (v_1,\dots,v_n)$, let $[\bu,\bv] \defeq (u_1,\dots,u_m,v_1,\dots,v_n)$ be the vector obtained by concatenating $\bu$ and $\bv$ as a vector.

\begin{lemma} \label{lem:concatenation}
  Let $\c_1,\ct_1 \in \C_{n_1,k_1}$ and $\c_2,\ct_2 \in \C_{n_2,k_2}$ be configurations such that $\c_i \succeq \ct_i$ for $i=1,2$.
  Let $\c \in \C_{n_1+n_2,k_1+k_2}$ (and $\ct \in \C_{n_1+n_2,k_1+k_2}$) be the configuration obtained by concatenating $\c_1$ and $\c_2$ (resp.\ $\ct_1$ and $\ct_2$) as a vector and then rearranging elements in descending order.
  Then, we have $\c \succeq \ct$.
\end{lemma}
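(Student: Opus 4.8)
The plan is to use the doubly-stochastic characterization of majorization from \cref{lem:doubly stochastic}. By hypothesis $\c_i \succeq \ct_i$ for $i=1,2$, so there exist doubly stochastic matrices $P_1 \in [0,1]^{n_1\times n_1}$ and $P_2 \in [0,1]^{n_2\times n_2}$ with $\ct_1 = \c_1 P_1$ and $\ct_2 = \c_2 P_2$. Form the block-diagonal matrix $P = \begin{pmatrix} P_1 & 0 \\ 0 & P_2 \end{pmatrix} \in [0,1]^{(n_1+n_2)\times(n_1+n_2)}$. Since each row and column of $P_1$ and $P_2$ sums to $1$, the same holds for $P$, so $P$ is doubly stochastic. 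If $\c'$ denotes the plain concatenation $[\c_1,\c_2]$ (not yet sorted) and $\ct'$ denotes $[\ct_1,\ct_2]$, then by construction $\ct' = \c' P$. Hence, applying \cref{lem:doubly stochastic} in the "if" direction to the length-$(n_1+n_2)$ vectors $\c'$ and $\ct'$, we would get $\c' \succeq \ct'$ — except that \cref{lem:doubly stochastic} is stated for elements of $\C$, which are by definition sorted in non-increasing order, whereas $\c'$ and $\ct'$ need not be.

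To handle this, I would first record the elementary fact that majorization is invariant under permuting the coordinates of either vector: for any vectors $x,y \in \Int^m$ (with equal coordinate sums) and any permutation matrices $\Pi,\Pi'$, one has $x$ majorizes $y$ (in the sense of comparing sorted partial sums) if and only if $\Pi x$ majorizes $\Pi' y$ — indeed \cref{def:majorization} as applied to $\c,\ct$ already implicitly sorts, and the abstract relation "$\exists$ doubly stochastic $Q$ with $Qx$ a permutation of $y$" is permutation-invariant because $\Pi' Q \Pi^{-1}$ is again doubly stochastic. Let $\Pi$ and $\Pi'$ be the permutation matrices that sort $\c'$ and $\ct'$ into non-increasing order, so that $\c = \Pi \c'$ and $\ct = \Pi' \ct'$ as elements of $\C_{n_1+n_2,\,k_1+k_2}$ (note $\c'$ and $\ct'$ are padded with the appropriate number of trailing zeros so both live in $\Int^{k_1+k_2}$ with coordinate sum $n_1+n_2$; alternatively pad $P$ by an identity block on the zero coordinates). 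Then $Q \defeq \Pi' P \Pi^{-1}$ is doubly stochastic and satisfies $\ct = \Pi' \ct' = \Pi' \c' P = \Pi' \Pi^{-1} \c \cdot (\Pi P \Pi^{-1})$ — more cleanly, writing things at the level of vectors one checks directly that $\ct = \c Q$ for this $Q$. Applying the "if" direction of \cref{lem:doubly stochastic} to $\c, \ct \in \C$ then yields $\c \succeq \ct$, as desired.

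An alternative, more hands-on route avoids permutation bookkeeping entirely: one can verify the partial-sum inequalities $\c^{\le i} \ge \ct^{\le i}$ directly. For the sorted concatenations, $\c^{\le i}$ equals $\max$ over all ways of splitting the budget $i = i_1 + i_2$ of the sum of the $i_1$ largest entries of $\c_1$ plus the $i_2$ largest of $\c_2$, i.e. $\c^{\le i} = \max_{i_1+i_2=i}\big( \c_1^{\le i_1} + \c_2^{\le i_2}\big)$, and likewise for $\ct$. Since $\c_1^{\le i_1} \ge \ct_1^{\le i_1}$ and $\c_2^{\le i_2} \ge \ct_2^{\le i_2}$ termwise, taking the same maximizing split for $\ct$ shows $\c^{\le i} \ge \ct^{\le i}$ for every $i$. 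I would probably present this second argument as the main proof since it is self-contained and short, relegating the doubly-stochastic viewpoint to a remark.

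The only mild obstacle is the sorting/permutation issue — the doubly-stochastic characterization is cleanest when both vectors are presorted, so one must either invoke permutation-invariance of majorization or, preferably, use the "max over splits" formula for the sorted partial sum of a concatenation, which is a standard and easily verified identity. Everything else is routine.
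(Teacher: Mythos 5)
Both of your routes are correct. Your first argument (block-diagonal doubly stochastic matrix $P$ with $[\ct_1,\ct_2]=[\c_1,\c_2]P$, then absorbing the sorting permutations into $\Pi^{-1}P\,\widetilde{\Pi}$, which is again doubly stochastic, and invoking \cref{lem:doubly stochastic}) is essentially word-for-word the paper's proof, including the way the permutation issue is resolved. Your second argument is genuinely different and does not appear in the paper: it verifies the partial-sum inequalities directly via the identity $\c^{\le i}=\max_{i_1+i_2=i}\bigl(\c_1^{\le i_1}+\c_2^{\le i_2}\bigr)$ for the sorted concatenation, and the termwise dominance $\c_1^{\le i_1}\ge\ct_1^{\le i_1}$, $\c_2^{\le i_2}\ge\ct_2^{\le i_2}$ applied at a maximizing split for $\ct$. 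That identity is standard and your use of it is correct (with the harmless convention $0\le i_1\le k_1$, $0\le i_2\le k_2$). What the elementary route buys is self-containedness: it needs neither \cref{lem:doubly stochastic} nor any permutation bookkeeping, and it works directly from \cref{def:majorization}. What the paper's route buys is uniformity with the rest of Section 6, where the doubly-stochastic characterization is already the workhorse (e.g.\ in \cref{lem:Muirhead}), so the same tool is reused rather than introducing a separate combinatorial identity. Either presentation is fine; your stated preference for the max-over-splits proof is reasonable and slightly shorter.
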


\begin{proof}
  Let $[\c_1,\c_2] \in \Int^{n_1+n_2}$ be the vector obtained by concatenating $\c_1$ and $\c_2$ (without rearranging).
  Similarly, define $[\ct_1,\ct_2]$ as the concatenation of $\ct_1$ and $\ct_2$.
  From \cref{lem:doubly stochastic}, there are two doubly stochastic matrices $P_1,P_2$ such that $\ct_i = \c_i P_i$.
  Let $P \in[0,1]^{(n_1+n_2)\times (n_1+n_2)}$ be the doubly stochastic matrix defined by
  \[
    P \defeq \begin{bmatrix}
      P_1 & O \\
      O & P_2
    \end{bmatrix}.
  \]
  Then, we have $[\ct_1,\ct_2] = [\c_1,\c_2] P$.
  Since $\c$ and $\ct$ are obtained by rearranging $[\c_1,\c_2]$ and $[\ct_1,\ct_2]$, respectively,
  we can write $\c = [\c_1,\c_2]\,\Pi$ and $\ct=[\ct_1,\ct_2]\,\widetilde{\Pi}$ for some permutation matrices $\Pi,\widetilde{\Pi}$.
  Therefore, we have
  \[
    \ct = [\ct_1,\ct_2]\, \widetilde{\Pi} = [\c_1,\c_2] \, P \, \widetilde{\Pi} = \c \Pi^{-1} P \, \widetilde{\Pi}.
  \]
  Since $\Pi^{-1}\,P\,\widetilde{\Pi}$ is doubly stochastic, from \cref{lem:doubly stochastic}, we obtain the claim.
\end{proof}
We also use the following useful lemma.
\begin{lemma} \label{lem:Muirhead}
    Let $\c \in \C$ be a configuration and $i,j\in[k]$ be any opinions such that $c_i > c_j$.
    Then, we have $\c \succeq \c - e_i + e_j$.
\end{lemma}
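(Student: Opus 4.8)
The plan is to realise $\c - e_i + e_j$ as the image of $\c$ under a doubly stochastic matrix and then quote \cref{lem:doubly stochastic}. First I would check that $\c - e_i + e_j$ is a configuration up to reordering: from $c_i > c_j \ge 0$ we get $c_i \ge 1$, so lowering the $i$-th coordinate by one keeps it nonnegative, the $j$-th coordinate only increases, every entry stays an integer, and the coordinate sum is still $n$. Hence the non-increasing rearrangement of $\c - e_i + e_j$ is a configuration $\ct \in \C$, and it is $\c \succeq \ct$ that we must establish.

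Next I would introduce the $k \times k$ \emph{$T$-transform}
\[
  P = I - \theta\,(E_{ii} + E_{jj} - E_{ij} - E_{ji}), \qquad \theta = \frac{1}{c_i - c_j},
\]
where $I$ is the identity and $E_{ab}$ has a single $1$ in position $(a,b)$. Since $c_i - c_j \ge 1$ we have $\theta \in (0,1]$, so $P$ has nonnegative entries, and one checks immediately that every row and column of $P$ sums to $1$; thus $P$ is doubly stochastic. A one-line computation shows that $\c P$ coincides with $\c$ except in coordinate $i$, which becomes $c_i - \theta(c_i - c_j) = c_i - 1$, and coordinate $j$, which becomes $c_j + \theta(c_i - c_j) = c_j + 1$. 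That is, $\c P = \c - e_i + e_j$ as (unsorted) vectors.

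Finally, taking a permutation matrix $\Pi$ that sorts $\c - e_i + e_j$ into non-increasing order, we have $\ct = \c\,(P\Pi)$, and $P\Pi$ is doubly stochastic as a product of doubly stochastic matrices; \cref{lem:doubly stochastic} then gives $\c \succeq \ct$. I do not expect any genuine obstacle here: the only point needing care is the interplay with reordering, since \cref{lem:doubly stochastic} is phrased for sorted configurations while $\c P$ need not be sorted, which is why $\Pi$ is absorbed into the doubly stochastic matrix. If one prefers to sidestep matrices altogether, the inequality $\c^{\le m} \ge \ct^{\le m}$ can be verified directly from \cref{def:majorization}: writing $\ct^{\le m} = \max_{|S| = m} \sum_{\ell \in S}(\c - e_i + e_j)_\ell$ and fixing an optimal $S^*$, one has $\ct^{\le m} \le \sum_{\ell \in S^*} c_\ell \le \c^{\le m}$ whenever $i \in S^*$ or $j \notin S^*$, while if $j \in S^*$ and $i \notin S^*$, replacing $j$ by $i$ in $S^*$ and using $c_i \ge c_j + 1$ shows $\c^{\le m} \ge \sum_{\ell \in S^*} c_\ell + 1 = \ct^{\le m}$.
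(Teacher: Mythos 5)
Your proof is correct and takes essentially the same route as the paper: your $T$-transform $I-\theta\,(E_{ii}+E_{jj}-E_{ij}-E_{ji})$ with $\theta=\frac{1}{c_i-c_j}$ is exactly the paper's matrix $\lambda I+(1-\lambda)\Pi$ with $\lambda=\frac{c_i-c_j-1}{c_i-c_j}$, and both arguments conclude by invoking \cref{lem:doubly stochastic}. Your explicit absorption of the sorting permutation into the doubly stochastic matrix (and the optional direct partial-sum verification) only adds care on a point the paper's proof leaves implicit.
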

\cref{lem:Muirhead} is known in the literature (see, e.g., \cite[Lemma D.1 of Section 5]{MOA11}) but we prove it here for completeness.
\begin{proof}
    Consider the doubly stochastic matrix $P \defeq \lambda I + (1-\lambda)\Pi$, where $\lambda = \frac{c_i-c_j-1}{c_i-c_j} \in [0,1]$, $I$ is the identity matrix, and $\Pi$ is the permutation matrix that swaps $i$-th and $j$-th element.
    Then, we have
    \begin{align*}
        (\c P)_{\ell} &= \begin{cases}
            c_\ell & \text{if }\ell\not\in\{i,j\},\\
            \lambda c_i + (1-\lambda)c_j & \text{if }\ell=i,\\
            \lambda c_j + (1-\lambda)c_i & \text{if }\ell=j
        \end{cases} \\
        &= (\c - e_i + e_j)_\ell.
    \end{align*}
    Therefore, from \cref{lem:doubly stochastic}, we obtain the claim.
\end{proof}

\paragraph*{Voter and 3-Majority.}
For a configuration $\c\in \C$ and unit vectors $e_i$, we define $\c - e_i$ (and $\c + e_i$) as the configuration
obtained by computing $\c - e_i$ (resp.\ $\c+e_i$) as a vector and then rearranging the components in descending order.
We promise that $c_i >0$ and $c_j < n$ whenever we consider $\c - e_i$ or $\c+e_j$.

For a configuration $\c=(c_1,\dots,c_k) \in \C$, let $\p_{\c} = (p_1,\dots,p_k) \in [0,1]^k$ be the relative population sizes defined as $p_i = \frac{c_i}{n}$.
We omit the subscript $\c$ and simply write $\p$ if $\c$ is clear from the context.
Note that $\p$ specifies a distribution over $[k]$ as $\sum_{i\in[k]}p_i = 1$ and $p_i \ge 0$.
By $i\sim \p$ we mean that an opinion $i\in[k]$ is chosen according to the distribution $\p$.

For a configuration $\c$, let $\Pull(\c)$ be a $\C$-valued random variable obtained by simulating one step of the asynchronous Pull Voting on the configuration $\c$.
That is, $\Pull(\c)$ is the random variable obtained by choosing $i,j\sim \p_{\c}$ and then
outputting $\c - e_i + e_j$, an event of probability $(c_i/n)(c_j/n)$.
In this case, we call $i$ a \emph{loosing opinion} and $j$ a \emph{gaining opinion}.
Note that the components of $\Pull(\c)$ are rearranged in descending order.
The same value of $\Pull(\c)$ can arise in several ways.
For example, it can be the case that $\c-e_1+e_3 = \c-e_2+e_4$, which occurs if $c_1=c_2$ and $c_3=c_4$.
\begin{remark} \label{rem:opinion size is important}
Generally, for any fixed $i,j\in[k]$, we have
\[
\Pr\sbra*{\Pull(\c) = \c - e_i + e_j} = \frac{c_i k_i}{n} \cdot \frac{c_j k_j}{n},
\]
where $k_i$ denotes the number of opinions of size equal to $c_i$ (in the configuration $\c$).
Therefore, to sample the distribution $\Pull(\c)$ for given $\c$, it suffices to determine the \emph{sizes} of loosing opinion $i$ and gaining opinion $j$.
It does not matter specifically which opinion with the selected size is chosen.
\end{remark}

Similarly, let $\ThreeMaj(\c)$ be the random variable obtained by simulating one step of the 3-Majority starting from $\c$.
To state it more formally,
for $\c \in \C$, let $f\colon \C \to \C$ be a function defined by $f(\c) = (f_1(\c),\dots,f_k(\c))$, where
\begin{align}
  f_i(\c) \defeq c_i\rbra*{ 1 + \frac{c_i}{n} - \frac{\norm{\c}^2}{n^2} } = np_i\rbra*{1 + p_i - \norm{\p}^2}
   \label{eq:3MajFunc}
\end{align}
and $\norm{\cdot}$ denotes the $\ell^2$ norm (cf.\ \cref{item:expectation of alpha} of \cref{lem:basic inequalities}).
It is not hard to see that $f(\c) \in \C$ if $\c \in \C$.
Then, $\ThreeMaj(\c)$ is defined as a $\C$-valued random variable obtained by
choosing $i \sim \p_\c, j \sim \frac{f(\c)}{n}$ and then outputting $\c - e_i + e_j$.
Note that
$\Pr\sbra*{\ThreeMaj(\c) = \c - e_i + e_j} = \frac{k_ic_i}{n}\cdot \frac{k_jf_j(\c)}{n}$, where $k_i$ denotes the number of opinions of size equal to $c_i$.
\begin{lemma}[Lemma 3.2 of \cite{ignore_or_comply}] \label{lem:3maj f}
  Let $f$ be the function defined in \cref{eq:3MajFunc}.
  Then, for any configuration $\c$, we have $f(\c) \succeq \c$.
\end{lemma}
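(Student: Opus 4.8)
The plan is to reduce \cref{lem:3maj f} to an elementary inequality about partial sums of a non-increasing sequence, exploiting that $f$ merely redistributes population mass from less-popular to more-popular opinions. Normalize by writing $p_i = c_i/n$, so $\p=(p_1,\dots,p_k)$ is non-increasing with $\sum_{i\in[k]}p_i=1$, and set $S=\norm{\p}^2=\sum_{i\in[k]}p_i^2$; note $S\le\sum_i p_i=1$. First I would record the basic structure of $f$: each coordinate satisfies $f_i(\c)=np_i(1+p_i-S)\ge np_i(1-S)\ge 0$; the total is preserved, $\sum_{i}f_i(\c)=n\rbra*{\sum_i p_i+\sum_i p_i^2-S\sum_i p_i}=n$; and the scalar map $p\mapsto np(1+p-S)$ has derivative $n(1+2p-S)\ge n(1-S)\ge 0$ on $[0,1]$, hence is non-decreasing, so $f_1(\c)\ge\dots\ge f_k(\c)\ge 0$. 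Thus $f(\c)$ is a non-negative vector that is already sorted in descending order with coordinate sum $n$; in particular $\sum_{i=1}^m f_i(\c)$ is the sum of its $m$ largest entries, and (recalling \cref{def:majorization}) the claim $f(\c)\succeq\c$ is equivalent to $\sum_{i=1}^m f_i(\c)\ge\c^{\le m}$ for every $m\in[k]$, the case $m=k$ being an equality.

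Next I would compute, for a fixed $m$,
\[
  \sum_{i=1}^m f_i(\c)-\sum_{i=1}^m c_i
  =\sum_{i=1}^m\rbra*{np_i(1+p_i-S)-np_i}
  =n\rbra*{\sum_{i=1}^m p_i^2-S\sum_{i=1}^m p_i},
\]
so everything comes down to the scalar inequality $\sum_{i=1}^m p_i^2\ge S\sum_{i=1}^m p_i$ for all $m\in[k]$. Since $\sum_i p_i=1$ forces $p_1>0$, the partial sums $P_m=\sum_{i=1}^m p_i$ are positive, and the inequality says that the partial weighted average $q_m=\rbra*{\sum_{i=1}^m p_i^2}/P_m$ satisfies $q_m\ge q_k=S$. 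I would prove that $(q_m)_{m=1}^k$ is non-increasing via the identity
\[
  q_m-q_{m+1}
  =\frac{p_{m+1}\rbra*{\sum_{i=1}^m p_i^2-p_{m+1}\sum_{i=1}^m p_i}}{P_m(P_m+p_{m+1})}
  =\frac{p_{m+1}\sum_{i=1}^m p_i(p_i-p_{m+1})}{P_m(P_m+p_{m+1})}\ge 0,
\]
which is non-negative because $p_i\ge p_{m+1}$ for all $i\le m$. Chaining $q_m\ge q_{m+1}\ge\dots\ge q_k=S$ then completes the proof.

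I do not expect a real obstacle: the statement is classical (quoted as Lemma~3.2 of \cite{ignore_or_comply}), and the only points needing a little care are (i) checking that $f(\c)$ is genuinely in descending order, so that the partial-sum characterization of majorization applies to it directly --- this rests on $S\le 1$, which keeps $p\mapsto p(1+p-S)$ monotone --- and (ii) the scalar inequality $\sum_{i\le m}p_i^2\ge S\sum_{i\le m}p_i$. For (ii) one could equivalently observe that $\sum_{i\le m}p_i(p_i-S)$ is a partial sum of the sequence $\rbra*{p_i(p_i-S)}_{i\in[k]}$, whose terms change sign from nonnegative to nonpositive as $i$ increases (because $p_i$ is non-increasing) and whose full sum is $0$; such a partial sum is always nonnegative. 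Either phrasing makes the lemma immediate.
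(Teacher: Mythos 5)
Your proof is correct. Note that the paper never proves \cref{lem:3maj f} itself; it imports the statement as Lemma~3.2 of the cited work, so your argument is a self-contained replacement rather than a variant of an in-paper proof. Your reduction is the natural one and every step checks out: the coordinate sum of $f(\c)$ equals $n$; $f(\c)$ is already in descending order because $p\mapsto np(1+p-S)$ is non-decreasing on $[0,1]$ when $S=\norm{\p}^2\le 1$; the partial-sum gap equals $n\rbra*{\sum_{i\le m}p_i^2-S\sum_{i\le m}p_i}$; and the monotonicity of the partial weighted averages $q_m$ (your identity for $q_m-q_{m+1}$ is algebraically correct, with positive denominators since $p_1>0$) gives $q_m\ge q_k=S$, which is exactly the required inequality. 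The alternative phrasing — the terms $p_i(p_i-S)$ are nonnegative then nonpositive as $i$ increases and sum to zero, hence all partial sums are nonnegative — is also valid. One cosmetic remark: $f(\c)$ is generally not an integer vector, so strictly speaking it does not lie in $\C_{n,k}$ as the paper defines it (the paper's own aside that $f(\c)\in\C$ is equally loose); since majorization here is only a statement about partial sums of descendingly ordered vectors, this does not affect your argument, but if you want to be scrupulous you can phrase the conclusion directly as the partial-sum inequalities, which is all that is used when the gaining opinion is sampled from $f(\c)/n$.
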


\paragraph*{Stochastic majorization.}
  Let $X,Y$ be $\C$-valued random variables.
  We say that $Y$ \emph{stochastically majorizes} $X$, denoted by $X \lest Y$, if
    there exists a coupling $D$ of $X$ and $Y$ such that $X \preceq Y$ with probability $1$ over $D$.
  It is known that this relation is transitive.
  \begin{lemma} \label{lem:lest transitive}
    If $X\lest Y$ and $Y\lest Z$, then $X \lest Z$.
  \end{lemma}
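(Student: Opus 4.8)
The plan is to glue the two given couplings along their common middle marginal. Let $D_1$ be a coupling of $(X,Y)$ witnessing $X\preceq Y$ almost surely, and let $D_2$ be a coupling of $(Y,Z)$ witnessing $Y\preceq Z$ almost surely. Crucially, both of these refer to the \emph{same} random variable $Y$, i.e.\ to a fixed law on the finite set $\C=\C_{n,k}$. First I would, for each $y\in\C$ with $\Pr[Y=y]>0$, form the conditional law $\mu_y$ of $X$ under $D_1$ given $\{Y=y\}$ and the conditional law $\nu_y$ of $Z$ under $D_2$ given $\{Y=y\}$; since $\C$ is finite there is no measure-theoretic subtlety here.

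Next I would define a joint distribution $D$ of a triple $(X',Y',Z')$ on $\C^3$ by sampling $Y'$ from the law of $Y$, and then, conditionally on $Y'=y$, sampling $X'\sim\mu_y$ and $Z'\sim\nu_y$ independently. By construction the pair $(X',Y')$ has law $D_1$, so $X'\preceq Y'$ almost surely, and $(Y',Z')$ has law $D_2$, so $Y'\preceq Z'$ almost surely. Because the deterministic relation $\preceq$ on $\C$ is transitive — this is immediate from \cref{def:majorization} (chain the partial-sum inequalities $\c^{\le i}\ge\ct^{\le i}$), or alternatively from \cref{lem:doubly stochastic} since the product of two doubly stochastic matrices is again doubly stochastic — we get $X'\preceq Z'$ almost surely under $D$. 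Projecting $D$ onto the $(X',Z')$ coordinates gives a coupling of $X$ and $Z$ under which $X\preceq Z$ holds with probability $1$, which is exactly $X\lest Z$.

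There is essentially no obstacle: the only thing to verify is that $D_1$ and $D_2$ genuinely share the $Y$-marginal, which holds by definition of $\lest$, and that the gluing preserves the individual marginals of $X$ and $Z$, which is clear from the construction. If one prefers to avoid writing out the conditioning, the same conclusion follows by citing the standard ``gluing lemma'' for couplings.
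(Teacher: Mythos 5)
Your proposal is correct and is essentially the paper's own argument: both glue the couplings $D_{XY}$ and $D_{YZ}$ along the common $Y$-marginal, use almost-sure $X\preceq Y$ and $Y\preceq Z$ together with transitivity of $\preceq$, and project onto $(X,Z)$. Your write-up is just a more explicit version (spelling out the conditional laws and the transitivity of the deterministic majorization order), so nothing further is needed.
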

  For completeness, we include a proof here.
  \begin{proof}
    Let $D_{XY}$ and $D_{YZ}$ be the couplings that certify $X\lest Y$ and $Y\lest Z$, respectively.
    Consider the following coupling of $X$ and $Z$:
      Sample $y\sim Y$ and then $(x,y) \sim D_{XY}|_{Y=y}$, and $(y,z) \sim D_{YZ}|_{Y=y}$. Output $(x,z)$.
      Since $\Pr_{(x,y)\sim D_{XY}}\sbra*{x \preceq y} = 1$ and $\Pr_{(y,z)\sim D_{YZ}}\sbra*{y \preceq z}=1$, we have $x\preceq z$.
    This proves the claim.
  \end{proof}

  To prove \cref{lem:coupling}, we present a one-step coupling of $\Pull$ and $\ThreeMaj$ that preserves the majorization order.
  \begin{lemma} \label{lem:pull and 3-majority}
    For any configurations $\c\succeq \ct$,
    we have $\Pull(\ct) \lest \ThreeMaj(\c)$.
  \end{lemma}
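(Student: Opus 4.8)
The plan is to reduce to two one-step comparisons and chain them by transitivity of stochastic majorization (\cref{lem:lest transitive}): we will show (a) $\Pull(\c) \lest \ThreeMaj(\c)$ for every configuration $\c$, and (b) $\Pull(\ct) \lest \Pull(\c)$ whenever $\c \succeq \ct$. Then $\Pull(\ct) \lest \Pull(\c) \lest \ThreeMaj(\c)$ is exactly the claim.

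For (a), both $\Pull(\c)$ and $\ThreeMaj(\c)$ delete a vertex whose opinion is distributed as $\p_\c$, so we couple the deleted opinion $i$ to be the same on both sides. The processes then differ only in the opinion of the re-inserted vertex: $\Pull$ uses $j\sim\p_\c=\c/n$ while $\ThreeMaj$ uses $j'\sim f(\c)/n$. Since $f(\c)\succeq\c$ (\cref{lem:3maj f}), we have $f(\c)^{\le m}\ge\c^{\le m}$ for all $m$, so a quantile coupling of $j$ and $j'$ yields $j'\le j$ almost surely and hence $c_{j'}\ge c_j$, i.e.\ $j'$ is a weakly more popular opinion. Writing $\mathbf{d}=\c-e_i$, the two outputs are $\mathbf{d}+e_j$ and $\mathbf{d}+e_{j'}$ (after rearranging), and since this moves the inserted unit from $j$ to the weakly more popular opinion $j'$, \cref{lem:Muirhead} — applied as $\mathbf{d}+e_{j'}\succeq(\mathbf{d}+e_{j'})-e_{j'}+e_{j}=\mathbf{d}+e_{j}$ — gives $\ThreeMaj(\c)\succeq\Pull(\c)$ under this coupling, with \cref{rem:opinion size is important} handling the tie-breaking in the choice of opinions.

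The substance is (b). The idea is to factor one step of $\Pull$ into an independent \emph{deletion} (remove a vertex whose opinion is $\p$-random) followed by an \emph{addition} (insert a vertex whose opinion is $\p$-random), and show that each operation preserves the majorization order. The tool is the \emph{block structure} of the pair $\c\succeq\ct$: let $0=b_0<b_1<\dots<b_r=k$ be exactly the indices at which the partial sums agree, $\c^{\le b_s}=\ct^{\le b_s}$, and split $[k]$ into the blocks $\{b_{s-1}+1,\dots,b_s\}$; on each block the restricted sub-configurations of $\c$ and $\ct$ have equal total and strictly positive partial-sum gaps in the interior. Because the $\p_\c$-mass and the $\p_\ct$-mass of the $s$-th block coincide (both equal $(\c^{\le b_s}-\c^{\le b_{s-1}})/n$), we can couple "which block contains the deleted opinion" and, independently, "which block contains the added opinion" to agree on the two sides. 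Within the affected block(s) we construct a coupling of the deleted/added opinion so that the per-block majorization is maintained — for a deletion this is exactly where the strict interior gaps are used, while an addition preserves the order on any majorization-ordered pair — and then, since each of the at most two affected blocks of $\c$ and $\ct$ still have matched totals afterwards, \cref{lem:concatenation} glues the blocks back into $\Pull(\c)\succeq\Pull(\ct)$.

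The step I expect to be the main obstacle is the within-block coupling together with the bookkeeping of rearrangements: decrementing or incrementing a single coordinate of a sorted vector can reorder it among tied coordinates, and one must check that the per-block comparisons survive this re-sorting. I plan to handle this by staying with the doubly-stochastic characterization of majorization (\cref{lem:doubly stochastic}), which is insensitive to permutations, by using \cref{lem:concatenation} (whose statement already absorbs the rearrangement) to assemble the blocks, and by settling the small per-block base cases through explicit doubly stochastic matrices as in the proofs of \cref{lem:Muirhead,lem:concatenation}. The remaining verifications — that the deletion and addition couplings preserve the block-wise partial-sum inequalities — are elementary once the block structure is in place.
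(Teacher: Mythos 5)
Your proposal is correct and follows essentially the same route as the paper: the same chaining $\Pull(\ct)\lest\Pull(\c)\lest\ThreeMaj(\c)$ via \cref{lem:lest transitive}, with part (a) proved by the quantile coupling of the gaining opinion using $f(\c)\succeq\c$ (\cref{lem:3maj f}) and \cref{lem:Muirhead}, and part (b) by factoring $\Pull$ into deletion and addition and showing each preserves majorization blockwise, reassembling with \cref{lem:concatenation}. The only cosmetic difference is your aside that the addition step needs no strict interior gaps, whereas the paper's addition argument also runs through the single-block inequality; since you carry out the coupling per block anyway, this does not affect correctness.
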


  \subsection{Step 1. Compare Voter and 3-Majority}
  We first show that $\ThreeMaj$ majorizes $\Pull$ if they start from the same initial configuration.
  \begin{lemma} \label{lem:pull and 3maj}
    For any configuration $\c\in \C$,
    we have $\Pull(\c)\lest \ThreeMaj(\c)$.
  \end{lemma}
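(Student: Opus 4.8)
The plan is to build an explicit coupling between $\Pull(\c)$ and $\ThreeMaj(\c)$ so that the 3-Majority output always majorizes the Voter output. Recall that both one-step updates have the form $\c \mapsto \c - e_i + e_j$: the loosing opinion $i$ is drawn from $\p_\c$ in both processes, while the gaining opinion $j$ is drawn from $\p_\c$ in $\Pull$ but from $f(\c)/n$ in $\ThreeMaj$, where $f$ is as in \eqref{eq:3MajFunc}. The first step is to couple the loosing opinions trivially: draw a single index $i\sim\p_\c$ and use it for both processes. After this, both configurations have lost one vertex from the same opinion $i$, and it remains to couple the gaining opinions so that $\big(\c - e_i + e_{j'}\big) \succeq \big(\c - e_i + e_j\big)$, where $j' \sim f(\c)/n$ is the 3-Majority gain and $j \sim \p_\c$ is the Voter gain.

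The key observation is that $f(\c) \succeq \c$ by \cref{lem:3maj f}, so $f(\c)/n$ stochastically dominates $\p_\c$ in the majorization sense; more concretely, I want to couple $j'$ and $j$ so that the gain always lands on an opinion that is at least as large (in the configuration $\c-e_i$, or equivalently in $\c$ up to the already-fixed change at $i$). The cleanest route is to invoke \cref{lem:doubly stochastic}: since $\c \succeq$ (nothing weaker than) $f(\c)$'s reverse — precisely, since $f(\c)\succeq\c$ means $\c = f(\c)\,Q$ for some doubly stochastic $Q$ — one reads $Q$ as a coupling of the two distributions $f(\c)/n$ and $\p_\c$ on $[k]$, and this coupling has the property that mass is only ever moved from larger-index (smaller) opinions to smaller-index (larger) opinions. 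Hence, conditioned on the coupled pair $(j',j)$, the opinion $j'$ is never smaller (as a size in $\c$) than $j$. Then adding $e_{j'}$ versus $e_{j}$ to the common configuration $\c - e_i$ preserves majorization: repeated application of \cref{lem:Muirhead} (moving a unit from the smaller opinion $j$ to the larger opinion $j'$) gives $\c - e_i + e_{j'} \succeq \c - e_i + e_{j}$. Finally, rearranging both sides in descending order does not affect the majorization relation, so $\ThreeMaj(\c) \succeq \Pull(\c)$ under this coupling, which is exactly $\Pull(\c) \lest \ThreeMaj(\c)$.

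The main obstacle I anticipate is handling the bookkeeping around \cref{rem:opinion size is important}: the distributions $\p_\c$ and $f(\c)/n$ are naturally defined on sizes rather than on opinion indices, and several opinions may share a size, so one must be careful that the doubly-stochastic coupling $Q$ can be taken to act "monotonically" on sizes and that the induced pairing $(j',j)$ genuinely satisfies $c_{j'} \ge c_{j}$ with probability one. A subtle point is that after removing $e_i$ the configuration is $\c - e_i$, not $\c$, so the comparison "$j'$ at least as large as $j$" should be read in $\c - e_i$; but since $e_i$ is subtracted identically from both sides, it suffices to establish the majorization $\c - e_i + e_{j'} \succeq \c - e_i + e_j$ directly via \cref{lem:Muirhead} applied to the configuration $\c - e_i$ (with the roles of the two opinions determined by which is larger in $\c - e_i$), and one checks that the ordering of sizes relevant to \cref{lem:Muirhead} is consistent with the ordering coming from $Q$. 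I would isolate this as a short sublemma: for any configuration $\d$ and any coupling of $\p_{\d'}$-type distributions where $\d' \succeq \d$, the gain opinions can be coupled so that $\d + e_{\text{gain}'} \succeq \d + e_{\text{gain}}$. Once that is in place, the rest is immediate from transitivity of $\lest$ (\cref{lem:lest transitive}) is not even needed here — a single coupling suffices.
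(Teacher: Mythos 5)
Your overall skeleton — couple the losing opinion identically, then couple the gaining opinions so that the 3-Majority gain always lands on a weakly larger opinion, and finish with \cref{lem:Muirhead} — is exactly the paper's strategy. The gap is the step where you claim the monotone coupling of the gains comes for free from \cref{lem:doubly stochastic}. That lemma only asserts the existence of \emph{some} doubly stochastic $Q$ with $\c = f(\c)\,Q$, and the coupling it induces (mass $\tfrac{f_a(\c)}{n}Q_{ab}$ on the pair $(j',j)=(a,b)$) need not satisfy $c_{j'}\ge c_j$ almost surely; nothing forces the witness $Q$ to be ``order preserving'', and the sentence in which you assert that mass only moves toward larger opinions is precisely the point requiring proof. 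Concretely, at the level of generality your argument operates (using only that $f(\c)\succeq\c$): take $\c=(3,2,1)$ and the majorizing vector $(4,1,1)$, with $n=6$. The doubly stochastic matrix $Q$ with rows $(2/3,\,1/3,\,0)$, $(1/3,\,0,\,2/3)$, $(0,\,2/3,\,1/3)$ satisfies $(4,1,1)\,Q=(3,2,1)$, yet its induced coupling puts mass $\tfrac{1}{6}\cdot\tfrac{1}{3}>0$ on the pair $(j',j)=(2,1)$, where the 3-Majority gain has size $2$ in $\c$ and the Voter gain has size $3$. On that event, with losing opinion $i=3$, the outputs are $(3,3,0)$ for $\ThreeMaj$ and $(4,2,0)$ for $\Pull$, and $(3,3,0)\not\succeq(4,2,0)$ (indeed the majorization goes the other way). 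So the desired pointwise relation fails for this perfectly valid witness $Q$, and the coupling cannot simply be ``read off'' \cref{lem:doubly stochastic}.

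The missing ingredient is an explicit monotone (quantile) coupling of the two gain distributions, which is what the paper constructs: draw a single uniform $X\in[0,1]$, let $j$ be the index whose interval of normalized partial sums of $\c$ contains $X$, and $j'$ the index whose interval of normalized partial sums of $f(\c)$ contains $X$ (with the convention $j'\coloneqq j$ when $c_{j'}=c_j$, which also disposes of your tie worries). The marginals are correct because only the size of the selected opinion matters (\cref{rem:opinion size is important}), and the dominance of partial sums $f(\c)\succeq\c$ (\cref{lem:3maj f}) forces $j'\le j$, hence $c_{j'}\ge c_j$, after which \cref{lem:Muirhead} applied to $\c-e_i+e_{j'}$ versus $\c-e_i+e_{j}$ gives the claim. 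If you replace your $Q$-based step by this construction (or prove separately that a monotone witness can always be chosen, which is an extra argument you have not supplied), the rest of your plan goes through and coincides with the paper's proof.
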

  \begin{proof}
  Consider the following coupling:
  \begin{enumerate}
    \item Draw $i\sim \p_\c$ (which will be the losing class) and $X\sim[0,1]$ uniformly at random.
    \item Let $j\in[k]$ be the index such that $\frac{1}{n}\sum_{\ell=1}^{j-1} c_\ell < X \le \frac{1}{n}\sum_{\ell=1}^j c_\ell$.
    \item Similarly, let $j'\in[k]$ be the index such that $\frac{1}{n}\sum_{\ell=1}^{j'-1} f(c_\ell) < X \le \frac{1}{n}\sum_{\ell=1}^{j'} f(c_\ell)$.\\
    If $c_j = c_{j'}$, redefine $j' \defeq j$.
    \item Output $\c_{p}\defeq \c - e_i + e_j$ as $\Pull(\c)$ and $\c_{maj} \defeq \c - e_i + e_{j'}$ as $\ThreeMaj(\c)$.
  \end{enumerate}
 Note that the marginal distribution of $\c_p$ (and $\c_{maj}$) is $\Pull(\c)$ (resp.\ $\ThreeMaj(\c)$) since the distributions of the losing opinion and gaining opinion are identical to that of $\Pull(\c)$ (resp.\ $\ThreeMaj(\c)$); see \cref{rem:opinion size is important}.

  We prove the correctness, that is, $\c_p \preceq \c_{maj}$ with probability $1$.
  Since $f(\c) \succeq \c$ (from \cref{lem:3maj f}), we have
  \[
    \frac{1}{n}\sum_{\ell=1}^j c_\ell \ge X > \frac{1}{n}\sum_{\ell=1}^{j'-1} f(c_\ell) \ge \frac{1}{n}\sum_{\ell=1}^{j'-1} c_\ell
  \]
  and $j\ge j'$; thus $c_{j} \le c_{j'}$.
  If $c_{j'}=c_j$, then $\c_p = \c_{maj}$ since $j=j'$.
  If $c_{j'}>c_j$, then $\c_p$ can be obtained from $\c_{maj}$ by decreasing an opinion of size $\c_{j'}$ by losing $j'$ and gaining $j$, i.e., $\c_p = \c_{maj} - e_{j'} + e_j$.
  Therefore, from \cref{lem:Muirhead}, we have $\c_{maj} \succeq \c_p$.
\end{proof}

  \subsection{Step 2. Decomposition of \texorpdfstring{$\Pull$}{Voter}}
  We prove that
  $\Pull$ preserves the majorization order.
  \begin{lemma} \label{lem:pull vs pull}
    For any configurations $\c\succeq \ct$, we have $\Pull(\c) \gest \Pull(\ct)$.
  \end{lemma}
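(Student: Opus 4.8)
The goal is to show $\Pull(\c) \gest \Pull(\ct)$ when $\c \succeq \ct$. Following the proof overview, I would decompose one step of $\Pull$ into a deletion step followed by an addition step, and show each step preserves stochastic majorization. Precisely: $\Pull(\c)$ is obtained by picking a losing opinion $i \sim \p_\c$, forming the intermediate (possibly sub-normalized) configuration $\c - e_i$, then picking a gaining opinion $j \sim \p_\c$ and forming $\c - e_i + e_j$. (Note the gaining opinion is drawn according to the \emph{original} $\p_\c$, which is the key structural fact.) So it suffices to prove two sub-lemmas: (a) a deletion lemma, stating that if $\c \succeq \ct$ then the random deletion $\c - e_i$, $i \sim \p_\c$, stochastically majorizes the random deletion $\ct - e_{\it}$, $\it \sim \p_{\ct}$; and (b) an addition lemma, stating the analogous statement for the random addition $\c + e_j$, $j \sim \p_\c$. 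Composing the two couplings (delete, then add) via transitivity of $\lest$ (\cref{lem:lest transitive}) gives the lemma, since the intermediate configurations again satisfy the majorization relation.

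\textbf{Reducing to a block structure.} To build the couplings in (a) and (b), I would exploit the \emph{block decomposition} mentioned in the overview. Group the coordinates of $\c$ and $\ct$ into maximal contiguous blocks on which the partial-sum inequalities $\c^{\le m} \ge \ct^{\le m}$ are tight versus slack: say a \emph{tight point} is an index $m$ with $\c^{\le m} = \ct^{\le m}$, and the blocks are the stretches between consecutive tight points. On each block the configurations $\c$ and $\ct$ restricted to that block (after subtracting the common partial-sum offset) are themselves configurations of equal total mass with $\c|_{\text{block}} \succeq \ct|_{\text{block}}$, and by \cref{lem:concatenation} it is enough to couple the deletion/addition \emph{within each block}. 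Since a single deletion (resp.\ addition) changes exactly one coordinate, only one block is affected, and we need only design the coupling block-by-block conditioned on which block the action falls in. Within a block I would further reduce, using \cref{lem:Muirhead} and \cref{lem:doubly stochastic}, to the essentially two-coordinate situation: on a slack block the "extremal" behavior is that $\c$ looks like $(\dots, a+1, \dots, b-1, \dots)$ relative to $\ct$'s $(\dots, a, \dots, b, \dots)$, and one checks directly that deleting from (or adding to) these preserves $\succeq$ under the natural coupling that matches up the action by the \emph{size class} of the opinion touched (cf.\ \cref{rem:opinion size is important}, which says only sizes matter).

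\textbf{The couplings themselves.} For the deletion lemma, the natural coupling draws a single uniform $X \in [0,1]$ and lets the losing opinion in $\c$ be the index $i$ with $\c^{\le i-1}/n < X \le \c^{\le i}/n$ and in $\ct$ the index $\it$ with $\ct^{\le \it - 1}/n < X \le \ct^{\le \it}/n$; because $\c^{\le m} \ge \ct^{\le m}$ for all $m$ we get $i \le \it$, and one verifies $\c - e_i \succeq \ct - e_{\it}$ by checking partial sums (each partial sum of $\c$ drops by $1$ at indices $\ge i$, each partial sum of $\ct$ drops by $1$ at indices $\ge \it \ge i$, so the inequality is preserved except possibly where it was tight — and there a short case analysis using $c_i \ge \tc_{\it}$, or a fix-up via \cref{lem:Muirhead}, closes it). The addition lemma is symmetric, coupling the gaining opinion the same way via a shared uniform. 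The main obstacle is the bookkeeping at the boundary cases — when $X$ lands so that the action in $\c$ hits the last coordinate of a block while in $\ct$ it spills into the next block, or when ties in coordinate values force a re-indexing so that the rearranged output needs an extra application of \cref{lem:Muirhead} to restore descending order while keeping $\succeq$. Getting these edge cases right (and confirming the marginal of each coupled component is genuinely $\Pull(\c)$ resp.\ $\Pull(\ct)$, which again rests on \cref{rem:opinion size is important}) is where the real work lies; the high-level composition via transitivity is routine once (a) and (b) are in hand.
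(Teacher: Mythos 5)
Your overall delete-then-add plan is the same as the paper's, but the composition step has a genuine gap, and it comes from exactly the point you flag as ``the key structural fact'': in your decomposition the gaining opinion is drawn from the \emph{original} $\p_{\c}$. Your addition sub-lemma (b) couples additions sampled from the configuration being modified, so it can be applied to the intermediate pair $(\c-e_i,\ct-e_{\it})$ only if the gaining vertex is sampled from those intermediate configurations, i.e.\ uniformly from the $n-1$ surviving vertices. With the gaining draw taken from the original distributions, (a), (b) and \cref{lem:lest transitive} do not combine to give the lemma, and the concrete coupling you sketch (one shared uniform for the loser, an independent shared uniform for the gainer, both read off the original configurations) fails pointwise. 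Take $\c=(3,2,2,1)\succeq\ct=(2,2,2,2)$ with $n=8$, and suppose the losing uniform selects vertex $1$ and the gaining uniform selects vertex $8$. In $\c$ this removes one vertex from the size-$3$ opinion and adds one to the size-$1$ opinion, giving $(2,2,2,2)$; in $\ct$ it removes from one size-$2$ opinion and adds to a different size-$2$ opinion, giving $(3,2,2,1)$. But $(2,2,2,2)\not\succeq(3,2,2,1)$, and since the sorted outcome depends only on the sizes of the opinions touched (\cref{rem:opinion size is important}), no within-size-class bookkeeping or \cref{lem:Muirhead} fix-up can repair this event, which has probability at least $\frac{3}{8}\cdot\frac{1}{8}$. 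So the product of your two shared-uniform couplings is a valid coupling of $\Pull(\c)$ and $\Pull(\ct)$ whose outputs violate majorization with positive probability; the argument as written does not establish the lemma.

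The paper avoids this by first splitting off the event $u=v$ (probability $1/n$, on which $\Pull(\c)=\c$ and $\Pull(\ct)=\ct$) and, conditioned on $u\neq v$, writing $\Pull'(\c)=\Add(\Del(\c))$, where $\Add$ samples the gaining vertex uniformly from the remaining $n-1$ vertices --- that is, from the \emph{intermediate} configuration. Then \cref{lem:del preserve} and \cref{lem:add preserve} compose sequentially: first couple the deletions, then, given the intermediate majorization, couple the additions; the original distributions never reappear in the second step, and the $u=v$ case is handled trivially by mixing with probability $1/n$. Your blockwise single-block/multi-block couplings for deletion and addition are in the spirit of the paper's \cref{lem:del preserve,lem:add preserve} (modulo the boundary bookkeeping you defer), but to make the proof correct you must either switch the second draw to the without-replacement (intermediate) distribution and condition on $u\neq v$ as the paper does, or supply a genuinely new coupling of the gaining draws from $\p_{\c}$ and $\p_{\ct}$ that is adapted to the (random) intermediate pair --- which your two sub-lemmas do not provide.
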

  To prove \cref{lem:pull vs pull},
  recall that $\Pull(\c)$ can be obtained by first choosing $u,v\sim V$ independently uniformly at random and then replacing the opinion of $u$ with the opinion of $v$.
  If $u=v$, then $\c = \Pull(\c)$.
  Conditioned on $u\neq v$,
  $\Pull(\c)$ can be seen as a randomized algorithm that consists of the following two parts.
  \begin{enumerate}
  \item Choose $u\sim V$ uniformly at random and delete $u$.
    Let $\c'=\Del(\c) \defeq \c-e_i$ be the configuration after the removal, where $i\in[k]$ is the opinion of $u$.
  \item Choose $v\sim V\setminus\{u\}$ and let $j$ be its opinion.
  Add another vertex with opinion $j$.
  Let $\c''=\Add(\c') \defeq \c'+e_j$.
  \end{enumerate}
  Let $\Pull'(\c)$ be the random variable $\Pull(\c)$ conditioned on $u \neq v$.
  Then, we have $\Pull'(\c) = \Add(\Del(\c))$.

  The most technical part is to show that both $\Del$ and $\Add$ preserve the majorization order.
  \begin{lemma} \label{lem:del preserve}
    For any configurations $\c\succeq \ct$, we have $\Del(\c)\gest \Del(\ct)$.
  \end{lemma}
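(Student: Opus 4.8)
The plan is to exploit \cref{lem:doubly stochastic}: since $\c \succeq \ct$, there is a doubly stochastic matrix $P$ with $\ct = \c P$, and I want to build a coupling of the deleted vertex in $\c$ and the deleted vertex in $\ct$ so that the resulting configurations remain majorization-comparable. The cleanest route is to think of $\Del(\c)$ not as ``delete a uniformly random vertex'' but as ``subtract $e_i$ where $i\sim\p_\c$'', and likewise $\Del(\ct)$ as ``subtract $e_{\it}$ where $\it \sim \p_{\ct}$''. Because $P$ is doubly stochastic, one can couple the opinion $i$ chosen in $\c$ with the opinion $\it$ chosen in $\ct$ by: pick a uniformly random vertex $u\in[n]$ (under the block-labelling of $\c$), let $i$ be its $\c$-opinion, and let $\it$ be the $\ct$-opinion of the ``image'' of $u$ under $P$ (sampling a column with probability $P_{u,\cdot}$). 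Since $P$ is doubly stochastic this gives $\it\sim\p_{\ct}$ as required, and it biases $\it$ toward the larger opinions of $\ct$ when $i$ is a larger opinion of $\c$.

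The key step is then a purely combinatorial claim: if $\c \succeq \ct$ and $c_i \ge \ct_{\it}$ under this coupling — more precisely, if the loss is applied to an opinion of size $c_i$ in $\c$ and to an opinion of size $\tc_{\it}$ in $\ct$ — then $\c - e_i \succeq \ct - e_{\it}$. I expect this to follow from a direct check of the partial sums $\sum_{\ell\le m}(\c-e_i)_\ell$ versus $\sum_{\ell\le m}(\ct-e_{\it})_\ell$ after re-sorting, combined with \cref{lem:Muirhead} to handle the re-sorting (deleting one unit from an opinion of size $s$ and re-sorting amounts to moving that opinion down past all opinions currently of size $s-1$, which is a transposition-type move covered by \cref{lem:Muirhead} or by a short doubly-stochastic argument). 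The point is that decreasing a \emph{larger} coordinate of the majorizing vector never breaks majorization, provided the coordinate decreased is at least as large as the one decreased in the minorized vector. The coupling above is designed exactly so that the size of the deleted opinion in $\c$ stochastically dominates the size of the deleted opinion in $\ct$, which is what makes this invariant hold pointwise.

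Concretely the steps are: (i) invoke \cref{lem:doubly stochastic} to get $P$ with $\ct=\c P$; (ii) define the coupling of $(i,\it)$ via a uniform vertex $u$ and the row distribution $P_{u,\cdot}$, and verify the marginals are $\p_\c$ and $\p_{\ct}$ using double stochasticity; (iii) argue that this coupling can be arranged so that the deleted opinion in $\c$ has size $\ge$ the deleted opinion in $\ct$ — this uses that $\c \succeq \ct$ together with the ``mass transport'' structure of $P$; (iv) prove the combinatorial lemma that $\c-e_i \succeq \ct-e_{\it}$ whenever $\c \succeq \ct$ and $c_i \ge \tc_{\it}$, via partial sums and \cref{lem:Muirhead}. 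The main obstacle is step (iii): making precise why the doubly-stochastic transport lets us couple so that we always delete from a no-smaller opinion in $\c$. I would handle this by a greedy/layer-cake argument on the sorted coordinates — match the top $c_1$ vertices of $\c$ to the top-most mass of $\ct$, and so on — or, alternatively, bypass the transport picture entirely and prove \cref{lem:del preserve} by induction using \cref{lem:Muirhead}: it suffices to treat the case where $\c$ and $\ct$ differ by a single elementary transfer $e_a - e_b$ with $c_a > c_b$, and in that reduced case one checks by hand that the deletion coupling (delete the same-sized opinion, or the appropriate neighboring-sized one) preserves $\succeq$.
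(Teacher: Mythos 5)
Your main route (steps (i)--(iv)) breaks at exactly the two places you flag as delicate, and it breaks for real. The combinatorial claim (iv) is false: take $\c=\ct=(2,1,1)$, delete one unit from the size-$2$ opinion of $\c$ and one unit from a size-$1$ opinion of $\ct$; then $\c\succeq\ct$ and the deleted size in $\c$ ($=2$) is at least the deleted size in $\ct$ ($=1$), yet the results are $(1,1,1)$ and $(2,1,0)$, and $(1,1,1)\not\succeq(2,1,0)$ since the first partial sums compare as $1<2$. Worse, the coupling constraint in (iii) cannot be enforced at all in general: for $\c=(3,1)\succeq\ct=(2,2)$, the deleted opinion of $\c$ has size $1$ with probability $1/4$ while every opinion of $\ct$ has size $2$, so no coupling with the correct marginals can guarantee that the deleted opinion of $\c$ is no smaller than that of $\ct$ (and since the $\c$-size exceeds $2$ with probability $3/4$, the reverse pointwise comparison is equally unenforceable). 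There is also no general reason a doubly stochastic $P$ with $\ct=\c P$ "biases" the $\ct$-opinion toward large entries when the $\c$-opinion is large. So any argument hinging on a pointwise size comparison between the two deleted opinions cannot work.

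The paper sidesteps this entirely with the block decomposition of $M(\c,\ct)$: inside a minimal block the partial sums of $\c$ strictly exceed those of $\ct$, hence (being integers) exceed them by at least $1$, so deleting one unit anywhere in that block from $\c$ and one unit anywhere in the same block from $\ct$ preserves every partial-sum inequality \emph{regardless} of the relative sizes of the deleted opinions; the same-vertex coupling is only used to get the right marginals and to ensure both deletions land in the same block, and \cref{lem:concatenation} stitches the blocks back together. Your closing fallback --- reduce via \cref{lem:lest transitive} and the classical fact that integer majorization is generated by elementary transfers $\c\mapsto\c-e_a+e_b$ with $c_a>c_b$ to the single-transfer case, then couple by deleting the same opinion in both configurations except on an event of probability $1/n$ where one deletes $a$ in $\c$ and $b$ in $\ct$ --- can indeed be completed with a short case check using \cref{lem:Muirhead}, and would be a genuinely different proof from the paper's. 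But as written you neither specify that coupling nor carry out the verification, and the route you actually develop is the one that fails, so the proposal has a genuine gap.
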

  \begin{lemma} \label{lem:add preserve}
    For any configurations $\c\succeq \ct$, we have $\Add(\c)\gest \Add(\ct)$.
  \end{lemma}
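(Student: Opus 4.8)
The plan is to use the \emph{block structure} reduction. I will first reduce the statement to the special case of an \emph{indecomposable} pair $\c\succeq\ct$ --- one for which $\c^{\le i}>\ct^{\le i}$ for every $0<i<k$ --- and then observe that in this case the conclusion holds under \emph{any} coupling of the two $\Add$-steps, because the strict slack in the partial sums absorbs the effect of a single increment.

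\textbf{The indecomposable case.} For a vector $x$ let $m_i(x)$ denote the sum of its $i$ largest entries, so that $m_i(\c)=\c^{\le i}$ for a configuration $\c$ and $\c\succeq\ct$ is equivalent to $m_i(\c)\ge m_i(\ct)$ for all $i$ together with equality at $i=k$. The two elementary facts I will use are that adding $1$ to one coordinate cannot decrease any $m_i$ and cannot increase any $m_i$ by more than $1$; hence for \emph{every} $a,b\in[k]$ and every $0<i<k$,
\[
  m_i(\c+e_a)\ \ge\ \c^{\le i}\ \ge\ \ct^{\le i}+1\ \ge\ m_i(\ct+e_b),
\]
while at $i=k$ both sides equal $\sum_j c_j+1$. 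Thus $\Add(\c)\succeq\Add(\ct)$ with probability $1$ already under the independent coupling, which settles the indecomposable case.

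\textbf{The reduction.} Given an arbitrary pair $\c\succeq\ct$, let $0=i_0<i_1<\dots<i_p=k$ list all indices with $\c^{\le i}=\ct^{\le i}$, and for $1\le s\le p$ set $\c^{(s)}=(c_{i_{s-1}+1},\dots,c_{i_s})$ and likewise $\ct^{(s)}$. Subtracting the equal prefix sums at $i_{s-1}$ shows each pair $(\c^{(s)},\ct^{(s)})$ consists of two configurations of the same size $m_s$ with $\c^{(s)}\succeq\ct^{(s)}$, and maximality of the list $i_0,\dots,i_p$ makes each such pair indecomposable. I then build the coupling of $\Add(\c)$ and $\Add(\ct)$ in two stages: pick a block index $s$ with probability $m_s/m$, using the same draw on both sides (this is legitimate because $\sum_{j\in\text{block }s}c_j=m_s=\sum_{j\in\text{block }s}\tc_j$); conditioned on $s$, couple $\Add(\c^{(s)})$ with $\Add(\ct^{(s)})$ via the indecomposable case, and leave every other block untouched. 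The marginals are correct --- component $j$ in block $s$ is incremented with probability $(m_s/m)(c_j/m_s)=c_j/m$ --- and $\Add(\c)$ (resp.\ $\Add(\ct)$) is exactly the descending rearrangement of the concatenation of the blocks with $\c^{(s)}$ replaced by $\Add(\c^{(s)})$ (resp.\ $\ct^{(s)}$ by $\Add(\ct^{(s)})$). Since all $p$ pairs of pieces now satisfy the majorization relation almost surely, repeated application of \cref{lem:concatenation} gives $\Add(\c)\succeq\Add(\ct)$ with probability $1$.

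\textbf{Main obstacle.} The point one must be careful about --- and the reason a direct coupling on the original configurations does not work --- is that if the two increments fall on opposite sides of an index $i$ where $\c^{\le i}=\ct^{\le i}$, majorization can be destroyed at coordinate $i$. The block decomposition is designed precisely to place all such tight coordinates at block boundaries, and \cref{lem:concatenation} lets one reassemble the blocks (absorbing the global re-sorting) with no compatibility requirement between different blocks, so the choice made inside the active block becomes irrelevant. What remains are only routine verifications: that $\c^{(s)}\succeq\ct^{(s)}$ and indecomposability hold, that the block-selection probabilities agree on the two sides, and the two one-line monotonicity/Lipschitz bounds on $m_i(\cdot)$ above; the whole argument mirrors the proof of \cref{lem:del preserve}, with ``$-1$'' replaced by ``$+1$''.
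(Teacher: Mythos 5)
Your proposal is correct and follows essentially the same route as the paper's proof: decompose $M(\c,\ct)$ into minimal blocks, use the integer slack $\c^{\le i}\ge \ct^{\le i}+1$ inside a block together with the fact that a single increment changes partial sums by at most one, and reassemble via \cref{lem:concatenation} with the block chosen proportionally to its (common) total mass. The only difference is that within a block the paper couples the two additions through a common uniformly random vertex (mirroring the proof of \cref{lem:del preserve}), whereas you observe that the majorization holds deterministically for any pair of gaining opinions, so even the independent coupling works---a harmless simplification.
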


  \begin{proof}[Proof of \cref{lem:pull vs pull} using \cref{lem:del preserve,lem:add preserve}.]
    From \cref{lem:del preserve,lem:add preserve}, we have $\Pull'(\c) \gest \Pull'(\ct)$.
    To see this, we can couple $\Pull'(\c)$ and $\Pull'(\ct)$ by first simulating the coupling of \cref{lem:del preserve} to obtain $\c'=\Del(\c)$ and $\ct'=\Del(\ct)$ and then simulating the coupling of \cref{lem:add preserve} to obtain $\c''=\Add(\c')$ and $\ct''=\Add(\ct')$.
    Then, we have $\c'\succeq \ct'$ and thus $\c'' \succeq \ct''$.
    Note that the marginal distributions of $\c''$ and $\ct''$ are $\Pull'(\c)$ and $\Pull'(\ct)$, respectively.
    In other words, $\Pull'(\c) \gest \Pull'(\ct)$.

    Using this coupling,
    we can couple $\Pull(\c)$ and $\Pull(\ct)$ as follows:
    With probability $1/n$, output $\c$ and $\ct$ as $\Pull(\c)$ and $\Pull(\ct)$, respectively.
    With probability $1-1/n$, run the coupling and output $\c''$ and $\ct''$.
    Let $\d,\dt$ be the resulting configurations.
    Since $\Pull(\c)$ is obtained by outputting $\c$ with probability $1/n$ and $\Pull'(\c)$ with probability $1-1/n$, the marginal distribution of $\d$ is $\Pull(\c)$ (and so does $\Pull(\ct)$).
    Moreover, since $\Pull'(\c) \succeq \Pull'(\ct)$, we have $\d \succeq \dt$.
    This proves the claim.
  \end{proof}

\subsection{Step 3. Deletion and Addition Preserves Majorization}
In this subsection, we prove \cref{lem:del preserve}.
The proof of \cref{lem:add preserve} is symmetry; We defer the proof in \cref{sec:addition proof} for completeness.

\paragraph*{Block structure.}
To prove \cref{lem:add preserve,lem:del preserve}, we introduce the notion of \emph{block structure}.
Given configurations $\c \succeq \ct$, we can arrange them into a $2\times k$ matrix $M(\c,\ct)$ as
\begin{align*}
  M = M(\c,\ct) = \begin{bmatrix}
    c_1 & c_2 & \dots & c_k \\
    \widetilde{c}_1 & \widetilde{c}_2 & \dots & \widetilde{c}_k
  \end{bmatrix}.
\end{align*}
Let $(S_1,S_2,\ldots,S_\ell)$ be a partition of the columns of $M$ into  sub-matrices $S_i$. We call these $S_i$ {\em minimal blocks}, if each $S$ has the property that the sum of the first and second rows of $S$ are equal and $S$ contains no smaller sub-block with this property. For example let
\[
S_1=\begin{bmatrix}
c_1&c_2&\cdots&c_i\\
\tc_1&\tc_2&\cdots&\tc_i
\end{bmatrix}.
\]
Here $S_1$ is a block of length $i>1$,
where $\c^{\le 1} > \ct^{\le 1}, \c^{\le 2} > \ct^{\le 2},...,\c^{\le i-1}> \ct^{\le i-1}$ but $\c^{\le i}=\ct^{\le i}$ so that both rows add up to the same number.
That is to say $c_1> \tc_1$, $c_1+c_2> \tc_1+\tc_2$ etc., but $c_1+\cdots c_i=\tc_1+ \cdots \tc_i$.

As an example take $\c=(5,4,3,2,2,0)$ and $\ct=(4,4,4,2,1,1)$.
Using $|$ as a block delimiter, the block structure is
\[
\left|\begin{array}{ccc}
5&4&3\\
4&4&4
\end{array}
\right|
\left|\begin{array}{c}
2\\2
\end{array}
\right|
\left|\begin{array}{cc}
2&0\\1&1
\end{array}
\right|.
\]
The partial sum structure of the first block is $(5>4, 9>8, 12=12)$, and the last is $(2>1,2=2)$.

An important observation is that, for any vertex $u\in V$,
the opinion $i$ of $u$ at configuration $\c$
and the opinion $j$ of $u$ at configuration $\ct$
belong to the same block.

\begin{proof}[Proof of \cref{lem:del preserve}.]
  Let $\c\succeq \ct$ be the initial configurations.
  We first prove \cref{lem:del preserve} in the special case that $M(\c,\ct)$ consists of a single minimal block.
Then, we reduce the general multiple block case to the single block case.

  \paragraph*{Case I. Single block case.}
  Suppose the matrix $M(\c,\ct)$ consists of a single minimal block.
  That is, $\c^{\le i} > \ct^{\le i}$ for all $i=1,\dots,k-1$ and $\c^{\le k} = \ct^{\le k} = n$.
  Consider the following coupling:
  \begin{enumerate}
  \item Choose a uniformly random vertex $u\sim V$ and let $i,\it$ be the opinions of $u$ in the configuration $\c,\ct$, respectively.
  \item Output $\c'\defeq \c-e_i$ and $\ct'\defeq \ct-e_{\it}$.
  \end{enumerate}
  Note that $\c'$ is the configuration obtained by first computing $\c-e_i$ as a vector and then rearranging elements in descending order.
  Therefore, $\c'$ is equal (as a vector) to $\c - e_r$,
    where $r\in[k]$ is the right-most opinion whose value is equal to the value of $u$'s opinion in the configuration $\c$ (this preserves the descending order).
  Similarly, let $\rt$ be the rightmost opinion in $\ct$ so that $\ct'=\c - e_{\rt}$.

  Compare the partial sums of $\c'$ and $\ct'$.
  For any $\ell$, we have
  \begin{align*}
    \c'^{\le \ell} &\ge \c^{\le \ell} - 1 \\
    &\ge (\ct^{\le \ell} - 1)+1 & & \text{$\because$$M(\c,\ct)$ consists of a single block} \\
    &\ge \ct'^{\le \ell} & & \text{$\because$Deletion does not increase the partial sum}
  \end{align*}
  and thus $\c'\succeq \ct'$.

  \paragraph*{Case II. Multiple block case.}
  Suppose the matrix $M(\c,\ct)$ consists of two or more blocks $S_1,\dots,S_\ell$ (for $\ell \ge 2$).
  Then, $\c$ and $\ct$ can be seen as the concatenation of subvectors $\c[S_1],\dots,\c[S_\ell]$ and $\ct[S_1],\dots,\ct[S_\ell]$, where  $\c[S_b],\ct[S_b]$ are configurations.
  Note that $\c[S_b]\succeq \ct[S_b]$ since $\c \succeq \ct$.

  Consider the following coupling:
  \begin{enumerate}
  \item Choose a block index $b\in[\ell]$ randomly, where the probability of choosing a specific block $S_a$ is in proportional to the sum of $c_i$s belonging to $S_a$.
  \item Simulate the coupling of Case I on configurations $\c[S_b]$ and $\ct[S_b]$. This yields configurations $\c'[S_b]$ and $\ct'[S_b]$ such that $\c'[S_b] \succeq \ct'[S_b]$.
  \item Output the concatenations 
  \begin{align*}
  &\c'\defeq \sbra*{\c[S_1],\dots,\c[S_{b-1}], \c'[S_b], \c[S_{b+1}],\dots,\c[S_\ell] }, \\
  &\ct' \defeq \sbra*{\ct[S_1],\dots,\ct[S_{b-1}], \ct'[S_b], \ct[S_{b+1}],\dots,\ct[S_\ell] }.
  \end{align*}
  (note that we replace $b$-th subvector by the vector of Step 2.)
  \end{enumerate}

As an example take $\c=(4,4,2,2,2)$ and $\ct=(3,3,3,3,2)$.
Using $|$ as a block delimiter, the block structure is
\[
\left|\begin{array}{cccc}
4&4&2&2\\
3&3&3&3
\end{array}
\right|
\left|\begin{array}{c}
2\\2
\end{array}
\right|.
\]
In Step 1, the block $S_1$ is chosen with probability $\frac{12}{14}$ and $S_2$ is chosen with probability $\frac{2}{14}$.
Suppose the block $S_1$ is selected and we apply the coupling of the single block case to this block.
We choose a uniformly random vertex $u$ from this block and choose the right most opinion among those whose size is equal to the size of $u$'s opinion.
If the selected vertex has  label 9 (or any entry in $\{ 9,10,11,12\}$)  then it has opinion $2$ in the first row and $3$ in the second row.
After applying the coupling within the block, we end with the configuration
\[
\left|\begin{array}{cccc}
4&4&2&1\\
3&3&3&2
\end{array}
\right|
\left|\begin{array}{c}
2\\2
\end{array}
\right|.
\]
Then, we have $\c'[S_1]=(4,\,4,\,2,\,1) \succeq \ct'(3,\,3,\,3,\,2) = \ct'[S_1]$ and $\c'[S_2]=\ct'[S_2]=(2)$.
Since taking concatenation preserves the majorization ordering (\cref{lem:concatenation}), we have $\c'=(4,\,4,\,2,\,1,\,2) \succeq \ct'=(3,\,3,\,3,\,2,\,2)$.
Finally on permuting the entries into descending order we have $\c''=(4,\,4,\,2,\,2,\,1) \succeq \ct''=(3,\,3,\,3,\,2,\,2)$.

More formally,
  note that the probability that this coupling changes $b$-th block $S_b$ is equal to to probability that a uniformly random vertex $u\sim V$ holds an opinion in the block $S_b$.
  Therefore, $(\c',\ct')$ is a coupling of $\Del(\c)$ and $\Del(\ct)$.
  Since each corresponding subvectors satisfy the majorization (i.e., $\c'[S_a] \succeq \ct'[S_a]$ for all $a\in[\ell]$),
  from \cref{lem:concatenation}, we have $\c'\succeq \ct'$.
\end{proof}

Now we can prove \cref{lem:pull and 3-majority}.
\begin{proof}[Proof of \cref{lem:pull and 3-majority}.]
  Let $\c\succeq \ct$ be any configurations.
  From \cref{lem:pull and 3maj,lem:pull vs pull}, we have $\ThreeMaj(\c) \gest \Pull(\c) \gest \Pull(\ct)$.
  From \cref{lem:lest transitive}, we obtain the claim.
\end{proof}

\subsection{Remaining Opinions in Voter} \label{sec:remaining opinions in Voter}
In this subsection, we prove the number of remaining opinions in \Voter is likely to decrease to $\kappa$ within $O(n^2\log n/\kappa)$ steps.
\begin{lemma} \label{lem:voter}  
  There exists a universal constant $C>0$ such that
for any $\kappa\ge 1$, the number of remaining opinions in the asynchronous \Voter is at most $\kappa$ within $Cn^2\log n/\kappa$ steps with high probability.
\end{lemma}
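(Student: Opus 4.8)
The plan is to invoke the standard duality between the asynchronous \Voter dynamics and coalescing random walks on the complete graph (see \cite[Section~14.3]{AF02}) and then analyze the coalescent on $K_n$ explicitly. First I would set up the dual. At step $j$ let $v_j$ be the activated vertex and $w_j$ the vertex it copies, so that $\sigma_j(v_j)=\sigma_{j-1}(w_j)$ and $\sigma_j(u)=\sigma_{j-1}(u)$ for $u\neq v_j$; the pairs $(v_j,w_j)_{j\ge 1}$ are i.i.d.\ uniform on $V\times V$. For each vertex $u$ define its \emph{source} at time $t$ by tracing this assignment backwards, so that $\sigma_t(u)=\sigma_0(\xi_t(u))$ for some $\xi_t(u)\in V$. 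Since vertices sharing a source carry identical initial opinions, the number of remaining opinions at time $t$ is at most $|\{\xi_t(u):u\in V\}|$. Unwinding the recursion, $\{\xi_t(u):u\in V\}$ is precisely the set of positions, after processing steps $t,t-1,\dots,1$, of a system of $n$ coalescing walkers started one per vertex, where at a step with data $(v,w)$ every walker currently at $v$ jumps to $w$. Because reordering an i.i.d.\ sequence leaves its law unchanged, this reversed system is driven by fresh i.i.d.\ uniform pairs, so the number of distinct sources at time $t$ has the same law as $|P_t|$, the number of occupied vertices after $t$ steps of the forward-driven walker system.

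Next I would control $(|P_s|)_{s\ge 0}$, which starts at $|P_0|=n$ and is non-increasing. Conditioned on $|P_{s-1}|=m$, the next step decreases the count by exactly one precisely when the drawn $v$ is occupied and the drawn $w$ is a \emph{different} occupied vertex, an event of probability $\frac{m}{n}\cdot\frac{m-1}{n}=\frac{m(m-1)}{n^2}$; in every other case the count is unchanged, since all walkers at the single vertex $v$ move to the single vertex $w$. Hence the number $G_m$ of steps spent at level $m$ before the count drops to $m-1$ is geometric with parameter $p_m=m(m-1)/n^2$, the $G_m$ are independent, and the time to reach $|P_s|\le\kappa$ equals $\sum_{m=\kappa+1}^{n}G_m$, with mean $n^2\sum_{m=\kappa+1}^{n}\bigl(\tfrac1{m-1}-\tfrac1m\bigr)=n^2\bigl(\tfrac1\kappa-\tfrac1n\bigr)<\tfrac{n^2}{\kappa}$.

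For the high-probability statement I would use a geometric tail and a union bound: $\Pr[G_m\ge L]\le(1-p_m)^{L-1}\le e^{-p_m(L-1)}$, so with $L_m=\Theta(\log n)/p_m$ we get $\Pr[G_m\ge L_m]\le n^{-2}$, and summing over the at most $n$ relevant values of $m$ yields $\Pr[\exists m:G_m\ge L_m]=O(n^{-1})$; on the complementary event $\sum_{m=\kappa+1}^{n}G_m\le\sum_m L_m=O\bigl(\tfrac{n^2\log n}{\kappa}\bigr)$. Choosing a suitable universal constant $C$ and $t=Cn^2\log n/\kappa$, it follows that with high probability $|P_t|\le\kappa$, hence the number of distinct sources and a fortiori the number of remaining opinions is at most $\kappa$ (the case $\kappa\ge n$ being trivial, as there are never more than $n$ opinions). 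The step I expect to require the most care is the setup of the dual — verifying that $\{\xi_t(u):u\in V\}$ is exactly the occupied set of the reversed, i.i.d.-driven walker system and that decrements of $|P_s|$ are by exactly one; the probability computation and the geometric tail argument are then routine.
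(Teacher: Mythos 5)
Your proposal is correct and takes essentially the same route as the paper: the duality between the asynchronous \Voter and coalescing random walks on the complete graph, with independent geometric waiting times of parameter $\frac{m(m-1)}{n^2}$ between successive coalescences, summing to expectation at most $\frac{n^2}{\kappa}$. The only minor difference is how the expectation bound is boosted to a high-probability one: you use per-level geometric tail bounds with a union bound over the at most $n$ levels, whereas the paper applies the Markov inequality to $\E[\tau_\kappa^{\mathrm{crw}}]\le \frac{n^2}{\kappa}$ and amplifies over $O(\log n)$ restarts; both yield the claimed $O(n^2\log n/\kappa)$ bound.
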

\Cref{lem:voter} is a direct consequence of Section 14.3 of \cite{AF02}.
Here, we give a proof of \cref{lem:voter} for completeness. 

First, consider the coalescing random walk process (\textrm{CRW}) defined as follows.
At each time step, two vertices $v\in V$ and $u\in V$ are chosen uniformly at random with replacement.
If there is a particle on $v$, then the particle moves to $u$.
If two or more particles are on the same vertex, they coalesce into a cluster, which moves as a single particle.

For any $C\subseteq V$ and $u\in V$, the following relation between \Voter and \textrm{CRW} is well known (Section 14.3 of \cite{AF02}):
\begin{align*}
    &\Pr\rbra*{\textrm{In \Voter, at time $t$, all vertex $v\in C$ have the same opinion initially held by $u$}}\\
    &=\Pr\rbra*{\textrm{In \textrm{CRW}, all particles initially on $C$ have coalesced before $t$ and their cluster is on $u$ at $t$}}.
\end{align*}

Let $\taukv$ denote the time taken for the asynchronous \Voter to reach a configuration in which the number of remaining opinions is at most $\kappa$.
Similarly,
let $\tau_\kappa^{\mathrm{crw}}$ denote the time taken for \textrm{CRW} to reach a configuration where there are $\kappa$ clusters.
From the above relation, we have 
\begin{align*}
    \Pr(\taukv \leq t)=\Pr(\tau_\kappa^{\mathrm{crw}}\leq t)
\end{align*}
for any $\kappa\in [n]$ and $t$.
Furthermore, since
$\tau_{i-1}^{\mathrm{crw}}-\tau_i^{\mathrm{crw}}$ follows the geometric distribution with success probability $\frac{(i-1)i}{n^2}$ (Section 14.3.3 of \cite{AF02}), 
we have
\begin{align*}
    \E[\tau_\kappa^{\mathrm{crw}}]
    =\sum_{i=\kappa+1}^n\E[\tau_{i-1}^{\mathrm{crw}}-\tau_i^{\mathrm{crw}}]
    =\sum_{i=\kappa+1}^n\frac{n^2}{(i-1)i}
    =n^2\left(\frac{1}{\kappa}-\frac{1}{n}\right)\leq \frac{n^2}{\kappa},
\end{align*}
i.e., $\Pr(\tau_\kappa^{\mathrm{crw}}\geq \mathrm{e}n^2/\kappa)\leq 1/\mathrm{e}$ from the Markov inequality.
Combining all, we obtain
\begin{align*}
     \Pr\sbra*{\taukv \geq \frac{\mathrm{e}n^2}{\kappa}h\log n}
     &=\Pr\sbra*{\tau_\kappa^{\mathrm{crw}} \geq \frac{\mathrm{e}n^2}{\kappa}h\log n}
     \leq \frac{1}{n^{h}}
\end{align*}
for any positive integer $h$.
\qed

Now, we are ready to prove \cref{lem:coupling,thm:many opinions}.
\begin{proof}[Proof of \cref{lem:coupling}.]
  Let $\c$ be the initial configuration and $\c^{\mathrm{pull}}_t,\c^{\mathrm{maj}}_t$ be the configuration of the asynchronous \Voter and \ThreeMajority at the $t$-th time step.
  By applying \cref{lem:pull and 3-majority} for $t=1,2,\dots$, we can couple $(\c^{\mathrm{maj}}_t)_{t\ge 0}$ and $(\c^{\mathrm{pull}}_t)_{t\ge 0}$ such that $\c^{\mathrm{pull}}_t \succeq \c^{\mathrm{maj}}_t$ for all $t\ge 0$.
  If $\c^{\mathrm{pull}}_t$ contains at most $\kappa$ remaining opinions, then so does $\c^{\mathrm{maj}}_t$ since $\c^{\mathrm{maj}}_t \succeq \c^{\mathrm{pull}}_t$.
  This proves the claim.
\end{proof}
\begin{proof}[Proof of \cref{thm:many opinions}.]
  This directly follows from \cref{lem:coupling,lem:voter}.
\end{proof}

\printbibliography

@ARTICLE{Condon2020,
  title     = "Approximate majority analyses using tri-molecular chemical reaction networks",
  author    = "Condon, Anne and Hajiaghayi, Monir and Kirkpatrick, David and Maňuch, Ján",
  journal   = "Natural Computing",
  publisher = "Springer",
  volume    =  19,
  pages     = "249--270",
  month     =  aug,
  year      =  2020,
  doi       =  "10.1007/s11047-019-09756-4"
}

@INPROCEEDINGS{Angluin2007,
   title     = "A Simple Population Protocol for Fast Robust Approximate
                Majority",
   booktitle = "Distributed Computing",
   author    = "Angluin, Dana and Aspnes, James and Eisenstat, David",
   publisher = "Springer Berlin Heidelberg",
   pages     = "20--32",
   year      =  2007
 }

@ARTICLE{randomized_gossip_algorithms_Boyd,
  title     = "Randomized gossip algorithms",
  author    = "Boyd, S and Ghosh, A and Prabhakar, B and Shah, D",
  journal   = "IEEE Trans. Inf. Theory",
  publisher = "IEEE",
  volume    =  52,
  number    =  6,
  pages     = "2508--2530",
  month     =  jun,
  year      =  2006,
  issn      = "0018-9448, 1557-9654",
  doi       = "10.1109/TIT.2006.874516"
}

@INPROCEEDINGS{undecided_chemical,
  title     = "Simplifying Analyses of Chemical Reaction Networks for
               Approximate Majority",
  booktitle = "{DNA} Computing and Molecular Programming",
  author    = "Condon, Anne and Hajiaghayi, Monir and Kirkpatrick, David and
               Ma{\v n}uch, J{\'a}n",
  publisher = "Springer International Publishing",
  pages     = "188--209",
  year      =  2017,
  doi       = "10.1007/978-3-319-66799-7\_13"
}

@ARTICLE{Becchetti2015,
   title     = "Plurality consensus in the gossip model",
    journal = {In Proceedings of the 26th annual ACM-SIAM Symposium on Discrete Algorithms (SODA)},
   author    = "Becchetti, Luca and Clementi, Andrea and Natale, Emanuele and Pasquale, Francesco and
                Silvestri, Riccardo",
   pages     = "371--390",
   year      =  2015,
 }

@BOOK{Ver18,
  title     = "{High-Dimensional} Probability: An Introduction with
               Applications in Data Science",
  author    = "Vershynin, Roman",
  publisher = "Cambridge University Press",
  month     =  sep,
  year      =  2018,
  doi       = "10.1017/9781108231596"
}

@ARTICLE{Fre75,
  title     = "On Tail Probabilities for Martingales",
  author    = "Freedman, David A",
  journal   = "aop",
  publisher = "Institute of Mathematical Statistics",
  volume    =  3,
  number    =  1,
  pages     = "100--118",
  month     =  feb,
  year      =  1975,
  language  = "en",
  issn      = "0091-1798, 2168-894X",
  doi       = "10.1214/aop/1176996452"
}

@inproceedings{EFKMT17,
author = {Els\"{a}sser, Robert and Friedetzky, Tom and Kaaser, Dominik and Mallmann-Trenn, Frederik and Trinker, Horst},
title = {Brief Announcement: Rapid Asynchronous Plurality Consensus},
year = {2017},
isbn = {9781450349925},
publisher = {Association for Computing Machinery},
address = {New York, NY, USA},
url = {https://doi.org/10.1145/3087801.3087860},
doi = {10.1145/3087801.3087860},
abstract = {We consider distributed plurality consensus on a complete graph of size n with k initial opinions in the following asynchronous communication model. Each node is equipped with a random Poisson clock with parameter lambda=1. Whenever a node's clock ticks, it samples some neighbors uniformly at random and adjusts its opinion according to the sample.Distributed plurality consensus has been deeply studied in the synchronous communication model. A prominent example is the so-called Two-Choices protocol, where in each round, every node chooses two neighbors uniformly at random, and if the two sampled opinions coincide, then that opinion is adopted. This protocol is very efficient when k=2. If k=O(nε) for some small epsilon, we show that it converges to the initial plurality opinion within O(k log n) rounds, w.h.p., as long as the initial difference between the largest and second largest opinion is Omega(sqrt(n log n)). On the negative side, we show that there are cases in which Omega(k) rounds are needed, w.h.p.To beat this lower bound, we combine the Two-Choices protocol with push-pull broadcasting. We divide the process into several phases, where each phase consists of a two-choices round followed by several broadcasting rounds. Our main contribution is a non-trivial adaptation of this approach to the above asynchronous model. If the support of the most frequent opinion is at least (1+ε) times that of the second-most frequent one and k=O(Exp(log n / log log n)), then our protocol achieves the best possible run time of O(log n), w.h.p. Key to our adaptation is that we relax full synchronicity by allowing o(n) nodes to be poorly synchronized, and the well synchronized nodes are only required to be within a certain time difference from one another. We enforce this sufficient synchronicity by introducing a novel gadget into the protocol. Other parts of the adaptation are made to work using arguments and techniques based on a P\'{o}lya urn model.},
booktitle = {Proceedings of the ACM Symposium on Principles of Distributed Computing},
pages = {363–365},
numpages = {3},
keywords = {plurality consensus, distributed randomized algorithms, asynchronicity},
location = {Washington, DC, USA},
series = {PODC '17}
}

@book{Dur19,
	author = {Rick Durrett},
	date-added = {2020-10-06 16:58:05 +0900},
	date-modified = {2020-10-06 16:58:05 +0900},
	publisher = {Campridge University Press},
	title = {Probability: Theory and Examples},
	year = {2019}}

@article{HP01,
	author = {Y. Hassin and D. Peleg},
	date-added = {2020-10-06 16:58:05 +0900},
	date-modified = {2020-10-06 16:58:05 +0900},
	journal = {Information and Computation},
	number = {2},
	pages = {248--268},
	title = {Distributed probabilistic polling and applications to proportionate agreement},
	volume = {171},
	year = {2001}}

@ARTICLE{nearly_tight_analysis,
  title     = "{Nearly-Tight} Analysis for 2-Choice and 3-Majority Consensus
               Dynamics",
  author    = "Ghaffari, Mohsen and Lengler, Johannes",
  journal   = "Proceedings of Symposium on Principles of Distributed Computing
               (PODC)",
  publisher = "ACM",
  pages     = "305--313",
  month     =  jul,
  year      =  2018,
  address   = "New York, NY, USA",
  doi       = "10.1145/3212734.3212738"
}

@article{CNS19,
	author = {E. Cruciani and E. Natale and G. Scornavacca},
	date-added = {2020-10-06 16:58:05 +0900},
	date-modified = {2020-10-06 16:58:05 +0900},
	journal = {In Proceedings of the 33rd AAAI conference on artificial intelligence (AAAI)},
	pages = {6046--6053},
	title = {Distributed community detection via metastability of the 2-choices dynamics},
	year = {2019}}

@ARTICLE{linear_voting,
  title   = "The Linear Voting Model",
  author  = "Cooper, Colin and Rivera, Nicol{\'a}s",
  journal = "Proceedings of International Colloquium on Automata, Languages,
             and Programming (ICALP)",
  year    =  2016,
  doi     = "10.4230/LIPIcs.ICALP.2016.144"
}

@ARTICLE{twochoice_expander_DISC17,
  title   = "Fast Plurality Consensus in Regular Expanders",
  author  = "Cooper, Colin and Radzik, Tomasz and Rivera, Nicol{\'a}s and
             Shiraga, Takeharu",
  journal = "Proc. Int. Symp. High Perform. Distrib. Comput.",
  year    =  2017,
  doi     = "10.4230/LIPIcs.DISC.2017.13"
}

@ARTICLE{twochoice_ICALP14,
  title   = "The Power of Two Choices in Distributed Voting",
  author  = "Cooper, Colin and Els{\"a}sser, Robert and Radzik, Tomasz",
  journal = "Proceedings of International Colloquium on Automata, Languages,
             and Programming (ICALP)",
  pages   = "435--446",
  year    =  2014,
  doi     = "10.1007/978-3-662-43951-7\_37"
}

@INPROCEEDINGS{quasi-majority,
  title     = "{Quasi-Majority} Functional Voting on Expander Graphs",
  booktitle = "47th International Colloquium on Automata, Languages, and
               Programming ({ICALP} 2020)",
  author    = "Shimizu, Nobutaka and Shiraga, Takeharu",
  publisher = "Schloss-Dagstuhl - Leibniz Zentrum f{\"u}r Informatik",
  pages     = "97:19",
  year      =  2020,
  doi       = "10.4230/LIPIcs.ICALP.2020.97"
}

@InProceedings{voter_dynamic_graph,
  author =	{Berenbrink, Petra and Giakkoupis, George and Kermarrec, Anne-Marie and Mallmann-Trenn, Frederik},
  title =	{{Bounds on the Voter Model in Dynamic Networks}},
  booktitle =	{43rd International Colloquium on Automata, Languages, and Programming (ICALP 2016)},
  pages =	{146:1--146:15},
  series =	{Leibniz International Proceedings in Informatics (LIPIcs)},
  ISBN =	{978-3-95977-013-2},
  ISSN =	{1868-8969},
  year =	{2016},
  volume =	{55},
  editor =	{Chatzigiannakis, Ioannis and Mitzenmacher, Michael and Rabani, Yuval and Sangiorgi, Davide},
  publisher =	{Schloss Dagstuhl -- Leibniz-Zentrum f{\"u}r Informatik},
  address =	{Dagstuhl, Germany},
  URL =		{https://drops.dagstuhl.de/entities/document/10.4230/LIPIcs.ICALP.2016.146},
  URN =		{urn:nbn:de:0030-drops-62901},
  doi =		{10.4230/LIPIcs.ICALP.2016.146},
  annote =	{Keywords: Voting, Distributed Computing, Conductance, Dynamic Graphs, Consensus}
}

@unpublished{AF02,
	author = {D. Aldous and J. Fill},
	date-added = {2020-10-01 00:04:42 +0900},
	date-modified = {2020-10-01 00:04:42 +0900},
	note = {http://statwww.berkeley.edu/pub/users/aldous/RWG/book.html},
	title = {Reversible {M}arkov chains and random walks on graphs}}

@ARTICLE{consensus_ER_Schoenebeck18,
  title   = "Consensus of Interacting Particle Systems on {Erd{\"o}s-R{\'e}nyi}
             Graphs",
  author  = "Schoenebeck, Grant and Yu, Fang-Yi",
  journal = "Proceedings of Symposium on Discrete Algorithms (SODA)",
  pages   = "1945--1964",
  month   =  jan,
  year    =  2018,
  doi     = "10.1137/1.9781611975031.127"
}

@INPROCEEDINGS{NIY99,
  title     = "Probabilistic Local Majority Voting for the Agreement Problem on
               Finite Graphs",
  booktitle = "Computing and Combinatorics",
  author    = "Nakata, Toshio and Imahayashi, Hiroshi and Yamashita, Masafumi",
  publisher = "Springer Berlin Heidelberg",
  pages     = "330--338",
  year      =  1999,
  doi       = "10.1007/3-540-48686-0\_33"
}

@INPROCEEDINGS{hierarchy_Berenbrink,
  title     = "On the hierarchy of distributed majority protocols",
  booktitle = "Proceedings of International Conference on Principles of
               Distributed Systems ({OPODIS})",
  author    = "Berenbrink, Petra and Coja-Oghlan, Amin and Gebhard, Oliver and
               Hahn-Klimroth, Max and Kaaser, Dominik and Rau, Malin",
  publisher = "Schloss Dagstuhl - Leibniz-Zentrum f{\"u}r Informatik",
  pages     = "23:1--23:19",
  month     =  feb,
  year      =  2023,
  doi       = "10.4230/LIPICS.OPODIS.2022.23"
}

@ARTICLE{Shimizu_Shiraga_SBM,
  title   = "Phase transitions of Best‐of‐two and Best‐of‐three on stochastic
             block models",
  author  = "Shimizu, Nobutaka and Shiraga, Takeharu",
  journal = "Random Structures \& Algorithms",
  volume  =  59,
  number  =  1,
  pages   = "96--140",
  month   =  aug,
  year    =  2021,
  issn    = "1042-9832",
  doi     = "10.1002/rsa.20992"
}

@INCOLLECTION{Becchetti_minority,
  title     = "The Minority Dynamics and the Power of Synchronicity",
  booktitle = "Proceedings of the 2024 Annual {ACM-SIAM} Symposium on Discrete
               Algorithms ({SODA})",
  author    = "Becchetti, Luca and Clementi, Andrea and Pasquale, Francesco and
               Trevisan, Luca and Vacus, Robin and Ziccardi, Isabella",
  publisher = "Society for Industrial and Applied Mathematics",
  pages     = "4155--4176",
  series    = "Proceedings",
  month     =  jan,
  year      =  2024,
  doi       = "10.1137/1.9781611977912.144"
}

@BOOK{MOA11,
  title     = "Inequalities: Theory of Majorization and Its Applications",
  author    = "Marshall, Albert W and Olkin, Ingram and Arnold, Barry C",
  publisher = "Springer New York",
  doi       = "10.1007/978-0-387-68276-1"
}

@ARTICLE{drift_analysis_concentration_Kotzing,
  title   = "Concentration of First Hitting Times Under Additive Drift",
  author  = "K{\"o}tzing, Timo",
  journal = "Algorithmica",
  volume  =  75,
  number  =  3,
  pages   = "490--506",
  month   =  jul,
  year    =  2016,
  issn    = "0178-4617, 1432-0541",
  doi     = "10.1007/s00453-015-0048-0"
}

@ARTICLE{Pel02,
  title   = "Local majorities, coalitions and monopolies in graphs: a review",
  author  = "Peleg, David",
  journal = "Theor. Comput. Sci.",
  volume  =  282,
  number  =  2,
  pages   = "231--257",
  month   =  jun,
  year    =  2002,
  issn    = "0304-3975",
  doi     = "10.1016/S0304-3975(01)00055-X"
}

@ARTICLE{undecided_MFCS,
  title   = "A Tight Analysis of the Parallel {Undecided-State} Dynamics with
             Two Colors",
  author  = "Clementi, Andrea and Ghaffari, Mohsen and Gual{\`a}, Luciano and
             Natale, Emanuele and Pasquale, Francesco and Scornavacca, Giacomo",
  journal = "Proceedings of International Symposium on Mathematical Foundations
             of Computer Science (MFCS)",
  year    =  2018,
  doi     = "10.4230/LIPIcs.MFCS.2018.28"
}

@book{Lig85,
	author = {T. M. Liggett},
	date-added = {2019-07-25 06:40:46 +0000},
	date-modified = {2020-09-30 14:59:31 +0900},
	publisher = {Springer-Verlag},
	title = {Interacting particle systems},
	year = {1985}}

@INPROCEEDINGS{CRN_SODA14,
  title     = "Timing in chemical reaction networks",
  booktitle = "Proceedings of the twenty-fifth annual {ACM-SIAM} symposium on
               Discrete algorithms",
  author    = "Doty, David",
  publisher = "Society for Industrial and Applied Mathematics",
  pages     = "772--784",
  series    = "SODA '14",
  month     =  jan,
  year      =  2014,
  address   = "USA",
  location  = "Portland, Oregon",
  isbn      = "9781611973389"
}

@INPROCEEDINGS{stabilizing_consensus,
  title     = "Stabilizing consensus with many opinions",
  booktitle = "Proceedings of the twenty-seventh annual {ACM-SIAM} symposium on
               Discrete algorithms",
  author    = "Becchetti, Luca and Clementi, Andrea and Natale, Emanuele and
               Pasquale, Francesco and Trevisan, Luca",
  publisher = "Society for Industrial and Applied Mathematics",
  pages     = "620--635",
  series    = "SODA '16",
  month     =  jan,
  year      =  2016,
  address   = "USA",
  location  = "Arlington, Virginia",
  isbn      = "9781611974331"
}

@INPROCEEDINGS{Doerr11,
  title     = "Stabilizing consensus with the power of two choices",
  booktitle = "Proceedings of the twenty-third annual {ACM} symposium on
               Parallelism in algorithms and architectures",
  author    = "Doerr, Benjamin and Goldberg, Leslie Ann and Minder, Lorenz and
               Sauerwald, Thomas and Scheideler, Christian",
  publisher = "Association for Computing Machinery",
  pages     = "149--158",
  series    = "SPAA '11",
  month     =  jun,
  year      =  2011,
  address   = "New York, NY, USA",
  location  = "San Jose, California, USA",
  isbn      = "9781450307437",
  doi       = "10.1145/1989493.1989516"
}

@INPROCEEDINGS{fast_convergence_undecided,
  title     = "Fast Convergence of k-Opinion Undecided State Dynamics in the
               Population Protocol Model",
  booktitle = "Proceedings of the 2023 {ACM} Symposium on Principles of
               Distributed Computing",
  author    = "Amir, Talley and Aspnes, James and Berenbrink, Petra and
               Biermeier, Felix and Hahn, Christopher and Kaaser, Dominik and
               Lazarsfeld, John",
  publisher = "Association for Computing Machinery",
  pages     = "13--23",
  series    = "PODC '23",
  month     =  jun,
  year      =  2023,
  address   = "New York, NY, USA",
  location  = "Orlando, FL, USA",
  doi       = "10.1145/3583668.3594589"
}

@INPROCEEDINGS{twochoice_expander_DISC15,
  title     = "Fast Consensus for Voting on General Expander Graphs",
  booktitle = "Distributed Computing",
  author    = "Cooper, Colin and Els{\"a}sser, Robert and Radzik, Tomasz and
               Rivera, Nicol{\'a}s and Shiraga, Takeharu",
  publisher = "Springer Berlin Heidelberg",
  pages     = "248--262",
  year      =  2015,
  doi       = "10.1007/978-3-662-48653-5\_17"
}

@article{CEO+13,
	author = {C. Cooper and R. Els{\"a}sser and H. Ono and T. Radzik},
	date-added = {2019-07-25 06:14:57 +0000},
	date-modified = {2020-09-30 14:51:32 +0900},
	journal = {SIAM Journal on Discrete Mathematics},
	number = {4},
	pages = {1748--1758},
	title = {Coalescing random walks and voting on connected graphs},
	volume = {27},
	year = {2013}}

@INPROCEEDINGS{ignore_or_comply,
  title     = "Ignore or Comply? On Breaking Symmetry in Consensus",
  booktitle = "Proceedings of the {ACM} Symposium on Principles of Distributed
               Computing",
  author    = "Berenbrink, Petra and Clementi, Andrea and Els{\"a}sser, Robert
               and Kling, Peter and Mallmann-Trenn, Frederik and Natale,
               Emanuele",
  publisher = "Association for Computing Machinery",
  pages     = "335--344",
  series    = "PODC '17",
  month     =  jul,
  year      =  2017,
  address   = "New York, NY, USA",
  location  = "Washington, DC, USA",
  isbn      = "9781450349925",
  doi       = "10.1145/3087801.3087817"
}

@ARTICLE{consensus_dynamics_SIGACT,
  title   = "Consensus Dynamics",
  author  = "Becchetti, Luca and Clementi, Andrea and Natale, Emanuele",
  journal = "ACM SIGACT News",
  volume  =  51,
  number  =  1,
  pages   = "58--104",
  month   =  mar,
  year    =  2020,
  issn    = "0163-5700",
  doi     = "10.1145/3388392.3388403"
}

@article{BCNPT17,
	author = {L. Becchetti and A. Clementi and E. Natale and F. Pasquale and L. Trevisan},
	date-added = {2019-07-25 06:09:06 +0000},
	date-modified = {2020-09-30 14:53:02 +0900},
	journal = {In Proceedings of the 28th annual ACM-SIAM Symposium on Discrete Algorithms (SODA)},
	pages = {940--959},
	title = {Find your place: {S}imple distributed algorithms for community detection},
	year = {2017}}

@ARTICLE{simple_dynamics,
  title   = "Simple dynamics for plurality consensus",
  author  = "Becchetti, Luca and Clementi, Andrea and Natale, Emanuele and
             Pasquale, Francesco and Silvestri, Riccardo and Trevisan, Luca",
  journal = "Distrib. Comput.",
  volume  =  30,
  number  =  4,
  pages   = "293--306",
  month   =  aug,
  year    =  2017,
  issn    = "0178-2770, 1432-0452",
  doi     = "10.1007/s00446-016-0289-4"
}

@ARTICLE{Moran_process14,
  title   = "Approximating Fixation Probabilities in the Generalized Moran
             Process",
  author  = "D{\'\i}az, Josep and Goldberg, Leslie Ann and Mertzios, George B
             and Richerby, David and Serna, Maria and Spirakis, Paul G",
  journal = "Algorithmica",
  volume  =  69,
  number  =  1,
  pages   = "78--91",
  month   =  may,
  year    =  2014,
  issn    = "0178-4617, 1432-0541",
  doi     = "10.1007/s00453-012-9722-7"
}

@ARTICLE{DW83,
  title     = "Finite particle systems and infection models",
  author    = "Donnelly, Peter and Welsh, Dominic",
  journal   = "Math. Proc. Cambridge Philos. Soc.",
  publisher = "Cambridge University Press",
  volume    =  94,
  number    =  1,
  pages     = "167--182",
  month     =  jul,
  year      =  1983,
  issn      = "1469-8064, 0305-0041",
  doi       = "10.1017/S0305004100060989"
}

@INPROCEEDINGS{biology,
  title     = "Theoretical Distributed Computing Meets Biology: A Review",
  booktitle = "Distributed Computing and Internet Technology",
  author    = "Feinerman, Ofer and Korman, Amos",
  publisher = "Springer Berlin Heidelberg",
  pages     = "1--18",
  year      =  2013,
  doi       = "10.1007/978-3-642-36071-8\_1"
}
\appendix
\section{Tools} \label{sec:tools}
Recall that a sequence of random variables $(X_t)_{t\in\Nat_0}$ is \emph{supermartingale} if $\E_{t-1}\sbra*{X_t}\leq X_{t-1}$
and
is
\emph{submartingale} if $\E_{t-1}\sbra*{X_t}\ge X_{t-1}$.
\begin{theorem}[Optimal stopping theorem. See e.g.~Theorem 4.8.5 of \cite{Dur19}]\label{lem:OST}
    Let $(X_t)_{t\in \mathbb{N}_0}$ be a submartingale (resp.\ supermartingale) such that $\E_{t-1}\sbra*{|X_t-X_{t-1}|}<\infty$.
    and $\tau$ be a stopping time such that $\E[\tau]<\infty$.
    Then, $\E[X_\tau]\geq \E[X_0]$ (resp.\ $\E[X_\tau]\leq \E[X_0]$).
\end{theorem}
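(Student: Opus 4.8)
The plan is to deduce this from the elementary fact that a \emph{stopped} submartingale is a submartingale, followed by a dominated-convergence argument. It suffices to treat the submartingale case, since the supermartingale case follows by applying the result to $(-X_t)_{t\in\mathbb{N}_0}$. So assume $(X_t)$ is a submartingale with the stated (uniform) control $\E_{t-1}\sbra*{|X_t-X_{t-1}|}\le c$ on the conditional increments, and let $\tau$ be a stopping time with $\E[\tau]<\infty$; in particular $\tau<\infty$ almost surely. First I would recall that for each fixed $n$ the stopped process $(X_{t\wedge\tau})_{t\ge 0}$ is again a submartingale: its increment at time $t$ equals $\indicator_{\{\tau\ge t\}}(X_t-X_{t-1})$, and $\{\tau\ge t\}=\{\tau\le t-1\}^{\mathrm c}\in\mathcal F_{t-1}$, so $\E_{t-1}\sbra*{X_{t\wedge\tau}-X_{(t-1)\wedge\tau}}=\indicator_{\{\tau\ge t\}}\E_{t-1}\sbra*{X_t-X_{t-1}}\ge 0$ (and integrability holds since it is a finite sum of increments). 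Hence $\E[X_{n\wedge\tau}]\ge\E[X_{0\wedge\tau}]=\E[X_0]$ for every $n$. The entire content of the theorem is then to justify passing to the limit $n\to\infty$ inside the expectation, using $X_{n\wedge\tau}\to X_\tau$ almost surely (which holds because $\tau<\infty$ a.s.).

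For the limit I would exhibit an integrable dominating variable. Set
\[
  Y \defeq |X_0| + \sum_{t=1}^{\tau} |X_t - X_{t-1}|,
\]
so that $|X_{n\wedge\tau}|\le Y$ for all $n$. To bound $\E[Y]$ I would use that $\{\tau\ge t\}\in\mathcal F_{t-1}$ together with monotone convergence and the tower property:
\[
  \E\sbra*{\sum_{t=1}^{\tau}|X_t-X_{t-1}|} = \sum_{t\ge 1}\E\sbra*{\indicator_{\{\tau\ge t\}}|X_t-X_{t-1}|} = \sum_{t\ge 1}\E\sbra*{\indicator_{\{\tau\ge t\}}\,\E_{t-1}\sbra*{|X_t-X_{t-1}|}}.
\]
Bounding the inner conditional expectation by $c$, this is at most $c\sum_{t\ge1}\Pr[\tau\ge t]=c\,\E[\tau]<\infty$, so $\E[Y]<\infty$; in particular $\E|X_\tau|\le\E[Y]<\infty$, so $\E[X_\tau]$ is well defined.

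Finally, by dominated convergence $\E[X_{n\wedge\tau}]\to\E[X_\tau]$, and combining this with $\E[X_{n\wedge\tau}]\ge\E[X_0]$ for all $n$ yields $\E[X_\tau]\ge\E[X_0]$; applying the same to $(-X_t)$ settles the supermartingale inequality. The one genuinely delicate step — and the one I would be most careful about — is this domination/interchange of limit and expectation: without the finiteness $\E[\tau]<\infty$ and the bound on the conditional increments the conclusion is false (e.g.\ the classical doubling martingale stopped the first time it loses, which has $\E[\tau]=2$ but increments of size $2^{t-1}$ and $\E[X_\tau]=-1\neq 0=\E[X_0]$), so the estimate $\E[Y]\le |X_0|+c\,\E[\tau]$ is precisely where both hypotheses are forced to enter.
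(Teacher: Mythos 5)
The paper itself gives no proof of this lemma (it is quoted from \cite{Dur19}), and your argument is essentially the standard textbook proof of that result: the stopped process $(X_{t\wedge\tau})_t$ is a submartingale because $\{\tau\ge t\}\in\mathcal{F}_{t-1}$, the family $\{X_{n\wedge\tau}\}_n$ is dominated by $|X_0|+\sum_{t=1}^{\tau}|X_t-X_{t-1}|$, whose expectation is at most $\E[|X_0|]+c\,\E[\tau]<\infty$ by Tonelli and the tower property over $\{\tau\ge t\}$, and dominated convergence transfers $\E[X_{n\wedge\tau}]\ge\E[X_0]$ to the limit. One point worth flagging: you read the hypothesis as a \emph{uniform} bound $\E_{t-1}\sbra*{|X_t-X_{t-1}|}\le c$, and this is the correct reading --- as your own doubling-strategy example shows, the literal condition ``$\E_{t-1}\sbra*{|X_t-X_{t-1}|}<\infty$'' written in the statement would not suffice, and Theorem 4.8.5 of \cite{Dur19} indeed assumes such a bound; this discrepancy is harmless for the paper, since in every application here the increments of the stopped process are bounded by a deterministic constant.
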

\begin{theorem}[Freedman's inequality. See Theorem 4.1 of \cite{Fre75}]\label{thm:Freedman}
    Let $(X_t)_{t\in\Nat_0}$ be a real-valued supermartingale
    such that $X_t - X_{t-1}\leq D$ for all $t\geq 1$.
    Then, for any $\lambda,W>0$, we have
    \begin{align*}
        \Pr\sbra*{X_T-X_0\geq \lambda \textrm{ and } \sum_{t=1}^T\E_{t-1}\sbra*{(X_t-X_{t-1})^2}\leq W \textrm{ for some $T\in\Nat_0$}}
        \leq \exp\left(-\frac{\lambda^2}{2W+2D\lambda}\right).
    \end{align*}
\end{theorem}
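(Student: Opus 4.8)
The plan is to run the classical exponential‑supermartingale argument behind Bernstein/Bennett‑type bounds. We may assume $D>0$ (otherwise the statement is degenerate). Fix a parameter $\theta>0$, to be optimized at the end, and set $\phi=\phi(\theta)\defeq(\e^{\theta D}-1-\theta D)/D^2$. Write $Y_t\defeq X_t-X_{t-1}$ for the supermartingale differences, so that $Y_t\le D$ and $\E_{t-1}[Y_t]\le 0$, and let $V_t\defeq\sum_{s=1}^t\E_{s-1}[Y_s^2]$; note that $V_t$ is $\mathcal{F}_{t-1}$‑measurable.

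The first step is the elementary inequality $\e^{\theta y}\le 1+\theta y+\phi\,y^2$, valid for every $y\le D$. This follows from the fact that $u\mapsto(\e^u-1-u)/u^2$ is nondecreasing on $\Real$ (continuously extended by $1/2$ at $0$): its derivative has the same sign as $(u-2)\e^u+(u+2)$, a function that vanishes together with its first derivative at $u=0$ and whose second derivative is $u\e^u$, hence is nonnegative for $u\ge 0$ and nonpositive for $u\le 0$, which is exactly what makes the quotient monotone; applying this with $\theta y\le\theta D$ gives the bound. Taking $\E_{t-1}[\cdot]$ and using $\E_{t-1}[Y_t]\le 0$ together with $1+x\le\e^x$ yields
\[
  \E_{t-1}\!\left[\e^{\theta Y_t}\right]\le 1+\phi\,\E_{t-1}[Y_t^2]\le\exp\!\left(\phi\,\E_{t-1}[Y_t^2]\right).
\]
Here the one‑sided bound $Y_t\le D$ is used crucially: it is what guarantees $\e^{\theta Y_t}$ is integrable and lets us control the upper tail with a single exponential moment. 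Consequently $M_t\defeq\exp\!\left(\theta(X_t-X_0)-\phi V_t\right)$ is a nonnegative supermartingale with $M_0=1$ (using that $V_t$ is predictable, so $\E_{t-1}[M_t]=M_{t-1}\e^{-\phi\E_{t-1}[Y_t^2]}\E_{t-1}[\e^{\theta Y_t}]\le M_{t-1}$).

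Next I would invoke the maximal inequality for nonnegative supermartingales in the form $\Pr[\exists t\ge 0:M_t\ge a]\le\E[M_0]/a=1/a$ for every $a>0$; this is obtained by applying the bounded‑stopping‑time version of the optional stopping theorem (\cref{lem:OST}) to $M_{t\wedge\tau_a}$ with $\tau_a\defeq\inf\{t:M_t\ge a\}$, which gives $a\Pr[\tau_a\le t]\le\E[M_{t\wedge\tau_a}]\le 1$, and then letting $t\to\infty$. On the event that $X_T-X_0\ge\lambda$ and $V_T\le W$ for some $T\in\Nat_0$, the corresponding index $T$ satisfies $M_T\ge\exp(\theta\lambda-\phi W)$; hence that event is contained in $\{\exists t:M_t\ge\exp(\theta\lambda-\phi W)\}$ and so has probability at most $\exp(-\theta\lambda+\phi W)$, for every $\theta>0$.

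It remains to optimize $\theta$. I would take $\theta=\lambda/(W+D\lambda)$, so that $\theta D=D\lambda/(W+D\lambda)<1$, and use the bound $\e^u-1-u\le u^2/(2(1-u))$ for $0\le u<1$ (immediate from $\e^u-1-u=\sum_{m\ge 2}u^m/m!$ and $m!\ge 2$), which gives $\phi\le\theta^2/(2(1-\theta D))=\lambda^2/(2W(W+D\lambda))$. Substituting into $-\theta\lambda+\phi W$ produces exactly $-\lambda^2/(2W+2D\lambda)$, completing the proof. The only real content is the monotonicity lemma for $(\e^u-1-u)/u^2$ and the construction of the exponential supermartingale; the main place to be careful is keeping the final $\theta$‑optimization tight enough to land on the denominator $2W+2D\lambda$ rather than a Bennett‑type expression, and using $V_t$ in its predictable form so that $M_t$ is genuinely a supermartingale.
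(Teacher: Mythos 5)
Your proof is correct, and it is essentially the classical argument behind Freedman's inequality (Theorem 4.1 of \cite{Fre75}): the paper itself does not prove this statement but only cites it, so there is nothing internal to compare against. Your exponential supermartingale $M_t=\exp(\theta(X_t-X_0)-\phi V_t)$ with $\phi=(\e^{\theta D}-1-\theta D)/D^2$, the maximal inequality via stopping at $\tau_a$, and the choice $\theta=\lambda/(W+D\lambda)$ combined with $\e^u-1-u\le u^2/(2(1-u))$ do land exactly on the stated denominator $2W+2D\lambda$. Two cosmetic remarks: the sentence claiming the derivative of $(\e^u-1-u)/u^2$ ``has the same sign as $(u-2)\e^u+(u+2)$'' overlooks the $u^3$ in the denominator of that derivative (for $u<0$ the signs are opposite), but your subsequent analysis showing $h(u)=(u-2)\e^u+u+2$ is nonnegative for $u\ge 0$ and nonpositive for $u\le 0$ is precisely what makes the quotient nondecreasing, so the conclusion stands; and the appeal to \cref{lem:OST} is not quite needed in the form stated there (which assumes $\E[\tau]<\infty$) --- for the bounded time $t\wedge\tau_a$ the inequality $\E[M_{t\wedge\tau_a}]\le\E[M_0]$ is just the stopped-supermartingale property, which is how you in fact use it.
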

In the following,
    we show that if a sequence of real-valued random variables $(X_t)_{t\geq 0}$
    has a negative drift, small variance, and bounded difference,
    then $(X_t)$ is likely to decrease to $X_t\leq L$ before growing up to $X_t\geq U$.
Here, we are interested in the value of $X_\tau$, rather than the bound of $\tau$.
\begin{theorem}[Gambler's Ruin under Negative Drift]
    \label{thm:gambler}
    Let $(X_t)_{t\in \mathbb{N}_0}$ be a sequence of real-valued random variables.
    Let $0\leq L \leq U$ be parameters.
    Let $\tau=\inf\{t\geq 0: X_t\leq L \textrm{ or } X_t\geq U\}$ be the stopping time and suppose $\E[\tau]<\infty$.
    Let $S,D,\theta>0$ be constants that do not depend on $t$ and suppose that the following condition holds for all $t\geq 1$:
    \begin{enumerate}[label=(\roman*)]
        \item $\indicator_{\tau>t-1}\rbra*{\E_{t-1}\sbra*{X_t-X_{t-1}}+\theta}\leq 0$,
        \item $\indicator_{\tau>t-1}\E_{t-1}\sbra*{(X_t-X_{t-1})^2}\leq S$,
        \item $\indicator_{\tau>t-1}|X_t-X_{t-1}|\leq D$.
    \end{enumerate}
    Let $\phi=\frac{6\theta}{3S+2D\theta}$. Then, for any $L<X_0<U$, we have
    \begin{align*}
        \Pr[X_\tau\geq U]
    \leq \frac{\mathrm{e}^{\phi X_0}-\mathrm{e}^{\phi (L-D)}}{\mathrm{e}^{\phi U}-\mathrm{e}^{\phi (L-D)}}.
    \end{align*}
\end{theorem}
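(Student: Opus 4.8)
The plan is to build an exponential supermartingale out of $(X_t)$. With $\phi=\frac{6\theta}{3S+2D\theta}$ as in the statement, define $Y_t\defeq\exp\rbra*{\phi X_{t\wedge\tau}}$. Everything reduces to two facts: (a) $(Y_t)_{t\in\Nat_0}$ is a supermartingale, and (b) at the stopping time, $X_\tau$ is either at least $U$ or, by the bounded‑step condition (iii), at least $L-D$. Given (a), the optional stopping theorem (\cref{lem:OST}) gives $\E\sbra*{\e^{\phi X_\tau}}=\E[Y_\tau]\le \E[Y_0]=\e^{\phi X_0}$, and given (b) this expectation is bounded below by a convex combination of $\e^{\phi U}$ and $\e^{\phi(L-D)}$; rearranging yields the stated bound on $\Pr[X_\tau\ge U]$.

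The core of the argument is (a). On $\{\tau\le t-1\}$ we have $Y_t=Y_{t-1}$, so it suffices to check $\indicator_{\tau>t-1}\E_{t-1}\sbra*{\e^{\phi(X_t-X_{t-1})}}\le \indicator_{\tau>t-1}$. Write $\Delta=X_t-X_{t-1}$; by (iii), $|\phi\Delta|\le \phi D$, and one notes $\phi D<3$ automatically because $3S>0$ in the denominator of $\phi$. I would use the elementary inequality $\e^y\le 1+y+\frac{\e^b-1-b}{b^2}\,y^2$ valid for all $|y|\le b$ (with $b=\phi D$), take $\E_{t-1}$, and apply (i) and (ii) to get
\[
\indicator_{\tau>t-1}\E_{t-1}\sbra*{\e^{\phi\Delta}}\le \indicator_{\tau>t-1}\rbra*{1-\phi\theta+\frac{\e^{\phi D}-1-\phi D}{(\phi D)^2}\,\phi^2 S}.
\]
Then the standard estimate $\e^{b}-1-b\le \frac{b^2}{2(1-b/3)}$ for $0\le b<3$ bounds the last term by $\frac{\phi^2 S}{2(1-\phi D/3)}$, and the whole bracket is $\le 1$ exactly when $\phi(3S+2D\theta)\le 6\theta$, i.e.\ precisely for our choice of $\phi$. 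Hence $\E_{t-1}[Y_t]\le Y_{t-1}$.

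For the wrap‑up: since $\{\tau>t-1\}$ forces $X_{t-1}\in(L,U)$ and $|\Delta|\le D$, we have $X_{t\wedge\tau}\in(L-D,U+D)$ for all $t$, so $(Y_t)$ is bounded and $\E[\tau]<\infty$, so \cref{lem:OST} applies and gives $\E[\e^{\phi X_\tau}]\le \e^{\phi X_0}$. Because $X_0\in(L,U)$ we have $\tau\ge1$ with $X_{\tau-1}\in(L,U)$, so condition (iii) at $t=\tau$ gives $X_\tau\ge X_{\tau-1}-D>L-D$; thus on $\{X_\tau\le L\}$ we have $\e^{\phi X_\tau}\ge\e^{\phi(L-D)}$, and on $\{X_\tau\ge U\}$ trivially $\e^{\phi X_\tau}\ge\e^{\phi U}$. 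As $\tau<\infty$ a.s.\ and $L<U$ these two events partition the sample space, so with $p\defeq\Pr[X_\tau\ge U]$ we obtain $p\,\e^{\phi U}+(1-p)\,\e^{\phi(L-D)}\le\e^{\phi X_0}$, and solving for $p$ (using $\e^{\phi U}>\e^{\phi(L-D)}$) gives the claim. The main obstacle is precisely the exponential‑moment step of paragraph two — one must juggle the drift, second‑moment, and step‑size bounds so that the constant $\phi=6\theta/(3S+2D\theta)$ comes out as exactly the supermartingale threshold (and verify $\phi D<3$ so the auxiliary inequality is applicable); the rest is routine.
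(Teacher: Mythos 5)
Your proposal is correct and follows essentially the same route as the paper: the same exponential supermartingale $\e^{\phi X_{t\wedge\tau}}$ with the same threshold $\phi=\frac{6\theta}{3S+2D\theta}$ (the paper uses the equivalent elementary bound $\e^{z}\leq 1+z+\frac{z^2/2}{1-|z|/3}$ for $|z|<3$, where $\phi D<3$ holds for the same reason you note), followed by optional stopping and the same overshoot estimate $X_\tau> L-D$ on the lower exit. No substantive differences to report.
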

\begin{proof}
    Let $Y_t=X_{\tau \wedge t}$ for convenience. 
    Then, $Y_{t}-Y_{t-1}=\indicator_{\tau>t-1}(X_t-X_{t-1})$.
    Hence, we have $|Y_{t}-Y_{t-1}|\leq D$ and $\E_{t-1}[Y_{t}-Y_{t-1}]=\indicator_{\tau>t-1}(\E_{t-1}[X_{t}]-X_{t-1})\leq -\indicator_{\tau>t-1}\theta$.
    Since $0\leq \phi< 3/D$,
    from the well-known fact $\mathrm{e}^z\leq 1+z+\frac{z^2/2}{1-|z|/3}$ for any $|z|<3$ (see e.g., p.39 of \cite{Ver18})
    with $z=\phi\cdot (Y_{t}-Y_{t-1})$,
    we have
    \begin{align*}
         \E_{t-1}\sbra*{\mathrm{e}^{\phi\cdot (Y_t-Y_{t-1})}}
         &\leq \exp\left(\phi \E_{t-1}[Y_t-Y_{t-1}]+\frac{\phi^2/2}{1-\phi D/3}\E_{t-1}[(Y_t-Y_{t-1})^2]\right)\\
         &\leq \exp\left(-\phi\indicator_{\tau>t-1}\theta +\indicator_{\tau>t-1}\frac{\phi^2S/2}{1-\phi D/3}\right)\\
         &= \exp\left(-\phi\indicator_{\tau>t-1}\theta+\indicator_{\tau>t-1}\phi \frac{3\theta S}{3S+2D\theta}\frac{3S+2D\theta}{3S}\right)\\
         &=1,
    \end{align*}
    i.e., $Z_t=\mathrm{e}^{\phi Y_t}=\mathrm{e}^{\phi X_{t\wedge\tau}}$ is a supermartingale. 
    Furthermore, we have $\abs{Z_t - Z_{t-1}} < \infty$.
    Hence, we can apply \cref{lem:OST} and  $\E\left[Z_\tau\right]\leq \E[Z_0]\leq \mathrm{e}^{\phi X_0}$ holds.
    Then, from
    \begin{align*}
    \E\left[Z_\tau\right]
    &=\E\left[Z_\tau\mid X_\tau\geq U\right]\Pr[X_\tau\geq U]+\E\left[Z_\tau\mid X_\tau\leq L\right]\Pr[X_\tau\geq L]\\
    &=\E\left[\mathrm{e}^{\phi X_\tau}\mid X_\tau\geq U\right]\Pr[X_\tau\geq U]+\E\left[\mathrm{e}^{\phi X_\tau}\mid X_\tau\leq L\right](1-\Pr[X_\tau\geq U])\\
    &\geq \mathrm{e}^{\phi U}\Pr[X_\tau\geq U]+\mathrm{e}^{\phi (L-D)}(1-\Pr[X_\tau\geq U])\\
    &=(\mathrm{e}^{\phi U}-\mathrm{e}^{\phi (L-D)})\Pr[X_\tau\geq U]+\mathrm{e}^{\phi (L-D)},
    \end{align*}
we obtain
\begin{align*}
    \Pr[X_\tau\geq U]
    \leq \frac{\mathrm{e}^{\phi X_0}-\mathrm{e}^{\phi (L-D)}}{\mathrm{e}^{\phi U}-\mathrm{e}^{\phi (L-D)}}.
\end{align*}
\end{proof}

\begin{theorem}
    \label{thm:OST_mar}
    Let $(X_t)_{t\in \mathbb{N}_0}$ be a sequence of random variables such that $\abs{X_t}<\infty$ for every $t\geq 0$ and $\tau$ be a stopping time
    such that $\E[\tau]<\infty$.
    Suppose there exists a constant $S>0$ such that, for all $t\geq 1$,
    \begin{enumerate}[label=(\roman*)]
        \item $\indicator_{\tau>t-1}\left(\Var_{t-1}[X_t]-S\right)\geq 0$,
        \item $\indicator_{\tau>t-1}\left(\E_{t-1}[X_t]^2-X_{t-1}^2\right)\geq 0$.
    \end{enumerate}
    Then, we have $\E[\tau]\leq \frac{\E\sbra*{X_\tau^2}-\E\sbra*{X_0^2}}{S}$.
\end{theorem}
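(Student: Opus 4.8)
The plan is to exhibit a suitable submartingale and invoke the optional stopping theorem (\cref{lem:OST}). Define the stopped process
\[
  M_t \defeq X_{t\wedge\tau}^2 - S\,(t\wedge\tau),\qquad t\in\Nat_0,
\]
so that $M_0 = X_0^2$ and, since $\E[\tau]<\infty$ forces $\tau<\infty$ almost surely, $M_\tau = X_\tau^2 - S\tau$. If $\E\sbra*{X_\tau^2}=\infty$ the claimed bound is trivial, so assume $\E\sbra*{X_\tau^2}<\infty$; together with $S\,\E[\tau]<\infty$ this makes $M_\tau$ integrable. The whole argument then reduces to checking that $(M_t)_{t\ge 0}$ is a submartingale: granting this, \cref{lem:OST} applied with the stopping time $\tau$ gives $\E\sbra*{M_\tau}\ge \E\sbra*{M_0}$, i.e. $\E\sbra*{X_\tau^2} - S\,\E[\tau] \ge \E\sbra*{X_0^2}$, which is exactly $\E[\tau]\le \bigl(\E\sbra*{X_\tau^2}-\E\sbra*{X_0^2}\bigr)/S$.

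To verify the submartingale property I would compute the one-step increment. On $\{\tau\le t-1\}$ both $t\wedge\tau$ and $(t-1)\wedge\tau$ equal $\tau$, so $M_t=M_{t-1}$; on $\{\tau>t-1\}$ we have $t\wedge\tau=t$ and $(t-1)\wedge\tau=t-1$, so $M_t-M_{t-1}=X_t^2-X_{t-1}^2-S$. Hence $M_t-M_{t-1}=\indicator_{\tau>t-1}\bigl(X_t^2-X_{t-1}^2-S\bigr)$, and since $\indicator_{\tau>t-1}$ is $\mathcal{F}_{t-1}$-measurable,
\[
  \E_{t-1}\sbra*{M_t-M_{t-1}} = \indicator_{\tau>t-1}\,\E_{t-1}\sbra*{X_t^2-X_{t-1}^2-S}.
\]
Now apply the variance decomposition $\E_{t-1}\sbra*{X_t^2}=\E_{t-1}\sbra*{X_t}^2+\Var_{t-1}\sbra*{X_t}$. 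On $\{\tau>t-1\}$, hypothesis~(ii) gives $\E_{t-1}\sbra*{X_t}^2\ge X_{t-1}^2$ and hypothesis~(i) gives $\Var_{t-1}\sbra*{X_t}\ge S$, hence $\E_{t-1}\sbra*{X_t^2}\ge X_{t-1}^2+S$; therefore $\E_{t-1}\sbra*{M_t-M_{t-1}}\ge 0$, as needed.

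It remains only to confirm that \cref{lem:OST} is applicable to $M$. Its hypothesis $\E[\tau]<\infty$ holds by assumption, and $\E_{t-1}\sbra*{\abs{M_t-M_{t-1}}}=\indicator_{\tau>t-1}\,\E_{t-1}\sbra*{\abs{X_t^2-X_{t-1}^2-S}}\le \indicator_{\tau>t-1}\bigl(\E_{t-1}\sbra*{X_t^2}+X_{t-1}^2+S\bigr)<\infty$, since the conditional second moment is finite on $\{\tau>t-1\}$ (implicit in hypothesis~(i), which refers to $\Var_{t-1}\sbra*{X_t}$) and $X_{t-1}$ is finite. The only genuinely delicate point is this interplay between the finiteness assumptions and the optional stopping step; in every application of this theorem the process $X_t$ is bounded (it is the bias $\delta_t\in[0,1]$), so all integrability concerns and the limiting argument underlying \cref{lem:OST} are immediate.
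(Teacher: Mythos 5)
Your proof is correct and follows essentially the same route as the paper: you define the stopped process $X_{t\wedge\tau}^2 - S(t\wedge\tau)$ (the paper's $Z_t=Y_{t\wedge\tau}$ with $Y_t=X_t^2-St$), verify the submartingale property via the same variance decomposition using hypotheses (i) and (ii), and conclude by the optional stopping theorem (\cref{lem:OST}). Your additional remarks on integrability only make explicit what the paper leaves implicit.
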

\begin{proof}
    Let $Y_t\defeq X_t^2-St$ and $Z_t=Y_{t\wedge \tau}$.
    Then, $Z_t$ is a submartingale since
    \begin{align*}
        \E_{t-1}\left[Z_t-Z_{t-1}\right]
        &=\E_{t-1}\left[\indicator_{\tau>t-1}(Y_t-Y_{t-1})\right]\\
        &=\indicator_{\tau>t-1}\E_{t-1}\left[X_t^2-X_{t-1}^2-S\right]\\
        &=\indicator_{\tau>t-1}\left(\E_{t-1}\left[X_t^2\right]-\E_{t-1}\left[X_t\right]^2-S+\E_{t-1}\left[X_t\right]^2-X_{t-1}^2\right)\\
        &\geq 0.
    \end{align*}
    Furthermore, $\E_{t-1}[Z_t-Z_{t-1}]<\infty$.
    Hence, applying \cref{lem:OST} yields
    \begin{align*}
        \E[X_\tau^2]-S\E[\tau]
        =\E[Z_\tau]
        \geq \E[Z_0]
        =\E[X_0^2].
    \end{align*}
\end{proof}

\begin{lemma}[Multiplicative drift]
    \label{lem:multipricative_drift_Freedman}
    Let $(X_t)_{t\in\Nat_0}$ be a sequence of real-valued random variables and $\tau$ be a stopping time.
    Suppose the following holds for some positive constants $a$, $D$, $S$, $U$ and for all $t\geq 1$:
    \begin{enumerate}[label=(\roman*)]
        \item $\indicator_{\tau>t-1}\E_{t-1}\left[X_t-aX_{t-1}\right]\geq 0$,
        \item $\indicator_{\tau>t-1}|X_t-X_{t-1}|\leq D$,
        \item $\indicator_{\tau>t-1}\E_{t-1}\left[(X_t-X_{t-1})^2\right]\leq S$,
        \item $\indicator_{\tau>t-1}|X_{t-1}|\leq U$.
    \end{enumerate}
    Then, for any $\lambda,T\geq 0$,
    \begin{align*}
    \Pr\left[\bigvee_{t=0}^T\left\{X_{t\wedge \tau}\leq a^{t\wedge \tau}(X_0-\lambda)\right\}\right] 
    \leq \exp\left(-\frac{\lambda^2/2}{2A\left(S+(a-1)^2U^2\right)+\lambda B\left(D+|a-1|U\right)}\right)
    \end{align*}
    holds, where $A=\sum_{t=1}^Ta^{-2t}$ and $B=\max\{1,a^{-2T}\}$.
\end{lemma}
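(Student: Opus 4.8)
The plan is to derive \cref{lem:multipricative_drift_Freedman} from Freedman's inequality (\cref{thm:Freedman}) after a change of variables that converts the multiplicative drift into a supermartingale. Set $M_t \defeq a^{-(t\wedge\tau)}X_{t\wedge\tau}$ and $Z_t \defeq X_0 - M_t$, so that $Z_0 = 0$. On $\{\tau\ge t\}$ we have $Z_t - Z_{t-1} = a^{-t}(aX_{t-1}-X_t)$, and condition~(i) gives $\E_{t-1}[X_t]\ge aX_{t-1}$, hence $\E_{t-1}[Z_t-Z_{t-1}]\le 0$; on $\{\tau\le t-1\}$ the increment is identically $0$. Thus $(Z_t)$ is a supermartingale. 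Moreover the target event rewrites exactly as $\bigvee_{t=0}^T\{X_{t\wedge\tau}\le a^{t\wedge\tau}(X_0-\lambda)\} = \bigvee_{t=0}^T\{Z_t\ge\lambda\} = \{\max_{0\le t\le T}Z_t\ge\lambda\}$, by dividing the defining inequality by $a^{t\wedge\tau}>0$.

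Next I would check the two ingredients needed to invoke \cref{thm:Freedman} on the horizon $[0,T]$. For the increments, on $\{\tau\ge t\}$ write $aX_{t-1}-X_t = (a-1)X_{t-1}-(X_t-X_{t-1})$ and use (ii),(iv) to get $|Z_t-Z_{t-1}|\le a^{-t}(|a-1|U+D)\le B(D+|a-1|U)=:D'$, where $B=\max\{1,a^{-2T}\}\ge\max\{1,a^{-T}\}\ge a^{-t}$ for $1\le t\le T$; on $\{\tau<t\}$ the increment vanishes, so $Z_t-Z_{t-1}\le D'$ always. For the predictable quadratic variation, $(x+y)^2\le 2x^2+2y^2$ together with (iii),(iv) gives $\E_{t-1}[(Z_t-Z_{t-1})^2]\le 2a^{-2t}(S+(a-1)^2U^2)$ on $\{\tau\ge t\}$ (and $0$ otherwise), so $\sum_{t=1}^{T'}\E_{t-1}[(Z_t-Z_{t-1})^2]\le 2A(S+(a-1)^2U^2)=:W$ for every $T'\le T$; in particular the side condition $\sum\le W$ in \cref{thm:Freedman} holds deterministically over our horizon.

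Applying \cref{thm:Freedman} to the supermartingale $(Z_t)$ with increment bound $D'$ then yields $\Pr[\max_{0\le t\le T}Z_t\ge\lambda]\le\exp\!\big(-\lambda^2/(2W+2D'\lambda)\big)$. Substituting $W=2A(S+(a-1)^2U^2)$ and $D'=B(D+|a-1|U)$ and dividing numerator and denominator by $2$ gives exactly the claimed bound $\exp\!\big(-\tfrac{\lambda^2/2}{2A(S+(a-1)^2U^2)+\lambda B(D+|a-1|U)}\big)$.

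This argument is essentially routine, so there is no genuine obstacle, only bookkeeping. The one modeling choice that makes everything work is the stopped, geometrically rescaled process $M_t=a^{-(t\wedge\tau)}X_{t\wedge\tau}$, which turns hypothesis~(i) into a supermartingale inequality for $Z_t=X_0-M_t$; after that the only care needed is tracking the constants $A=\sum_{t=1}^T a^{-2t}$ (from the variance sum) and $B=\max\{1,a^{-2T}\}$ (a convenient, slightly loose bound for $a^{-t}$ in the increment estimate). No integrability issues arise, since $Z_0=0$ and $|Z_t-Z_{t-1}|\le D'$ force $|Z_t|\le TD'$.
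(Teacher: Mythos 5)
Your proposal is correct and follows essentially the same route as the paper: rescale and stop the process (the paper uses $Z_t=-a^{-(t\wedge\tau)}X_{t\wedge\tau}$, yours is the same up to the additive constant $X_0$), verify the supermartingale property from (i), bound the increments by $B(D+|a-1|U)$ and the predictable quadratic variation by $2A(S+(a-1)^2U^2)$ via $(x+y)^2\le 2x^2+2y^2$, and conclude with Freedman's inequality exactly as in the paper's proof.
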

\begin{proof}
    Let $Y_t=-a^{-t}X_t$ and $Z_t=Y_{t\wedge \tau}$.
    First, we observe
    \begin{align*}
        Z_t-Z_{t-1}=\indicator_{\tau>t-1}(Y_t-Y_{t-1})=a^{-t}\indicator_{\tau>t-1}\left(aX_{t-1}-X_{t}\right)
    \end{align*}
    holds for any $t\geq 1$.
    Hence, 
    \begin{align*}
        \E_{t-1}\left[Z_t-Z_{t-1}\right]
        &=a^{-t}\indicator_{\tau>t-1}\E_{t-1}\left[aX_{t-1}-X_{t}\right]
        \leq 0
    \end{align*}
    holds, i.e., $(Z_t)_{t\in \mathbb{N}_0}$ is a supermartingale.
    Furthermore, we have
    \begin{align*}
        \left|Z_t-Z_{t-1}\right|
        &=a^{-t}\indicator_{\tau>t-1}\left|X_{t-1}-X_t+(a-1) X_{t-1}\right|
        \leq B(D+|a-1| U)
    \end{align*}
    and
    \begin{align}
        \E_{t-1}\left[(Z_t-Z_{t-1})^2\right] 
        &=a^{-2t}\indicator_{\tau>t-1}\E_{t-1}\left[\left(X_{t-1}-X_t+(a-1) X_{t-1}\right)^2\right] \nonumber \\
        &\leq a^{-2t}\indicator_{\tau>t-1}\E_{t-1}\left[2\left(X_{t-1}-X_t\right)^2
        +2(a-1)^2X_{t-1}^2
        \right] \nonumber \\
        &\leq 2a^{-2t}(S+(a-1)^2U^2). \label{eq:squared_X_G}
    \end{align}
    Note that $(x+y)^2\leq 2x^2+2y^2$ holds for any $x,y$.
    For any $T\geq 0$, \cref{eq:squared_X_G} implies
    \begin{align*}
        \sum_{t=1}^T\E_{t-1}[(Z_t-Z_{t-1})^2]
        &\leq 2A(S+(a-1)^2U^2).
    \end{align*}
    Here, $A=\sum_{t=1}^{T}a^{-2t}$. 
    Now, we apply \cref{thm:Freedman} to $(Z_t)_{t\in \mathbb{N}_0}$ 
    with $D\leftarrow B(D+|a-1|U)$ and  $W \leftarrow 2A(S+(a-1)^2U^2)$.
    Since $Z_t-Z_0=X_0-a^{-(t\wedge \tau)}X_{t\wedge \tau}$,
    \begin{align*}
        \Pr\left[\bigvee_{t=0}^T\left\{X_{t\wedge \tau}\leq a^{t\wedge \tau}(X_0-\lambda)\right\}\right] 
        &=\Pr\left[\bigvee_{t=0}^T\left\{Z_t-Z_0\geq \lambda\right\}\right] \\
        &= \Pr\left[\bigvee_{t=0}^T\left\{Z_t-Z_0\geq \lambda \textrm{ and } \sum_{i=1}^t\E_{t-1}[(Z_t-Z_{i-1})^2] \leq W \right\}\right]\\ 
        &\leq \exp\left(-\frac{\lambda^2}{2W+2Dh}\right)\\
        &\leq \exp\left(-\frac{\lambda^2/2}{2A(S+(a-1)^2U^2)+\lambda B(D+|a-1|U)}\right)
    \end{align*}
    holds for any $\lambda,T\geq 0$.
\end{proof}
\section{Deferred Proofs} \label{sec:proof of basic facts}
\subsection{Proof of \texorpdfstring{\cref{lem:basic inequalities}}{Basic Inequalities}}
    Fix a time step $t\geq 1$.
    For convenience, we omit the subscript $t$ and use $\alpha$ to denote $\alpha_{t-1}$ and
    $\alpha'$ to denote $\alpha_t$.
    Let $g(i)=\alpha(i)(1+\alpha(i)-\norm{\alpha}^2)$.
    Note that \cref{item:expectation of alpha} is shown in \cref{sec:preliminaries}.
    We prove other claims.
    
    \begin{proof}[Proof of \cref{item:difference of alpha}.]
        The claim is obvious since at most one vertex changes its opinion at every step.
    \end{proof}

    \begin{proof}[Proof of \cref{item:square difference of alpha}.]
        Note that $(\alpha'(i)-\alpha(i))^2=1/n^2$ if
            the population of opinion $i$ changes.
        Therefore, we have $\E_{t-1}[(\alpha'(i)-\alpha(i))^2]= \frac{1}{n^2}\Pr_{t-1}[\alpha'(i)\neq \alpha(i)] = \frac{\alpha(i)(1-g(i)) + (1-\alpha(i))g(i)}{n^2} \leq \frac{3\alpha(i)}{n^2}$.
    \end{proof}
    
    \begin{proof}[Proof of \cref{item:expectation of delta}.]
        Apply the Jensen's inequality for the function $x\mapsto\abs{x}$.
        Then we have
        \[
        \E_{t-1}[\delta_t(i,j)] \geq \abs*{\E_{t-1}[\alpha'(i) - \alpha'(j)]} = \delta_{t-1}(i,j)\rbra*{1+\frac{\alpha(i)+\alpha(j)-\norm{\alpha}^2}{n}}.
        \]
    \end{proof}
    
    \begin{proof}[Proof of \cref{item:difference of delta}.]
        From \cref{item:difference of alpha}, we have
            $\abs{\delta_t(i,j) - \delta_{t-1}(i,j)} \leq \abs{\alpha'(i)-\alpha(i)}+\abs{\alpha'(j)-\alpha(j)}\leq \frac{2}{n}$.
    \end{proof}
    
    \begin{proof}[Proof of \cref{item:square difference of delta}.]
        Since $\Pr_{t-1}[\delta_t(i,j)\neq \delta_{t-1}(i,j)]\leq 3(\alpha(i)+\alpha(j))$, we have
            \[
            \E_{t-1}[(\delta_t(i,j)-\delta_{t-1}(i,j))^2]\leq \frac{4}{n^2}\cdot 3(\alpha(i)+\alpha(j)).
            \]
    \end{proof}
    
    \begin{proof}[Proof of \cref{item:variance of delta}.]
        By calculation, we have
        \begin{align*}
            \Var_{t-1}\sbra*{n\delta_t} &=
                g(i) + g(j) + \alpha(i) + \alpha(j)
                    - (g(i) - g(j))^2 - \delta_{t-1}(i,j)^2 \\
                &\geq \alpha(i) + \alpha(j) - 5\delta_{t-1}(i,j)^2.
        \end{align*}
        In the last inequality, note that
            $\rbra*{g(i) - g(j)}^2 = \delta_{t-1}(i,j)^2\cdot \rbra*{1+\alpha(i) + \alpha(j) - \norm{\alpha}^2}^2 \leq 4\delta_{t-1}(i,j)^2$.
    \end{proof}
    \begin{proof}[Proof of \cref{item:expectation of 2norm}.]
        By calculation, we have
    \begin{align}
        &\E_{t-1}\sbra*{\norm{\alpha'}^2-\norm{\alpha}^2} \nonumber\\
        &=\sum_{i,j\in [k]:i\neq j}\E_{t-1}\sbra*{\norm{\alpha'}^2-\norm{\alpha}^2 \condition \begin{array}{l}
             \alpha'(i)=\alpha(i)+1/n,   \\
              \alpha'(j)=\alpha(j)-1/n
        \end{array}}\cdot
            \Pr\sbra*{\begin{array}{l}
             \alpha'(i)=\alpha(i)+1/n,   \\
              \alpha'(j)=\alpha(j)-1/n
        \end{array}} \nonumber \\
        &=\sum_{i,j\in [k]:i\neq j}\rbra*{\frac{2}{n^2}+\frac{2(\alpha(i)-\alpha(j))}{n}}\alpha(j)g(i). \label{eq:ex2norm}
    \end{align}
    Since $\sum_{i\in [k]}\alpha(i)g(i)=\norm{\alpha}^2+\norm{\alpha}_3^3-\norm{\alpha}^4$, we have
    \begin{align*}
        &\sum_{i,j\in [k]:i\neq j}\alpha(j)g(i) =\sum_{i\in [k]}\alpha(i)(1-g(i)) =1-\norm{\alpha}^2-\norm{\alpha}_3^3+\norm{\alpha}^4, \\
        &\sum_{i,j\in [k]:i\neq j}(\alpha(i)-\alpha(j))\alpha(j)g(i)
        =\sum_{i\in [k]}(1-\alpha(i))\alpha(i)g(i)-\sum_{j\in [k]}\alpha(j)^2(1-g(j)) 
        =\norm{\alpha}_3^3-\norm{\alpha}^4.
    \end{align*}
    Consequently, we obtain the claim
    \begin{align*}
        \E\sbra*{\norm*{\alpha'}^2}
        &=\norm{\alpha}^2+\frac{2}{n}\rbra*{1-\frac{1}{n}}\rbra*{\norm{\alpha}_3^3-\norm{\alpha}^4}+\frac{2}{n^2}(1-\norm{\alpha}^2)
        \geq \norm{\alpha}^2.
    \end{align*}
    In the last inequality, we use the Cauchy--Schwarz inequality to obtain
    $\norm{\alpha}^4=\rbra*{\sum_{i\in [k]}\alpha_i^2}^2\leq \sum_{i\in [k]}\alpha_i\sum_{i\in [k]}\alpha_i^3=\norm{\alpha}_3^3$.
    \end{proof}

    \begin{proof}[Proof of \cref{item:square difference of 2norm}.]
    Similar to \cref{eq:ex2norm}, we obtain
    \begin{align*}
        \E\sbra*{\rbra*{\norm{\alpha'}^2-\norm{\alpha}^2}^2}
        &=\frac{4}{n^2}\sum_{i,j\in [k]:i\neq j}\rbra*{\frac{1}{n}+\alpha(i)-\alpha(j)}^2\alpha(j)g(i)\\
        &\leq \frac{4}{n^2}\sum_{i,j\in [k]:i\neq j}\rbra*{\frac{1}{n^2}+\alpha(i)^2+\alpha(j)^2+\frac{2(\alpha(i)-\alpha(j))}{n}}\alpha(j)g(i)\\
        &\leq \frac{4}{n^2}\rbra*{\frac{1}{n^2}+2\norm{\alpha}_3^3+\norm{\alpha}_3^3+\frac{2}{n}\norm{\alpha}_3^3}\\
        &\leq \frac{24\norm{\alpha}_3^3}{n^2}.
    \end{align*}
    This proves \cref{item:square difference of 2norm}.
    \end{proof}

    \begin{proof}[Proof of \cref{item:difference of 2norm}.]
    In each update, at most two opinion changes.
    This implies $|\alpha'(i)^2-\alpha(i)^2|\leq \frac{3\max_{j\in[k]}\alpha(j)}{n}$ for each $i\in [k]$ and thus
    \begin{align*}
        \abs*{\norm{\alpha'}^2-\norm{\alpha}^2}
        &=\abs*{\sum_{i\in [k]}\rbra*{\alpha'(i)^2-\alpha(i)^2}}
        \leq \frac{6\max_{j\in[k]}\alpha(j)}{n}.
    \end{align*}
    This proves \cref{item:difference of 2norm}.
    \end{proof}

\subsection{Proof of \texorpdfstring{\cref{lem:ai_L2norm}}{Expectation of Z}}
We prove \cref{lem:ai_L2norm}.
The following lemma approximates the expectation $\E_{t-1}[\alpharatio_t]$ by $\frac{\E_{t-1}[\alpha_t(i)]}{\E_{t-1}[\norm{\alpha_t}^2]}$.
\begin{lemma}
    \label{lem:Taylor}
    Let $M>0$ be a positive constant.
    For two random variables $X$ and $Y$ such that $Y\geq M$, we have
    \begin{align*}
        \abs*{\E\sbra*{\frac{X}{Y}}-\frac{\E\sbra*{X}}{\E\sbra*{Y}}}
        \leq \frac{1}{M\E[Y]} \rbra*{\frac{\E[X]}{\E[Y]}\Var[Y]+\sqrt{\Var[X]\Var[Y]}}.
    \end{align*}
\end{lemma}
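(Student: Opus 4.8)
The plan is to reduce the statement to two elementary estimates via a second-order expansion of $1/Y$ around $\mu_Y \defeq \E\sbra*{Y}$. Write $\mu_X \defeq \E\sbra*{X}$, and note that $\mu_Y \ge M > 0$, so every ratio below is well defined. Starting from the algebraic identity $\frac{X}{Y} = \frac{X}{\mu_Y} - \frac{X(Y-\mu_Y)}{\mu_Y Y}$ and taking expectations gives
\[
    \E\sbra*{\frac{X}{Y}} - \frac{\mu_X}{\mu_Y} = -\frac{1}{\mu_Y}\,\E\sbra*{\frac{X(Y-\mu_Y)}{Y}},
\]
so it suffices to bound $\abs*{\E\sbra*{X(Y-\mu_Y)/Y}}$ by $\frac{\mu_X}{M}\cdot\frac{\Var\sbra*{Y}}{\mu_Y} + \frac{1}{M}\sqrt{\Var\sbra*{X}\Var\sbra*{Y}}$.

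First I would split $X = \mu_X + (X-\mu_X)$ inside this expectation, producing $\mu_X\,\E\sbra*{(Y-\mu_Y)/Y}$ plus $\E\sbra*{(X-\mu_X)(Y-\mu_Y)/Y}$. For the first term I apply the same trick to $Y$: from $\frac{Y-\mu_Y}{Y} = \frac{Y-\mu_Y}{\mu_Y} - \frac{(Y-\mu_Y)^2}{\mu_Y Y}$ the linear part has zero mean, so $\abs*{\E\sbra*{(Y-\mu_Y)/Y}} = \frac{1}{\mu_Y}\E\sbra*{(Y-\mu_Y)^2/Y} \le \frac{1}{M\mu_Y}\E\sbra*{(Y-\mu_Y)^2} = \Var\sbra*{Y}/(M\mu_Y)$, using $Y\ge M$. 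For the second term I bound $\abs*{(X-\mu_X)(Y-\mu_Y)/Y} \le \abs{X-\mu_X}\abs{Y-\mu_Y}/M$ pointwise (again using $Y\ge M$) and then apply Cauchy--Schwarz, $\E\sbra*{\abs{X-\mu_X}\abs{Y-\mu_Y}} \le \sqrt{\Var\sbra*{X}\Var\sbra*{Y}}$. Multiplying the sum of these two bounds by $1/\mu_Y$ and collecting terms yields exactly the claimed right-hand side.

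I do not expect a genuine obstacle, as this is a routine ``ratio expansion'' estimate. The one point that requires care is organizing the algebra so that every surviving factor of $1/Y$ is multiplied by a nonnegative quantity — a square or an absolute value — before invoking $1/Y \le 1/M$; a careless expansion that leaves a signed $1/Y$ term does not immediately give a clean bound and reintroduces $\E\sbra*{1/Y}$, which we do not wish to control. A minor caveat is that the displayed inequality implicitly assumes $\E\sbra*{X}\ge 0$: in general the argument above gives the bound with $\abs*{\E\sbra*{X}}$ in place of $\E\sbra*{X}$, which suffices for the intended application where $X=\alpha_t(i)\ge 0$.
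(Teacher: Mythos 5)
Your proposal is correct and follows essentially the same route as the paper: the paper states a single exact algebraic identity whose linear terms have zero mean, and then applies $Y\geq M$ together with Cauchy--Schwarz, which is exactly the decomposition your two-stage expansion produces. Your caveat that the bound should read $\abs{\E[X]}$ for general $X$ is well taken (the paper's proof implicitly uses $\E[X]\geq 0$ in the same place), and it is harmless in the intended application where $X=\alpha_t(i)\geq 0$.
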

\begin{proof}
    First, we observe for any reals $x,a$ and non-zero reals $y,b$ that 
    \begin{align*}
        \frac{x}{y}-\frac{a}{b}-\frac{x-a}{b}+\frac{a(y-b)}{b^2}=\frac{a(y-b)^2}{yb^2}-\frac{(x-a)(y-b)}{yb}
    \end{align*}
    holds.
    Setting $a=\E[X],b=\E[Y]$ and
    taking expectation, we have
    \begin{align*}
        \left|\E\left[\frac{X}{Y}\right]-\frac{\E\left[X\right]}{\E\left[Y\right]}\right|
        &=\left|\E\left[\frac{X}{Y}-\frac{\E\left[X\right]}{\E\left[Y\right]}-\frac{X-\E[X]}{\E\left[Y\right]}+\frac{\E[X](Y-\E[Y])}{\E[Y]^2}\right]\right|\\
        &=\left|\E\left[\frac{\E[X](Y-\E[Y])^2}{Y\E[Y]^2}-\frac{(X-\E[X])(Y-\E[Y])}{Y\E[Y]}\right]\right|\\
        &\leq \frac{\E[X]\E[|Y-\E[Y]|^2]}{M\E[Y]^2}+\frac{\E[|X-\E[X]||Y-\E[Y]|]}{M\E[Y]}\\
        &\leq \frac{\E[X]\Var[Y]}{M\E[Y]^2}+\frac{\sqrt{\Var[X]\Var[Y]}}{M\E[Y]}.
    \end{align*}
\end{proof}
In the following, we bound $\E_{t-1}[(\alpharatio_t-\alpharatio_{t-1})^2]$ in terms variance.
\begin{lemma}
    \label{lem:First_dif}
    Let $X$ and $Y$ be non-negative random variables such that $Y\geq M$ for a positive constant $M$.
    Then, we have
    \begin{align*}
        \E\sbra*{\rbra*{\frac{X}{Y}-\frac{\E[X]}{\E[Y]}}^2}
        \leq \frac{1}{M^2}\rbra*{\sqrt{\Var[X]}+\frac{\E[X]}{\E[Y]}\sqrt{\Var[Y]}}^2.
    \end{align*}
\end{lemma}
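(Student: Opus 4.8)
The plan is to prove \cref{lem:First_dif} by the same algebraic identity used in \cref{lem:Taylor}, but now squaring before taking expectation rather than estimating a single deviation. Write $a=\E[X]$ and $b=\E[Y]$, so $b\ge M>0$. The starting point is the elementary identity
\[
  \frac{X}{Y}-\frac{a}{b}=\frac{a(Y-b)^2}{Yb^2}-\frac{(X-a)(Y-b)}{Yb},
\]
which holds pointwise (whenever $Y\neq 0$), exactly as in the proof of \cref{lem:Taylor}. I would then bound $\abs*{\frac{X}{Y}-\frac{a}{b}}$ pointwise using $Y\ge M$ and, crucially, the cruder bound $(Y-b)^2/Y\le \abs{Y-b}$ is \emph{not} quite what is wanted here; instead I keep both terms with a $1/Y\le 1/M$ factor:
\[
  \abs*{\frac{X}{Y}-\frac{a}{b}}\le \frac{1}{M}\rbra*{\frac{a\,(Y-b)^2}{b^2}\cdot\frac{1}{b}\cdot b + \frac{\abs{X-a}\,\abs{Y-b}}{b}}
\]
— more cleanly, $\abs*{\frac{X}{Y}-\frac{a}{b}}\le \frac{1}{Mb}\rbra*{\abs{X-a}+\frac{a}{b}\abs{Y-b}}\cdot\frac{\abs{Y-b}}{\,?\,}$, so I need to be a little careful about which of the two factors to keep. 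Let me restate the clean route below.

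The cleaner route: from the identity, factor out $\frac{Y-b}{Yb}$ to get
\[
  \frac{X}{Y}-\frac{a}{b}=\frac{Y-b}{Yb}\rbra*{\frac{a(Y-b)}{b}-(X-a)}.
\]
Hmm, this still has an unbounded $(Y-b)$ inside. The genuinely clean way is different: use $\frac{X}{Y}-\frac{a}{b}=\frac{Xb-Ya}{Yb}=\frac{(X-a)b-(Y-b)a}{Yb}=\frac{1}{Y}\rbra*{(X-a)-\frac{a}{b}(Y-b)}$. Then $\abs*{\frac{X}{Y}-\frac{a}{b}}\le\frac{1}{M}\abs*{(X-a)-\frac{a}{b}(Y-b)}$ pointwise. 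Squaring and taking expectation,
\[
  \E\sbra*{\rbra*{\tfrac{X}{Y}-\tfrac{a}{b}}^2}\le\frac1{M^2}\,\E\sbra*{\rbra*{(X-a)-\tfrac{a}{b}(Y-b)}^2}
  =\frac1{M^2}\,\Var\sbra*{X-\tfrac{a}{b}Y}.
\]
Now $\Var\sbra*{X-\tfrac{a}{b}Y}=\Var[X]-\tfrac{2a}{b}\mathrm{Cov}(X,Y)+\tfrac{a^2}{b^2}\Var[Y]$, and by Cauchy–Schwarz $\abs{\mathrm{Cov}(X,Y)}\le\sqrt{\Var[X]\Var[Y]}$, giving
\[
  \Var\sbra*{X-\tfrac{a}{b}Y}\le \Var[X]+\tfrac{2a}{b}\sqrt{\Var[X]\Var[Y]}+\tfrac{a^2}{b^2}\Var[Y]=\rbra*{\sqrt{\Var[X]}+\tfrac{a}{b}\sqrt{\Var[Y]}}^2,
\]
which is exactly the claimed bound since $a/b=\E[X]/\E[Y]$.

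The only real subtlety — and the step I'd flag as the place to be careful rather than a deep obstacle — is the pointwise inequality $\abs*{\frac{X}{Y}-\frac{a}{b}}\le\frac1{M}\abs*{(X-a)-\frac{a}{b}(Y-b)}$: this needs $Y\ge M$ everywhere (which is the hypothesis) and that $\frac1Y\le\frac1M$, after which squaring preserves the inequality because both sides are nonnegative. Everything else is a one-line variance expansion plus Cauchy–Schwarz, so no genuine difficulty arises; the main care is just bookkeeping of signs and making sure the absolute value is handled before squaring. I would present it in roughly four displayed lines following the structure of the \cref{lem:Taylor} proof, so the two proofs read in parallel.
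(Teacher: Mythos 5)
Your final argument is correct and is essentially the paper's proof: the identity $\frac{X}{Y}-\frac{\E[X]}{\E[Y]}=\frac{1}{Y}\bigl((X-\E[X])-\frac{\E[X]}{\E[Y]}(Y-\E[Y])\bigr)$ is exactly the decomposition the paper uses, followed by the same bound $\frac{1}{Y}\le\frac{1}{M}$ and Cauchy--Schwarz on the cross term. Your packaging of the expansion as $\Var\bigl[X-\tfrac{\E[X]}{\E[Y]}Y\bigr]$ with a covariance bound is only a cosmetic rephrasing of the paper's direct expansion of the square, so no substantive difference or gap remains (and note the nonnegativity of $X$ and $Y$ is what makes the coefficient $\E[X]/\E[Y]$ nonnegative, as both arguments implicitly use).
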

\begin{proof}
    Let $x=\E[X]$ and $Y=\E[Y]$.
    From the Cauchy-Schwarz inequality, we immediately obtain
    \begin{align*}
        \E\left[\left(\frac{X}{Y}-\frac{x}{y}\right)^2\right]
        &= \E\left[\left(\frac{(X-x)y+x(y-Y)}{Yy}\right)^2\right]\\
        &\leq \frac{1}{M^2y^2}\left(y^2\E\left[(X-x)^2\right]+x^2\E\left[(y-Y)^2\right]+2xy\left|\E\left[(X-x)(y-Y)\right]\right|\right)\\
        &\leq \frac{1}{M^2y^2}\left(y^2\E\left[(X-x)^2\right]+x^2\E\left[(y-Y)^2\right]+2xy\sqrt{\E\left[(X-x)^2\right]\E\left[(y-Y)^2\right]}\right).
    \end{align*}
\end{proof}

\begin{proof}[Proof of \cref{lem:ai_L2norm}]
For simplicity, fix a time step $t$
    and we write $\alpha$ to denote $\alpha_{t-1}$ and
    $\alpha'$ to denote $\alpha_t$.
In this proof,
    by $\E[\cdot]$, we mean the conditional expectation $\E_{t-1}\sbra*{\cdot\condition \mathcal{E}_{t-1}}$ (we also use the similar notation for the variance $\Var[\cdot]$).
\paragraph*{Proof of \cref{lab:ail2norm1}.}
    We apply \cref{lem:Taylor} for $X=\alpha'(i)$ and $Y=\norm{\alpha'}^2\geq 1/k$.
    From \cref{item:expectation of alpha,item:expectation of 2norm,item:difference of 2norm,item:square difference of 2norm,item:square difference of alpha} of \cref{lem:basic inequalities}, we have
    \begin{align*}
        &\frac{\E\sbra*{\alpha'(i)}}{\E\sbra*{\norm{\alpha'}^2}}
        \leq \frac{\alpha(i)}{\norm{\alpha}^2}-\frac{(1-c_2)\alpha(i)}{n}
        \leq \alpharatio_{t-1} - (1-c_2)c_1\frac{\norm{\alpha}^2}{n}, & & (\text{$c_1\norm{\alpha}^2\leq \alpha(i)\leq c_2\norm{\alpha}^2$}) \\
        &\Var[\alpha'(i)] \leq \frac{3\alpha(i)}{n^2},\\
        &\Var[\norm{\alpha'}^2] \leq \frac{24\norm{\alpha}^3}{n^2}.
    \end{align*}
    From \cref{lem:Taylor}, we have
    \begin{align*}
        \E\sbra*{\frac{\alpha'(i)}{\norm{\alpha'}^2}} 
        - \frac{\E\sbra*{\alpha'(i)}}{\E\sbra*{\norm{\alpha'}^2 }}
        &\leq \frac{k}{\E\sbra*{\norm{\alpha'}^2 }}\rbra*{\frac{\E\sbra*{\alpha'(i)}\Var\sbra*{\norm{\alpha'}^2}}{\E\sbra*{\norm{\alpha'}^2 }}+\sqrt{\Var\sbra*{\alpha'(i)}\Var\sbra*{\norm{\alpha'}^2}}}\\
        &\leq \frac{k}{\norm{\alpha}^2}\rbra*{\frac{24\norm{\alpha}^3}{n^2}+\sqrt{\frac{72\norm{\alpha}^3}{n^2}\cdot\frac{\alpha(i)}{n^2}}}\\
        &\leq \frac{24k\norm{\alpha}}{n^2} + \frac{k}{n^2}\sqrt{\frac{72\alpha(i)}{\norm{\alpha}}}\\
        &= O\rbra*{\frac{k}{n^2}}.
    \end{align*}
    Therefore, we obtain
    \begin{align*}
        \E\sbra*{\alpharatio_t} \leq \frac{\E\sbra*{\alpha'(i)}}{\E\sbra*{\norm{\alpha'}^2 }} + O\rbra*{\frac{k}{n^2}} \leq \alpharatio_{t-1} - \frac{(1-c_2)c_1 - o(1)}{kn}.
    \end{align*}
    In the last inequality, we used $k=o(\sqrt{n})$.
    
\paragraph*{Proof of \cref{lab:ail2norm2}.}
    We apply \cref{lem:First_dif} for $X=\alpha'(i)$ and $Y=\norm{\alpha'}^2$.
    Note that we can take $M=\norm{\alpha}^2/2$ since $\norm{\alpha}^2\geq 1/k$ and thus $Y\geq \norm{\alpha}^2-2/n^2\geq \norm{\alpha}^2/2$.
    Therefore, we have
    \begin{align*}
       \E\sbra*{\rbra*{\frac{\alpha'(i)}{\norm{\alpha'}^2}-\frac{\alpha(i)}{\norm{\alpha}^2}}^2}
       &\leq \frac{4}{\norm{\alpha}^4}\rbra*{ \sqrt{\Var[\alpha'(i)]} + \frac{\E[\alpha'(i)]}{\E[\norm{\alpha'}^2]}\sqrt{\Var[\norm{\alpha'}^2]} }^2\\
       &\leq \frac{4}{\norm{\alpha}^4}\rbra*{\sqrt{\frac{3\alpha(i)}{n^2}} + \sqrt{\frac{24\norm{\alpha}^3}{n^2}}}^2 \\
       &\leq \frac{8}{\norm{\alpha}^4}\rbra*{ \frac{3\alpha(i)}{n^2} + \frac{24\norm{\alpha}^3}{n^2} } & & ((x+y)^2\leq 2x^2+2y^2)\\
       &\leq \frac{24}{\norm{\alpha}^2n^2} + \frac{192}{\norm{\alpha}n^2} & & (\alpha(i)\leq\norm{\alpha}^2)\\
       &= \frac{(24+o(1))k}{n^2}.
    \end{align*}

\paragraph*{Proof of \cref{lab:ail2norm3}.}
    Since $\norm{\alpha'}^2\geq(1/2)\norm{\alpha}^2$, we have
    \begin{align*}
       \abs*{\frac{\alpha'(i)}{\norm{\alpha'}^2}-\frac{\alpha(i)}{\norm{\alpha}^2}}
       &=\abs*{\frac{(\alpha'(i)-\alpha(i))\norm{\alpha}^2+\alpha(i)(\norm{\alpha}^2-\norm{\alpha'}^2)}{\norm{\alpha'}^2\norm{\alpha}^2}} \\
       &\leq \frac{\abs*{\alpha'(i)-\alpha(i)}\norm{\alpha}^2+\alpha(i)\abs*{\norm{\alpha}^2-\norm{\alpha'}^2}}{(1/2)\norm{\alpha}^4}\\
       &\leq \frac{2}{n\norm{\alpha}^2}+\frac{12\norm{\alpha}_\infty^2}{n\norm{\alpha}^4} & & \text{(\cref{item:difference of 2norm} of \cref{lem:basic inequalities})}\\
       &\leq \frac{14}{n\norm{\alpha}^2} \\
       &\leq \frac{14k}{n}.
    \end{align*}
\end{proof}

\subsection{Proof of \texorpdfstring{\cref{lem:add preserve}}{Addition Lemma}} \label{sec:addition proof}
In this subsection, we prove \cref{lem:add preserve}.
The proof is the same as the proof of \cref{lem:add preserve} except for considering addition instead of subtraction.
\begin{proof}[Proof of \cref{lem:add preserve}.]
  Let $\c\succeq \ct$ be the initial configurations.
  We first prove \cref{lem:add preserve} in the special case that $M(\c,\ct)$ consists of a single minimal block.
Then, we reduce the general multiple block case to the single block case.

  \paragraph*{Case I. Single block case.}
  Suppose the matrix $M(\c,\ct)$ consists of a single minimal block.
  Consider the following coupling:
  \begin{enumerate}
  \item Choose a uniformly random vertex $u\sim V$ and let $j,\jt$ be the opinion of $u$ in the configuration $\c,\ct$, respectively.
  \item Output $\c'\defeq \c+e_j$ and $\ct'\defeq \ct+e_{\jt}$.
  \end{enumerate}
  Note that $\c'$ is the configuration obtained by first computing $\c+e_i$ as a vector and then rearrange the elements in descending order.
  Therefore, $\c'$ is equal (as a vector) to $\c + e_\ell$,
    where $\ell\in[k]$ is the leftmost opinion whose size is equal to the size of $u$'s opinion in the configuration $\c$ (this preserves the descending order).
  Similarly, let $\lt$ be the left-most opinion in $\ct$ so that $\ct'=\c + e_{\lt}$.

  Compare the partial sums of $\c'$ and $\ct'$.
  For any $m$, we have
  \begin{align*}
    \c'^{\le m} &\ge \c^{\le m} \\
    &\ge \ct^{\le m}+1  & & \text{$\because$$M(\c,\ct)$ consists of a single block} \\
    &\ge \ct'^{\le m} & & \text{$\because$Addition increases the partial sum by at most one}
  \end{align*}
  and thus $\c'\succeq \ct'$.

  \paragraph*{Case II. Multiple block case.}
Suppose that the matrix $M(\c,\ct)$ consists of two or more blocks $S_1,\dots,S_\ell$ (for $\ell \ge 2$).
  Then, $\c$ and $\ct$ can be seen as the concatenation of subvectors $\c[S_1],\dots,\c[S_\ell]$ and $\ct[S_1],\dots,\ct[S_\ell]$, where  $\c[S_b],\ct[S_b]$ are configurations.
  Note that $\c[S_b]\succeq \ct[S_b]$ since $\c \succeq \ct$.

  Consider the following coupling:
  \begin{enumerate}
  \item Choose a block index $b\in[\ell]$ randomly, where the probability of choosing a specific block $S_a$ is proportional to the sum of $c_i$s belonging to $S_a$.
  \item Simulate the coupling of Case I on configurations $\c[S_b]$ and $\ct[S_b]$. This yields configurations $\c'[S_b]$ and $\ct'[S_b]$ such that $\c'[S_b] \succeq \ct'[S_b]$.
  \item Output the concatenations 
  \begin{align*}
  &\c'\defeq \sbra*{\c[S_1],\dots,\c[S_{b-1}], \c'[S_b], \c[S_{b+1}],\dots,\c[S_\ell] }, \\
  &\ct' \defeq \sbra*{\ct[S_1],\dots,\ct[S_{b-1}], \ct'[S_b], \ct[S_{b+1}],\dots,\ct[S_\ell] }.
  \end{align*}
  \end{enumerate}
  Note that the probability that this coupling changes the $b$-th block $S_b$ is equal to the probability that a uniformly random vertex $u\sim V$ has an opinion in the block $S_b$.
  Therefore, $(\c',\ct')$ is a coupling of $\Add(\c)$ and $\Add(\ct)$.
  Since each corresponding subvectors satisfy the majorization (i.e., $\c'[S_a] \succeq \ct'[S_a]$ for all $a\in[\ell]$),
  from \cref{lem:concatenation}, we have $\c'\succeq \ct'$.
\end{proof}

\end{document}